\def\CCSFORMAT{0}
\def\draftversion{1}
\crefname{equation}{Expression}{Expressions}
\crefname{exampleattack}{Example Attack}{Example Attacks}
\crefname{motivateattack}{Attack Idea}{Attack Ideas}
\crefname{attacklibrary}{Library-Mediated Attack}{Library-Mediated Attacks}
\newtheorem{theorem}{Theorem}[section]
\newtheorem{corollary}[theorem]{Corollary}
\newtheorem{lemma}[theorem]{Lemma}
\newtheorem{property}[theorem]{Property}
\newtheorem{proposition}[theorem]{Proposition}
\newtheorem*{theorem-non}{Theorem}
\theoremstyle{definition}
\newtheorem{remark}[theorem]{Remark}
\newtheorem{definition}[theorem]{Definition}
\newtheorem{note}[theorem]{Note}
\newtheorem{exampleattack}[theorem]{Example Attack}
\newtheorem{motivateattack}[theorem]{Attack Idea}
\newtheorem{attacklibrary}[theorem]{Library-Mediated Attack}
\newtheorem{method}[theorem]{Method}
\newcommand{\R}{\mathbb{R}}
\newcommand{\LL}{\mathcal{L}}
\newcommand{\UU}{\mathcal{U}}
\newcommand{\bs}{\mathit{BS}}
\newcommand{\bsfp}{\mathit{BS}^*}
\newcommand{\rp}{\mathit{RP}}
\newcommand{\ulp}{\mathit{ULP}_{(\mantlen, \explen)}}
\newcommand{\nrf}{\mathit{NRF}}
\newcommand{\rtz}[1]{\mathit{RTZ}(#1)} 
\newcommand{\br}{\mathit{BRound}} 
\newcommand{\Ex}{\mathbb{E}} 
\newcommand{\round}{\mathit{round}}
\newcommand{\FloatToInt}{\mathit{Float2Int}}
\newcommand{\Noise}{\mathit{Noise}}
\newcommand{\nmax}{n_{\max}}
\definecolor{codegreen}{rgb}{0,0.6,0}
\definecolor{codegray}{rgb}{0.5,0.5,0.5}
\definecolor{codepurple}{rgb}{0.58,0,0.82}
\definecolor{backcolour}{rgb}{0.95,0.95,0.92}
\lstdefinestyle{mystyle}{
    backgroundcolor=\color{backcolour},   
    commentstyle=\color{codegreen},
    keywordstyle=\color{magenta},
    numberstyle=\tiny\color{codegray},
    stringstyle=\color{codepurple},
    basicstyle=\ttfamily\footnotesize,
    breakatwhitespace=false,         
    breaklines=true,                 
    captionpos=b,                    
    keepspaces=true,                 
    numbers=left,                    
    numbersep=5pt,                  
    showspaces=false,                
    showstringspaces=false,
    showtabs=false,                  
    tabsize=2
}
\newcommand{\len}{\mathrm{len}}
\newcommand{\Prob}{\mathrm{Pr}}
\newcommand{\pos}{\mathit{pos}}
\newcommand{\negs}{\mathit{neg}}
\newcommand{\questionc}[1]{\textcolor{red}{\textbf{Question:} #1}} 
\newcommand{\discuss}[1]{\textcolor{red}{\textbf{Discuss:} #1}} 
\newcommand{\notec}[1]{\textcolor{blue}{\textbf{Note:} #1}}
\newcommand{\silvia}[1]{{ {\color{olive}{(silvia)~#1}}}}
\newcommand{\connor}[1]{{ {\color{teal}{(connor)~#1}}}}
\newcommand{\salil}[1]{{ {\color{purple}{(salil)~#1}}}}
\newcommand{\todo}[1]{{ {\color{red}{(TODO:)~#1}}}}
\newcommand{\draft}[1]{{ {\color{olive}{(draft)~#1}}}}
\newcommand{\ps}[1]{ {\color{orange}Proof status:}~#1}
\newcommand{\questionc}[1]{} 
\newcommand{\discuss}[1]{} 
\newcommand{\notec}[1]{}
\newcommand{\silvia}[1]{}
\newcommand{\connor}[1]{}
\newcommand{\salil}[1]{}
\newcommand{\todo}[1]{}
\newcommand{\draft}[1]{}
\newcommand{\ps}[1]{}
\newcommand{\Vect}{\text{Vec}}
\newcommand{\dsym}{d_\mathit{Sym}}
\newcommand{\dco}{d_\mathit{CO}}
\newcommand{\dham}{d_\mathit{Ham}}
\newcommand{\did}{d_\mathit{ID}}
\newcommand{\dmod}{d_\mathit{Mod}}
\newcommand{\nsym}{\simeq_\mathit{Sym}}
\newcommand{\nco}{\simeq_\mathit{CO}}
\newcommand{\nham}{\simeq_\mathit{Ham}}
\newcommand{\nid}{\simeq_\mathit{ID}}
\newcommand{\nd}{\simeq_d}
\newcommand{\ssym}{\Delta_\mathit{Sym}}
\newcommand{\sco}{\Delta_\mathit{CO}}
\newcommand{\sham}{\Delta_\mathit{Ham}}
\newcommand{\sid}{\Delta_\mathit{ID}}
\newcommand{\sd}{\Delta_\mathit{d}}
\newcommand{\mantlen}{k}
\newcommand{\explen}{\ell}
\newcommand{\bround}{\text{BRound}}
\newcommand{\cM}{\mathcal{M}}
\newcommand{\unbdd}{\simeq_{\mathit{unbdd}}}
\newcommand{\bdd}{\simeq_{\mathit{bdd}}}
\newcommand{\simham}{\simeq_{\mathit{Ham}}}
\newcommand{\simCO}{\simeq_{\mathit{CO}}}
\newcommand{\simSym}{\simeq_{\mathit{Sym}}}
\newcommand{\simID}{\simeq_{\mathit{ID}}}
\newcommand{\eps}{\varepsilon}
\newcommand{\Lap}{\mathrm{Lap}}
\newcommand{\eqdef}{\mathbin{\stackrel{\rm def}{=}}}
\newcommand{\cY}{\mathcal{Y}}
\tikzset{every picture/.style={line width=0.75pt}} 
\newcommand\blfootnote[1]{%
  \begingroup
  \renewcommand\thefootnote{}\footnote{#1}%
  \addtocounter{footnote}{-1}%
  \endgroup
}
  \providecommand\BibTeX{{%
    \normalfont B\kern-0.5em{\scshape i\kern-0.25em b}\kern-0.8em\TeX}}}
\begin{document}

\title{Widespread Underestimation of Sensitivity in Differentially Private Libraries and How to Fix It}

\ifnum\CCSFORMAT=1
\author{S\'ilvia Casacuberta}
\affiliation{%
    \institution{Harvard University}
    \city{Cambridge}
    \state{MA}
    \country{USA}
}
\email{scasacubertapuig@college.harvard.edu}

\author{Michael Shoemate}
\affiliation{%
    \institution{Harvard University}
    \city{Cambridge}
    \state{MA}
    \country{USA}
}
\email{shoematem@seas.harvard.edu}

\author{Salil Vadhan}
\affiliation{%
    \institution{Harvard University}
    \city{Cambridge}
    \state{MA}
    \country{USA}
}
\email{salil_vadhan@harvard.edu}

\author{Connor Wagaman}
\affiliation{%
    \institution{Boston University}
    \city{Boston}
    \state{MA}
    \country{USA}
}
\email{wagaman@bu.edu}

\else

\author{
  S\'ilvia Casacuberta\thanks{Harvard University and OpenDP. Email:~\texttt{scasacubertapuig@college.harvard.edu}. Supported by the Harvard Program for Research in Science and Engineering (PRISE) and a grant from the Sloan Foundation.} \\
  \and
  Michael Shoemate\thanks{Harvard University and OpenDP. Email:~\texttt{shoematem@seas.harvard.edu}. Supported by a grant from the Sloan Foundation, a grant from the US Census Bureau, and gifts from Apple and Facebook.} \\
  \and 
  Salil Vadhan\thanks{Harvard University and OpenDP. Email:~\texttt{salil\char`_vadhan@harvard.edu}. Supported by a grant from the Sloan Foundation, gifts from Apple and Facebook, and a Simons Investigator Award.} \\
  \and
  Connor Wagaman\thanks{Boston University and OpenDP. Email:~\texttt{wagaman@bu.edu}. Supported by the Harvard College Research Program (HCRP). Much of this work was completed during his time at Harvard University.} \\
\date{
    November 10, 2022
}
}
\fi

\ifnum\CCSFORMAT=0
\maketitle

\blfootnote{
Any opinions, findings, and conclusions or recommendations expressed in this material are those of the authors and do not necessarily reflect the views of our funders.}

\fi

\begin{abstract}
We identify a new class of vulnerabilities in implementations of differential privacy. Specifically, they arise when computing basic statistics such as sums, thanks to discrepancies between the implemented arithmetic using finite data types (namely, ints or floats) and idealized arithmetic over the reals or integers. These discrepancies cause the sensitivity of the implemented statistics (i.e., how much one individual's data can affect the result) to be much larger than the sensitivity we expect. Consequently, essentially all differential privacy libraries fail to introduce enough noise to meet the requirements of differential privacy, and we show that this may be exploited in realistic attacks that can extract individual-level information from private query systems. In addition to presenting these vulnerabilities, we also provide a number of solutions, which modify or constrain the way in which the sum is implemented in order to recover the idealized or near-idealized bounds on sensitivity.
\end{abstract}

\ifnum\CCSFORMAT=1

\begin{CCSXML}
<ccs2012>
   <concept>
       <concept_id>10002978.10002979.10002984</concept_id>
       <concept_desc>Security and privacy~Information-theoretic techniques</concept_desc>
       <concept_significance>500</concept_significance>
       </concept>
   <concept>
       <concept_id>10002950.10003705.10003708</concept_id>
       <concept_desc>Mathematics of computing~Statistical software</concept_desc>
       <concept_significance>500</concept_significance>
       </concept>
 </ccs2012>
\end{CCSXML}
\ccsdesc[500]{Security and privacy~Information-theoretic techniques}
\ccsdesc[500]{Mathematics of computing~Statistical software}

\keywords{differential privacy; finite-precision arithmetic; floating-point numbers; privacy attacks}

\maketitle
\fi

\ifnum\CCSFORMAT=0
\newpage
{
  \hypersetup{linkcolor=black}
  \tableofcontents
}
\newpage
\fi

\section{Introduction}

Differential privacy (DP)~\cite{dmns16} has become the prevailing framework for protecting individual-level privacy when releasing statistics or training machine learning models on sensitive datasets.  It has been the subject of a rich academic literature across many areas of research, and has seen major deployments by the US Census Bureau~\cite{MachanavajjhalaKiAbGeVi08,PSEO,Abowd18}, Google~\cite{ErlingssonPiKo14,Mobility,Symptoms,Vaccination}, Apple~\cite{apple18}, Facebook/Meta~\cite{MessingDeHiKiMaMuNaPeStWi20,Movement}, Microsoft~\cite{Telemetry,Broadband}, LinkedIn~\cite{LaborMarket}, and OhmConnect.\footnote{\url{https://edp.recurve.com/.}}
To facilitate the adoption of differential privacy, a number of researchers and organizations have released open-source software tools for differential privacy, starting with McSherry's PINQ~\cite{McSherry10}, and now including libraries and systems from companies like Google~\cite{google-dp}, Uber~\cite{JohnsonNeSo18}, IBM~\cite{hbal19}, and Facebook/Meta~\cite{ysstpmn21}, and the open-source  projects
OpenMined\footnote{\url{https://github.com/OpenMined/PyDP.}} and OpenDP~\cite{ghv20}. This paper came out of our work on the OpenDP project. 

However, implementing differential privacy correctly is subtle and challenging. Previous works have identified and attempted to address vulnerabilities in implementations of differential privacy coming from side channels such as timing~\cite{hpn11,AreWeThere} and the failure to faithfully emulate the noise infusion mechanisms needed for privacy when using floating-point arithmetic instead of idealized real arithmetic~\cite{Mironov12,i20,AreWeThere}.

In this work, we identify a new and arguably more basic class of vulnerabilities in implementations of differential privacy.  Specifically, they arise when computing basic statistics such as sums, thanks to discrepancies between the implemented arithmetic using finite data types (namely, ints or floats) and idealized arithmetic over the reals or integers.  These discrepancies cause the {\em sensitivity} of the implemented statistics --- how much one individual's data can affect the result --- to be much larger than the sensitivity we expect.  Consequently, essentially all differential privacy libraries fail to introduce enough noise to meet the requirements of differential privacy, and we show that this may be exploited in realistic attacks that can extract individual-level information from differentially private query systems. In addition to presenting these vulnerabilities, we also provide a number of solutions, which modify or constrain the way in which the sum is implemented in order to recover the idealized or near-idealized bounds on sensitivity.

\subsection{Differential Privacy}

Informally, a randomized algorithm $\cM$ is {\em differentially private} if for every two datasets $u,u'$ that differ on one individual's data, the probability distributions $\cM(u)$ and $\cM(u')$ are close to each other.  Intuitively, this means that an adversary observing the output cannot learn much about any individual, since what the adversary sees is essentially the same as if that individual's data were not used.

To make the definition of differential privacy precise, we need to specify both what it means for two datasets $u$ and $u'$ to ``differ on one individual's data,'' and how we measure the closeness of probability distributions $\cM(u)$ and $\cM(u')$.  For the former, there are two common choices in the differential privacy literature.  In one choice, we allow $u'$ to differ from $u$ by adding or removing any one record; this is called {\em unbounded differential privacy} and thus we denote this relation $u\unbdd u'$.  Alternatively, we can allow $u'$ to differ from $u$ by {\em changing} any one record; this is called {\em bounded differential privacy} and thus we will write $u\bdd u'$. Note that if $u\bdd u'$, then $u$ and $u'$ necessarily have the same number of records; thus, this is an appropriate definition when the size $n$ of the dataset is known and public.  When $u\simeq u'$ for whichever relation we are using ($\bdd$ or $\unbdd$), we call $u$ and $u'$ {\em adjacent} with respect to $\simeq$.

For measuring the closeness of the probability distributions $\cM(u)$ and $\cM(u')$, we use the standard definition of $(\eps,\delta)$-differential privacy~\cite{DworkKeMcMiNa06}, requiring that:
\begin{equation}
    \forall T \qquad \Pr[\cM(u)\in T] \leq e^{\eps}\cdot \Pr[\cM(u')\in T]+\delta,
    \label{eqn:epsdel}
\end{equation}
where we quantify over all sets $T$ of possible outputs. 
There are now a variety of other choices, like Concentrated DP~\cite{DworkRo16,BunSt16}, R\'enyi DP~\cite{Mironov17}, and $f$-DP~\cite{fdp}; our results are equally relevant to these forms of DP, but we stick with the basic $(\eps,\delta)$ notion for simplicity.
If $\cM$ satisfies (\ref{eqn:epsdel}) for all $u,u'$ such that $u\simeq u'$, then we say that $\cM$ is {\em $(\eps,\delta)$-DP with respect to $\simeq$}.  If $\delta=0$, we say $\cM$ is {\em $\eps$-DP with respect to $\simeq$}.  Intuitively, $\eps$ measures {\em privacy loss} of the mechanism $\cM$, whereas $\delta$ bounds the probability of failing to limit privacy loss to $\eps$ (so we typically take $\delta$ to be cryptographically small).

The fundamental building block of most differentially private algorithms is noise addition calibrated to the {\em sensitivity} of a function $f$ that we wish to estimate.

\begin{definition}[Sensitivity]
Let $f$ be a real-valued function on datasets, and $\simeq$ a relation on datasets.  The {\em (global) sensitivity} of $f$ with respect to $\simeq$ is defined to be:
$$\Delta_\simeq f = \sup_{u\simeq u'} |f(u)-f(u')|.$$
\end{definition}

\begin{theorem}[Laplace Mechanism~\cite{dmns16}] \label{thm:Laplace}
For every function $f$ and relation $\simeq$ on datasets, the mechanism
$$\cM(u) = f(u) + \Lap\left(\frac{\Delta_\simeq f}{\eps}\right)$$
is $\eps$-DP, where $\Lap(s)$ denotes a draw from a Laplace random variable with scale parameter $s$.
\end{theorem}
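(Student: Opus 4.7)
The plan is to verify the $\eps$-DP inequality~\eqref{eqn:epsdel} (with $\delta=0$) directly by comparing the output densities pointwise and then integrating over the event. I would fix arbitrary adjacent datasets $u \simeq u'$ and an arbitrary measurable set $T \subseteq \R$, and set $s = \Delta_\simeq f / \eps$. Since $\Lap(s)$ has density $\frac{1}{2s} e^{-|x|/s}$, the shifted random variable $\cM(u) = f(u) + \Lap(s)$ has density $p_u(y) = \frac{1}{2s}\exp(-|y - f(u)|/s)$, and analogously $p_{u'}(y) = \frac{1}{2s}\exp(-|y - f(u')|/s)$.

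The first key step is a pointwise bound on the density ratio. Dividing gives
\[
\frac{p_u(y)}{p_{u'}(y)} \;=\; \exp\!\left(\frac{|y - f(u')| - |y - f(u)|}{s}\right).
\]
I would then invoke the reverse triangle inequality to obtain $|y - f(u')| - |y - f(u)| \leq |f(u) - f(u')|$, and apply the definition of global sensitivity together with $u \simeq u'$ to conclude $|f(u) - f(u')| \leq \Delta_\simeq f$. Substituting $s = \Delta_\simeq f / \eps$ collapses the exponent to at most $\eps$, so that $p_u(y) \leq e^{\eps}\, p_{u'}(y)$ for every $y$.

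Finally, I would promote this pointwise density bound to a bound on event probabilities by integrating over $T$:
\[
\Pr[\cM(u) \in T] \;=\; \int_T p_u(y)\, dy \;\leq\; e^{\eps} \int_T p_{u'}(y)\, dy \;=\; e^{\eps}\, \Pr[\cM(u') \in T],
\]
which is exactly~\eqref{eqn:epsdel} with $\delta = 0$. There is no genuinely hard step here --- the whole argument is a short calculation pivoting on the reverse triangle inequality. What is worth emphasizing is that the clean match between absolute-value (i.e.\ $\ell_1$) sensitivity and Laplace noise is structural: the Laplace log-density is itself an absolute-value function of its argument, which is exactly why lighter-tailed alternatives such as Gaussian noise fail to achieve $\delta = 0$ and force a relaxation to approximate DP.
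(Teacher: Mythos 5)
Your proof is correct and is exactly the standard density-ratio argument from Dwork, McSherry, Nissim, and Smith~\cite{dmns16}; the paper itself does not reprove this statement but simply cites that reference (and in the proof of Theorem~\ref{thrm:scale-laplace} explicitly defers this forward direction to~\cite{dmns16}). The only minor gap worth noting is the degenerate case $\Delta_\simeq f = 0$, where $s=0$ and the Laplace density is undefined, though privacy is then trivial since $f(u)=f(u')$ for all adjacent pairs.
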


There are a number of other choices for the noise distribution, leading to the Discrete Laplace (a.k.a., Geometric) Mechanism~\cite{GhoshRoSu12}, the Gaussian Mechanism~\cite{mironov2019r}, and the Discrete Gaussian Mechanism~\cite{cks20}, where the latter two achieve $(\eps,\delta)$-DP with $\delta>0$.  The key point for us is that in all cases, the scale or standard deviation of the noise is supposed to grow linearly with the sensitivity $\Delta_\simeq f$, so it is crucial that the sensitivity is calculated correctly.  If we incorrectly underestimate the sensitivity as $\Delta=(\Delta_\simeq f)/c$ for a large constant $c$, then we will only achieve $c\eps$-DP; that is, our privacy loss will be much larger than expected.  Given that it is common to use privacy loss parameters like $\eps=1$ or $\eps=0.5$, a factor of 5 or 10 increase in the privacy-loss parameter can have dramatic effects on the privacy protection (since $e^5>148$, allowing a huge difference between the probability distributions $\cM(u)$ and $\cM(u')$).

The most widely used function $f$ in DP noise addition mechanisms is the {\em Bounded Sum} function.  

\begin{definition}[Bounded Sum] 
For real numbers $L\leq U$, and a dataset $v$ consisting of elements of the interval $[L,U]$, we define the
{\em Bounded Sum} function on $v$ to be:
$$\bs_{L, U}(v) = \sum_{i=1}^{\len(v)} v_i.$$ 
The restriction of $\bs_{L,U}$ to datasets $v$ of length $n$ is denoted
$\bs_{L,U,n}$.
When we do not constrain the data to lie in $[L,U]$, we omit the subscripts.
\end{definition}
Typically, the bounds $L$ and $U$ are enforced on the dataset $v$ via a record-by-record clamping operation. To avoid the extra notation of the clamp operation, throughout we will work with datasets that are assumed to already lie within the bounds.

It is well-known and straightforward to calculate the sensitivity of Bounded Sum.
\begin{proposition} \label{prop:bs-sensitivity}
$\bs_{L,U}$ has sensitivity $\max\{|L|,|U|\}$ with respect to $\unbdd$, and $\bs_{L,U,n}$ has sensitivity $U-L$ with respect to $\bdd$.
\end{proposition}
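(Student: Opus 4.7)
The plan is to verify both statements by separately establishing matching upper and lower bounds, since sensitivity is a supremum. The calculation is essentially direct from the definitions, so there is no deep obstacle; the only care needed is to exhibit adjacent pairs of datasets that attain the claimed bounds.

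For the unbounded claim, suppose $u \unbdd u'$. Without loss of generality $u'$ is obtained from $u$ by adding a single record $x \in [L,U]$, so
\[
|\bs_{L,U}(u) - \bs_{L,U}(u')| = |x| \leq \max\{|L|,|U|\},
\]
since the maximum of $|x|$ over $x \in [L,U]$ is attained at one of the two endpoints. This gives the upper bound $\Delta_{\unbdd} \bs_{L,U} \leq \max\{|L|,|U|\}$. For the matching lower bound, take $u$ to be the empty dataset and $u' = (x^*)$ where $x^* \in \{L,U\}$ is chosen so that $|x^*| = \max\{|L|,|U|\}$; then $u \unbdd u'$ and $|\bs_{L,U}(u) - \bs_{L,U}(u')| = \max\{|L|,|U|\}$.

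For the bounded claim, suppose $u \bdd u'$, so $u$ and $u'$ have the same length $n$ and differ in exactly one coordinate, say $u_i = x$ and $u'_i = y$ with $x,y \in [L,U]$. Then all other terms cancel in the sum, yielding
\[
|\bs_{L,U,n}(u) - \bs_{L,U,n}(u')| = |x-y| \leq U-L,
\]
where the last inequality holds because both $x$ and $y$ lie in $[L,U]$. This establishes $\Delta_{\bdd} \bs_{L,U,n} \leq U-L$. For the lower bound, assuming $n \geq 1$, take $u = (L, L, \ldots, L)$ and $u' = (U, L, \ldots, L)$; these are $\bdd$-adjacent and $|\bs_{L,U,n}(u) - \bs_{L,U,n}(u')| = U - L$. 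Combining the bounds in each case yields the proposition. The ``hard'' part, to the extent that there is one, is simply keeping the two adjacency relations straight and remembering that for $\unbdd$ one compares datasets of differing lengths, which is what lets a single record contribute its full magnitude rather than only a difference of two in-range values.
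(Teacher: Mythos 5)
Your proof is correct and takes essentially the same approach as the paper's proof of the corresponding statement (Theorem~\ref{thrm:reals-sens}, proved in Appendix~\ref{appendix:sec4}): a direct upper bound from the fact that only one record contributes to the difference, paired with explicit adjacent witnesses attaining the bound. One minor point in your favor: for the unbounded lower bound you use the empty dataset and a singleton, which is valid for any $L \leq U$, whereas the paper's witnesses $u = [0]$, $u' = [0, \max\{|L|,U\}]$ implicitly require $0 \in [L,U]$.
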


Combining Proposition~\ref{prop:bs-sensitivity} and Theorem~\ref{thm:Laplace} (or analogs for other noise distributions), we obtain a differentially private algorithm for approximating bounded sums, which we will refer to as {\em Noisy Bounded Sum}.  This algorithm is pervasive throughout both the differential privacy literature and software.  Many more complex statistical analyses can be decomposed into bounded sums, and any machine learning algorithm that can be described in Kearns' Statistical Query model~\cite{Kearns98} can be made DP using Noisy Bounded Sum. Indeed, the versatility of noisy sums was the basis of the SuLQ privacy framework~\cite{BlumDwMcNi05}, which was a precursor to the formal definition of differential privacy.  Noisy Bounded Sum is also at the heart of differentially private deep learning~\cite{acgbmmtz16}, as each iteration of (stochastic) gradient descent amounts to approximating a sum (or average) of gradients.  For this reason, every software package for differential privacy that we are aware of supports computing noisy bounded sums via noise addition.

\subsection{Previous Research} \label{sec:prior}

Despite their simplicity, Noisy Bounded Sum and related differentially private algorithms are surprisingly difficult to implement in a way that maintains the desired privacy guarantees. 

In the early days of implementing differential privacy, several challenges were pointed out by Haeberlen, Pierce, and Narayan~\cite{hpn11}.  Their attacks concern not the Bounded Sum function itself, but dataset transformations that are applied before the Bounded Sum function.  In a typical usage, Bounded Sum is not directly applied to a dataset $u=[u_1,\ldots,u_n]$, but rather to the dataset $$q(u)\eqdef [q(u_1),q(u_2),\ldots,q(u_n)]$$ where $q$ is a ``microquery'' mapping records 
to the interval $[L,U]$. (We can incorporate the clamping into $q$.)  If $u\simeq u'$, then $q(u)\simeq q(u')$, so we if we apply Noisy Bounded Sum (or any other differentially private algorithm) to the transformed dataset, we should still satisfy differential privacy with respect to the original dataset $u$.
Haeberlen et al.'s attacks rely on a discrepancy between this mathematical abstraction and implementations.  In code, $q$ may not be a pure function, and may be able to access global state (allowing information to flow from the execution of $q(u_i)$ to the execution of $q(u_j)$ for $j>i$) or leak information to the analyst through side channels (such as timing). The authors of the main differentially private systems at the time, PINQ~\cite{McSherry10} and Airavat~\cite{rsksw10}, were aware of and noted the possibility of such attacks, but the implemented software prototypes did not fully protect against them. As discussed in \cite{McSherry10,rsksw10,hpn11}, remedies for attacks like these include using a domain-specific language for the microquery $q$ to ensure that it is a pure function and enforcing constant-time execution for $q(u)$.

At the other end of the DP pipeline (after the calculation of Bounded Sum), the seminal work of Mironov~\cite{Mironov12} demonstrated vulnerabilities coming from the noise addition step. Specifically, the Laplace distribution in Theorem~\ref{thm:Laplace} is a continuous distribution on the real numbers, but computers cannot manipulate arbitrary real numbers.  Typical implementations approximate real numbers using finite-precision floating-point numbers, and Mironov shows that these approximations can lead to complete failure of the differential privacy property.  To remedy this, Mironov proposed the Snapping Mechanism, which adds a sufficiently coarse rounding and clamping after the Laplace Mechanism to recover differential privacy with a slightly larger value of $\eps$.  Subsequent works~\cite{GazeauMiPa13,BalcerVa18,cks20} avoided floating-point arithmetic, advocating for and studying the use of exact finite-precision arithmetic (e.g., using big-integer data types) and using discrete noise distributions, such as the discrete Laplace distribution~\cite{GhoshRoSu12} and the discrete Gaussian distribution~\cite{cks20}.  However, a recent paper by Jin et al.~\cite{AreWeThere} shows that current implementations of the discrete distribution samplers are vulnerable to timing attacks (which leak information about the generated noise value and hence of the function value it was meant to obscure).  They, as well as \cite{hb21}, also show that the floating-point implementations of the continuous Gaussian Mechanism are vulnerable to similar attacks as those shown by Mironov~\cite{Mironov12} for the Laplace Mechanism.

Ilvento~\cite{i20} studies the effect of floating-point approximations on another important DP building block, the Exponential Mechanism.  On a dataset $u$, the exponential mechanism samples an outcome $y$ from a finite set $\cY$ of choices with probability $p_y(u)$ proportional to $\exp(\eps f_y(u)/ (2\max_y \Delta_{\simeq} f_y))$, where $f_y(u)$ is an arbitrary measure of the ``quality'' of outcome $y$ for dataset $u$.  She shows that the discrepancy between floating-point and real arithmetic can lead to incorrectly converting the quality scores $f_y(u)$ into the probabilities $p_y(u)$ and hence violate differential privacy.  To remedy this, Ilvento proposes an exact implementation of the Exponential Mechanism using finite-precision base 2 arithmetic.  Note that, like noise addition mechanisms, the Exponential Mechanism also is calibrated to the sensitivities $\Delta_{\simeq} f_y$ of the quality functions.  Here too, if we underestimate the sensitivity by a factor of $c$, our privacy loss can be greater than intended by a factor of $c$, even if we perfectly implement the sampling or noise generation step.

Indeed, Mironov's paper~\cite{Mironov12} also suggests that sensitivity calculations may fail to translate from idealized, real-number arithmetic to implemented, floating-point arithmetic.  He gives an example of two datasets $u\bdd u'$ of 64-bit floating-point numbers such that $|\bs(u)-\bs(u')|=1$ but 
$|\bsfp(u)-\bsfp(u')|=129$, where $\bsfp$ is the standard, iterative implementation of summation of floating-point numbers, illustrated in Figure~\ref{fig:iterated-pseudocode}.

\begin{figure}[H]
\begin{lstlisting}[language = Python,frame=single, escapechar=|]
def iterated_sum(u):
    the_sum = 0
    for element in u:
        the_sum += element
    return the_sum
\end{lstlisting}
    \caption{Iterated Summation.}
    \label{fig:iterated-pseudocode}
\end{figure}

However, Mironov's example does not immediately lead to an underestimation of sensitivity, because the datasets $u$ and $u'$ include values ranging from $L=-2^{-23}$ to $U=2^{30}$, so the idealized sensitivity suggested by Proposition~\ref{prop:bs-sensitivity} is $U-L > 2^{30} \gg 129$.\footnote{As pointed out in Mironov's paper, this example does demonstrate an underestimation of sensitivity by a factor of 129 if we define $u\simeq u'$ to mean that $u$ and $u'$ agree on all but one coordinate and they differ by at most 1 on that coordinate. This notion of dataset adjacency is common in the DP literature when $u$ and $u'$ represent datasets in {\em histogram} format, where $u_i$ is the number of individuals of type $i$ (rather than individual $i$'s data).  However, in this case, the $u_i$'s would be nonnegative integers (so this example would not be possible) and it would be strange to use a floating-point data type.}
Mironov's paper suggests that these potential sensitivity issues may be addressed by replacing Iterated Summation by a {\em Kahan Summation}~\cite{kahan1965pracniques}, a different way of summing floating-point numbers that accumulates rounding errors more slowly. Unfortunately, this section of Mironov's paper seems to have gone mostly unnoticed, and we are not aware of any prior work that has addressed the question of how to correctly bound and control sensitivity in finite-precision implementations of differential privacy. 

\subsection{Our Contributions}

In our work, we identify new privacy vulnerabilities arising from underestimation of the sensitivity of the Bounded Sum function when implemented using finite data types, including 32-bit and 64-bit integers and floats.  Specifically, implementations of Bounded Sum often have sensitivities much larger than the idealized sensitivity given by Proposition~\ref{prop:bs-sensitivity}, and consequently the privacy loss of DP mechanisms using Bounded Sum is much larger than specified.  Thus, our work covers vulnerabilities arising from the ``middle step'' of the DP pipeline --- aggregation --- sitting between the foci of previous work, which considered vulnerabilities in the preprocessing step before aggregation and the noise generation/sampling step after aggregation (see Figure~\ref{fig:pipeline}).

\begin{figure}[t]
\centering
    \includegraphics[width=13cm]{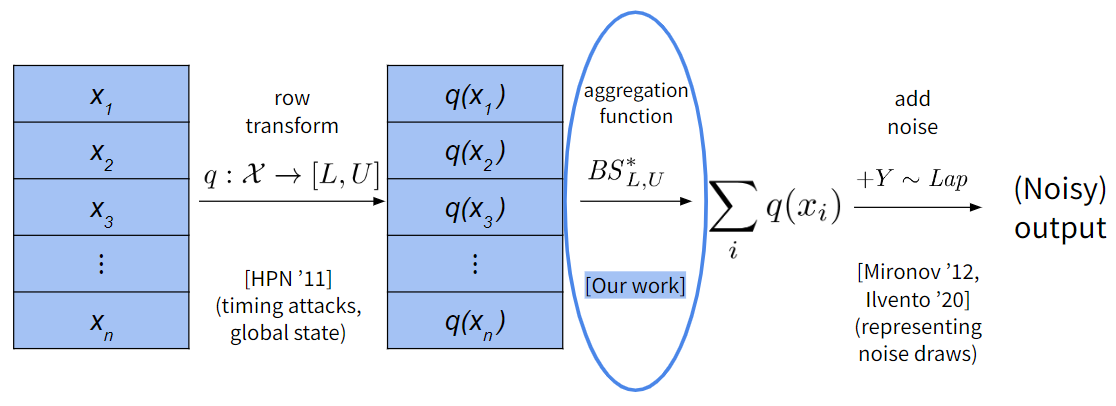}
    \caption{Illustration of the DP pipeline and the relationship between this paper and previous works uncovering vulnerabilities of DP implementations. \label{fig:pipeline}}
\end{figure}

In addition to describing the vulnerabilities that emerge due to sensitivity underestimation in essentially all libraries of DP functions and showing how to exploit them, we describe several solutions that recover the idealized or near-idealized bounds on sensitivity. Many of our solutions only require small modifications to current code.

We discovered these vulnerabilities and solutions as part of our work writing mathematical proofs to accompany components of the OpenDP library. We believe that our results demonstrate the value of a rigorous vetting process such as OpenDP's for differentially private libraries.

\subsection{Finite-precision Arithmetic in DP Libraries}

Before describing our results in more detail, we summarize how existing implementations of differential privacy address arithmetic issues (at the time of our work, prior to fixes implemented in response to our paper). 
All current DP implementations make use of the finite-precision data types (e.g., 32-bit or 64-bit ints or floats) to which our attacks apply.
Some of the libraries attempt to address the vulnerabilities uncovered by Mironov at the noise addition step \cite{Mironov12}. For example, Google's DP library \cite{wzldsg19} includes sampling algorithms for floating-point approximations to the Laplace and Gaussian distributions (based on Mironov's Snapping Mechanism) which they claim circumvent problems with na\"ive floating-point implementations.\footnote{\url{https://github.com/google/differential-privacy/blob/main/common_docs/Secure_Noise_Generation.pdf}.} 
IBM's \texttt{diffprivlib} \cite{hbal19} samples a floating-point approximation to the Laplace distribution using the method described in Holohan and Braghin \cite{hb21}. 
The OpenDP Library acknowledges\footnote{\url{https://docs.opendp.org/en/stable/user/measurement-constructors.html\#floating-point}} the floating point vulnerabilities discovered by Mironov, and users have access to floating-point mechanisms only if the \texttt{`contrib'} compilation flag is turned on to allow components that do not have verified proofs. 

Indeed, our work can be seen as following the call of the OpenDP Programming Framework paper~\cite{ghv20}, which says that
``any deviations from standard arithmetic (e.g., overflow of floating-point arithmetic) should
be explicitly modelled in the privacy analysis.''
(The paper~\cite{ghv20} goes further and advocates the use of fixed-point
and integer arithmetic as much as possible.  We now believe that 
abandoning floating-point arithmetic entirely may have a significant usability cost, so we consider both solutions that operate only on floating-point numbers and solutions that reduce floating-point summation to integer summation.)

In other DP software (e.g., Opacus~\cite{acgbmmtz16}, Chorus~\cite{jnhs20}, Airavat~\cite{rsksw10}, PINQ~\cite{McSherry10}), we did not find any mention of potential issues or solutions for floating-point computations.
 
All of the libraries we studied scale noise according to the idealized sensitivity of the Bounded Sum function (Proposition~\ref{prop:bs-sensitivity}):
\begin{itemize}
    \item Unbounded DP (\cref{thrm:reals-sens}, Part 1):
    Google's sum function,
    SmartNoise's sum function, Opacus,
    and Airavat.
    
    \item Bounded DP (\cref{thrm:reals-sens}, Part 2):
    IBM \texttt{diffprivlib}'s sum and mean,
    Google's mean, Chorus,
    and SmartNoise's sized sum function.
    \end{itemize}

The exact links for these functions can be found in Section~\ref{sec:ideal-sens-libraries}. All of these libraries underestimate the sensitivity of the implemented Bounded Sum function, for both integer and floating-point data types, and thus are vulnerable to our attacks.  We remark that carrying out our attacks in practice (to extract sensitive individual information from real-life datasets) does seem to require an adversary to carefully choose a microquery/row-transform $q$ (see Figure~\ref{fig:pipeline}), so these vulnerabilities are more of an immediate threat when the DP software is used as part of an interactive query system rather than for noninteractive data releases.  However, the fact that the DP guarantee fails raises the possibility of other attacks, which may not require interactive queries; this possibility can be avoided by implementing one of our solutions to recover a correct proof of differential privacy.

\subsection{Basic Notation}\label{sec:notation}

Throughout, we will write $T$ for a finite numerical data type. We think of $T\subseteq \R \cup \{\pm \infty\}$, but adding two elements $a,b$ of $T$ can yield a number outside of $T$, so the {\em overflow mode} and/or {\em rounding mode} of $T$ determine the result of the computation $a+b$.  For $L,U\in T$ with $L\leq U$, we write $T_{[L,U]} = \{ x\in T : L\leq x\leq U\}$.
We will consider various implementations of the Bounded Sum function $\bsfp_{L,U}$ and $\bsfp_{L,U,n}$ on datasets consisting of elements of $T_{[L,U]}$.  Except when otherwise stated, $\bsfp$ will use the standard Iterated Summation method from Figure~\ref{fig:iterated-pseudocode}.  In such a case, the functions $\bsfp_{L,U}$ and $\bsfp_{L,U,n}$ and their sensitivities are completely determined by the data type $T$ and the choice of overflow and/or rounding modes.

The data types $T$ we will consider in the paper are the following:
\begin{itemize}
    \item $k$-bit unsigned integers, whose elements are $\{0,1,\ldots,2^k-1\}$.  Standard choices are $k=32$ and $k=64$.
    \item $k$-bit signed integers, whose elements are $\{-2^{k-1},-2^{k-1}+1,\ldots,-1,0,1,\ldots,2^{k-1}-1\}$. Standard choices are $k=32$ and $k=64$.
    \item $(k,\ell)$-bit (normal) floats, which are represented in binary scientific notation as $(-1)^s \cdot (1.M) \cdot 2^E$ with a $k$-bit mantissa $M$ and an exponent $E\in [-(2^{-\ell-1}-2),2^{\ell-1}-1]$.\footnote{Note that there are only $2^\ell-2$ choices for $E$.  The remaining choices are used to represent {\em subnormal} floats, as well as $\pm \infty$, and NaN. For simplicity, we only consider normal floats here in the introduction; the full set of $(k,\ell)$-bit floats is considered in Section~\ref{sec:bounded-sum-computers}.} In 32-bit floats (``singles''), we have $k=23$ and $\ell=8$; and in 64-bit floats (``doubles'') we have $k=52$ and $\ell=11$.  In machine learning applications, it is common to use even lower-precision floating-point numbers for efficiency, such as $k=7$ and $\ell=8$ in Google's bfloat16~\cite{wk19}. 
\end{itemize}

We find that for these data types, the implemented sensitivity of Bounded Sum can be much larger than the idealized sensitivity for several reasons, described in the section below.

\subsection{Overview of Sensitivity Lower Bounds}
\label{sec:lowerbounds}

\paragraph{Overflow (Section~\ref{sec:mod-attack}).
} The default way of dealing with overflow in $k$-bit integers $T$ (signed or unsigned) is {\em wraparound}, i.e., the result is computed modulo $2^k$.
It is immediately apparent how this phenomenon can lead to large sensitivity.  If the idealized sum on one dataset $u$ equals the largest element of $T$, call it $\max(T)$, and on an adjacent dataset $u'$ equals $\max(T)+1$, then modular arithmetic will yield results that differ by $2^k-1$.  If our parameter settings allow for us to construct two such datasets, then the implemented sensitivity of Bounded Sum will be $2^k-1$, a completely useless bound because every two numbers of type
$T$ differ by at most $2^k-1$.
This is formalized in Theorem~\ref{thrm:mod-attack} and Example Attack~\ref{ea:modular-sum}. 

In the case of bounded DP on datasets of size $n$, we can construct two such datasets $u$ and $u'$ if $n\cdot (U-L)\geq 2^k$. As one example of such a pair, consider the following datasets $u$ and $u'$ of unsigned ints $T$, where we have $L\leq 0$, $U > 0$, and set $n = \left\lceil \max(T)/U \right\rceil + 1$. Let $M = \max(T)- (n-2) \cdot U$. Then, set 
\[
    u = [U_1, \ldots, U_{n-2}, M, 0], \quad \quad u' = [U_1, \ldots, U_{n-2}, M, 1],
\]
where $U_i = U$ for all $i$.

Then, $\bs_{L,U}(u) = \max(T)$, and $\bs_{L,U}(u') = \max(T)+1$, since $M = \max(T)- (n-2) \cdot U$. But because we are using modular arithmetic, we need to evaluate both sums modulo $2^k$, in which case $\bsfp_{L,U}(u) = \max(T)$, but $\bsfp_{L,U}(u') = (\max(T) + 1) \mod 2^k = \min(T)$. Hence, $|\bsfp_{L,U}(u)-\bsfp_{L,U}(u')|= \max(T) - \min(T) = 2^k-1$. However, because $u \bdd u'$, the idealized sensitivity is $U-L$.

When working with the type $T$ of 64-bit unsigned ints, by setting $L = 0, U= 2^{47}$, and $n = \left\lceil \max(T) / U \right\rceil + 1 = 2^{17}+1$, we get a difference in sums of $2^{64}-1$, which is more than a factor of $2^{16}$ larger than these idealized sensitivities. This means, then, that a DP mechanism that claims to offer $\varepsilon$-DP here but calibrates its random distribution to the idealized sensitivity would instead offer no better than $2^{16} \varepsilon$-DP.

In practice, while $n$ may be modest (e.g., $n=2^{15}$) and much smaller than $2^k$, the bounds $U$ and $L$ are typically user-specified parameters. \color{black} More generally, for example, we can set $L=0$ and $U=\lceil 2^k/n\rceil$, and we get a sensitivity that is roughly a factor of $n$ larger than the idealized sensitivity $U-L$. 

In the case of unbounded DP, $n$ is unconstrained, so we always get an implemented sensitivity of $2^k-1$, provided that $U>L$.  Specifically, there are datasets $u$ and $u'$ of size at most $n=\lceil 2^k/U\rceil$ such that $u\unbdd u'$ and $|\bsfp_{L,U}(u)-\bsfp_{L,U}(u')|=2^k-1$.  Again, we get a blow-up in sensitivity of roughly a factor of $n$. For example, datasets $u  = [U_1,\ldots, U_{n-2}, M]$ and $u' = [U_1,\ldots, U_{n-2}, M, 1]$ are adjacent with respect to $\unbdd$. However, we again obtain that $|\bsfp_{L,U}(u)-\bsfp_{L,U}(u')|= \max(T) - \min(T) = 2^k-1$, which is much greater than the idealized sensitivity $\max\{|L|, |U| \}$.

\color{black}

Overflow also arises with floating-point arithmetic (leading to results that are $\pm \infty$), but the semantics of arithmetic with $\pm \infty$ are not clearly defined in the IEEE standard, and we consider it better to avoid this possibility entirely (as discussed in our solutions in \cref{sec:sols2}).

\paragraph{Rounding (\cref{sec:round-attack-floats}).
} 
When adding two floating-point numbers, the result may not be exactly representable as a float, but may lie strictly between two adjacent floats. Thus, the actual result is determined by the {\em rounding mode}. The standard, called {\em banker's rounding}, rounds the result to the nearest float, breaking ties by rounding to the float whose mantissa has least-significant bit 0. By inspection, when we round a real number $z$ to a $(k,\ell)$-bit float $\bround(z)$, the {\em relative} effect is minimal.  Specifically, $|\bround(z)-z| \leq |z|/2^{k+1}$.

This may have led to an incorrect impression that floating-point arithmetic is ``close enough'' to idealized real arithmetic to not cause a significant increase in sensitivity.
Unfortunately, we find that this is not the case.

In the case of bounded DP, even a single rounding can cause a substantial blow-up in sensitivity, as we illustrate with the following example (and, in more detail, in the proof of Theorem~\ref{prop:rounding-ham-floats}).
Let $Q$ be a float that is a power of 2, and let $n-1$ be a power of 2 between 2 and $2^{(k+1)/2}$.  Then, we take
\begin{eqnarray*}
L &=& \left(1+\frac{n-1}{2^{k+1}}\right)\cdot Q,\\
U &=& \left(1+\frac{n+1}{2^{k+1}}\right)\cdot Q.
\end{eqnarray*}
Thus, $L$ and $U$ are both very close to $Q$, but their difference is only $U-L=Q/2^k$.
Our datasets $u$ and $u'$ will both begin with $n-1$ copies of $L$.
The iterated sum of these first $n-1$ elements experiences no rounding, because all intermediate sums can be represented exactly as $(k, \ell)$-bit floats. This is because all these intermediate sums are integer multiples of $(n-1)Q/2^{k+1}$, and are each at most $(n-1)Q + \frac{(n-1)^2}{2^{k+1}}$, where $(n-1)$ is itself a positive power of 2.
The resulting sum of these $(n-1)$ terms is a floating-point number with exponent $\log_2 ((n-1)Q)$ (since $(n-1)L< 2(n-1)Q$). Thus, the space between adjacent floating-point numbers after this iterated sum is $(n-1)Q/2^k$.  Hence, when we add one more copy of $L$ to obtain $\bsfp(u)$, the result will lie exactly halfway between two adjacent floats (since $L$ is an odd multiple of $(n-1)Q/2^{k+1}$), and by banker's rounding, will get rounded down.  On the other hand, when we add a copy of $U$ to calculate $\bsfp(u')$, $\bs(u')$ will lie between the same adjacent floats as $\bs(u)$, but, by banker's rounding, the result will be rounded up. The effect of these two roundings gives us:
$$\bsfp(u')-\bsfp(u) = 2\frac{n-1}{2^{k+1}}\cdot Q = (n-1)\cdot (U-L).$$ 

In contrast, the idealized sensitivity with respect to $\bdd$ is $(U-L)$, so the sensitivity blows up by a factor of $(n-1)$, which is dramatic even for very small datasets. 
This is formalized in Theorem~\ref{prop:rounding-ham-floats} and Example Attack~\ref{ea:ham-round-once}.
We also show that this construction applies to Kahan summation (contrary to Mironov's hope that it would salvage the sensitivity) and pairwise summation (another common method --- the default in \texttt{numpy} --- where the values are added in a binary tree).
This is explained in Remarks \ref{rem:resist-kahan-thesis} and \ref{rem:resist-pair}, following the attack described in Section~\ref{sec:round-attack-floats}. 
Intuitively, this counterexample applies to Kahan summation and pairwise summation because it does not rely on accumulated error (against which Kahan and pairwise summation do help), but on the error inherent to rounding the sum to a $k$-bit float.

\paragraph{Repeated Rounding (Sections \ref{sec:accum-round-floats} and \ref{sec:accum-round-floats-ii}).
}
The above example does not give anything interesting for unbounded DP, since the idealized sensitivity is $\max\{|L|, |U|\}$, and 
$\bsfp(u')-\bsfp(u)\leq Q < U$. However, we can obtain a sensitivity blow-up by exploiting the accumulated effect of {\em repeated rounding}.  Consider a dataset whose first $(n/2)$ elements are $U$.  After that, each rounding can have the effect of increasing the sum by $\Theta(nU/2^k)$, for a total rounding error of $\Theta(n^2 U)/2^k$.  We show how to exploit this phenomenon to construct two datasets $u\unbdd u'$ which differ on their middle element such that 
$$\left|\bsfp_{L,U}(u)-\bsfp_{L,U}(u')\right| \geq
\left(1+\Omega\left(\frac{n^2}{2^{k}}\right)\right)\cdot U.$$
Thus, the sensitivity blows up by a factor of $\Theta(n^2/2^k)$.  This allows us to exhibit a sensitivity blow-up with datasets of size $n=\Theta(2^{k/2})$, which are plausible even for 64-bit floats (where $k=52$) and quite easy to obtain for 32-bit and lower-precision floats.

As formalized in Theorem~\ref{thrm:quadratic-round-attack} and Example Attack~\ref{ea:rr2}, this implemented sensitivity can be obtained with the following two datasets.
Let $U$ be a power of 2, $m$ an integer power of 2, $n = 2m$, and $$L  = -\left(\frac{U\cdot m}{2^\mantlen}\right)\cdot \left(\frac{1}{2} - \frac{1}{2^\mantlen}\right).$$ Additionally, let 
\[
    u = [U_1, \ldots, U_m, x_1, L_2, x_3, L_4, \ldots, x_{m-1}, L_m],
\]
\begin{equation}
\label{eqn:u_prime}
    u' = [U_1,\ldots, U_{m-1}, x_1, L_2, x_3, L_4,\ldots, x_{m-1}, L_m],
\end{equation}
where for all $i$, we have $U_i = U$, $L_i = L$, and
\[
    x_i = x = \left(\frac{U\cdot m}{2^\mantlen}\right)\cdot \left(\frac{1}{2} + \frac{1}{2^\mantlen}\right).
\]
Note that $u$ and $u'$ are equivalent with the exception that $u'$ contains $m-1$ copies of $U$ rather than $m$ copies of $U$, so $u \unbdd u'$.
Similarly to the example for one rounding, we show that 
all intermediate sums computed in the calculation of $\bsfp_{L,U}[U_1,\ldots, U_m]$ can be represented exactly as $(\mantlen,\explen)$-bit floats. The ability to represent these intermediate sums exactly implies that $$\bsfp_{L,U}[U_1,\ldots,U_{m-1}] = \bs_{L,U}[U_1,\ldots,U_{m-1}] = (m-1)\cdot U$$ and $$\bsfp_{L,U}[U_1,\ldots,U_{m}] = \bs_{L,U}[U_1,\ldots,U_{m}] = m\cdot U.$$  

However, $\bs_{L,U}[U_1,\ldots,U_{m},x]=m\cdot U + x$ is not exactly representable as a $(\mantlen,\explen)$-bit float and will be rounded up to $m \cdot U + mU/2^k$. Continuing with the computation of $\bsfp_{L,U}(u)$, when we add $L$, the resulting sum is not large enough to escape rounding down by banker's rounding. The same reasoning applies to all subsequent additions of $x$ and $L$, where adding $x$ has the effect of adding $mU/2^k$ and adding $L$ has the effect of adding 0. This yields a total sum of
\[
    \bsfp_{L,U}(u) = m\cdot U + \frac{m^2 \cdot U}{2^{\mantlen+1}}.
\]
In the case of $u'$, we have one less $U$ term, so we consider the sum  $\bs_{L,U}[U_1,\ldots,U_{m-1},x] = (m-1) \cdot U + x$. This is again not a representable $(\mantlen,\explen)$-bit float, and so $\bsfp_{L,U}[U_1,\ldots,U_{m-1},x] = (m-1)\cdot U+ (m-1)U/2^k$. When we add $L$, the closest representable float to the sum is $(m-1) \cdot U$, and so by banker's rounding the sum gets rounded down to this value. Then, for $u'$, adding $x$ and $L$ in succession has the effect of adding $0$. In total, this yields
\[
    \bsfp_{L,U}(u') = (m-1) \cdot U.
\]  
Therefore, 
\[
    |\bsfp_{L,U}(u) - \bsfp_{L,U}(u')| = \frac{m^2\cdot U}{2^{\mantlen + 1}} + U = \frac{n^2\cdot U}{2^{\mantlen + 3}} + U,
\]
since $n = 2m$. This is a factor $n^2/2^{k+3} +1 $ larger than the idealized sensitivity $\max\{|L|, |U|\} = U$.

By setting $k = 52$ (as is the case for 64-bit floats), $L= - 2^{-23}\cdot (\frac{1}{2} + 2^{-52})$, $U=1$, and $n = 2^{30}$, we get a difference in sums of $2^5 + U = 33$, which is a factor 33 larger than the idealized sensitivity $\max\{|L|, |U|\} = 1$.

\paragraph{Reordering (Sections \ref{sec:non-assoc-ints} and \ref{sec:attack-floats-assoc}).
}  
Finally, we exhibit potential vulnerabilities based on ambiguity about whether datasets are ordered or unordered.  In the DP theory literature, datasets are typically considered unordered (i.e., multisets) when using unbounded DP. (Indeed, adjacency is often described by requiring that that the {\em symmetric difference} between the multisets $u$ and $u'$ has size at most 1, or by requiring that the histograms of $u$ and $u'$ have $\ell_1$ distance at most 1.)  On the other hand, when using bounded DP, it is common to denote datasets as ordered $n$-tuples.  (Indeed, adjacency is often described by requiring that the {\em Hamming distance} between the $n$-tuples is at most 1.)
Most implementations of DP are not explicit about these choices, but the wording in the documentation suggests the same conventions (e.g., see Section 1.3 in \cite{wzldsg19}).

In theory, the distinction does not matter much, because most of the functions we compute (such as Bounded Sum) are symmetric functions and do not depend on the ordering.

However, when implementing these functions using finite data types, the ordering can matter a great deal. Indeed, we show that there are datasets $u$ and $u'$ where $u'$ is a permutation of $u$ (so define exactly the same multisets) but 
$$\left|\bsfp_{L,U}(u)-\bsfp_{L,U}(u')\right| \geq
\left(1+\Omega\left(\frac{n^2}{2^{k}}\right)\right)\cdot U.$$
That is, we can obtain the same kind of blow-ups in sensitivity that we obtained due to rounding by instead just reordering the dataset.  (Our rounding attacks preserved order, i.e., we obtained $u'$ by either changing one element of the ordered tuple $u$, or by inserting/deleting one element of $u$ without changing the other elements.)


For the case $n = 3 \cdot 2^k$, this implemented sensitivity lower bound can be demonstrated with the following two datasets. Let $L = 2^j$ for some integer $j$, let $U = 2^{j+1}$, and let
\[
    u = [L_1, \ldots, L_{2^{(k+1)}}, U_1, \ldots, U_{2^k}],
\]
\[
    u' = [U_1, \ldots, U_{2^k}, L_1, \ldots, L_{2^{(k+1)}}].
\]
Note that $u'$ is a permutation of $u$.
In Theorem~\ref{prop:non-assoc-floats} we show that
the first $2^{k+1}$ terms of $u$ can be added exactly; i.e., $$\bsfp_{L,U}[L_1, \ldots, L_{2^{(k+1)}}] = 2^{k+1} \cdot L.$$ The next $2^k$ terms of $u$ can also be added exactly, and so
\[
    \bsfp_{L,U}(u) = 2^{k+1}\cdot L + 2^k \cdot U.
\]
Likewise, the first $2^k$ terms of $u'$ can be added exactly, and hence $\bsfp_{L,U}(u') = 2^k \cdot U$. However,
the addition of the next $2^{k+1}$ $L$ terms to $u'$ yield intermediate sums which are \textit{not} representable exactly as $(k, \ell)$-floats. For every $L$ term that we add, the sum falls exactly in the middle of two adjacent floats, and since the corresponding mantissa ends with an even bit, by the definition of banker's rounding the sum will round down to $2^k \cdot U$ at each step. Therefore, we get a final sum of
\[
    \bsfp_{L,U}(u') = 2^k \cdot U.
\]
Therefore,
\[
    |\bsfp_{L,U}(u) - \bsfp_{L,U}(u')| = 2^{k+1}\cdot L = 2^k \cdot U.
\]
Under $\unbdd$, the idealized sensitivity is $\max\{|L|, |U|\} = U$, and so the implemented sensitivity is a factor $2^k$ larger than the idealized sensitivity. Under $\bdd$, the idealized sensitivity is $U-L$, and so the implemented sensitivity is a factor $2^k \cdot U/(U-L)$ larger than the idealized sensitivity.

As a concrete example, if we set $j = 0$ (so $L = 1$ and $U = 2$), then for 32-bit floats (i.e., $k=23$) we get datasets of length $3\cdot 2^{23}$ and a difference in sums of $2^{23} \cdot U = 2^{24}$, which is a factor $2^{23}$ larger than the idealized sensitivity $\max\{|L|, |U|\} = 2$ under $\unbdd$ and a factor $2^{24}$ larger than the idealized sensitivity $U-L = 1$ under $\bdd$.

One reason this issue may have been missed before is that floating-point arithmetic is {\em commutative}, e.g., $\bround(a+b) = \bround(b+a)$, so it may seem like order does not matter.  However, {\em associativity} is required for the sum to be invariant under arbitrary permutations, which fails for floating-point arithmetic due to rounding.

Modular integer arithmetic is associative, so this issue does not arise for $k$-bit (unsigned or signed) integers with wraparound.  However, if instead we use {\em saturation} arithmetic, where values get clamped to the range $[\min(T),\max(T)]$ (which addresses the aforementioned sensitivity problems with wraparound), then addition of {\em signed} integers is no longer associative.  Indeed, if $n\cdot \min\{U,-L\} \geq 2^{k+1}$, 
we exhibit two datasets $u,u'$ of length $n$ such that $u'$ is a permutation of $u$ and
$$\left|\bsfp_{L,U,n}(u)-\bsfp_{L,U,n}(u')\right| \geq 2^k-1.$$
That is, we get no improvement over the (trivial) sensitivity we had with modular arithmetic.

As formalized in Theorem~\ref{thrm:sens-ints-sat-unknown} and Example Attack \ref{ea:saturation}, this implemented sensitivity lower bound can be illustrated
with the following two datasets. Set $\alpha = \lceil \frac{\max(T)-\min(T)}{|L|} \rceil$, $\beta = \lceil \frac{\max(T)-\min(T)}{U} \rceil$, and let
\[
    u = [L_1, \ldots, L_{\alpha}, U_1, \ldots, U_{\beta}],
\]
\[
    u' = [U_1, \ldots, U_{\beta}, L_1, \ldots, L_{\alpha}],
\]
where $U>0$ and $L<0$. First, by definition of $\alpha$, we see that $\bsfp_{L, U}[L_1, \ldots, L_{\alpha}] = \min(T)$, given that saturation arithmetic clamps the negative values at $\min(T)$. When we add the next $\beta$ ``$U$'' terms, by definition of $\beta$, the intermediate sums go all the way up to $\max(T)$, and then saturation arithmetic clamps the sum at $\max(T)$, since all of the $U$ terms are positive. Therefore, 
\[
    \bsfp_{L, U}(u) = \max(T).
\]
Likewise, for $u'$, the first $\beta$ ``$U$'' terms result in an intermediate sum of $\max(T)$, and then the sum remains clamped there. The next $\alpha$ ``$L$'' terms then result in an intermediate sum of $\min(T)$, where it remains clamped. Therefore,
\[
    \bsfp_{L, U}(u') = \min(T).
\]
The result is a difference in sums of
\[
    \left|\bsfp_{L,U,n}(u)-\bsfp_{L,U,n}(u')\right| = \max(T) - \min(T) = 2^k-1,
\]
and since $u'$ is a permutation of $u$, we have $u \simeq u'$. Under $\unbdd$ the idealized sensitivity is $\max\{|L|, |U|\}$, and under $\bdd$, the idealized sensitivity is $U-L$, so this implemented sensitivity can be much greater than the idealized sensitivities.

As a concrete example, if we set $L = -2^{14}$, $U = 2^{15}$, and $k=32$, then the above construction gives us two datasets of size $n = \lceil \frac{\max(T)-\min(T)}{|L|} \rceil + \lceil \frac{\max(T)-\min(T)}{U} \rceil$ where the difference in sums is $2^k-1 = 2^{32}-1$.
This is more than a factor $2^{16}$ larger than the idealized sensitivity $\max\{|L|, |U|\} = 2^{15}$ under $\unbdd$ and the idealized sensitivity $U-L = 2^{15}+2^{14}$ under $\bdd$.


\def\arraystretch{1.0}

\begin{table*}
{\normalsize

\begin{center}

\begin{tabular}{|l|l|c|c|}\hline

\multirow{2}{*} & \hspace{0.55cm} \textbf{Adjacency relation} & \textbf{Implemented} & \textbf{Conditions} \\ & & \raisebox{0.3cm}{\textbf{sensitivity}} & \\[-0.3cm] \cline{1-4}

\multirow{2}{*}{Idealized
[Thm.\,\ref{thrm:reals-sens}]
} & Bounded DP & $U-L$ & \\ \cline{2-4}

& Unbounded DP & $U$ & \\ \cline{1-4}

\multicolumn{1}{|l|}{\raisebox{-0.1cm}{Modular addition
[Thm.\,\ref{thrm:mod-attack}]
}} & Bounded DP & $\max(T)-\min(T)$ & $nU > \max(T)$ \\

\cline{2-4}

\multicolumn{1}{|l|}{\raisebox{0.1cm}{(signed or unsigned \texttt{ints})}}& Unbounded DP & $\max(T)-\min(T)$ & \\

\hline

\multirow{3}{*}{Saturation (signed \texttt{ints})}  & \multirow{2}{*}{Bounded DP} & \multirow{2}{*}{$\max(T)-\min(T)$} & $nU > \max(T)$ \\ & & & \raisebox{0.15cm}{$U>0$ and $L<0$} \\[-0.15cm] \cline{2-4}

\multicolumn{1}{|l|}{\raisebox{0.1cm}{
[Thm.\,\ref{thrm:sens-ints-sat-unknown}]
}}
 & Unbounded DP  & $\max(T)-\min(T)$ & $U>0$ and $L<0$ \\ \cline{1-4}

\multirow{2}{*}{Floats with $k$-bit mantissa} & Bounded DP\
[Thm.\,\ref{prop:rounding-ham-floats}]
& $\geq (n-1) \cdot (U-L)$ & $U \geq 2^k \cdot (U-L)$ \\ \cline{2-4}

& Unbounded DP
[Thm.\,\ref{thrm:quadratic-round-attack}]
& $(1+ \Theta(n^2/2^k)) \cdot U$ & $L \leq -U \cdot n/2^{k+3}$ \\ \cline{1-4}

\end{tabular}

\caption{Lower bounds obtained in our attacks for numerical type $T$ and $-U \leq L \leq U$.}
\label{table:lowerbounds}

\end{center}

}

\end{table*}

\subsection{Attacks} \label{sec:attacks}

In Section~\ref{sec:concreteattack}, we show that our attacks can be carried out on many existing implementations of differential privacy. In each attack, we set the privacy-loss parameter to $\eps\in \{0.5,1\}$, set the parameters $L$, $U$, and possibly $n$ as required by our sensitivity lower bounds, construct two appropriate adjacent datasets $u\bdd u'$ or $u\unbdd u'$, and run the supposedly $\eps$-DP Bounded Sum mechanism $\cM$ on $u$ and $u'$. We show that by applying a threshold test to the outputs, we can almost perfectly distinguish $\cM(u)$ and $\cM(u')$.  For example, in one experiment, we succeed in correctly identifying whether the dataset is $u$ or $u'$ in all but 3 out of 20,000 runs of the mechanism, which we show would be astronomically unlikely for a mechanism that is truly $0.5$-DP. These attacks are described in more detail in Section~\ref{sec:bounded-sum-computers}.

Our overflow attack on integers works on IBM's diffprivlib (Section~\ref{sec:ibmattack}).
Our rounding attack on floats works on Google's DP library, IBM's diffprivlib, OpenDP / SmartNoise,\footnote{Our paper attacks the code at \url{https://github.com/opendp/opendp}, and the attacks also work on OpenDP/SmartNoise-core at \url{https://github.com/opendp/smartnoise-core}.} and Chorus.\footnote{Chorus only offers sensitivities for the bounded DP setting, so slight adjustments were made for the repeated rounding attacks --- specifically, a value of 0 was appended to dataset $u'$ in \cref{eqn:u_prime}.}
Our repeated rounding attacks on floats work on Google's DP library, IBM's diffprivlib, OpenDP / SmartNoise,\footnote{In particular, we are able to attack \texttt{opendp.trans.make\_sized\_bounded\_sum} using the attack described in Section~\ref{sec:round-attack-floats}, and to attack \texttt{opendp.trans.make\_bounded\_sum}. using the attack described in Section~\ref{sec:accum-round-floats-ii}.}
Chorus, and PINQ. Lastly, our re-ordering attack works on Google's DP library, IBM's diffprivlib, OpenDP / SmartNoise, Chorus, and PINQ (Section~\ref{sec:implement-nonassoc-attacks}).
With the exception of OpenDP, none of the libraries include a disclaimer indicating that the integer overflow or floating-point rounding behaviors we exploit could yield vulnerabilities.

\newcommand{\val}{\mathrm{val}}
\newcommand{\id}{\mathrm{uid}}

Our attacks utilize contrived datasets $u$ and $u'$ that are unlikely occur ``in the wild.''  However, our overflow and rounding attacks can easily be applied to attack realistic datasets given access to a DP system that fields external queries with user-defined microqueries (a.k.a., mappers or row transforms).
Specifically, consider a dataset $x=[x_1,\ldots,x_n]$ where each record $x_i$ contains a known/public user identifier $\id_i$ for the $i$'th individual, and a sensitive bit $\val_i$.  Suppose further that the dataset is sorted by the identifiers, i.e., $\id_1\leq \id_2\leq\cdots \leq \id_n$.
Then it is straightforward for an adversarial analyst to define a simple micro-query $q$
such that
$$q(x)=[q(x_1),\ldots,q(x_n)]  =
\begin{cases}
u  & \text{if $\val_i=1$}\\
u' & \text{if $\val_i=0$}
\end{cases}
$$
where $u, u'$ are the datasets in our attacks, and they differ on the $i$'th record.  Thus, by seeing the result of the bounded-sum mechanism on $q(x)$, the 
adversary can extract the $i$'th individual's bit $\val_i$ with very high probability. This can be generalized to not just attack a particular individual, but a constant fraction of the individuals in the dataset (e.g., any individual in the middle half of the dataset).
Note that our use of microqueries here is different than in Haeberlen et al.~\cite{hpn11}.  Our microqueries are pure functions, not making use of any side channels or global state, and are simple enough to be implemented in virtually any domain-specific language (DSL) for microqueries.  Indeed, see Figure~\ref{fig:sql-code} for a code snippet showing how our attack would look in SQL and Figure~\ref{fig:pinq-code} for PINQ.

\begin{figure}[H]
    \begin{lstlisting}[language=SQL,frame=single]
SELECT 
    SUM(CASE 
        WHEN uid < {m} THEN {U}
        WHEN uid = {m} THEN val * {U}
        WHEN uid % 2 = 1 THEN {pos_val}
        ELSE {neg_val}
    END)
FROM unsized_64_u
\end{lstlisting}
    \caption{Example SQL code for our repeated rounding attack.  Here the upper clamping bound is $\texttt{U}$, the lower clamping bound is $\texttt{L=-U}$, 
    $\texttt{uid}$ denotes the user id attribute of an individual record, $\texttt{m}$ is the middle user id (the one we wish to attack), $\texttt{val}$ is the individual's sensitive bit, and $\texttt{pos\_val}$ and $\texttt{neg\_val}$ are particular floats in the interval $[\texttt{L},\texttt{U}]$ from our attack.} 
    \label{fig:sql-code}
\end{figure}

\begin{figure}[H]
    \begin{lstlisting}[frame=single]
var dp_sum = new PINQueryable<string>(arr_v_queryable, new PINQAgentLogger(filepath))
    .Select(l => l.Split(','))
    .Select(terms => new Tuple<long, double>(Convert.ToInt64(terms[0]), Convert.ToDouble(terms[1])))
    .Select(tup => (tup.Item1 < attackUID) ? U :
        (tup.Item1 == attackUID) ? U * tup.Item2 :
        (tup.Item2 % 2 == 1) ? pos_val : neg_val)
    .NoisySum(100.0, v => v);
\end{lstlisting}
\caption{Example PINQ code for our repeated rounding attack.}
\label{fig:pinq-code}
\end{figure}

Since PINQ uses 64-bit floats, our repeated rounding attack requires quite a large dataset (e.g., $n=2^{28}$), and we were not able to complete execution due to memory timeouts, but the attack is possible in principle.  Perhaps due to concerns about timing and other side-channel attacks (see Section~\ref{sec:prior}), recent DP SQL systems such as Chorus do not support user-defined microqueries. However, it may be still be possible to carry out our basic rounding attack (not the repeated rounding one), since it can be implemented in such a way that the microquery is simply the clamp function (applied automatically in most implementations of Noisy Bounded Sum), if we know that the dataset is sorted by the value we wish to attack (e.g., consider a dataset of salaries, and where the employee with second-highest salary wishes to find out the CEO's salary).
Finally, we also hypothesize that the attack can be carried out on DP Machine Learning systems that use DP-SGD, using a user-defined loss function $\ell(\theta,x_i)$, designed so that $\grad_\theta \ell(\theta,x_i)$ equals our desired microquery $q(x_i)$.  If we set parameters so that there is only one iteration, using the entire dataset as a batch, then the output parameter vector $\theta$ will exactly give us a Noisy Bounded Sum. Since these ML systems allow using low-precision floats, these attacks should be feasible even on very small datasets.

To carry out our reordering attacks on a DP query system, we would need a system that can be made to perform a data-dependent reordering. The main takeaway from these attacks is that we need to be explicit and careful about how we treat ordering when we implement DP.

In any case, our attacks and experimental results reported in the appendix already demonstrate that essentially all of the implementations of DP fail to meet their promised DP guarantees.

\medskip

\textit{Responsible Disclosure.} Immediately after the submission of this paper, we shared the paper with the maintainers of all of the DP libraries that exhibit the vulnerabilities we described and informed them that we would wait 30 days to post the paper publicly, to give them time to implement any needed patches (like our solutions below). In response, all offered some indication that they are working to resolve these issues, and Google's Bug Hunter program acknowledged our contribution to Google's security with an Honorable Mention.\footnote{\url{https://bughunters.google.com/profile/d946f172-9bd8-4b84-9f17-d86046f5af11}.}

All of the authors of this paper are members of the OpenDP team and are involved with the development of the library. The findings of this paper occurred while the authors were writing mathematical proofs to accompany the algorithms that are part of the OpenDP library as part of the OpenDP vetting process, upon which we realized the vulnerabilities described in this paper. We are updating the OpenDP library following the roadmap described in Section~\ref{sec:roadmap}.\footnote{For example, see \url{https://github.com/opendp/opendp/pull/465} and \url{https://github.com/opendp/opendp/pull/467}.} We believe that our findings also illustrate the importance of vetting processes such as the one put in place for OpenDP.

\textit{Code.} Code for our attacks and experiments is available in the Github repository at \url{https://github.com/cwagaman/underestimate-sensitivity}.

\subsection{Solutions}\label{sec:sols2}

\paragraph{Dataset Adjacency Relations (Section~\ref{sec:distance-funcs}).}
Toward addressing the potential reordering vulnerabilities, we propose that when datasets $u=[u_1,\ldots,u_n]$ are stored as ordered tables, we should define and distinguish four adjacency 
relations
$\simeq$.  Specifically, the bounded-DP relation $\bdd$ should separate into 
the usual $\simham$ (``Hamming'', Definition~\ref{defn:ham-dist}\color{black})
where $u\simham u'$ means that there is at most one coordinate $i$ such that $u_i\neq u_i'$, and $\simCO$ (``change-one'', Definition~\ref{defn:co-dist}\color{black}),
where $u\simCO u'$ means that we can convert the multiset of elements in $u$ into the multiset of elements in $u'$ by changing one element.  Equivalently, $u\simCO u'$ iff there is a permutation $\pi$ such that $\pi(u) \simham u'$.  Similarly, the unbounded-DP relation $\unbdd$ should split into an ordered version $\simID$  (``insert-delete'', Definition~\ref{defn:id-dist}\color{black})
and an unordered version $\simSym$ (``symmetric distance'', Definition~\ref{defn:symm-dist}\color{black}). 
The relationships between the four adjacency relations are summarized in Table~\ref{table:metrics} and Lemma~\ref{lemma:metric-relate}.
Throughout this paper, all of our theorems clearly state the adjacency relation that is being used. \color{black}
By being explicit about ordering in our adjacency relation, and in particular analyzing DP and sensitivity with respect to a specific relation, we can avoid the reordering vulnerabilities.  In particular, a DP system using ordered adjacency relations should take care to disallow data-dependent reorderings, unless they can be shown to preserve ordered adjacency (or, more generally are stable with respect to Hamming distance or insert-delete distance). 

\paragraph{Random Permutations (RP) (Section~\ref{sec:rp-sat-int}).}
Given the above adjacency relations, it still remains to find implementations of Bounded Sum that achieve a desired sensitivity with respect to them. In general, it is easier to bound sensitivity with respect to the ordered relations $\simham$ and $\simID$.  For example, we can show that Iterated Sum of signed integers with saturation arithmetic has the idealized sensitivities of $U-L$ and $\max\{U,|L|\}$ with respect to $\simham$ and $\simID$, respectively, whereas our reordering attacks show that the sensitivities can be as large as $2^k-1$
with respect to $\simCO$ and $\simSym$.

Motivated by this observation in Method~\ref{method:sat-add-RP-ints},
we give a general method for converting sensitivity bounds with respect to the ordered relations into the same bounds with respect to the unordered relations: randomly permute the dataset before applying the function.  To formalize the effect of this transformation, we need to extend the definition of sensitivity to randomized functions.  
Following \cite{sv21}, we say that a randomized function $f$ has {\em sensitivity at most $\Delta$ with respect to $\simeq$} if for all pairs of datasets $u,u'$ such that $u\simeq u'$, there is a {\em coupling} of the random variables $f(u)$ and $f(u')$ such that $\Pr[|f(u)-f(u')|\leq \Delta]=1$. We prove that if $\rp$ is the random permutation transformation on datasets and $f$ is any function on datasets, we have:
\begin{eqnarray*}
    \ssym (f\circ \rp) &\leq& \sid  f, \text{ and}\\
    \sco  (f\circ \rp) &\leq& \sham f.
\end{eqnarray*}
Given this result (Theorem~\ref{thrm:rpsummation}),
it suffices for us to obtain sensitivity bounds with respect to ordered adjacency relations.

\paragraph{Checking or Bounding Parameters (Section~\ref{sec:checkmul}).}
As can be seen from Table~\ref{table:lowerbounds}, many of our attacks only lead to a large increase in sensitivity under certain parameter regimes, e.g., $n\cdot U \geq 2^k$.  In the case of bounded DP, all of these parameters ($n, U, L, k$) are known in advance (before we touch the sensitive dataset), and we can prevent the problematic scenarios by either constraining the parameters or incorporating the dependence on those parameters into our sensitivity bounds. Indeed, we show that many of the sensitivity lower bounds discussed
in Section~\ref{sec:lowerbounds} are tight by giving nearly matching upper bounds.

For example, in the case of integer data types with bounded DP,
we prove that the implemented sensitivity equals the idealized sensitivity 
provided that $U \cdot n \leq \max(T)$  
and $L \cdot n \geq \min(T)$.
These conditions ensure that overflow cannot occur.  Since for bounded DP, $U$, $L$, and $n$ are public parameters that do not depend on the sensitive dataset, we can simply check that these conditions hold before performing Bounded Sum. 

For floats, we provide a similar-style parameter check to ensure that a summation does not hit $\pm \texttt{inf}$ (Section~\ref{sec:61floats}).

\newcommand{\Trunc}{\mathit{Trunc}}

\paragraph{Truncated Summation \cref{method:trunc}.} 
For the case of unbounded DP, we cannot
perform a parameter check involving $n$ as above, since $n$ is not publicly known and may be sensitive information.   We can, however, achieve a similar effect by composing a solution for bounded DP with a {\em truncation} 
operation on datasets, namely
$$\Trunc_n(u) = [u_1,u_2,\ldots,u_{\min\{n,\len(u)\}}].$$
This truncation operation behaves nicely with respect to the {\em ordered} adjacency relation $\simID$.  Specifically, we prove (Theorem~\ref{thrm:checked-arith-known})
that for every function $f$ on datasets,
\begin{equation} \label{eqn:truncation}
\Delta_{\simID} \left(f\circ \Trunc_n\right) \leq 
\max\left\{\Delta_{\simID,\leq n} f, \Delta_{\simCO,\leq n} f\right\},
\end{equation}
where $\Delta_{\simeq,\leq n}$ denotes sensitivity restricted to datasets $u\simeq u'$ of length at most $n$.  Intuitively, inserting or deleting an element from a dataset $u$ either results in inserting or deleting an element from $\Trunc_n(u)$ (if no truncation occurs) or changing one element of $u$ (if truncation occurs).  

Applying (\ref{eqn:truncation}) to the case of truncated integer summation, where $f$ is Iterated Summation, we recover the idealized sensitivity with respect to $\simID$ provided that $n\cdot U\leq \max(T)$ and $n\cdot L \geq \min(T)$ (to prevent overflow) and both $U$ and $L$ are of the same sign (so that $U-L \leq \max\{U,|L|\}$ and the idealized sensitivity with respect to $\simCO$ is no larger than the idealized sensitivity with respect to $\simID$).

\paragraph{Split Summation (\cref{thrm:rp-split-summation}).} The above example of truncated integer summation with respect to unbounded DP is one of several cases where we obtain better sensitivity bounds when $U$ and $L$ are of the same sign.  Another is integer summation with saturation arithmetic with respect to unordered adjacency relations: when $U$ and $L$ are of different signs, this is vulnerable to our reordering attack, but when they are of the same sign, we show that it has implemented sensitivity equal to the idealized sensitivity. We also give an example below with floating-point numbers where computing sums on terms with matching signs helps achieve an implemented sensitivity that is closer to the idealized sensitivity.  

To take advantage of the benefits that can come from summing terms with the same sign, we introduce  the {\em split summation} technique, where we separately sum the positive numbers and the negative numbers in the dataset (Method~\ref{method:split-sum-ints}).
That is, given a dataset $u$ and a base summation method $\bs^*$, we let $\pos(u)$ be the dataset consisting of the positive elements of $u$, $\negs(u)$ be the dataset consisting of the negative elements of $u$, and define
$$
\bs^{**}_{L,U}(u) = \bs^*_{0,U}(\pos(u)) + \bs^*_{L,0}(\negs(u)).
$$
Importantly, adding or removing an element from a dataset $u$ corresponds to adding or removing an element from only one of $\pos(u)$ and $\negs(u)$, so we do not incur a factor of 2 blow-up in sensitivity.  Using this split summation technique in combination with either truncation or saturation arithmetic allows us to recover the idealized sensitivity for summation of signed integers with unbounded DP.
\color{black}

\paragraph{Sensitivity from Accuracy (Section~\ref{sec:accuracy}).
} Our matching upper bound on the sensitivity of Iterated Summation of floats with banker's rounding is obtained via reduction to accuracy (Lemma~\ref{lemma:sa}).
Specifically, we can use the triangle inequality to show that we have
\begin{eqnarray*}
\lefteqn{|\bs^*(u)-\bs^*(u')| \leq} &&\\
&& |\bs(u)-\bs(u')| + |\bs^*(u)-\bs(u)| + |\bs^*(u')-\bs(u')|.
\end{eqnarray*}
The first term is bounded by the idealized sensitivity, and the latter two terms can be bounded by using known numerical analysis results about the accuracy of iterated summation. Specifically, a bound from Wilkinson~\cite{w60} shows that
$$|\bs^*(u) - \bs(u) | 
= O\left(\dfrac{n}{2^k} \sum_{i=1}^n |u_i|\right)
= O\left(\frac{n^2\cdot \max\{|L|, |U|\}}{2^k}\right).$$

Note that the resulting upper bound on sensitivity described above has an ``$n$'' term, so it can only be applied directly in the case of bounded DP.  To handle unbounded DP, we can combine it with the truncation technique described above.

Specifically, combining Iterated Summation $\bs^*$ with
the Truncation transformation $\Trunc_n$, we get an unbounded-DP sensitivity bound of:
\begin{equation} \label{eqn:truncation-iterated}
    \Delta_{\simID} \left(\bs^*_{L,U}\circ \Trunc_n\right) \leq
\left(1+ O\left(\frac{n^2}{2^k}\right)\right)\cdot \max\{|L|, |U|\},
\end{equation}
provided that $L$ and $U$ have the same sign, and similarly for the unordered sensitivity $\Delta_{\simSym}$ if we also combine with a random permutation.
Thus, when $n\ll 2^{k/2}$, we recover almost the idealized sensitivity bound of $\max\{|L|, |U|\}$.  
Higham~\cite{h93} has proven that other summation methods for floats, such as Pairwise Summation (which is the default in \texttt{numpy}) and Kahan Summation have better bounds on accuracy,
allowing us to replace the $O(n^2)$ above with $O(n\log n)$ or $O(n)$, respectively (Theorem~\ref{thrm:sens-upper-bounds}).
These methods closely approximate the idealized sensitivity whenever $n\ll 2^k$, which covers many practical scenarios (see discussion in Section~\ref{sec:roadmap}). 

\color{black}

\paragraph{Shifting Bounds (Section~\ref{sec:shift-bounds}).}
For the case of bounded DP, some of the sensitivity blow-ups (such as the one exhibited in the basic rounding attack) come from having $U$ (or $\max\{|L|, |U|\}$) rather than $U-L$ in the implemented sensitivity bound, as $U$ can be much larger than $U-L$.  One way to address this is to subtract $L$ from every element of the dataset, so that all elements lie in the interval $[L'=0,U'=U-L]$, apply our solutions that have $U'$ in the sensitivity bound, and then add $L\cdot n$ at the end as post-processing.  That is, we convert a method $\bs^*$ with sensitivity depending on $\max\{|L|, |U|\}$ (such as \cref{eqn:truncation-iterated}) into a method $\bs^{**}$ with sensitivity depending on $U-L$ as follows:
$$\bs^{**}_{L,U,n}(u) = \bs^*_{0,U-L,n}(u_1-L,u_2-L,\ldots,u_n-L)+L\cdot n,$$
where a noisy version of the $\bsfp_{L,U,n}$ term is computed first and $L\cdot n$ is added as a post-processing step, after noise addition.
In particular, combining this technique with the Sensitivity-via-Accuracy analysis of Iterated Summation, we obtain a bounded-DP sensitivity bound of 
$$\Delta_{\simham} \bs^{**}_{L,U,n}(u) \leq
\left(1+ O\left(\frac{n^2}{2^k}\right)\right)\cdot (U-L).$$\color{black}

\color{black}

\paragraph{Reducing Floats to Ints (Section~\ref{sec:floats2ints}).} Another attractive solution for floating-point summation is to reduce it to integer summation, since the latter achieves the idealized sensitivity with simple solutions. The idea behind this method is to cast floats to fixed-point numbers, which we can think of as $k$-bit integers. To achieve this casting, we introduce a discretization parameter $D$ (which is chosen by the data analyst and corresponds to the precision with which numbers are represented -- e.g., setting $D=0.01$ means that values are represented to the hundredths place), round each of the dataset elements according to the discretized interval, and apply one of our integer solutions. We can think of this process of ``integerizing'' as a dataset transform. This idea of mapping floating-point values to integers is similar to the technique of performing \textit{quantization} to work with low-precision number formats in machine learning applications \cite{quantization}.

We provide a complete description and analysis of this solution in Section~\ref{sec:floats2ints}. To simplify the description here in the introduction, we present the solution in the specific case where $L = -U$. For general intervals $[L,U]$ and bounded DP, we shift the discretization range to take advantage of the full range of $k$-bit integers (see Section~\ref{sec:floats2ints}).

Before describing this method formally, we provide some intuition. The idea behind this strategy is to group floating-point values into buckets (where close values are put into the same bucket or close buckets) and enumerate the buckets. More specifically, given the bounds $L$ and $U$ for our floats and the discretization parameter $D$, where $L = -U$ and $K=\lfloor U/D\rfloor$, we round all elements of the interval $[L,U]$ to the nearest element of the sequence
$$-KD, -(K-1)D, \ldots, -D, 0, D, \ldots, (K-1)D, KD,$$ 
so $|L|\approx U \approx KD.$
We then use the signed integers to enumerate this sequence.
If $K\leq (2^{k-1}-1)$, we can think of the rounded elements as corresponding to  $k$-bit integers, which we can then sum using a summation method for integers. We can then post-process the resulting (noisy) sum into a (noisy) floating-point sum. We discuss how to optimize the choice of the discretization parameter $D$ after the description of the method.  

Formally, we can consider the function mapping a dataset
$u=[u_1,\ldots,u_n]$ of floats to the signed integer dataset
$$\FloatToInt_{L,U,D}(u) = [\round(u_1/D),\ldots,\round(u_n/D)],$$
where $\round(\cdot)$ denotes rounding to the nearest integer.
We can then apply a summation method $\bs^*_{-K,K}$ for $k$-bit integers, and then rescale and shift to obtain our floating-point result:
\begin{equation}
\label{eq:f2i-no-post}
\bs^{**}_{L,U,D}(u)
 = (\bs^*_{-K,K}\circ \FloatToInt_{L,U,D})(u)\cdot D.
\end{equation}
The sensitivity of (\ref{eq:f2i-no-post}) can be bounded: as long as we apply one of our solutions for integers, the Bounded Sum of the integers will have the idealized sensitivity. However, to bound the sensitivity of the overall function, we would still need to account for the potential rounding that can occur when multiplying the integer sum by $D$.

An even better approach for the summation than (\ref{eq:f2i-no-post}) is to perform noise addition and obtain DP {\em before} scaling back to floats.  That is, we consider the DP mechanism
$$\cM_{L,U,D}(u)
 = \left((\bs^*_{-K,K}\circ \FloatToInt_{L,U,D})(u)+\Noise(K/\varepsilon)\right)\cdot D,$$
where $\Noise(s)$ denotes a noise distribution with scale $s$ suitable for $k$-bit integers (e.g., the discrete Laplace Mechanism~\cite{GhoshRoSu12}).
Then the privacy of $\cM$ follows from the privacy of the noisy integer summation together with the post-processing property of differential privacy, and there is no need to analyze any floating-point rounding effects in either the sensitivity analysis or the noise addition step. Indeed, implementing
discrete noise-addition mechanisms is much simpler than implementing
floating-point noise-addition mechanisms (such as Mironov's Snapping Mechanism~\cite{Mironov12}).

One remaining question is how to pick the discretization parameter $D$ to maximize accuracy. 
A choice that maintains high precision and reduces the risk of (still-private) answers that overflow is $$D=(U-L)\cdot \nmax/(2^{m-2}-\nmax),$$ where $\nmax$ is the maximum expected dataset size and $m$ is the bitlength of the integers we are using (e.g., $m=64$). This choice of $D$ ensures that, if our dataset size is at most $\nmax$, then the sum will be no larger than $K\cdot \nmax < 2^m/4$.

For example, when working with the discrete Laplace Mechanism, by analyzing the tail bounds of the distribution, the probability of overflow for the Noisy Bounded Sum can be shown to be
$\exp(-\Omega(K\cdot \nmax/(K/\varepsilon))) = \exp(-\Omega(\eps\cdot \nmax)),$
which will be astronomically small for typical settings of parameters.

In \cref{sec:floats2ints}, we also analyze the impact of the rounding in $\FloatToInt$ on accuracy. We show that the rounding incurs an additive error of at most
$O((U-L)\cdot n\cdot \nmax/2^m)$ on worst-case datasets and
$O((U-L)\cdot \sqrt{n}\cdot \nmax/2^m)$ on typical datasets.  These errors are comparable to the accuracy bounds for non-private iterated summation of $(k,\ell)$-bit floats~\cite{h93}, with the advantage of not affecting the sensitivity or privacy analysis.

This solution is a variant of the common suggestion to replace floating-point arithmetic in DP libraries with integer or fixed-point arithmetic~\cite{BalcerVa18,ghv20}.  However, for a data analyst, our solution retains the usability benefits of floating-point numbers, such as the large dynamic range afforded by the varying exponents.  Indeed, the input dataset and output result remain as floating-point numbers, and the conversion to and from integer/fixed-point representation only happens internal to the (noisy) sum. In particular, the mapping between floating-point numbers and fixed-point numbers is determined dynamically based on the parameters $L$ and $U$, in contrast to adopting a single fixed-point representation throughout a DP library.

\paragraph{Modular Sensitivity (Section~\ref{sec:modularnoise}).
}  One way to address the large sensitivity of Bounded Sum over $k$-bit integers with wraparound is to change our definition of sensitivity, measuring the distance between $k$-bit integers as if they are equally spaced points on a circle. That is, we replace $|f(u)-f(u')|$ in the definition of sensitivity with $\min\{|f(u)\ominus f(u')|, |f(u') \ominus f(u)|\}$, where $\ominus$ is subtraction with wraparound over the integer type $T$ on which Bounded Sum is being computed; we call this the {\em modular sensitivity} of $f$ (Definition~\ref{defn:mod-sens}).  With this change, Bounded Sum recovers its idealized sensitivity (Theorem~\ref{thrm:modular-arith-unknown}). But this begs the question of whether functions with bounded modular sensitivity can still be estimated in a DP manner. Fortunately, we show that the answer is {\em yes}: if we add {\em integer-valued} noise (such as in the Discrete Laplace~\cite{GhoshRoSu12} or Discrete Gaussian~\cite{BunSt16} Mechanisms) and also do the noise addition with wraparound, then the result achieves the same privacy parameters as if we had done everything exactly over the integers, with no wraparound (Theorem~\ref{thm:modularnoise}).
Indeed, by the fact that modular reduction is a ring homomorphism, we can analyze the output distribution as if we had done modular reduction only at the end, which amounts to post-processing.  

Several DP libraries already implement this solution, because it happens by default when everything is a $k$-bit integer. The reason we were able to attack IBM's diffprivlib with the overflow attack (Section~\ref{sec:ibmattack})
is that, while it computes the Bounded Sum using integer arithmetic, the noise addition is done using floating-point arithmetic. 

It is not clear whether there is an analog of this solution for the use of sensitivity in the Exponential Mechanism (see Section~\ref{sec:prior}).

\paragraph{Changing Overflow Mode (Sections \ref{sec:splitsum} and \ref{sec:rp-sat-int}).
}  As mentioned above, another solution for the case of $k$-bit integers is to replace wraparound with saturation arithmetic, combining it with either the random permutation technique or split summation in order to handle unordered adjacency relations. Saturation arithmetic with split summation is analyzed in Section~\ref{sec:splitsum}, and with randomized permutation in Section~\ref{sec:rp-sat-int}.

\paragraph{Changing Rounding Mode (Section \ref{sec:splitsumrp}).
}  For our rounding attacks against floating-point numbers, another solution (beyond those based on sensitivity-via-accuracy) 
is to replace the default banker's rounding with another standard rounding mode, namely round toward zero (RTZ). The algorithm is described in Method~\ref{method:fsplit-sum}. In Theorems~\ref{thrm:rp-split-summation} and \ref{thrm:ord-toward-zero-known}, we show that this gives an implemented sensitivity of
$$\Delta_{\unbdd} \bsfp_{L,U,n}
\leq \left(\max\left\{1+O\left(\frac{n}{2^k}\right),2\right\}\right) \cdot \max\{|U|,|L|\},$$
provided that (a) all elements of the datasets have the same sign (i.e., $L\geq 0$ or $U\leq 0$), and (b) we work with an ordered dataset adjacency relation. To handle the  case of mixed signs, we can use
the shifting technique (in case of bounded DP) or split summation (in the case of unbounded DP).  
To handle unordered dataset relations, we can apply the random permutation technique (Theorem~\ref{thrm:ord-toward-zero-known}).
Altogether these solutions maintain a sensitivity that is within a small constant factor of the idealized sensitivity in all cases.

\def\arraystretch{1.2}

\begin{table*}

{\normalsize

\begin{center}

\begin{tabular}{|c|l|c|c|c|}\hline

\textbf{Data type} & \multicolumn{2}{|c|}{\raisebox{-0.2cm}{\textbf{Solution name}}} & \raisebox{-0.2cm}{$\dfrac{\textit{implemented sensitivity}}{\textit{idealized sensitivity}}$} & \raisebox{-0.2cm}{\textbf{Conditions}} \\[0.4cm] \hline

\texttt{int}s & \multicolumn{2}{|l|}{RP [Thm.\,\ref{thrm:rpsummation}]} & & \\

\cline{2-3}

& \multicolumn{2}{|l|}{Dataset adjacency relations [Thm.\,\ref{thrm:ord-dist-saturation-sens}]} & 1 & \\

\cline{2-3}

& \multicolumn{2}{|l|}{Checking parameters [Thm.\,\ref{thrm:checked-arith-known}]} & & \\

\cline{2-3}

& \multicolumn{2}{|l|}{Modular noise addition [Thm.\,\ref{thrm:modular-arith-unknown}]} & & \\

\hline

Floats & \multicolumn{2}{|l|}{Reducing floats to ints [Sec. \ref{sec:floats2ints}]} & $1 + O(n\sqrt{n}/2^m)$ & \\[0.1cm] \hline

Floats & \multicolumn{2}{|l|}{\raisebox{-0.1cm}{RP + Split summation + RTZ}} & $\min\{(1+O(n/2^k)), 2\}$ & $\textrm{sign}(U) = \textrm{sign}(L)$ \\

\cline{4-5}

& \multicolumn{2}{|l|}{[Thms.\,\ref{thrm:rp-split-summation}, \ref{thrm:toward-zero-known}]} & $\min\{(2+O(n/2^k)), 5\}$ & $\textrm{sign}(U) \neq \textrm{sign}(L)$ \\

\cline{1-5}

Floats & \multicolumn{1}{|l|}{\raisebox{-0.2cm}{Sensitivity from accuracy}} & Iterative & $1+\Theta(n^2/2^k)$ & \\

\cline{3-5}

& \multicolumn{1}{|l|}{\raisebox{0.05cm}{+ Truncated summation}} &  Pairwise & $1+O(n \log(n) / 2^k)$ & $n<2^k$ \\

\cline{3-5}

& \multicolumn{1}{|l|}{\raisebox{0.2cm}{[Thm.\,\ref{thrm:sens-upper-bounds}]}} & Kahan & $1 + O(n/2^k)$ & $n < 2^k$  \\

\cline{1-5}

\end{tabular}

\caption{Upper bounds obtained in our solutions for numerical type $T$, $-U \leq L \leq U$, and datasets of length $n$. The parameter $k$ is the number of bits (for $k$-bit ints) and the mantissa length (for $(k,\ell)$-bit floats); the parameter $m$ is the bit-length of the integer data type to which the floats were reduced. We remark that in the iterative, pairwise, and Kahan sensitivities, the 1 factor becomes a 2 in the case where $n$ is unknown and $\textrm{sign}(U) \neq \textrm{sign}(L)$. Note that all the methods listed in a given solution box need to be used in combination (e.g., RP, Split Summation, and RTZ all need to be used together).{}}\label{table:upperbounds}

\end{center}

}

\end{table*}

\subsection{Roadmap for Implementing Solutions}\label{sec:roadmap}

In this section, we present a set of recommendations aimed at DP practitioners who wish to fix the vulnerabilities that we have presented in Section \ref{sec:bounded-sum-computers}.
Table~\ref{table:upperbounds} presents several solutions and their associated sensitivities. Several of the solutions can be implemented with only a few alterations to current code.

\subsubsection{Integer Summation} 
We recommend the following solution  as easiest to implement for integer summation:

\begin{itemize}
    \item Modular sensitivity (Section~\ref{sec:modularnoise}):
    In many programming languages, the default method for handling integer overflow is wraparound, which is equivalent to modular summation. Thus, many DP libraries luckily already implement the modular solution that we prove to be correct in Theorem~\ref{thm:modularnoise}.
    One point of caution is that {\em both} the summation and noise addition steps must occur in this modular fashion.
    Wraparound is a standard feature of arithmetic on integer data types, so this solution should not create new unexpected behaviors for data analysts. 
    
\end{itemize}

If library maintainers are uncomfortable with the possibility of wraparound or unable to offer modular noise addition, we encourage the following two solutions.

\begin{itemize}
    \item Checking parameters (Section~\ref{sec:61ints}):
    In the bounded DP setting, perform a check on the parameters ensuring that overflow cannot occur, e.g., check that $n\cdot U < 2^k$ in the setting of unsigned $k$-bit integers. 
    
    \item Split summation (Section~\ref{sec:splitsum}):
    In the unbounded DP setting, we recommend switching to saturation arithmetic (where overflow is handled by clamping to the range $[\min(T),\max(T)]$) and applying split summation, where we separately sum the positive and negative numbers (Method~\ref{method:split-sum-ints}).
    For $L$ and $U$ both non-negative or both non-positive, split summation is equivalent to standard summation. If split summation is not desirable, then we recommend using saturation arithmetic, and either applying a randomized permutation to the dataset (Method~\ref{method:sat-add-RP-ints})
    or using an ordered notion of neighboring datasets and being careful about stability with respect to ordering in all dataset transformations.
\end{itemize}

\subsubsection{Floating-point Summation}
For libraries that have or are implementing a (correct) version of (noisy) integer summation, we recommend using our Float2Int solution if feasible.

If floating-point summation must be used (e.g. due to a machine-learning pipeline or
exernal compute engine that has hardwired numerical types),  we recommend using the implemented sensitivity bounds that come from accuracy bounds (see Table~\ref{table:upperbounds}) and Section~\ref{sec:accuracy})
for whatever floating-point summation method is already implemented.
Typically this will be Iterated Summation, but some libraries (e.g., those based on \texttt{numpy}) may already using a better method like Pairwise Summation.
The summation method could be changed from iterative to pairwise or Kahan for tighter sensitivities (Section~\ref{sec:accuracy}). 
For bounded DP, using these sensitivities only requires also checking the parameters to ensure that overflow cannot occur (Section~\ref{sec:61floats}).
To achieve unbounded DP, truncated summation together with a random permutation (Method~\ref{method:sat-add-RP-ints}) should be used as well.
These solutions should work very well for 64-bit floats, as they give an implemented sensitivity that is at most $1.5 \cdot U$ for datasets of size smaller than 67 million.  If it is important to have sensitivity close to $U-L$, then the ``shifting bounds'' technique (Section~\ref{sec:shift-bounds})
can be used as well.

For data consisting of lower-precision floats, another attractive approach is to keep the values as low-precision floats (e.g., to keep the memory footprint small in machine learning pipelines) but accumulate the sum in a 64-bit float, which will allow the above solutions to apply. For huge datasets (e.g., more than 67 million records), a 128-bit accumulator could be used or the summation method could be switched to a method where the accuracy bounds grow more slowly with $n$ (e.g., pairwise or Kahan summation). Otherwise, we recommend considering the ``changing rounding mode'' solution described in Section~\ref{sec:splitsumrp},
which requires changing the rounding mode from the standard banker's rounding to round-toward-zero. 

\section{Preliminaries}\label{sec:preliminaries}

\subsection{Measuring Distances}
\label{sec:distance-funcs}

\begin{definition}[Dataset]
For a data domain $\mathcal{D}$, a \emph{dataset on $\mathcal{D}$} is a vector $v$ of elements from $\mathcal{D}$, i.e., $v\in \Vect(\mathcal{D}) \coloneqq \bigcup_{n\geq 0} \mathcal{D}^n$. For $v\in \mathcal{D}^n$, we write $v = [v_1,\ldots, v_n]$ for the elements of $v$ and $\len(v) = n$.

\end{definition}

Note that datasets $v\in \Vect(\mathcal{D})$ are \emph{ordered}. When we are not concerned with the order of elements in $v$, we will often refer to a dataset's \textit{histogram}.

\begin{definition}[Histogram]
\label{defn:histogram}
For a dataset $v\in \Vect(\mathcal{D})$, the \emph{histogram} of $v$ is the function $h_v:\mathcal{D}\to \mathbb{N}$ defined as $$h_v(z) = \#\{i:v_i = z\}.$$
\end{definition}

\begin{definition}[Permuting Datasets]\label{def:perm}

Let $S_n$ denote the set of permutations on $n$ elements. For a permutation $\pi\in S_{\len(v)}$, we write $\pi(v) = [v_{\pi(1)},\ldots, v_{\pi(n)}]$. 
\end{definition}

\begin{lemma}
\label{lemma:hists-equal}
For datasets $u,v \in {\rm Vec}(\mathcal{D})$, $h_u = h_v$ if and only if $\len(u) = \len(v)$ and there exists a permutation $\pi\in S_{\len(u)}$ such that $\pi(u) = v$.
\end{lemma}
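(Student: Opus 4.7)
The plan is to prove both directions of the biconditional separately, using the fact that a histogram partitions the index set of a dataset by value.

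For the easier ($\Leftarrow$) direction, suppose $\len(u) = \len(v) = n$ and $\pi \in S_n$ satisfies $\pi(u) = v$, so $v_i = u_{\pi(i)}$ for each $i$. Then for every $z \in \mathcal{D}$, I will rewrite $h_v(z) = \#\{i : v_i = z\} = \#\{i : u_{\pi(i)} = z\}$, and since $\pi$ is a bijection on $\{1,\ldots,n\}$, the substitution $j = \pi(i)$ gives $h_v(z) = \#\{j : u_j = z\} = h_u(z)$. This is a one-line argument.

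For the ($\Rightarrow$) direction, suppose $h_u = h_v$. First, note that $\len(u) = \sum_{z \in \mathcal{D}} h_u(z) = \sum_{z \in \mathcal{D}} h_v(z) = \len(v) =: n$, where the sums are well-defined because $h_u$ and $h_v$ are supported on the (finite) sets of values appearing in $u$ and $v$. To build $\pi$, I will construct it value-by-value: for each $z \in \mathcal{D}$, let $I_u(z) = \{i : u_i = z\}$ and $I_v(z) = \{i : v_i = z\}$; by hypothesis $|I_u(z)| = h_u(z) = h_v(z) = |I_v(z)|$, so fix any bijection $\pi_z : I_v(z) \to I_u(z)$. The sets $I_v(z)$ partition $\{1,\ldots,n\}$ as $z$ ranges over $\mathcal{D}$, and likewise for $I_u(z)$; gluing the $\pi_z$ yields a bijection $\pi : \{1,\ldots,n\} \to \{1,\ldots,n\}$, i.e., a permutation in $S_n$. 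By construction, if $v_i = z$ then $\pi(i) \in I_u(z)$, so $u_{\pi(i)} = z = v_i$; thus $\pi(u) = v$.

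This proof is essentially bookkeeping and I do not anticipate any real obstacle. The only subtlety worth being careful about is the case where $\mathcal{D}$ is infinite: the two sums over $z \in \mathcal{D}$ must be interpreted as sums of nonnegative integers with finite support (supported on the values actually appearing in $u$ or $v$), which is immediate from the definition of a histogram, so the argument goes through unchanged. No further lemmas from the excerpt are needed.
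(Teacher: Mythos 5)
Your proof is correct and complete. Both directions are handled cleanly: the forward direction correctly builds the permutation by gluing arbitrary bijections $\pi_z \colon I_v(z)\to I_u(z)$ between the fiber sets of each value $z$, using the hypothesis $h_u=h_v$ to guarantee equal fiber sizes; the backward direction correctly uses the change of variables $j=\pi(i)$ and the bijectivity of $\pi$. Your remark about interpreting the sums $\sum_{z\in\mathcal{D}}h_u(z)$ as finitely supported is the right subtlety to flag.

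For comparison: the paper states this lemma without supplying a proof, treating it as a standard fact about multisets. So there is no ``paper route'' to compare against; your proof simply fills in the bookkeeping the paper elides. The construction you give is the natural one, and it matches what any careful reader would reconstruct. One cosmetic note: the paper's convention is $\pi(u)=[u_{\pi(1)},\ldots,u_{\pi(n)}]$, so $\pi(u)=v$ means $u_{\pi(i)}=v_i$ --- which is exactly the convention you used, so your orientation of $\pi_z$ as a map $I_v(z)\to I_u(z)$ is correct.
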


Depending on which distance function we want to employ, there are different notions of \textit{neighboring datasets}. In this paper we consider the following four distance functions between vectors/datasets: symmetric distance, Hamming distance, change-one distance, and insert-delete distance. 

\begin{definition}[Symmetric Distance]
\label{defn:symm-dist}
Let $u,v\in \Vect(\mathcal{D})$. The \emph{symmetric distance} between $u$ and $v$, denoted $\dsym(u,v)$, is

$$
    \dsym(u,v) = \sum_{z\in \mathcal{D}}|h_u(z) - h_v(z)|,
$$
where $h_u$ and $h_v$ are the histograms of $u$ and $v$.

Equivalently, $\dsym(u,v)$ is the size of the symmetric difference between $u$ and $v$ when viewed as multisets.

Observe that for every permutation $\pi\in S_{\len(u)}$, $\dsym(u,\pi(u)) = 0$, so $\dsym$ (and $\dco$ below) is only a pseudometric since unequal datasets can have distance 0.

\end{definition}

\begin{definition}[Change-One Distance]
\label{defn:co-dist}
Let $u,v\in \mathcal{D}^n$. The \emph{change-one distance} between $u$ and $v$, denoted $\dco(u,v)$, is
\[
    \dco(u,v) = \sum_{\substack{z\in \mathcal{D} \text{ s.t.}\\ h_u(z) > h_v(z)}} \left( h_u(z) - h_v(z)\right)\\ = \sum_{\substack{z\in \mathcal{D} \text{ s.t.}\\ h_v(z) > h_u(z)}} \left( h_v(z) - h_u(z)\right).
\]
For $u,v\in \Vect(\mathcal{D})$ with $\len(u)\neq \len(v)$, we define $\dco(u,v) = \infty$.

Equivalently, $\dco(u,v)$ is the minimum number of elements in $u$ that need to be changed to produce $v$, when $u$ and $v$ are viewed as multisets.
\end{definition}

\begin{definition}[Insert-Delete Distance]
\label{defn:id-dist}

For $u\in\mathcal{D}^m$, an \emph{insertion} to $u$ is an addition of an element $z$ to some location within $u$, resulting in a vector $u' = [u_1,\ldots, u_i, z, u_{i+1},\ldots, u_m] \in \mathcal{D}^n$. Likewise, a \emph{deletion} from $u$ is a removal of an element from some location within $u$, resulting in a vector $u' = [u_1,\ldots, u_{i-1}, u_{i+1}, \ldots, u_m]\in \mathcal{D}^m$.

The \emph{insert-delete distance} $\did(u,v)$ between $u\in\mathcal{D}^m, v\in\mathcal{D}^n$ is the minimum number of insertion operations and deletion operations needed to change $u$ into some vector $u'$ such that $u' = v$.
\end{definition}

\begin{definition}[Hamming Distance]
\label{defn:ham-dist}

Let $u,v\in\mathcal{D}^n$. The \emph{Hamming distance} between vectors $u,v$ is

$$
    \dham(u,v) = \#\{i: u_i\neq v_i\}.
$$
For $u,v\in \Vect(\mathcal{D})$ with $\len(u)\neq \len(v)$, we define $\dham(u,v) = \infty$.

\end{definition}

Equipped with functions for measuring the distance between datasets, we can now define the notion of neighboring / adjacent datasets.

\begin{definition}[Neighboring/Adjacent Datasets]
We say that vectors $u,v\in \Vect(\mathcal{D})$ are \emph{neighbors} or \emph{adjacent} with respect to dataset distance metric $d$ whenever $d(u,v) \leq 1$. In this case, we write $u \simeq_\mathit{d} v$, or $u\simeq_\mathit{name} v$, where $d$ corresponds to $d_\mathit{name}$ (e.g., $\dsym$, $\dham$, $\did$).
\end{definition}

A key difference between these distance metrics, which we will use in our attacks, is the fact that $\dsym$ and $\dco$ are \emph{unordered} metrics (meaning the distance between two vectors only depends on their histograms), whereas $\dham$ and $\did$ are \emph{ordered} metrics (meaning the distance between two vectors is affected by the order of elements in these vectors). Another key difference is that $\dham$ and $\dco$ can only be applied to vectors of the \emph{same} length $n$. On the other hand, $\dsym$ and $\did$ apply to vectors of potentially different length. Thus, we will use $\dco$ and $\dham$ in privacy contexts where the dataset size $n$ is known and public, and $\dsym$ and $\did$ otherwise.
Table~\ref{table:metrics} summarizes these differences.

\begin{table}[!h]
        \centering
\begin{tabular}{|p{0.15\textwidth}|p{0.13\textwidth}|p{0.13\textwidth}|}
\hline 
  & Unordered & Ordered \\
 \hline
  Unknown $n$ & $\dsym$ & $\did$  \\
 \hline 
  Known $n$ & $\dco$ & $\dham$ \\
  \hline
\end{tabular}
\caption{Summary of the dataset metrics considered in this paper.}\label{table:metrics}
\end{table}

The relationships between these four metrics in terms of sensitivity are described in the following lemma, which can be summarized with the following diagram:

\begin{center}
\tikzset{every picture/.style={line width=0.75pt}} 

\begin{tikzpicture}[x=0.55pt,y=0.55pt,yscale=-1,xscale=1]

\draw    (120,112) -- (252,112.23) ;
\draw [shift={(254,112.23)}, rotate = 180.1] [color={rgb, 255:red, 0; green, 0; blue, 0 }  ][line width=0.75]    (10.93,-3.29) .. controls (6.95,-1.4) and (3.31,-0.3) .. (0,0) .. controls (3.31,0.3) and (6.95,1.4) .. (10.93,3.29)   ;
\draw    (125,229.23) -- (249,229.23) ;
\draw [shift={(251,229.23)}, rotate = 180] [color={rgb, 255:red, 0; green, 0; blue, 0 }  ][line width=0.75]    (10.93,-3.29) .. controls (6.95,-1.4) and (3.31,-0.3) .. (0,0) .. controls (3.31,0.3) and (6.95,1.4) .. (10.93,3.29)   ;
\draw    (90,124.23) -- (90,214.23) ;
\draw [shift={(90,216.23)}, rotate = 270] [color={rgb, 255:red, 0; green, 0; blue, 0 }  ][line width=0.75]    (10.93,-3.29) .. controls (6.95,-1.4) and (3.31,-0.3) .. (0,0) .. controls (3.31,0.3) and (6.95,1.4) .. (10.93,3.29)   ;
\draw    (282,124.23) -- (282,214.23) ;
\draw [shift={(282,216.23)}, rotate = 270] [color={rgb, 255:red, 0; green, 0; blue, 0 }  ][line width=0.75]    (10.93,-3.29) .. controls (6.95,-1.4) and (3.31,-0.3) .. (0,0) .. controls (3.31,0.3) and (6.95,1.4) .. (10.93,3.29)   ;

\draw (69,104) node [anchor=north west][inner sep=0.75pt]   [align=left] {$\dsym$};
\draw (270,104) node [anchor=north west][inner sep=0.75pt]   [align=left] {$\did$};
\draw (66,221) node [anchor=north west][inner sep=0.75pt]   [align=left] {$2\cdot \dco$};
\draw (262,220) node [anchor=north west][inner sep=0.75pt]   [align=left] {$2\cdot \dham$};
\draw (175,88) node [anchor=north west][inner sep=0.75pt]   [align=left] {$\leq$};
\draw (176,205) node [anchor=north west][inner sep=0.75pt]   [align=left] {$\leq$};
\draw (71,159) node [anchor=north west][inner sep=0.75pt]   [align=left] {$=$};
\draw (263,159) node [anchor=north west][inner sep=0.75pt]   [align=left] {$\leq$};

\end{tikzpicture}

\end{center}

\begin{lemma}[Relating Metrics]
\label{lemma:metric-relate}
\label{lemma:dhamtodid}
\label{lemma:dcotodsym}
These metrics are related as follows.
\begin{enumerate}
    \item For $u,v\in\Vect(\mathcal{D})$, $$\dsym(u,v) = \min_{\pi\in S_{\len(u)}} \did(\pi(u),v)\leq \did(u,v).$$
    
    \item For $u,v\in \mathcal{D}^n$, $$\dco(u,v) = \min_{\pi\in S_n} \dham(\pi(u),v)\leq \dham(u,v).$$
    
    \item For $u,v\in \mathcal{D}^n$, $$\dsym(u,v) = 2\cdot \dco(u,v).$$
    
    \item For $u,v\in \mathcal{D}^n$, $$\did(u,v) \leq 2\cdot \dham(u,v).$$
\end{enumerate}
\end{lemma}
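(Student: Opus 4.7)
The plan is to prove the four parts in the order 3, 4, 1, 2, reusing the earlier calculations in the later ones. Part~3 follows from a direct manipulation of sums: split the sum defining $\dsym(u,v)$ into the contribution from $z$ with $h_u(z) > h_v(z)$ and the contribution from $z$ with $h_v(z) > h_u(z)$, and use $\sum_z h_u(z) = \sum_z h_v(z) = n$ to see that the two sub-sums are equal, each matching one of the two expressions for $\dco(u,v)$ in Definition~\ref{defn:co-dist}. Part~4 is a direct construction: at each of the $\dham(u,v)$ coordinates where $u$ and $v$ differ, one deletion followed by one insertion converts $u_i$ into $v_i$, producing $v$ in $2\dham(u,v)$ total operations.

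Parts~1 and~2 share the same structure and I would prove them in parallel. First, I would establish the easy inequalities $\dsym(u,v) \le \did(u,v)$ and $\dco(u,v) \le \dham(u,v)$: a single insertion or deletion alters exactly one value of the histogram by $\pm 1$, so $\dsym$ can change by at most $1$ per insert/delete operation; and changing $u_i$ to $v_i$ at each Hamming-differing coordinate yields $v$ as a multiset. Combined with the fact that $\dsym$ and $\dco$ depend only on histograms (Lemma~\ref{lemma:hists-equal}) and are therefore invariant under permutations of either argument, this immediately gives the lower bounds $\dsym(u,v) \le \min_{\pi} \did(\pi(u),v)$ and $\dco(u,v) \le \min_{\pi} \dham(\pi(u),v)$.

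The main step, and the only place I expect nontrivial work, is exhibiting a permutation achieving equality. Let $m = \sum_z \min(h_u(z), h_v(z))$ be the ``shared multiplicity.'' For each $z \in \mathcal{D}$, bijectively pair $\min(h_u(z), h_v(z))$ of the $z$-positions of $u$ with $z$-positions of $v$, and extend this matching into a permutation $\pi \in S_{\len(u)}$ that sends each paired $u$-position to the corresponding $v$-position. In Part~2 (where $\len(u) = \len(v) = n$) this gives $\dham(\pi(u),v) = n - m$; since $\sum_z |h_u(z) - h_v(z)| = 2(n - m)$, Part~3 yields $\dco(u,v) = n - m$, matching the lower bound. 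In Part~1, the mismatched positions of $\pi(u)$ can be turned into $v$ by $\len(u) - m$ deletions and $\len(v) - m$ insertions, so $\did(\pi(u),v) \le \len(u) + \len(v) - 2m$; expanding $|h_u(z) - h_v(z)| = h_u(z) + h_v(z) - 2\min(h_u(z), h_v(z))$ and summing over $z$ shows this equals $\dsym(u,v)$, again matching the lower bound. The main subtlety is handling the extra unmatched indices when $\len(u) \ne \len(v)$; these can be placed arbitrarily in $\pi$ without affecting the insertion/deletion count, since they will all be deleted (or the corresponding $v$-entries all inserted) regardless.
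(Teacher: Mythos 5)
Your plan is correct and matches the paper's high-level strategy exactly: parts~3 and~4 are the same direct arguments, and for parts~1 and~2 both you and the paper establish the easy inequality $\dsym \le \min_\pi \did$ (resp.\ $\dco \le \min_\pi \dham$) by permutation-invariance of $\dsym$ and $\dco$, then exhibit a permutation achieving equality.

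Where you diverge is the construction of the witnessing permutation. The paper builds an intermediate vector $v'$ from $v$ using exactly $\dsym(u,v)$ insertions/deletions chosen to make $h_{v'} = h_u$, then invokes Lemma~\ref{lemma:hists-equal} to express $v'$ as $\pi(u)$; the bound $\did(\pi(u),v) \le \dsym(u,v)$ is then immediate. You instead compute the shared multiplicity $m = \sum_z \min(h_u(z), h_v(z))$, pair matched elements directly, and argue via a common-subsequence count. This buys a more explicit picture of what $\pi$ does, at the cost of a detail your description glosses over: the phrase ``extend this matching into a permutation $\pi \in S_{\len(u)}$ that sends each paired $u$-position to the corresponding $v$-position'' doesn't type-check when $\len(u) \ne \len(v)$, since those position sets have different sizes. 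What you actually need is a $\pi$ that places the $m$ matched elements of $u$ in the same \emph{relative order} as their mates appear in $v$ — then those $m$ elements form a common subsequence of $\pi(u)$ and $v$, giving $\did(\pi(u),v) \le \len(u) + \len(v) - 2m = \dsym(u,v)$. You flag the subtlety and it is repairable (e.g., put the matched $u$-elements first, in $v$-order, and dump the rest at the end), but note that the paper's histogram-matching route sidesteps the ordering issue entirely: once $h_{v'} = h_u$, Lemma~\ref{lemma:hists-equal} hands you the permutation with no bookkeeping about positions. That is the cleaner path for part~1; for part~2, where $\len(u)=\len(v)=n$, your direct pairing is equally smooth and your reuse of part~3 to identify $\dco(u,v) = n-m$ is a nice touch.
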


\ifnum\CCSFORMAT=0
The proof can be found in Appendix~\ref{appendix:sec2}.
\fi

\ifnum\CCSFORMAT=1
\begin{proof}
We prove each part below.
\begin{enumerate}
    \item Let $u,v\in\Vect(\mathcal{D})$.
    
    We first show that there exists a permutation $\pi\in S_{\len(u)}$ such that $\did(\pi(u),v)\leq \dsym(u,v)$.
    
    We begin by constructing $v'$ such that $\did(v,v') \leq \dsym(u,v)$ and $h_{v'} = h_u$. We show that we can use $\dsym(u,v)$ insertions and deletions on $v$ to get $v'$ such that $h_{v'} = h_u$. We create $v'$ from $v$ as follows: (1) for all $z\in\mathcal{D}$ such that $h_v(z) > h_{u}(z)$, we delete $h_v(z) - h_u(z)$ copies of $z$ from $v$; (2) for all $z\in\mathcal{D}$ such that $h_v(z) < h_u(z)$, we insert $h_v(z) - h_u(z)$ copies of $z$ to $v$. By \cref{defn:id-dist}, then, 
    \begin{equation}
    \label{eqn:metric-relate-did}
    \did(v',v)\leq \dsym(u,v).
    \end{equation}
    
    The resulting vector $v'$ also has the property that, for all $z\in\mathcal{D}$, $h_{v'}(z) = h_u(z)$, so $h_u = h_{v'}$. By \cref{lemma:hists-equal}, $h_u = h_{v'}$ implies that there exists a permutation $\pi\in S_{\len(u)}$ such that $\pi(u) = v'$. Substituting this into \cref{eqn:metric-relate-did}, we get
    \begin{equation}
    \did(\pi(u),v)\leq \dsym(u,v).
    \end{equation}
    
    We next show that, for all permutations $\pi\in S_{\len(u)}$, we have $\dsym(u,v)\leq \did(\pi(u),v)$, which means that $\dsym(u,v) \leq \min_{\pi\in S_{\len(u)}}\did(\pi(u),v)$. This is true because, for all permutations $\pi\in S_{\len(u)}$, $\dsym(u,v) = \dsym(\pi(u),v)\leq \did(\pi(u),v)$, with the equality following from the fact that $h_{\pi(u)} = h_u$.

    \item Let $u,v\in\mathcal{D}^n$.
    
    We first show that there is a permutation $\pi\in S_n$ such that $\dco(u,v) = \dham(\pi(u),v)$.
    
    We begin by constructing $v'$ such that $\dham(v,v') \leq \dco(u,v)$ and $h_{v'} = h_u$. We create $v'$ from $v$ as follows: (1) for all $z\in\mathcal{D}$ such that $h_v(z) > h_{u}(z)$, replace $h_v(z) - h_u(z)$ copies of $z$ with some value $\diamond\not\in\mathcal{D}$; (2) for all $z\in\mathcal{D}$ such that $h_v(z) < h_u(z)$, replace $h_u(z)-h_v(z)$ $\diamond$ values with $z$. By \cref{defn:ham-dist}, then, 
    \begin{equation}
    \label{eqn:metric-relate-dham}
    \dham(v',v)\leq \dco(u,v).
    \end{equation}
    
    The resulting vector $v'$ also has the property that, for all $z\in\mathcal{D}$, $h_{v'}(z) = h_u(z)$, so $h_u = h_{v'}$. By \cref{lemma:hists-equal}, $h_u = h_{v'}$ implies that there exists a permutation $\pi\in S_n$ such that $\pi(u) = v'$. Substituting this into \cref{eqn:metric-relate-dham}, we get
    \begin{equation}
    \dham(\pi(u),v)\leq \dco(u,v).
    \end{equation}
    
    We next show that, for all permutations $\pi\in S_n$, we have $\dco(u,v)\leq \dham(\pi(u),v)$, which means that $\dco(u,v) \leq \min_{\pi\in S_n}\dham(\pi(u),v)$. This is true because, for every permutation $\pi\in S_n$, $$\dco(u,v) = \dco(\pi(u), v)\leq \dham(\pi(u), v),$$
    with the equality following from the fact that $h_{\pi(u)} = h_u$.
    
    \item Let $u,v\in\mathcal{D}^n$. Then,
    $$\begin{aligned}
    \dsym(u,v) &= \sum_{z\in \mathcal{D}} |h_u(z) - h_v(z)| \\
    &= \sum_{\substack{z\in \mathcal{D} \text{ s.t.}\\ h_u(z) > h_v(z)}} \left( h_u(z) - h_v(z)\right) \\
    &+ \sum_{\substack{z\in \mathcal{D} \text{ s.t.}\\ h_v(z) > h_u(z)}} \left( h_v(z) - h_u(z)\right)\\
    &= \dco(u,v) + \dco(u,v)\\
    &= 2\cdot \dco(u,v).
    \end{aligned}$$

    \item Let $u,v\in \mathcal{D}^n$, and let $\dham(u,v) = c$. Let $\mathcal{I}$ be the set of all indices $i^*$ such that $u_{i^*} \neq v_{i^*}$. We note that, to make it so that $u_{i^*} = v_{i^*}$, we can perform 1 insertion and 1 deletion to change $u_{i^*}$ to $v_{i^*}$. We note that the cardinality of $\mathcal{I}$ is $c$. Therefore, a total of (at most) $2c$ insertions and deletions need to be performed to change $u$ into $u'$ such that $u'=v$. By the definition of $\did$, then, $\did(u,v)\leq 2c = 2\cdot \dham(u,v)$.

\end{enumerate}

\end{proof}

\fi

\subsection{Sensitivity of Functions}
\label{sec:sensitivity}

Now that we have defined the notion of neighboring datasets, we can define \emph{sensitivity}, a term around which this paper revolves.

\begin{definition}[Sensitivity]
\label{defn:abs-sens}

We define the (global) \emph{sensitivity} of a function $f:\Vect(\mathcal{D}) \rightarrow \mathbb{R}$ with respect to a metric $d$ on $\Vect(\mathcal{D})$ as 

$$
    \Delta_\mathit{d} f = \sup_{
    \substack{
        u,v\in \Vect(\mathcal{D})  \\
        u\simeq_\mathit{d} v
        }
    }|f(u)-f(v)|.
$$
\end{definition}

The sensitivity of $f$ captures how much the output of $f$ can change if we change a single element of the input vector.

\subsubsection{The Path Property}
In this paper, we state our results in terms of neighboring datasets, as is customary in the DP literature. These results, though, readily generalize to datasets at arbitrary distance. This idea is captured by the \emph{path property}.
\begin{definition}[Path Property]
\label{defn:path-property}
    A metric\footnote{We use the term \textit{metric} throughout the paper, although some of our ``metrics'' (e.g., symmetric distance and change-one distance) allow distance 0 for unequal elements and should be considered \emph{pseudometrics} instead.} $d$ is said to satisfy the \textit{path property} if it fulfills the following two conditions:
    \begin{enumerate}
        \item For all datasets $u,v$, we have $d(u,v)\in\mathbb{Z}$.
\        \item If $d(u,v) = d$ then there exists a sequence $u = u^0, u^1, \ldots, u^d = v$ such that $d(u^{i-1}, u^i)\leq 1$.
    \end{enumerate}
\end{definition}

\begin{theorem}[Applying the Path Property]
\label{thrm:apply-path-prop}
For every function $f:{\rm Vec}(\mathcal{D})\to \mathbb{R}$, every dataset metric $d$ that satisfies the path property, and every $u,v\in{\rm Vec}(\mathcal{D})$,
$$|f(u) - f(v) | \leq \Delta_d f \cdot d(u,v).$$
\end{theorem}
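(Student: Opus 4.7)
The plan is a direct telescoping argument combining the triangle inequality with the definition of $\Delta_d f$. First, I would invoke the path property on the pair $(u,v)$: setting $m = d(u,v)$, which is a non-negative integer, we obtain a sequence $u = u^0, u^1, \ldots, u^m = v$ with $d(u^{i-1}, u^i) \leq 1$ for each $i \in \{1, \ldots, m\}$. Because adjacency with respect to $d$ is defined by $u \simeq_d v \iff d(u,v) \leq 1$, every consecutive pair $(u^{i-1}, u^i)$ along the sequence is adjacent, and hence $|f(u^{i-1}) - f(u^i)| \leq \Delta_d f$ by \cref{defn:abs-sens}.

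Next I would write $f(u) - f(v)$ as a telescoping sum and apply the triangle inequality:
$$|f(u) - f(v)| = \left|\sum_{i=1}^{m} \bigl(f(u^{i-1}) - f(u^i)\bigr)\right| \leq \sum_{i=1}^{m} |f(u^{i-1}) - f(u^i)| \leq m \cdot \Delta_d f = d(u,v) \cdot \Delta_d f,$$
which is exactly the desired bound. The case $m = 0$ is handled automatically: the path property forces $u^0 = u^m$, so $u = v$ and both sides equal $0$.

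There is essentially no main obstacle here; the proof is a one-step reduction from arbitrary distance to the neighboring case that $\Delta_d f$ already controls. The only subtlety worth flagging is conceptual: the integer-valued condition in the path property is what lets us index the sequence by the steps $1, \ldots, m$ and sum exactly $m$ adjacency-bounded increments, and it implicitly requires that datasets at distance $0$ be equal (so the statement is consistent with $f$ potentially distinguishing them). All metrics considered in this paper that satisfy the path property meet this condition, so the argument applies uniformly.
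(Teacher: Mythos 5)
Your proof is correct and takes essentially the same approach as the paper, which simply states that the result follows from $d(u,v)$ applications of the triangle inequality along the path; your write-up fills in exactly the telescoping and adjacency-bound details that the paper leaves implicit. One caution about your final side remark: the paper explicitly treats $\dsym$ and $\dco$ as \emph{pseudometrics} where unequal datasets can have distance $0$, so the claim that all metrics considered here force $d(u,v)=0\Rightarrow u=v$ is not quite in line with the paper's own conventions (though this does not affect the core argument, which only needs the path property for $d(u,v)\geq 1$).
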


\begin{proof}
    The result follows directly from $d$ applications of the triangle inequality to each of the pairs $d(u_{i-1}, u_i)$ given by the path property definition.
\end{proof}

\begin{lemma}
\label{lemma:metrics-path-prop}
The dataset metrics $\dsym, \dco, \dham$, and $\did$ all satisfy the path property.
\end{lemma}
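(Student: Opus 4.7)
My plan is to verify the two conditions of the path property for each of the four metrics in turn, treating $\dham$ and $\did$ as the base cases and then reducing $\dco$ and $\dsym$ to them via Lemma~\ref{lemma:metric-relate}.

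Integer-valuedness (condition~1) is immediate from the definitions: $\dham$ counts a subset of indices; $\did$ is a minimum of nonnegative integer operation counts; and $\dco$, $\dsym$ are sums of absolute differences of integer-valued histograms. (For $\dham$ and $\dco$ on vectors of unequal length we have distance $\infty$, in which case condition~2 is vacuous.) For condition~2 on $\dham$: given $\dham(u,v)=d$, enumerate the differing indices $i_1,\ldots,i_d$ and let $u^j$ match $v$ at $i_1,\ldots,i_j$ and match $u$ elsewhere, so consecutive vectors differ in a single coordinate. For $\did$: the definition itself gives a sequence of $d$ single insertions/deletions transforming $u$ into $v$, and the intermediate vectors are the required path.

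The interesting cases are $\dco$ and $\dsym$, where the metric is insensitive to reordering but the natural path is ordered. For $\dco$, by Lemma~\ref{lemma:metric-relate}(2) there exists $\pi\in S_{\len(u)}$ with $\dham(\pi(u),v) = \dco(u,v) = d$. Apply the Hamming case to produce a path $\pi(u) = w^0, w^1, \ldots, w^d = v$ with $\dham(w^{j-1},w^j) = 1$. Now set $u^0 = u$ and $u^j = w^j$ for $1 \le j \le d$. The subtle point is that the path must have exactly $d$ edges, not $d{+}1$, so we cannot simply prepend $u$ to $w^0,\ldots,w^d$. Instead we replace $w^0 = \pi(u)$ by $u$ and absorb the permutation into the first edge: by the triangle inequality and the fact that $\dco(u,\pi(u)) = 0$ (equal histograms),
\[
\dco(u^0, u^1) \;\leq\; \dco(u,\pi(u)) + \dco(\pi(u),w^1) \;\leq\; 0 + \dham(\pi(u),w^1) \;=\; 1,
\]
using $\dco \leq \dham$ from Lemma~\ref{lemma:metric-relate}(2). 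For $j\geq 2$ we similarly have $\dco(u^{j-1},u^j) \leq \dham(w^{j-1},w^j) = 1$. The case of $\dsym$ is strictly analogous, using Lemma~\ref{lemma:metric-relate}(1) to pick $\pi$ with $\did(\pi(u),v) = \dsym(u,v) = d$, applying the $\did$ case to build $w^0,\ldots,w^d$, and using $\dsym(u,\pi(u))=0$ together with $\dsym \leq \did$ to bound the first edge.

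The only real obstacle I anticipate is the bookkeeping about path length: the naive construction for $\dco$/$\dsym$ produces a path of length $d+1$ because it has to make room for the permutation, and one must recognize that the pseudometric property ($\dco$ and $\dsym$ vanish on reorderings) allows the permutation to be folded into a single edge of cost at most $1$. Everything else is routine verification from the definitions and the previously established inequalities between metrics.
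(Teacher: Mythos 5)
Your proof is correct, but it takes a genuinely different route from the paper's. The paper proves condition~2 by direct construction for each of the four metrics separately: for $\dsym$ it builds the path via the multiset symmetric difference (alternating single deletions and insertions, then leftover insertions); for $\dco$, $\dham$, and $\did$ it iteratively performs one change/insert/delete at a time. Your approach instead treats $\dham$ and $\did$ as the base cases and reduces $\dco$ and $\dsym$ to them via the minimizing permutation in Lemma~\ref{lemma:metric-relate}, with the pseudometric identity $\dco(u,\pi(u)) = 0$ (resp.\ $\dsym(u,\pi(u))=0$) used to absorb the reordering into the first edge. This is a cleaner and more modular argument that makes the unordered/ordered relationship structurally explicit, whereas the paper's construction is self-contained and does not rely on the already-proved metric comparison lemma. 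One small streamlining of your argument: you invoke the triangle inequality for $\dco$ and $\dsym$, a property the paper never states and which you would strictly need to justify (it does hold, since both are determined by $\ell_1$ distances between histograms). You can avoid this dependency entirely by noting that $\dco$ and $\dsym$ are functions of histograms alone, so $h_u = h_{\pi(u)}$ gives $\dco(u,w^1) = \dco(\pi(u),w^1)$ outright (and likewise for $\dsym$), with no triangle inequality required. With that substitution the reduction argument depends only on facts the paper has already established.
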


\ifnum\CCSFORMAT=0
The proof can be found in Appendix~\ref{appendix:sec2}.
\fi

\ifnum\CCSFORMAT=1

\begin{proof}

By Definitions~\ref{defn:symm-dist}, \ref{defn:co-dist}, \ref{defn:id-dist}, \ref{defn:ham-dist} and the fact that the image of the histogram function is $\mathbb{N}$ it is clear that all of $\dsym, \dco, \dham$, and $\did$ satisfy Condition 1 of the Path Property.

For Condition 2, we show constructively how to build the sequence $u^0, u^1, \ldots, u^d$ for each of the metrics. Let $u, v$ be datasets such that $d(u, v) = d$ for some $d \in \mathbb{Z}$.

\begin{enumerate}
    \item For $\dsym$, let set $S$ be the symmetric difference of $u$ and $v$. Let $\mathcal{I}_u$ be the set of indices $k$ such that $u_k \in S$, and similarly for $\mathcal{J}_v$. Wlog, assume that $|\mathcal{I}_u| \leq |\mathcal{J}_v|$ (otherwise, swap $u$ and $v$ in what follows). To construct the $u^{(i)}$, we iteratively apply the following two steps. To go from $u^{(i)}$ to $u^{(i+2)}$, first pick one of the indices $k \in \mathcal{I}_u$ and delete it from $u^{(i)}$, which yields vector $u^{(i+1)}$. Next, insert into $u^{(i+1)}$ element $v_j$ such that $j \in \mathcal{J}_v$, never repeating either index $k$ or $j$ in the process. After $2|\mathcal{I}_u|$ steps, to continue constructing the $u^{(i)}$ iteratively we insert elements $v_j$ for all the remaining $j \in \mathcal{J}_v$ one at a time (without repeating any index $j$). Hence, after $|\mathcal{J}_v| - |\mathcal{I}_u|$ more steps, $u^{(i)} = v$.
    \item For $\dco$, to construct each $u^{(i)}$ we iteratively change an element of $u$ which is different from $v$ (when viewed as multisets). It follows directly from the definition of $\dco$ that this procedure requires $\dco(u, v) = d$ steps.
    \item For $\dham$, we apply the same procedure, except that $u$ and $v$ are now viewed as ordered vectors.
    \item For $\did$, we apply the same procedure as in the case of $\dco$, where we do one insert or one delete operation at a time. It then follows directly from the definition of $\did$ that this procedure requires $\did(u, v) = d$ steps.
\end{enumerate}

\end{proof}

\fi

Due to Lemma \ref{lemma:metrics-path-prop}, all of the theorems in this paper will be stated in terms of neighboring datasets, and the path property can be used to generalize these results to apply to any datasets at arbitrary distance $d$.

\begin{lemma}[Convert Sensitivities]
\label{thm:dsymtoid}
\label{thm:dcotoham}

We can convert between sensitivities in the following ways.
\begin{enumerate}
    \item For every function $f: {\rm Vec}(\mathcal{D}) \rightarrow \mathbb{R}$, $\Delta_\mathit{ID} f \leq \Delta_\mathit{Sym} f$.

    \item For every function $f: \mathcal{D}^n \rightarrow \mathbb{R}$, $\Delta_\mathit{Ham} f \leq \Delta_\mathit{CO} f$.
    
    \item For every function $f: \mathcal{D}^n \rightarrow \mathbb{R}$, $\Delta_\mathit{CO} f \leq 2 \Delta_\mathit{Sym}f$.

    \item For every function $f: \mathcal{D}^n \rightarrow \mathbb{R}$, $\Delta_\mathit{Ham} f \leq 2 \Delta_\mathit{ID} f$.

\end{enumerate}

\end{lemma}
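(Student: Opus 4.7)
The plan is to derive all four inequalities as direct consequences of Lemma~\ref{lemma:metric-relate} (relating the four metrics) combined with Theorem~\ref{thrm:apply-path-prop} (the path property) applied to the metrics known to satisfy it (Lemma~\ref{lemma:metrics-path-prop}). The common template is: given neighbors under the target metric on the left-hand side, use the metric inequalities to bound their distance under the metric on the right-hand side, then apply the path property to translate this distance bound into a bound on $|f(u)-f(v)|$ in terms of the right-hand sensitivity.

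Concretely, for parts (1) and (2), I would observe that whenever $\did(u,v)\leq 1$, Lemma~\ref{lemma:metric-relate}(1) gives $\dsym(u,v)\leq \did(u,v)\leq 1$, so $u\simSym v$ and hence $|f(u)-f(v)|\leq \sabs_\mathit{Sym} f$; taking the supremum over $\did$-neighbors yields $\sid f\leq \ssym f$. The proof of (2) is identical, substituting Lemma~\ref{lemma:metric-relate}(2) and the pair $(\dham,\dco)$ in place of $(\did,\dsym)$.

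For parts (3) and (4), I would use the factor-of-two inequalities. If $\dco(u,v)\leq 1$ then Lemma~\ref{lemma:metric-relate}(3) gives $\dsym(u,v)=2\dco(u,v)\leq 2$, and Theorem~\ref{thrm:apply-path-prop} applied to $\dsym$ (which satisfies the path property by Lemma~\ref{lemma:metrics-path-prop}) yields
$$|f(u)-f(v)|\leq \ssym f\cdot \dsym(u,v)\leq 2\ssym f,$$
giving $\sco f\leq 2\ssym f$. Part (4) follows the same pattern using Lemma~\ref{lemma:metric-relate}(4): $\dham(u,v)\leq 1$ implies $\did(u,v)\leq 2\dham(u,v)\leq 2$, and the path property for $\did$ gives $|f(u)-f(v)|\leq 2\sid f$.

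There is essentially no obstacle here; the only care needed is to remember that we must invoke the path property (not just the single-step sensitivity bound) whenever the induced distance on the right-hand metric exceeds~$1$, which happens exactly in parts (3) and (4). Since Lemma~\ref{lemma:metrics-path-prop} already certifies that all four metrics satisfy the path property, the application is immediate, and each part reduces to one line of inequalities followed by taking a supremum.
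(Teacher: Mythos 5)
Your proof is correct and takes essentially the same approach as the paper: parts (1) and (2) follow from the metric inequalities $\dsym \leq \did$ and $\dco \leq \dham$ giving neighbor containment, while parts (3) and (4) combine Lemma~\ref{lemma:metric-relate} with the path property (Theorem~\ref{thrm:apply-path-prop}) to handle the induced distance exceeding 1. Your observation that the path property is only needed in parts (3) and (4) matches the structure of the paper's argument exactly.
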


\ifnum\CCSFORMAT=0
The proof can be found in Appendix~\ref{appendix:sec2}.
\fi

\ifnum\CCSFORMAT=1

\begin{proof} We use Lemma~\ref{lemma:metric-relate} to prove each part. Let each $f$ below be defined as described in its respective part above.
\begin{enumerate}
    \item By Lemma~\ref{lemma:metric-relate}, we see that, for all $u,v\in \Vect(\mathcal{D})$, we have $\dsym(u,v) \leq \did(u,v)$. Therefore, for all $u\nid u'$, we have $u\nsym u'$. By the definition of sensitivity in Definition~\ref{defn:abs-sens}, this means that $\sid f = \max_{u\nid u'} |f(u)-f(u')| \leq \max_{u\nsym u'} |f(u)-f(u')| = \ssym f.$
    
    \item By Lemma~\ref{lemma:metric-relate}, we see that, for all $u,v\in \mathcal{D}^n$, we have $\dco(u,v) \leq \dham(u,v)$. Therefore, for all $u\nham u'$, we have $u\nco u'$. By the definition of sensitivity in Definition~\ref{defn:abs-sens}, this means that $\sham f = \max_{u\nham u'} |f(u)-f(u')| \leq \max_{u\nco u'} |f(u)-f(u')| = \sco f.$
    
    \item We have:
    $$
    \begin{aligned}
    \sco f &= \max_{u\nco u'}|f(u) - f(u')| \\
    &\leq  \max_{u\nco u'} \ssym f \cdot \dsym(u,u') \quad \text{(\cref{thrm:apply-path-prop})} \\
    &= \max_{u\nco u'}\ssym f \cdot 2 \cdot \dco(u,u') \quad \text{(\cref{lemma:metric-relate})}\\
    &= 2 \ssym f,
    \end{aligned}
    $$
    so $\sco f \leq 2 \ssym f.$
    
    \item We have:
    $$
    \begin{aligned}
    \sham f &= \max_{u\nham u'}|f(u) - f(u')| \\
    &\leq \max_{u\nham u'} \sid f \cdot \did(u,u')  \quad \text{(\cref{thrm:apply-path-prop})}\\
    &\leq \max_{u\nham u'} \sid f \cdot 2 \cdot \dham(u,u') \quad \text{(\cref{lemma:metric-relate})}\\
    &= 2 \sid f,
    \end{aligned}
    $$
    so $\sham f \leq 2 \sid f.$

\end{enumerate}

\end{proof}

\fi

\subsection{Differential Privacy}

We now recall the definition of (pure) differential privacy, also known as $\varepsilon$-DP. Although some of the libraries that we investigate also offer the more general $(\varepsilon, \delta)$-DP, they all offer $\varepsilon$-DP, and we only consider the pure-DP versions of these implementations. Before proceeding with the definition of DP, we define a \emph{mechanism}.

\begin{definition}[Mechanism]
A \emph{mechanism} with input space $\Vect(\mathcal{D})$ and output space $\mathcal{Y}$ is a randomized algorithm $\mathcal{M}: \Vect(\mathcal{D}) \rightarrow \mathcal{Y}$ that on input $u\in\Vect(\mathcal{D})$ outputs a sample from the distribution $\mathcal{M}(u)$ over $\mathcal{Y}$.

\end{definition}

\begin{definition}[Pure Differential Privacy]
    For any $\varepsilon \geq 0$, a mechanism $\mathcal{M}: \Vect(\mathcal{D}) \rightarrow \mathcal{Y}$ is \emph{$\varepsilon$-differentially private with respect to metric $d$} on $\Vect(\mathcal{D})$ if for all subsets $S \subset \mathcal{Y}$ and for all adjacent datasets $u \simeq_d u'$,
    \[
        \Prob[\mathcal{M}(u) \in S] \leq e^{\varepsilon} \cdot \Prob[\mathcal{M}(u') \in S].
    \]
\end{definition}

We note that differential privacy is ``robust to post-processing''. Intuitively, this means that a data analyst -- with any amount of additional knowledge -- cannot take an $\varepsilon$-DP answer and make it less private.

\begin{proposition}[DP is Robust to Post-Processing \cite{dr14}]
\label{prop:post-proc}
Let $\mathcal{M}:\Vect(\mathcal{D})\to \mathcal{Y}$ be a randomized algorithm that is $\varepsilon$-DP. Let $f:\mathcal{Y}\to \mathcal{Z}$ be an arbitrary, randomized mapping. Then $f\circ \mathcal{M}:\Vect(\mathcal{D})\to \mathcal{Z}$ is $\varepsilon$-DP.
\end{proposition}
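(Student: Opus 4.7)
The plan is to handle the deterministic case first and then reduce the randomized case to it by conditioning on the internal randomness of $f$.

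First, suppose $f : \mathcal{Y} \to \mathcal{Z}$ is deterministic. Fix any $u \simeq_d u'$ and any subset $S \subseteq \mathcal{Z}$. I would let $T = f^{-1}(S) = \{y \in \mathcal{Y} : f(y) \in S\} \subseteq \mathcal{Y}$. Then by definition of composition,
\[
\Prob[(f \circ \mathcal{M})(u) \in S] = \Prob[\mathcal{M}(u) \in T].
\]
Applying the $\varepsilon$-DP guarantee of $\mathcal{M}$ to the set $T$ gives
\[
\Prob[\mathcal{M}(u) \in T] \leq e^{\varepsilon} \cdot \Prob[\mathcal{M}(u') \in T] = e^{\varepsilon} \cdot \Prob[(f \circ \mathcal{M})(u') \in S],
\]
which is exactly the $\varepsilon$-DP condition for $f \circ \mathcal{M}$.

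For the randomized case, I would view $f$ as a deterministic function $g : \mathcal{Y} \times \mathcal{R} \to \mathcal{Z}$ together with a random string $R$ drawn from some distribution on $\mathcal{R}$ independently of the randomness of $\mathcal{M}$, so that $f(y) = g(y, R)$ in distribution. Then for any $S \subseteq \mathcal{Z}$,
\[
\Prob[(f \circ \mathcal{M})(u) \in S] = \mathbb{E}_{R}\bigl[\Prob[g(\mathcal{M}(u), R) \in S \mid R]\bigr].
\]
For each fixed value $r$ of $R$, the function $y \mapsto g(y, r)$ is deterministic, so the first part gives $\Prob[g(\mathcal{M}(u), r) \in S] \leq e^{\varepsilon} \cdot \Prob[g(\mathcal{M}(u'), r) \in S]$. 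Averaging this pointwise inequality over $R$ by linearity of expectation preserves the factor $e^{\varepsilon}$, yielding
\[
\Prob[(f \circ \mathcal{M})(u) \in S] \leq e^{\varepsilon} \cdot \Prob[(f \circ \mathcal{M})(u') \in S].
\]

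The main subtlety, rather than a real obstacle, is ensuring that $f$'s internal randomness is independent of $\mathcal{M}$'s randomness so that the conditioning step is clean; this is implicit in the usual definition of composing randomized algorithms. Measurability of $f^{-1}(S)$ in the deterministic step is automatic when we restrict to the $\sigma$-algebra on $\mathcal{Y}$ for which $\mathcal{M}$ is $\varepsilon$-DP, so no additional assumptions are needed.
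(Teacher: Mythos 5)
Your proof is correct and is precisely the standard argument given in the cited reference (Proposition 2.1 of Dwork--Roth), which the paper defers to rather than reproducing: reduce to the deterministic case via preimages, then handle randomized post-processing by conditioning on $f$'s internal coins and averaging the pointwise inequality. No gaps.
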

\begin{proof}
See the proof of Proposition 2.1 in \cite{dr14}.
\end{proof}

\subsubsection{The Laplace Mechanism}

A common way to fulfill DP is to have mechanisms that add noise to a true query response, with the noise scaled to the sensitivity of the underlying function. These mechanisms are called \emph{additive mechanisms}. An elementary example is the Laplace mechanism, which was first shown to fulfill $\varepsilon$-DP in \cite{dmns16}.

\begin{definition}
[The Laplace Mechanism \cite{dmns16}]
    Given a function $f: \Vect(\mathcal{D}) \rightarrow \mathbb{R}$, the \emph{Laplace mechanism for $f$ with scale $\lambda$} is defined as
    \[
        M(u) = f(u) + Y, \textrm{ where } Y \sim \rm{Lap}(\lambda).
    \]
    where $\mathit{Lap}(\lambda)$ denotes the zero-centered continuous probability distribution defined by the density function
    \[
        g_{\lambda}(x) := \dfrac{1}{2 \lambda} \textrm{exp} \Big( - \dfrac{|x|}{\lambda} \Big).
    \]
\end{definition}

\begin{theorem}
\label{thrm:scale-laplace}
The Laplace mechanism for $f$ with scale $\lambda$ is $\varepsilon$-DP with respect to dataset metric $d$ if and only if $\lambda \geq \Delta_d f / \varepsilon$.
\end{theorem}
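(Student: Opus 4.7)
The plan is to prove both directions of the biconditional separately. For the \emph{sufficiency} direction ($\lambda \geq \Delta_d f / \varepsilon$ implies $\varepsilon$-DP), I would follow the classical Laplace mechanism argument. Fix any $u \simeq_d u'$ and any measurable $S \subseteq \mathbb{R}$. The density of $\mathcal{M}(u)$ at $y$ is $g_\lambda(y - f(u))$. The key computation is the pointwise bound
\[
\frac{g_\lambda(y - f(u))}{g_\lambda(y - f(u'))} = \exp\!\left(\frac{|y - f(u')| - |y - f(u)|}{\lambda}\right) \leq \exp\!\left(\frac{|f(u) - f(u')|}{\lambda}\right) \leq e^\varepsilon,
\]
where the first inequality is the reverse triangle inequality and the second uses $|f(u) - f(u')| \leq \Delta_d f \leq \lambda \varepsilon$. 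Integrating this pointwise ratio bound over $S$ yields $\Pr[\mathcal{M}(u) \in S] \leq e^\varepsilon \Pr[\mathcal{M}(u') \in S]$, which is exactly the $\varepsilon$-DP condition.

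For the \emph{necessity} direction, I would argue by contrapositive: assuming $\lambda < \Delta_d f / \varepsilon$, equivalently $\lambda \varepsilon < \Delta_d f$, I would exhibit adjacent $u, u'$ and a set $S$ on which DP fails. Since $\lambda \varepsilon$ is strictly below the supremum defining $\Delta_d f$, the definition of sup yields some $u \simeq_d u'$ with $|f(u) - f(u')| > \lambda \varepsilon$. Without loss of generality $f(u') > f(u)$; let $\Delta = f(u') - f(u)$ and take the witness set $S = [f(u'), \infty)$. A short computation with the Laplace CDF gives $\Pr[\mathcal{M}(u') \in S] = 1/2$ and $\Pr[\mathcal{M}(u) \in S] = \tfrac{1}{2} e^{-\Delta/\lambda}$, so the ratio is $e^{\Delta/\lambda} > e^{\varepsilon}$, contradicting $\varepsilon$-DP.

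The main conceptual subtlety is in the necessity direction: the supremum defining $\Delta_d f$ might not be attained by any single adjacent pair, so one cannot simply pick ``the worst case.'' The fix is the standard sup characterization---for any $c < \Delta_d f$, some adjacent pair achieves $|f(u) - f(v)| > c$---applied with $c = \lambda \varepsilon$. Everything else is routine: the half-line CDF calculation is elementary, and the density-ratio argument in the sufficiency direction is the textbook proof of the Laplace mechanism. Note that the result is stated as an ``if and only if,'' so the tightness half (necessity) is essential and cannot be skipped, in contrast to the more common one-directional statement (\cref{thm:Laplace}) earlier in the paper.
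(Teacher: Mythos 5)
Your proof is correct, and on the necessity direction it takes a genuinely different route from the paper. The paper's proof of the converse evaluates the probability density functions of $\mathcal{M}(u)$ and $\mathcal{M}(u')$ at the single point $f(u)$ and observes that the ratio of densities exceeds $e^{\varepsilon}$. Strictly speaking, the DP definition is stated in terms of probabilities of output \emph{sets}, not pointwise densities, so the paper's argument implicitly relies on the standard fact that a density-ratio violation at a continuity point implies a probability-ratio violation on a small enough neighborhood of that point. Your approach sidesteps this conversion entirely: by picking the half-line $S = [f(u'), \infty)$, you compute $\Pr[\mathcal{M}(u') \in S] = 1/2$ and $\Pr[\mathcal{M}(u) \in S] = \tfrac{1}{2}e^{-\Delta/\lambda}$ directly from the Laplace CDF, yielding an explicit set on which $\Pr[\mathcal{M}(u') \in S] / \Pr[\mathcal{M}(u) \in S] = e^{\Delta/\lambda} > e^{\varepsilon}$. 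This is slightly more self-contained and rigorous than the density-at-a-point argument, at the cost of one extra CDF calculation. Your handling of the supremum (choosing $c = \lambda\varepsilon < \Delta_d f$ and invoking the definition of sup to get a witness pair with $|f(u) - f(u')| > c$) is the right fix to the subtlety you flag; the paper does the same step, just without comment. On the sufficiency direction, both you and the paper defer to the classical argument of \cite{dmns16}, though you helpfully sketch the pointwise density-ratio bound and integration; this is the same proof.
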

\begin{proof}
The forward direction is proven in \cite{dmns16}. For the other direction, suppose $\lambda < \Delta_d f / \varepsilon$. Then there are $u\simeq_d u'$ such that $|f(u) - f(u')| > \lambda \varepsilon$. Evaluating the PDFs for $M(u)$ and $M(u')$ at the point $f(u)$ then yields the fraction
$$\frac{\exp(-\frac{|f(u)-f(u)|}{\lambda})} {\exp(-\frac{|f(u')-f(u)|}{\lambda})}
>
\frac{1}{\exp(-\frac{\lambda \varepsilon}{\lambda})}
=
\exp(\varepsilon).
$$
Because the ratio of probability densities is not upper-bounded by $e^{\varepsilon}$, the Laplace mechanism with scale $\lambda < \Delta_d f / \varepsilon$ does not offer $\varepsilon$-DP.
\end{proof}

Therefore, even with a correctly implemented noise addition mechanism, incorrectly calibrated noise will imply that DP will not be fulfilled. In this paper, we demonstrate how the sensitivity is frequently underestimated, which means that the DP condition is not fulfilled even when the mechanism would fulfill DP if its noise were scaled correctly.

\subsection{Integers}
\label{sec:integers}

In this section, we review the facts about integer representation on computers that we require in our proofs.

\subsubsection{Integer Values}

Integers can be \emph{signed} or \emph{unsigned}: the signed integers can hold both positive and negative values, whereas the unsigned integers only hold non-negative values. \cref{table:types} summarizes the data types that we will consider.

\begin{definition}[Unsigned integers]
A \emph{$k$-bit unsigned integer} can take on any value in $$[0,2^{k}-1]\cap\mathbb{Z}.$$
\end{definition}

\begin{definition}[Signed integers]
\label{defn:signed-ints}
A \emph{$k$-bit signed integer} can take on any value in $$[-2^{k-1}, 2^{k-1}-1]\cap\mathbb{Z}.$$

\end{definition}

\subsubsection{Integer Arithmetic}
\label{sec:ints-are-mod-n}

In this paper, we are only concerned with the \emph{addition} of integers. We use two varieties of integer addition: modular addition and saturating addition.

Many libraries and programming languages default to using modular addition when adding values of integer types. Before defining modular addition, we first note that the $k$-bit unsigned integers and $k$-bit signed integers each consist of unique representations of the congruence classes in $\mathbb{Z}/(2^{k})\mathbb{Z}$.

\begin{definition}[Modular Addition on the $k$-bit Integers]
\label{defn:mod-add}

Let $T$ be a (signed or unsigned) type for $k$-bit integers. For all values $x,y$ of type $T$, we define $[x+y]_{2^k}$ as the unique value $z$ such that $z\equiv x+y \pmod{2^k}$ and such that $z$ is of type $T$ (meaning $z\in [0, 2^{k}-1]\cap\mathbb{Z}$ for unsigned $T$, and $z\in [-2^{k-1}, 2^{k-1}-1]\cap\mathbb{Z}$ for signed $T$).

\end{definition}

Saturation arithmetic is a strategy commonly used to prevent integer overflow from occurring. In our attacks and in our proposed solutions, we consider the effects of using saturating addition. We describe its specific behavior below. 

\begin{definition}[Saturation Addition]
\label{defn:sat-add}
Let $T$ be a (signed or unsigned) type for $k$-bit integers. Additionally, let $\max(T)$ represent the maximum representable integer of type $T$, and let $\min(T)$ represent the minimum representable integer of type $T$. Also, let $+$ represent addition in $\mathbb{Z}$, and let $\boxplus$ represent saturation addition. For all values $x,y$ of type $T$, we define $$x\boxplus y= \begin{cases}
      \min(T) & \textrm{if } x+y < \min(T) \\
      x+y & \textrm{if } \min(T) \leq x+y\leq \max(T) \\
      \max(T) & \textrm{if } x+y >   \max(T)
  \end{cases}$$
\end{definition}

\subsection{Floating-Point Representation}\label{sec:floatingpoint}

In this section, we review the facts about floating-point representation that we require in our proofs.

\subsubsection{Floating-Point Values}

According to the IEEE floating-point standard, floating-point numbers are represented with an exponent $E$ and a mantissa $m$ \cite{ieee08}, in a base-2 analog of scientific notation. More precisely,
\begin{definition}[Normal Floating-Point Number]
\label{defn:fp}
A \emph{normal $(\mantlen,\explen)$-bit floating-point number} $z$ is represented as 
$$
    z = (-1)^s \cdot (1.M) \cdot 2^E,
$$
where
\begin{itemize}
    \item $s\in \{0,1\}$ is used to represent the \emph{sign} of $z$.
    \item $M\in \{0,1\}^\mantlen$ is a $\mantlen$-bit string that represents the part of the \emph{mantissa} to the right of the radix point. That is, $$1.M = 1+\sum_{i=1}^\mantlen M_i 2^{-i}.$$
    
    \item $E\in\mathbb{Z}$ represents the \emph{exponent} of 2. When $\explen$ bits are allocated for representing $E$, then $E\in [-(2^{\explen-1}-2), 2^{\explen-1}-1]\cap \mathbb{Z}$.
    
\end{itemize}
\end{definition}

Note that the number of possibilities for $E$ is $2^\explen - 2$ rather than $2^\explen$. The remaining two choices for an $\explen$-bit $E$ are used to represent the following additional floating-point numbers.

The key characteristic of normal floating-point representation is that representable values are spaced \emph{non-uniformly} throughout the real line \cite{Mironov12}: for a $(\mantlen,\explen)$-bit float, and for $m\in\mathbb{Z}$, successive intervals from 0 to $\infty$ of the form $[2^m,2^{m+1})$ double in length (with the length always corresponding to a power of 2), and each interval contains exactly $2^\mantlen$ representable real values. Analogously, successive intervals from 0 to $-\infty$ are of the form $[-2^m,-2^{m+1})$ and double in length. This implies that, between each successive interval $[ 2^m, 2^{m+1})$ and $[ 2^{m+1}, 2^{m+2})$, and between $[ -2^m, -2^{m+1})$ and $[ -2^{m+1}, -2^{m+2})$, the spacing between adjacent representable floats also doubles. This phenomenon is depicted in Figure~\ref{fig:floats}.

\begin{figure}
\centering
\begin{tikzpicture}[x=0.75pt,y=0.75pt,yscale=-0.5,xscale=0.5]

\draw    (50,111.23) -- (612,113.23) ;
\draw    (50,71.73) -- (50,150.73) ;
\draw    (71,89.23) -- (71,131.23) ;
\draw    (91,89.23) -- (91,131.23) ;
\draw    (111,89.23) -- (111,131.23) ;
\draw    (131,73.73) -- (131,152.73) ;
\draw    (171,91.23) -- (171,133.23) ;
\draw    (211,90.23) -- (211,132.23) ;
\draw    (250,89.23) -- (250,131.23) ;
\draw    (291,73.73) -- (291,152.73) ;
\draw    (371,91.23) -- (371,133.23) ;
\draw    (451,91.23) -- (451,133.23) ;
\draw    (531,91.23) -- (531,133.23) ;
\draw    (612,73.73) -- (612,152.73) ;
\draw   (70,150.23) .. controls (70,153.25) and (71.51,154.76) .. (74.53,154.76) -- (74.53,154.76) .. controls (78.84,154.76) and (81,156.27) .. (81,159.29) .. controls (81,156.27) and (83.16,154.76) .. (87.47,154.76)(85.53,154.76) -- (87.47,154.76) .. controls (90.49,154.76) and (92,153.25) .. (92,150.23) ;
\draw   (170,150.23) .. controls (170,154.9) and (172.33,157.23) .. (177,157.23) -- (180,157.23) .. controls (186.67,157.23) and (190,159.56) .. (190,164.23) .. controls (190,159.56) and (193.33,157.23) .. (200,157.23)(197,157.23) -- (203,157.23) .. controls (207.67,157.23) and (210,154.9) .. (210,150.23) ;
\draw   (370,151.23) .. controls (370.06,155.9) and (372.42,158.2) .. (377.09,158.14) -- (400.59,157.85) .. controls (407.26,157.77) and (410.62,160.06) .. (410.67,164.73) .. controls (410.62,160.06) and (413.92,157.69) .. (420.59,157.6)(417.59,157.64) -- (444.09,157.31) .. controls (448.76,157.25) and (451.06,154.89) .. (451,150.23) ;

\draw (43,158) node [anchor=north west][inner sep=0.75pt]   [align=left] {1};
\draw (123,159) node [anchor=north west][inner sep=0.75pt]   [align=left] {2};
\draw (283,159) node [anchor=north west][inner sep=0.75pt]   [align=left] {4};
\draw (603,160) node [anchor=north west][inner sep=0.75pt]   [align=left] {8};

\end{tikzpicture}

\caption{Depiction of the increased spacing between adjacent representable floats. For demonstration purposes, the mantissa is limited to a length of 2 bits. The curly braces illustrate how the distance between adjacent representable floats doubles across successive intervals of the form $[2^k,2^{k+1})$.} \label{fig:floats}

\end{figure}

The definition of $\ulp$ provided below enables discussion of the precision available when working with a normal $(\mantlen, \explen)$-bit floating-point number $z$.

\begin{definition}[ULP]
\label{defn:ulp}
We define $\ulp(z)$ as the \emph{unit in the last place} for a normal $(\mantlen,\explen)$-bit floating-point number $z$; i.e., the place-value of the least significant digit of a floating-point number $z$. That is, if $z = (-1)^s \cdot (1.M) \cdot 2^{E}$, then $\ulp(z) = 2^{E-\mantlen}$.

We generalize this notion to apply to all $z\in\mathbb{R}\setminus \{0\}$. For $z$ such that $|z|\in [2^m, 2^{m+1})$ for $m\in\mathbb{Z}$, we define $\ulp(z) = 2^{m-\mantlen} = 2^{ \lfloor \log_2|z| \rfloor - k }$.

\end{definition}

\cref{lemma:ulp-float-test} shows how $\ulp$ can be used to determine whether a value $z\in\mathbb{R}$ can be exactly represented as a normal $(\mantlen, \explen)$-bit float.

\begin{lemma}
\label{lemma:ulp-float-test}
A number $z\in \mathbb{R}$ can be represented exactly as a normal $(\mantlen, \explen)$-bit float if and only if
\begin{enumerate}
    \item $z$ is an integer multiple of $\ulp(z) = 2^{\lfloor \log_2 |z| \rfloor - k}$, and
    
    \item $\lfloor \log_2 |z| \rfloor  \in [-(2^{\explen-1} - 2), 2^{\explen - 1}-1]$. 
\end{enumerate}

\end{lemma}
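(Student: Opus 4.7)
My plan is to prove the biconditional by directly translating between the two descriptions of a normal float: the representation $z = (-1)^s \cdot (1.M) \cdot 2^E$ from Definition \ref{defn:fp} and the pair of arithmetic conditions involving $\lfloor \log_2 |z| \rfloor$. The key observation driving both directions is that the mantissa satisfies $1 \leq 1.M < 2$, which pins down the exponent $E$ as $\lfloor \log_2 |z| \rfloor$. Note that the hypothesis implicitly requires $z \neq 0$, since $\ulp(z)$ is undefined at $0$; I would state this at the start and handle only $z \neq 0$.

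For the forward direction, I would assume $z$ is a normal $(k,\ell)$-bit float, so $z = (-1)^s \cdot (1.M) \cdot 2^E$ for some $M \in \{0,1\}^k$ and $E \in [-(2^{\ell-1}-2), 2^{\ell-1}-1]$. From $1 \leq 1.M < 2$ I conclude $2^E \leq |z| < 2^{E+1}$, so $\lfloor \log_2 |z| \rfloor = E$, which both lies in the required interval (condition 2) and makes $\ulp(z) = 2^{E-k}$ by Definition \ref{defn:ulp}. Writing $1.M = 1 + \sum_{i=1}^k M_i 2^{-i} = N/2^k$ for the integer $N = 2^k + \sum_{i=1}^k M_i 2^{k-i}$, I get $|z| = N \cdot 2^{E-k}$, establishing condition 1.

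For the backward direction, suppose $z \neq 0$ satisfies conditions 1 and 2. Set $E = \lfloor \log_2 |z| \rfloor$, so $E$ lies in the admissible exponent range by condition 2, and write $|z| = N \cdot 2^{E-k}$ for some integer $N \geq 0$ by condition 1. Since $2^E \leq |z| < 2^{E+1}$, we have $2^k \leq N < 2^{k+1}$, so $N = 2^k + N'$ for some integer $N' \in [0, 2^k - 1]$. Let $M \in \{0,1\}^k$ be the $k$-bit binary representation of $N'$; then $1.M = 1 + N'/2^k = N/2^k$, and taking $s \in \{0,1\}$ according to the sign of $z$ gives $z = (-1)^s \cdot (1.M) \cdot 2^E$, a valid normal $(k,\ell)$-bit float.

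I do not expect a real obstacle here; the only subtlety is making sure the normalization $1 \leq 1.M < 2$ correctly forces $E = \lfloor \log_2 |z| \rfloor$ (so that the exponent from the representation and the exponent inferred from $|z|$ coincide), and being careful about the integer range $2^k \leq N < 2^{k+1}$ to guarantee that $M$ can be written as a $k$-bit string. Both are handled by elementary bit-counting arguments once the normalization is invoked.
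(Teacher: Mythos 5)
Your proposal is correct and takes essentially the same approach as the paper's proof: both directions translate between the representation $z = (-1)^s \cdot (1.M) \cdot 2^E$ and the pair of arithmetic conditions by factoring out an integer (your $N$, the paper's $2^\mantlen + j$) times $\ulp(z)$, with the normalization $1 \leq 1.M < 2$ pinning down $E = \lfloor \log_2 |z| \rfloor$. Your writeup is slightly more careful than the paper's in two small respects -- you explicitly derive $E = \lfloor \log_2 |z| \rfloor$ from the normalization rather than asserting it, and you flag that the lemma tacitly assumes $z \neq 0$ -- but the argument is the same.
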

\begin{proof}
We prove each direction.
\begin{itemize}
    \item $(\Rightarrow)$ Let $z$ be a normal $(\mantlen, \explen)$-bit float. Then, $z$ must be of the form $(1.j) \cdot 2^m = (1+j\cdot 2^{-k}) \cdot 2^m$ for some $j\in\{0,1\}^k$ and $m\in[-(2^{\mantlen-1}-2), 2^{\mantlen-1}-1]$. Equivalently, we must have $z = \mathit{sign}(z)\cdot (2^\mantlen+j)\cdot 2^{m-k}$. Therefore, because $(2^\mantlen+j)$ is an integer, $z$ is an integer multiple of $2^{m-k}$, and $\lfloor \log_2|z| \rfloor = m\in [-(2^{\mantlen-1}-2), 2^{\mantlen-1}-1]$.

    \item $(\Leftarrow)$ Let $m = \lfloor \log_2|z| \rfloor$, and suppose $z$ is an integer multiple of $\ulp(z) = 2^{m-k}$. We can then write $|z| = 2^m + j\cdot 2^{m-k}$ for some $j\in[0,2^k) \cap \mathbb{Z}$. Equivalently, $z = \mathit{sign}(z)\cdot (1+j\cdot 2^{-k})\cdot 2^{m}$.

    The set of the bitstrings of the form $\{0,1\}^k$ is the full set of possible values of $j$, so $(1+j\cdot 2^{-k})$ corresponds directly with a $\mantlen$-bit mantissa $1.j$. We can write $z = \mathit{sign}(z)\cdot (1.j)\cdot 2^m$, for $m\in[-(2^{\explen - 1}-2), 2^{\explen - 1} - 1]\cap \mathbb{Z}$ and $j\in\{0,1\}^k$, so, by definition, $z$ is representable as a normal $(\mantlen,\explen)$-bit float.

\end{itemize}

This completes the proof.
\end{proof}

\begin{definition}[NRF]
\label{defn:nrf}
Let $\nrf(z)$ be to the \emph{nearest (bigger) representable float} for a normal $(\mantlen,\explen)$-bit floating-point number $z$. That is, $\nrf(z)$ is the floating-point value $z'$ such that $|z'| > |z|$ and such that $\forall z'' \neq z$ where $|z''|>|z|$, we have $|z' - z|\leq |z'' - z|$. (For the $z=0$ case, we require that $\ulp(z)$ is positive.)
\end{definition}

\begin{definition}[Subnormal Floats]
\label{note:subnormal-floats}
A \emph{subnormal $(\mantlen,\explen)$-bit float $z$} is represented as
$$
    z = (-1)^s \cdot (0.M) \cdot 2^E,
$$
where
\begin{itemize}
    \item $s\in \{0,1\}$ is used to represent the \emph{sign} of $z$.
    \item $M\in \{0,1\}^\mantlen$ is a $\mantlen$-bit string that represents the part of the \emph{mantissa} to the right of the radix point. That is, $$0.M = \sum_{i=1}^\mantlen M_i 2^{-i}.$$
    
    \item $E = -(2^{\explen-1}-2)$. Note that $E$ is a constant.
    
\end{itemize}

\end{definition}

\begin{definition}[$\pm \texttt{inf}$]
\label{defn:inf}

Let $n_\mathit{max}$ and $n_\mathit{min}$ be the largest and smallest $(\mantlen,\explen)$-bit normal floating-point numbers. Then, for arithmetic and rounding purposes, $\texttt{inf} = n_\mathit{max} + \ulp(n_\mathit{max})$ and $\texttt{-inf} = n_\mathit{min} - \ulp(n_\mathit{min})$. This means, for example, that any real number $x \geq n_\mathit{max} + \ulp(n_\mathit{max})$ is rounded to $\texttt{inf}$. We assume the behavior that, for all $z\neq \pm \texttt{inf}$, $\texttt{inf} + z=\texttt{inf}$ and $\texttt{-inf} + z=\texttt{-inf}$.
\end{definition}

\begin{definition}[$(\mantlen, \explen)$-bit Float]
The set of $(\mantlen, \explen)$-bit floats is the union of the set of all normal $(\mantlen, \explen)$-bit floats, all subnormal $(\mantlen, \explen)$-bit floats, and $\pm \texttt{inf}$. Any element of this set is termed a \emph{$(\mantlen, \explen)$-bit float}.
\end{definition}

In the libraries of DP functions that we investigate, floating-point numbers come in two main varieties: single-precision floating-point numbers, which use 32 bits of memory; and double-precision floating-point numbers, which use 64 bits of memory.

\begin{definition}[32-Bit and 64-Bit Floats]
\label{defn:32-bit-fp}
\label{defn:64-bit-fp}
We define the two types of floating-point numbers we encounter.
\begin{enumerate}
\item \emph{Single-precision floating-point} numbers are $(23,8)$-bit floating-point numbers. We denote the set of single-precision floats with the symbol $\mathbb{S}$.

\item \emph{Double-precision floating-point} numbers are $(52,11)$-bit floating-point numbers. We denote the set of double-precision floats with the symbol $\mathbb{D}$.
\end{enumerate}
\end{definition}

As described in Section~\ref{sec:integers}, integers can be \emph{signed} or \emph{unsigned}: the signed integers can hold both positive and negative values, whereas the unsigned integers only hold non-negative values. \cref{table:types} summarizes the data types that we will consider.

\def\arraystretch{1}

\begin{table}[!h]

\centering

\begin{tabular}{|p{0.2\textwidth}|p{0.2\textwidth}|p{0.2\textwidth}|}

\hline

\multirow{2}{*}{\makecell[l]{Integers (Sec.\,\ref{sec:integers})}} & 32-bit signed & 64-bit signed \\

\cline{2-3} & 32-bit  unsigned & 64-bit unsigned \\

\hline

Floats (Sec.\,\ref{sec:floatingpoint}) & 32-bit & 64-bit \\

\hline

\end{tabular}

\caption{Summary of the data types considered in this paper.}\label{table:types}

\end{table}

\subsubsection{Rounding Modes}
\label{sec:rounding-modes}

Because floating-point representations on computers have a limited number of bits, sometimes rounding is necessary. There are many different rounding modes, but the IEEE 754 standard \cite{ieee08} for floating-point arithmetic prescribes using \emph{banker's rounding} as the default rounding mode. Thus in this paper (unless stated otherwise) we will always assume the use of banker's rounding. 

\begin{definition}[Banker's Rounding]
\label{defn:bankers-rounding}
    For $x \in \mathbb{R}$, we define $\br(x)$ to be the floating-point value $z$ such that, for all floating-point values $z'\neq z$, we have $|z-x|\leq |z'-x|$. In the event of a tie (meaning $x$ is equally close to some $z$ and some $z'\neq z$), $x$ is rounded to the value $\in\{z,z'\}$ whose mantissa ends in an even bit.
\end{definition}

There is one more rounding mode that we consider in this paper: \emph{round toward zero}.

\begin{definition}[Round Toward Zero]\label{defn:roundto0}
    For $x \in \mathbb{R}$, we define $\rtz{x}$ to be the floating-point value $z$ such that $|z|<|x|$ and for all floating-point values with $|v'| < |x|$, we have $|v-x|\leq |v'-x|$.
\end{definition}

Lastly, we remark that \emph{fixed-point arithmetic} constitutes an alternative way of representing real numbers on computers, which is simpler than floating-point arithmetic. A fixed-point representation of a fractional number is essentially treated as an integer, which is implicitly multiplied by a fixed scaling factor. Thus, a fixed-point number has a fixed number of digits after the decimal point, whereas a floating-point number allows for a varying number of digits after the decimal point.

\subsubsection{Floating-Point Arithmetic}

In this paper, we are only concerned with the \emph{addition} of floating-point numbers. We use two varieties of floating-point addition: addition using banker's rounding, and addition using round toward zero.

Before providing these definitions, though, we provide a note about the behavior of all floating-point addition.

\begin{note}[Floating-Point Arithmetic Saturates]
\label{note:floats-saturate}
All floating-point addition is \emph{saturating} (see Definition~\ref{defn:sat-add}), with the exception that, as described in Section~\ref{sec:rounding-modes}, the results of computations are floating-point values (which means that rounding occurs if necessary). That is, if the sum $x+y$ would be larger than $+\texttt{inf}$ (respectively, smaller than $-\texttt{inf}$), then the result returned is $+\texttt{inf}$ (respectively, $-\texttt{inf}$).
\end{note}

\begin{definition}[Addition With banker's Rounding]
\label{defn:bround-add}
We define $x\oplus y = \text{BRound}(x+y)$, where $\text{BRound}(z)$ returns $z\in\mathbb{R}$ banker's rounded to a floating-point value. (See Definition~\ref{defn:bankers-rounding} for the definition of banker's rounding.)

\end{definition}

\begin{definition}[Addition With Round Toward Zero]
\label{defn:rtz-add}
We use $\rtz{x + y}$ to denote the result of adding two floating-point values $x$ and $y$ using round toward zero, where $\rtz{z}$ returns $z\in\mathbb{R}$ ``rounded toward zero'' to a floating-point value. (See Definition~\ref{defn:roundto0} for the definition of round toward zero.)

\end{definition}

\begin{lemma}
\label{lemma:floats-exact-add}
For $(\mantlen, \explen)$-bit floats $x,y$, $x+y = {\rm BRound}(x+y) = \rtz{x+y}$ if and only if $x+y$ is a $(\mantlen, \explen)$-bit float.
\end{lemma}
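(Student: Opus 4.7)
The plan is to prove each direction of the biconditional by unwinding the definitions of $\text{BRound}$ and $\text{RTZ}$, and observing that both rounding operations are the identity on representable floats.

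First, the forward direction ($\Rightarrow$): suppose $x+y = \text{BRound}(x+y)$, where the left-hand side is interpreted as the exact real-number sum of $x$ and $y$. By Definition~\ref{defn:bankers-rounding}, $\text{BRound}$ always outputs some $(k,\ell)$-bit floating-point value; hence $x+y$ must equal some $(k,\ell)$-bit float, so it is representable as one. An identical argument applies to $\text{RTZ}$ using Definition~\ref{defn:roundto0}: if $x+y$ equals its rounded image, then it equals a float, so it is representable.

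For the converse ($\Leftarrow$), suppose $z := x+y$ is itself a $(k,\ell)$-bit float. For $\text{BRound}$, the definition selects a float $w$ minimizing $|w-z|$, breaking ties by preferring an even-ending mantissa. Since $z$ itself is a float and achieves $|z-z|=0$, and since floats are discrete so no other float attains distance $0$ from $z$, the minimum is uniquely realized by $z$. Thus the tie-break clause is never invoked and $\text{BRound}(z)=z$. For $\text{RTZ}$, the argument is analogous: $z$ itself is a float with $|z|\leq|x+y|=|z|$, so it satisfies the round-toward-zero criterion, giving $\text{RTZ}(z)=z$.

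The main subtlety lies in reconciling two small points in Definitions~\ref{defn:bround-add}--\ref{defn:rtz-add}. First, Definition~\ref{defn:roundto0} is stated with a strict inequality $|v| < |x|$; for the lemma to hold, one must read this as the non-strict $|v| \leq |x|$ in the case where $x$ itself is a float (matching the standard IEEE interpretation). Second, by Note~\ref{note:floats-saturate} and Definition~\ref{defn:inf}, rounding saturates to $\pm\texttt{inf}$ when $|x+y|$ exceeds $n_{\max}+\text{ulp}(n_{\max})$; since the real $x+y$ is then finite, the equality $x+y = \text{BRound}(x+y)$ fails, but correspondingly $x+y$ is not representable as a $(k,\ell)$-bit float (its magnitude exceeds $n_{\max}$), so the biconditional is still honored. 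With these two interpretive remarks addressed, the proof is otherwise a direct computation from the definitions.
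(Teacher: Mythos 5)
Your proof is correct and follows the same approach as the paper, which simply asserts that the lemma follows immediately from Definitions~\ref{defn:bround-add} and \ref{defn:rtz-add}; you have just spelled out the two directions explicitly. Your observation that Definition~\ref{defn:roundto0} as written uses a strict inequality $|z|<|x|$ (which would make RTZ never equal the identity on a representable float) is a genuine and useful catch about the paper's phrasing, not a flaw in your argument.
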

\begin{proof}
This follows immediately from \cref{defn:bround-add,defn:rtz-add}.
\end{proof}

\section{An Overview of DP Libraries}
\label{sec:dp-libraries}

\textbf{Google DP Library.} Google's DP Library \cite{wzldsg19} was first released in September 2019 and remains under active development. It already supports many fundamental functions, such as count, sum, mean, variance, the Laplace mechanism, and the Gaussian mechanism, among others.

The repository (\url{https://github.com/google/differential-privacy}) contains an end-to-end differential privacy framework built on top of Apache Beam, and three DP building block libraries, in C++, Go, and Java, which implement basic noise addition primitives and differentially private aggregations. It also contains a differential privacy accounting library, which is used for tracking privacy budget, and a command line interface for running differentially private SQL queries with ZetaSQL. The GoogleDP library contains documentation concerning sampling algorithms for the Laplace and Gaussian distributions\footnote{\url{https://github.com/google/differential-privacy/blob/main/common_docs/Secure_Noise_Generation.pdf}.} which they claim circumvent problems with na\"ive floating-point implementations. These sampling mechanisms essentially follow the snapping mechanism proposed by Mironov \cite{Mironov12}.

Specifically, this document says, ``A key property that we rely on is that the IEEE floating-point standard guarantees that the results of basic arithmetic operations are the same as if the computation was performed exactly and then rounded to the closest floating-point number.'' Although this is true for a single operation, this is \textit{not} true when several basic arithmetic operations are \emph{iteratively} applied, as it is the case of bounded sum. Issues related to accumulated rounding and non-associativity of floating-point addition can arise when more than a single addition is performed (as happens when computing the bounded sum of a dataset with more than two elements). We demonstrate the severity of these issues in Section~\ref{sec:bounded-sum-computers}.

\smallskip

\textbf{OpenMined PyDP.} A related library is OpenMined's PyDP (\url{https://github.com/OpenMined/PyDP}), which consists of a Python wrapper for Google's DP library. Similarly, they support the fundamental functions of bounded mean, bounded sum, max, min, and median, among others. For now, they only use Laplace noise.

\smallskip

\textbf{SmartNoise/OpenDP.} SmartNoise is a toolkit for differential privacy on tabular data built through a collaboration between Harvard and Microsoft as part of the OpenDP project. It was originally driven by the SmartNoise-Core library of differentially private algorithms, which has since been deprecated and replaced by the OpenDP library (\url{https://opendp.org/}). The OpenDP library has a stringent framework for expressing and validating privacy properties \cite{ghv20}. The library is built in Rust but includes several Python bindings.

OpenDP releases are built by assembling a number of constituent transformations and measurements using operations such as ``chaining'' and ``composition''. As the other libraries, it supports fundamental functions such as count, clamp, mean, bounded sum, and resize, and it also supports other types of noise such as Gaussian noise. Also like the other libraries, it maintains a privacy budget tracker.

\smallskip

\textbf{IBM's \texttt{diffprivlib}.} IBM's differentially private library \cite{hbal19} is implemented in Python and is more geared towards applications of DP in machine learning (\url{https://github.com/IBM/differential-privacy-library}). The library includes a host of mechanisms alongside a number of applications to machine learning and other data analytics tasks.

In addition to implementing fundamental functions such as bounded sum, the library implements DP mechanisms such as the Exponential mechanism, the Geometric mechanism, and the Bingham mechanism. Due to floating-point imprecision, functions in the library sample from the Laplace distribution using the method described in Holohan and Braghin \cite{hb21}. The library also implements tools for computing DP histograms, training naive DP Bayes classifiers, and performing linear regressions. The library is organized into three main modules: mechanisms, models, and tools. Like the other libraries described in this section, it uses a so-called budget accountant to track the privacy budget and calculate total privacy loss using advanced composition techniques.

\smallskip

\textbf{Opacus.} Facebook's DP library \cite{acgbmmtz16} is called Opacus and is heavily geared towards machine learning applications of differential privacy, and so in particular they implement the well-known differentially-private Stochastic Gradient Descent (DP-SGD) function \cite{acgbmmtz16}. Their library (\url{https://github.com/pytorch/opacus}) explicitly enables training PyTorch models with differential privacy.

Opacus allows building image classifiers with DP, training a differentially private LSTM model for name classification, and building a text classifier with differential privacy on BERT, among others. It also maintains a privacy tracking with an accountant. A main difference between Opacus and the rest of the libraries is that it is based on R\'enyi differential privacy \cite{Mironov17}.

\smallskip

\textbf{Chorus.} Chorus \cite{jnhs20} consists of a query analysis and rewriting framework to enforce differential privacy for general-purpose SQL queries. The library (\url{https://github.com/uber-archive/sql-differential-privacy}) was deployed at Uber for its internal analytics tasks as part of the company's efforts to comply with the General Data Protection Regulation (GDPR). Chorus supports integration with any standard SQL database and is focused on the large-scale deployment of DP methods, designed to process datasets consisting of billions of rows. At Uber, Chorus processed more than $10{,}000$ queries per day, and they evaluated $18{,}774$ real-world queries with a database of 300 million rows. Beyond implementing fundamental functions such as sum and the Laplace mechanism, they also implement more complex functions such as the matrix mechanism.

While these are the modern main libraries that we will use to demonstrate our attacks, other relevant DP libraries and systems include PINQ \cite{McSherry10}, Airavat \cite{rsksw10}, Fuzz \cite{hpn11}, and Ektelo \cite{zmkhmm18}.

\section{Bounded Sum}\label{sec:bs}

\subsection{Bounded Sum is a Building Block in DP Libraries}

In this paper, we focus on the bounded sum function, an elementary function which adds the elements of a (numerical) dataset after clipping all values in the dataset to a bounded range $[L, U]$ and then returns the resulting sum. This function is important for two reasons: first, it is simple, and thus allows us to clearly convey the general problems that arise from the use of finite precision arithmetic. Second, it is the building block of many important and complex DP functions, such as functions used for computing averages and performing stochastic gradient descent (SGD). For this reason, bounded sum is a common function in DP libraries, and it appears in Google's DP library, SmartNoise / OpenDP, IBM's \texttt{diffprivlib}, Chorus, and Facebook's Opacus, to name a few. For these reasons, bounded sum is often a function that is chosen as a motivating example when presenting DP libraries, as was done for Chorus \cite{jnhs20} and Airavat \cite{rsksw10}.

Whenever the bounded sum function is implemented, the code implementation must specify the sensitivity of the function. We will now review the sensitivity of this function over $\mathbb{R}$ (i.e., its \emph{idealized} sensitivity) and point to evidence that this is the sensitivity used by DP libraries. In Section~\ref{sec:bounded-sum-computers}, we will show that this idealized sensitivity is not always exhibited by the implementation of the bounded sum algorithm, and we will show how mismatches between the idealized sensitivity can cause blatant privacy violations.

\subsection{The Idealized Bounded Sum Function}

We begin by formally defining the bounded sum function.

\begin{definition}[Clamped Domain]
For a numeric domain or data type $T$, and $L\leq U\in T$ we define $$T_{[L,U]} = \{x\in\mathbb{T}\mid L\leq x\leq U\}.$$

\end{definition}

\begin{definition}[Bounded Sum]
\label{defn:bounded-sum}

For a numeric domain or data type $T$, and $L\leq U\in T$ we define the \emph{bounded sum} function $\bs_{L, U}:\Vect(T_{[L,U]})\to T$ as
$$
    \bs_{L, U}(v) := \sum_{i=1}^{\len(v)} v_i.
$$
We write $\bs_{L, U, n}$ to denote the restriction of $\bs_{L,U}$ to $T_{[L,U]}^n$. 

\end{definition}

\begin{remark}
To differentiate between the ideal and real-world realizations of the bounded sum function, we will later use $\bs$ when referring to the bounded sum function over $\mathbb{R}$ (i.e., with infinite precision) and $\bsfp$ when referring to the computer implementations of it.
\end{remark}

\subsection{Sensitivity of the Idealized Bounded Sum}\label{sec:theoreticalsensitivity}

We now recall the sensitivity for the bounded sum function $\bs_{L,U}$ over $\mathbb{Z}$ and $\mathbb{R}$. Because the sensitivity depends on whether or not $n$ is known, we will prove the sensitivity for each case.

\begin{theorem}[Idealized Sensitivities of $\bs_{L, U}$ and $\bs_{L, U, n}$]
\label{thrm:reals-sens}
\label{thrm:reals-sens-known-n}
The sensitivities of the bounded sum function are the following.
\begin{enumerate}
    \item (Unknown $n$.) $\Delta_\mathit{Sym}\bs_{L,U} = \Delta_\mathit{ID}\bs_{L,U} = \max\{|L|, U\}.$
    \item (Known $n$.) $\Delta_\mathit{CO}\bs_{L,U,n} = \Delta_\mathit{Ham}\bs_{L,U,n} = U-L.$
\end{enumerate}

\end{theorem}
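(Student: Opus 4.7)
The plan is to prove both equalities by combining Lemma~\ref{lemma:metric-relate} (or equivalently Lemma~\ref{thm:dsymtoid}) with the key observation that $\bs_{L,U}$ is permutation-invariant, and then establishing matching upper and lower bounds on the two absolute values $\max\{|L|, U\}$ and $U-L$.

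First I would dispatch the ``ordered vs.\ unordered'' equalities. One direction follows for free from Lemma~\ref{thm:dsymtoid}: since $\dsym \leq \did$ (resp.\ $\dco \leq \dham$), we immediately get $\Delta_{\mathit{ID}} \bs_{L,U} \leq \Delta_{\mathit{Sym}} \bs_{L,U}$ (resp.\ $\Delta_{\mathit{Ham}} \bs_{L,U,n} \leq \Delta_{\mathit{CO}} \bs_{L,U,n}$). For the reverse inequalities, I would note that $\bs_{L,U}(\pi(u)) = \bs_{L,U}(u)$ for every $\pi \in S_{\len(u)}$. Hence if $u \simeq_{\mathit{Sym}} v$, Lemma~\ref{lemma:metric-relate}(1) gives a permutation $\pi$ with $\pi(u) \simeq_{\mathit{ID}} v$, and
$$|\bs_{L,U}(u) - \bs_{L,U}(v)| = |\bs_{L,U}(\pi(u)) - \bs_{L,U}(v)| \leq \Delta_{\mathit{ID}} \bs_{L,U}.$$
Taking the supremum yields $\Delta_{\mathit{Sym}} \bs_{L,U} \leq \Delta_{\mathit{ID}} \bs_{L,U}$, and the same argument with Lemma~\ref{lemma:metric-relate}(2) handles the known-$n$ case.

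Next I would prove the upper bounds via the ordered metrics, which are the cleaner of the two. If $u \simeq_{\mathit{ID}} v$ then, swapping the roles of $u$ and $v$ if needed, $v$ is obtained from $u$ by inserting a single element $z \in T_{[L,U]}$, so $|\bs_{L,U}(u) - \bs_{L,U}(v)| = |z| \leq \max\{|L|, U\}$. If $u \simeq_{\mathit{Ham}} v$ then $u$ and $v$ agree on all but (at most) one coordinate $i$ with $u_i, v_i \in [L,U]$, giving $|\bs_{L,U,n}(u) - \bs_{L,U,n}(v)| = |u_i - v_i| \leq U - L$.

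Finally I would exhibit explicit witnesses achieving these upper bounds. For unknown $n$, take $u = [\,]$ and $v = [z^*]$ where $z^* \in \{L, U\}$ is chosen to maximize $|z^*|$; then $u \simeq_{\mathit{ID}} v$ (and hence also $u \simeq_{\mathit{Sym}} v$) with $|\bs(v) - \bs(u)| = \max\{|L|, U\}$. For known $n$, take $u = [L, L, \ldots, L]$ and $v = [U, L, \ldots, L]$, which are adjacent under both $\simeq_{\mathit{Ham}}$ and $\simeq_{\mathit{CO}}$ and satisfy $|\bs(v) - \bs(u)| = U - L$. There is no real obstacle here; the only subtlety worth stating carefully is the permutation-invariance step, since it is precisely the reason the ordered and unordered sensitivities coincide for this (and indeed every symmetric) function, a distinction that will become crucial once Section~\ref{sec:bounded-sum-computers} shows it fails for the implemented $\bsfp_{L,U}$.
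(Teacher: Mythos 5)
Your proof is correct and structurally close to the paper's, but you organize it more cleanly. The paper bounds $\Delta_{\mathit{Sym}}$ directly via a histogram argument (which encodes permutation-invariance implicitly) and then transfers to $\Delta_{\mathit{ID}}$ via Lemma~\ref{thm:dsymtoid}, whereas you first establish $\Delta_{\mathit{Sym}} = \Delta_{\mathit{ID}}$ outright — the easy direction from Lemma~\ref{thm:dsymtoid}, the other from permutation-invariance plus Lemma~\ref{lemma:metric-relate} — and only then compute the sensitivity with respect to the ordered metric. The payoff of your decomposition is that the equality of ordered and unordered sensitivities is isolated as a statement about \emph{any} symmetric function, which makes explicit exactly the hypothesis (commutativity plus associativity) whose failure for $\bsfp$ drives the attacks of Section~\ref{sec:bounded-sum-computers}; the paper's version buries this under the histogram bookkeeping. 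Your lower-bound witnesses are also slightly more careful than the paper's: you use the empty dataset against $[z^*]$ with $z^* \in \{L,U\}$, which works even when $0 \notin [L,U]$, and for the known-$n$ case you pad with copies of $L$ to get genuine length-$n$ datasets, whereas the paper's appendix uses $[0]$ (which may lie outside $T_{[L,U]}$) and $[L]$, $[U]$ (which have length 1, not $n$). These are cosmetic gaps in the paper's proof rather than in its statement, but your versions are the ones a referee would actually want to see.
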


\ifnum\CCSFORMAT=0
The proof can be found in Appendix~\ref{appendix:sec4}.
\fi

\ifnum\CCSFORMAT=1

\begin{proof}
Each part is proven below.
\begin{enumerate}
\item 

Let $u,u'\in\Vect(\mathbb{R}_{[L,U]})$ be two datasets such that $u\simeq_\mathit{Sym} u'$. From the formal definition of a histogram in Definition~\ref{defn:histogram}, we observe that, for all $v\in\Vect(\mathbb{R})$, $\bs_{L,U}(v) = \sum_{i=1}^{\len(v)} v_i = \sum_{z\in\mathbb{R}}h_v(z)\cdot z$, where the last sum is well defined because $h_v(z)\neq 0$ for only finitely many values of $z$. Because $u\simeq_\mathit{Sym} u'$, we know that there is at most one value $z^*$ such that $|h_u(z^*) - h_{u'}(z^*)| = 1$, and that, for all $z\neq z^*$, we have $|h_u(z) - h_{u'}(z)| = 0$. 

We can then write the following expressions.
$$
\begin{aligned} 
\MoveEqLeft\left|\bs_{L,U}(u) - \bs_{L,U}(u') \right| \\ &= \left|\sum_{z\in\mathbb{R}}h_u(z)\cdot z - \sum_{z\in\mathbb{R}}h_{u'}(z)\cdot z \right| \\
&= \left|\sum_{z\in\mathbb{R}\setminus z^*}h_u(z)\cdot z - \sum_{z\in\mathbb{R}\setminus z^*}h_{u'}(z)\cdot z \right|\\
&\quad + \left|h_u(z^*)\cdot z^* - h_{u'}(z^*)\cdot z^*\right| \\
&= 0 + \left|z^* \cdot (h_u(z^*) - h_{u'}(z^*))\right|\\
&\leq |z^*|\\
&\leq \max\{|L|,U\},
\end{aligned}
$$
with the final inequality following from the fact that all values are clamped to the interval $[L,U]$, so the largest difference in sums arises when $z^* = \max\{|L|,U\}$.

By the definition of sensitivity, then, $\Delta_\mathit{Sym}\bs_{L,U} \leq \max\{|L|,U\}$. By Theorem~\ref{thm:dcotoham}, we also have $\Delta_{ID} \bs_{L,U} \leq \max\{|L|,U\}$.

For the lower bound, consider the datasets $u=[0], u'=[0,\max\{|L|,U\}]$. We note that $u\simeq_\mathit{ID} u'$. We also note that $|\bs_{L,U}(u) - \bs_{L,U}(u')| = \max\{|L|,U\}$. This means, then, that $\Delta_\mathit{ID}\bs_{L,U} \geq \max\{|L|,U\}$. By the contrapositive of Theorem~\ref{thm:dcotoham}, then, $\Delta_\mathit{Sym}\bs_{L,U} \geq \max\{|L|,U\}$.

Combining these upper and lower bounds on the idealized sensitivity tells us, then, that $$\Delta_\mathit{Sym}\bs_{L,U} = \Delta_\mathit{ID}\bs_{L,U} = \max\{|L|,U\}.$$

\item

Let $u,u'\in\mathbb{R}^n$ be two datasets such that $u\simeq_\mathit{CO} u'$. By Lemma~\ref{lemma:metric-relate}, there is a permutation $\pi\in S_n$ such that $\pi(u)\simeq_\mathit{Ham} u'$. This means there is at most one index $i^*$ such that $\pi(u)_{i^*}\neq u'_{i^*}$. We can then write the following expressions.
$$\begin{aligned}
    \MoveEqLeft|\bsfp_{L,U,n}(u) - \bsfp_{L,U,n}(u')|\\
    &= |\bsfp_{L,U,n}(\pi(u)) - \bsfp_{L,U,n}(u')|\\
    &=\left| \sum_{i=1}^n (\pi(u)_i) - \sum_{i=1}^n (u'_i) \right|\\
    &=\left| (\pi(u)_{i^*}) - (u'_{i^*}) \right|\\
    &\leq U-L.
\end{aligned}$$
with the final inequality following from the fact that all values are clamped to the interval $[L,U]$, so the largest difference in sums arises when, without loss of generality, $\pi(u)_{i^*} = U$ and $u'_{i^*} = L$.

By the definition of sensitivity, then, $\Delta_\mathit{CO}\bs_{L,U,n} \leq U-L$. By Theorem~\ref{thm:dcotoham}, we have $\Delta_{Ham} \bs_{L,U,n} \leq U-L$.

For the lower bound, consider the datasets $u=[L]$ and $u'= [U]$. Then, $u \simeq_{\mathit{Ham}} u'$ and $|\bs_{L,U,n}(u) - \bs_{L,U,n}(u')| = U-L$. By the contrapositive of Theorem~\ref{thm:dcotoham}, it follows that $\Delta_\mathit{CO}\bs_{L,U,n} \geq U-L$.

Combining these upper and lower bounds on the idealized sensitivity, we then conclude that $$\Delta_\mathit{Ham}\bs_{L,U,n} = \Delta_\mathit{CO}\bs_{L,U,n} = U-L.$$

\end{enumerate}

\end{proof}
\fi

\subsubsection{Uses of the Idealized Sensitivity in Libraries of DP Functions}
\label{sec:ideal-sens-libraries}

Noise is scaled according to the idealized sensitivity of the bounded sum function in following libraries:
\begin{itemize}
    \item Unknown $n$ (\cref{thrm:reals-sens}, Part 1):
    Google's sum function,\footnote{Exact link: \url{https://github.com/google/differential-privacy/blob/f3e565a14b7d48869b650483de897eebc89ad494/cc/algorithms/bounded-sum.h\#L84-L96.}}
    SmartNoise's sum function,\footnote{Exact link: \url{https://github.com/opendp/opendp/blob/92b82ba75fc7e1c376f475c27a1184175796dc22/rust/opendp/src/trans/sum/mod.rs\#L26.}} Opacus,\footnote{Exact link: \url{https://github.com/pytorch/opacus/blob/6a3e9bd99dca314596bc0313bb4241eac7c9a5d0/opacus/optimizers/optimizer.py\#L416}}
    and Airavat \cite[\S 4.1]{rsksw10}.
    
    \item Known $n$ (\cref{thrm:reals-sens}, Part 2):
    IBM \texttt{diffprivlib}'s sum and mean,\footnote{Exact link for sum: \url{https://github.com/IBM/differential-privacy-library/blob/90b319a90414ebf12062887c07e1609f888e1a34/diffprivlib/tools/utils.py\#L687}. Exact link for mean: \url{https://github.com/IBM/differential-privacy-library/blob/90b319a90414ebf12062887c07e1609f888e1a34/diffprivlib/tools/utils.py\#L278-L279}.}
    Google's mean,\footnote{Exact link: \url{https://github.com/google/differential-privacy/blob/4c867aae8ea6a6831d2ac0ab749cd5ae24e047b4/cc/algorithms/bounded-mean.h\#L101}} Chorus \cite[\S 3]{jnhs20},
    and SmartNoise's sized sum function.\footnote{Exact link: \url{https://github.com/opendp/opendp/blob/92b82ba75fc7e1c376f475c27a1184175796dc22/rust/opendp/src/trans/sum/mod.rs\#L52-L53}.}
\end{itemize}

\begin{note}
\label{note:opendp-contrib}
SmartNoise-Core is deprecated, and, though the functions have been ported to OpenDP, they are now kept behind the \texttt{contrib} and \texttt{floating-point} flags with the disclaimer that no guarantees are made about whether these functions offer differential privacy; the other libraries listed above and mentioned in \cref{sec:dp-libraries} do not appear to have such a disclaimer.
\end{note}

As far as we are aware, none of these libraries concretely specify the metric used for defining neighboring datasets, although we infer that all of them implicitly use an unordered distance metric.\footnote{E.g., see the documentation \url{https://github.com/google/differential-privacy/blob/main/common_docs/Differential_Privacy_Computations_In_Data_Pipelines.pdf} for the GoogleDP library.} If the sensitivity used by a library matches the sensitivity with respect to a known $n$ distance, we infer that the library defines neighboring datasets in terms of known $n$ distances, and likewise for unknown $n$ distances.

However, as we develop below in \cref{sec:bounded-sum-computers}, computers work with finite precision arithmetic, and thus implementations of the bounded sum function no longer fulfill the idealized sensitivity of Theorem~\ref{thrm:reals-sens}. We will now analyze the mismatches between the idealized and the implemented sensitivities of the bounded sum function, and show that they can lead to blatant violations of privacy. Thus, whenever the libraries above work with floating-point types, they should use a different sensitivity bound than is used in their code.

\section{Implementing Bounded Sum on Computers}
\label{sec:bounded-sum-computers}

\textbf{Iterative summation.} While the proofs of idealized sensitivity presented in \cref{sec:floatingpoint} are over the reals $\mathbb{R}$, computers cannot work with the reals
and typically use finite-precision integers or the floating point numbers, as defined in \cref{sec:integers} and \cref{sec:floatingpoint}, respectively. Note that, as occurs in many of the libraries discussed in Section \ref{sec:dp-libraries}, the sum is calculated \emph{iteratively}, meaning that the sum is accumulated in a single variable, and that each term in the vector is added to this variable in order of occurrence in the vector. As before, we assume that all elements of the dataset $u$ are numerical elements of the same type. Libraries thus implement a bounded sum function $\bsfp_{L,U}$ in the following iterative way (written in Python-style pseudocode):

\begin{definition}\label{def:bsiterative}
We let $\bsfp_{L, U}:T_{[L,U]}\to T$ denote the iterative bounded sum function on type $T$ as follows:
\begin{lstlisting}[language = Python,frame=single, escapechar=|]
def bounded_sum(u):
    the_sum = 0
    for element in u:
        the_sum += element
    return the_sum
\end{lstlisting}

\end{definition}

We remark that in Section~\ref{sec:accuracy} we consider other orderings than the iterative one when performing the summation of the elements $u_i$ in a database, such as \textit{pairwise summation} (also known as \textit{cascade summation}) \textit{Kahan summation}. We introduce both summation methods in this subsection.

\begin{definition}[Pairwise Summation \cite{h93}]\label{def:bspairwise}
    Given $n$ values $u_i$, for $i = 1, \ldots, n$, the \textit{pairwise summation} method consists of adding the $u_i$ as follows:
    
\begin{lstlisting}[language = Python, escapechar=|,frame=single]
def pairwise_sum(u):
    n = len(u)
    if n == 0:
        return 0
    else if n == 1:
        the_sum = u[1]
    else:
        m = floor(n/2)
        the_sum = pairwise_sum(u[1,...,m]) + 
        pairwise_sum(u[m+1,...,n])
return the_sum
\end{lstlisting}
\end{definition}

That is, we add the $u_i$ terms as follows:
    \[
        v_i = u_{2i-1} + u_{2i}, \quad i = 1 : [n/2]
    \]
    ($y_{[(n+1)/2]} = u_n$ if $n$ is odd). This pairwise summation process is repeated recursively on the $v_i$, $i = 1: [(n+1)/2]$. The final sum is obtained in $\lceil \log_2 n \rceil$ steps.

For example, to add 4 values $u_i$, we would compute $((u_1 + u_2) + (u_3 + u_4))$. Pairwise summation is the default summation mode in NumPy.

Another summation method is that of \textit{Kahan's summation}, which is also known as \textit{compensated summation}. 

\begin{definition}[Kahan Summation \cite{k06}]\label{def:bskahan}
    Given $n$ values $u_i$, for $i = 1, \ldots, n$, \textit{Kahan's summation} method consists of adding the $u_i$ values as follows:
\begin{lstlisting}[language = Python, escapechar=|,frame=single]
def kahan_sum(u):
    the_sum = 0.0
    c = 0.0
    for element in u:
        y = element - c
        t = the_sum + y
        c = (t - the_sum) - y
        the_sum = t
return the_sum
\end{lstlisting}
\end{definition}

We will return to the question of the summation function in Section~\ref{sec:accuracy}. Meanwhile, for the rest of Section~\ref{sec:bounded-sum-computers} we will focus on detailing the attacks. We will present two different styles of attacks which are based on two different principles: those which take advantage of non-associativity, and those which take advantage of rounding errors. 

Additionally, while we do not run attacks on all DP libraries, similar issues to the ones we demonstrate throughout this section can occur in these libraries. In \cref{sec:theoreticalsensitivity} we pointed out where exactly in the code of the different DP libraries one can find the implementation of the sensitivity of the bounded sum function, which in all cases corresponds to the idealized sensitivity (Theorem~\ref{thrm:reals-sens-known-n}). As we have shown throughout this section, the idealized sensitivity does not hold in practice, and hence the same vulnerabilities that we exploit in this section also hold for the rest of the libraries listed in Section~\ref{sec:theoreticalsensitivity}.

\subsection{Overflow Attack on Integers with Modular Addition}
\label{sec:mod-attack}
We prove that the sensitivities $\sid \bsfp_{L,U}$ and $\sham \bsfp_{L,U,n}$ on the integers with modular addition (defined in \cref{defn:mod-add}) are in fact much larger than the idealized sensitivities $\sid \bs_{L,U}$ and $\sham \bs_{L,U,n}$ presented in \cref{sec:theoreticalsensitivity}.

\begin{theorem}[$\sid\bsfp_{L,U}$ and $\sham\bsfp_{L,U,n}$ on the Integers with Modular Addition]
\label{thrm:mod-attack}

Let $\bsfp_{L,U}:\Vect(T)\to T$ be iterative bounded sum with modular addition on the type $T$ of $k$-bit integers. For all $U\geq 1$ and $L=0$, the sensitivity is $\sid \bsfp_{L,U} = 2^k - 1 \geq \sid \bsfp_{L,U} = \max\{ |L|, U \}$. In fact, for all $n$ satisfying the condition
$$n = \left\lceil \frac{\max(T)}{U} \right\rceil + 1,$$
there are datasets $u,v\in T^n_{[L,U]}$ such that $\dham(u,v) = 1$ and $|\bsfp_{L,U,n}(u) - \bsfp_{L,U,n}(v)| = 2^k-1$, so $\sham \bsfp_{L,U,n} = 2^k - 1.$

\end{theorem}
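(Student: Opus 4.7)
The upper bounds $\sid \bsfp_{L,U} \leq 2^k - 1$ and $\sham \bsfp_{L,U,n} \leq 2^k - 1$ are immediate, since any value of $\bsfp$ lies in the $k$-bit unsigned integer range $\{0, 1, \ldots, 2^k-1\}$. So the task is to match these from below by exhibiting adjacent datasets whose iterative modular sums differ by exactly $2^k - 1$. The conceptual point is that modular addition has a discontinuity of size $2^k-1$ at the wraparound boundary, and iterative summation does nothing to smooth it out; so I want two adjacent datasets whose integer sums straddle that boundary by exactly $1$.

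The first step is to build a ``filler'' prefix $w$ whose entries lie in $[0,U] \cap \mathbb{Z}$ and whose (integer) sum is exactly $2^k - 1 = \max(T)$. Writing $2^k - 1 = qU + r$ with $0 \leq r < U$ via Euclidean division, I take $w$ to consist of $q$ copies of $U$ followed by a single copy of $r$ (which I omit if $r = 0$). This $w$ has length $m = \lceil (2^k-1)/U \rceil$, and its running partial sums increase monotonically from $0$ to $2^k - 1$ without ever reaching $2^k$. Hence iterative modular addition on $w$ introduces no wraparound and yields $\bsfp_{L,U}(w) = 2^k - 1$.

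For the Hamming/known-$n$ claim with $n = m+1$, I then set $u = (w_1, \ldots, w_m, 0)$ and $v = (w_1, \ldots, w_m, 1)$, both in $T^n_{[0,U]}$ (using $U \geq 1$ to guarantee $1 \in T_{[0,U]}$). These satisfy $\dham(u,v) = 1$, and the final iterative step gives $\bsfp_{L,U,n}(u) = 2^k - 1$ (no further overflow) while $\bsfp_{L,U,n}(v) = 0$ (a single wraparound from $2^k$ back to $0$). Their difference is exactly $2^k - 1$, matching the upper bound. For the insert-delete/unknown-$n$ claim, I instead take $u = w$ and $v = (w_1, \ldots, w_m, 1)$, obtaining $\did(u,v) = 1$ with the same $\bsfp$ values and hence the same difference.

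The argument is essentially mechanical once $w$ is in hand, so the main (minor) obstacle is just verifying that the Euclidean-division construction produces entries in $[0,U]$, has length exactly $m$, and sums to $2^k-1$; all three are direct. The one conceptual subtlety worth flagging is that the construction relies on iterative summation experiencing at most one wraparound (rather than several cancelling ones), which is exactly why we fill the prefix $w$ with a monotonically increasing partial-sum sequence that stays strictly below $2^k$ until the very last addition.
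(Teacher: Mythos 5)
Your proof is correct and follows the same construction as the paper's: build a prefix of copies of $U$ plus one remainder term whose running partial sums climb monotonically to exactly $\max(T)$ with no intermediate wraparound, then compare appending a final element $0$ versus $1$ --- the first leaves the modular sum at $\max(T)$, the second wraps it to $\min(T)$, producing a gap of $2^k-1$; dropping the trailing $0$ gives the insert-delete version. (The paper packages your Euclidean remainder $r$ as $M = \max(T)-(n-2)U$, but it is the same quantity.) The one scope caveat worth noting is that you wrote the argument specifically for \emph{unsigned} $k$-bit integers, taking the range to be $\{0,\ldots,2^k-1\}$ and the target prefix sum to be $2^k-1$. The theorem and the paper's proof are phrased abstractly in terms of $\max(T)$ and $\min(T)$ so as to cover signed $k$-bit integers too (and the library attack in Section~\ref{sec:ibmattack} is against signed 32-bit integers). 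For signed $T$ your target must be $\max(T)=2^{k-1}-1$ rather than $2^k-1$ (otherwise the prefix overflows partway through), the wraparound result is $\min(T)=-2^{k-1}$, and the difference is still $\max(T)-\min(T)=2^k-1$; your construction goes through verbatim once $2^k-1$ is replaced by $\max(T)$ throughout.
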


\begin{exampleattack}[64-bit Unsigned Integers]
\label{ea:modular-sum}
Let $L=0$ and $U=2^{47}$. \cref{thrm:mod-attack} gives us datasets $u,v\in T_{[L,U]}^n$ of size
$$n = \left\lceil \frac{\max(T)}{U} \right\rceil + 1 = 2^{17}+1$$
where $$|\bsfp_{L,U,n}(u) - \bsfp_{L,U,n}(v)| = 2^{64} - 1.$$

The idealized sensitivities presented in Section \ref{sec:theoreticalsensitivity}, however, suggest a maximum difference in sums of $\sham\bs_{L,U} = U-L = 2^{47}$. However, $u$ and $v$ actually experience a difference in sums of $2^{64}-1$, more than a factor of $2^{16}$ larger than these idealized sensitivities. This means, then, that a DP mechanism that claims to offer $\varepsilon$-DP here but calibrates its random distribution to the idealized sensitivity would instead offer no better than $2^{16} \varepsilon$-DP.
\end{exampleattack}

\begin{proof}[Proof of \cref{thrm:mod-attack}]

Let $M = \max(T) - (n-2)\cdot U$. We see that $M\in[L,U]$. Consider the datasets
$$ u = [U_1,\ldots, U_{n-2}, M],
v = [U_1, \ldots, U_{n-2}, M, 1].$$

By the definition of modular addition in \cref{defn:mod-add}, we see that $$\bsfp_{L,U}(u) = U_1 + \cdots + U_{n-2} + M = \max(T),$$ while $$\bsfp_{L,U}(v) = \left(\max(T) + 1 \right) \bmod{2^k}$$ such that $\bsfp_{L,U}(v)\in [\min(T),\max(T)]$. This means, then, that $\bsfp_{L,U}(v) = \min(T)$. Therefore, $$|\bsfp_{L,U}(u) - \bsfp_{L,U}(v)| = \max(T) -  \min(T) = 2^k - 1,$$
so, because $\did(u,v) = 1$, $\sid \bsfp_{L,U} = 2^k - 1 \geq \max\{|L|, U\} = \sid \bs_{L,U}$.

Now, consider $u' = [U_1,\ldots, U_{n-2}, M, 0]$. $$\bsfp_{L,U}(u') = U_1 + \cdots + U_{n-2} + M = \max(T),$$ so, because $\dham(u',v) = 1$, $\sham \bsfp_{L,U,n} = 2^k - 1 \geq (U-L) = \sham \bs_{L,U,n}$.

\end{proof}

\subsection{Reordering Attack on Signed Integers with Saturation Addition}\label{sec:non-assoc-ints}

We prove that the sensitivity $\Delta_\mathit{Sym}\bsfp_{L,U}$ on the signed integers with saturation addition (defined in \cref{defn:sat-add}) is in fact much larger than the idealized sensitivity $\Delta_\mathit{Sym}\bs_{L,U}$ presented in \cref{sec:theoreticalsensitivity}.

\begin{theorem}[$\ssym\bsfp_{L,U}$ With Saturation Addition on the Integers]
\label{thrm:sens-ints-sat-unknown}
Let $\bsfp_{L,U}: \Vect(T) \to T$ be iterative bounded sum with saturation addition on the type $T$ of $k$-bit signed integers.
For $U>0$ and $L<0$, the sensitivity is $\Delta_\mathit{Sym}\bsfp_{L,U} = 2^k-1 > \ssym \bs_{L,U}$.

In fact, for all $n$ satisfying the condition
$$n \geq \left\lceil \frac{\max(T)-\min(T)}{|L|} \right\rceil + \left\lceil \frac{\max(T)-\min(T)}{U}\right\rceil,$$
there are datasets $u,v\in T^n_{[L,U]}$ with $\dco(u,v) = 0$ such that $|\bsfp_{L,U}(u) - \bsfp_{L,U}(v)| = 2^k - 1$.
\end{theorem}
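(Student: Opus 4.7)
The plan is to exploit the non-associativity of saturation addition on signed integers. I will construct two datasets that are exact permutations of each other, so they have $\dco = 0$ and are therefore adjacent under $\simeq_\mathit{Sym}$ (via Lemma~\ref{lemma:dcotodsym}, since $\dsym = 2\cdot\dco = 0 \leq 1$), yet iteratively sum to opposite extremes of $T$. This simultaneously shows the dataset claim and furnishes the matching lower bound on $\ssym\bsfp_{L,U}$, the upper bound being immediate from the fact that any output of $\bsfp_{L,U}$ lies in $T$ and so has magnitude at most $\max(T) - \min(T) = 2^k - 1$.

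Set $a = \lceil (\max(T)-\min(T))/U \rceil$ and $b = \lceil (\max(T)-\min(T))/|L| \rceil$, so by hypothesis $n \geq a+b$. I would define
\[
u = [\,\underbrace{U,\ldots,U}_{a},\, \underbrace{L,\ldots,L}_{b},\, \underbrace{0,\ldots,0}_{n-a-b}\,], \qquad
v = [\,\underbrace{L,\ldots,L}_{b},\, \underbrace{U,\ldots,U}_{a},\, \underbrace{0,\ldots,0}_{n-a-b}\,].
\]
Both lie in $T^n_{[L,U]}$ since $L<0<U$ (so $0\in T_{[L,U]}$), and they share the same histogram, giving $\dco(u,v)=0$.

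Next I would trace the iterative $\boxplus$-sum step-by-step. For $u$: starting from $0$ and adding $U$ repeatedly, the accumulator after the $i$-th step equals $\min(iU,\max(T))$; since $aU \geq \max(T)-\min(T) > \max(T)$, after $a$ steps the accumulator equals $\max(T)$. Then each of the $b$ additions of $L$ subtracts $|L|$ with saturation from below, and since $b|L| \geq \max(T)-\min(T)$, the accumulator reaches $\min(T)$; the trailing zeros leave it fixed. Hence $\bsfp_{L,U}(u) = \min(T)$. By the symmetric argument (the $L$-block saturates the accumulator down to $\min(T)$, and the $U$-block then saturates it up to $\max(T)$), $\bsfp_{L,U}(v)=\max(T)$. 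Therefore $|\bsfp_{L,U}(u)-\bsfp_{L,U}(v)| = \max(T)-\min(T) = 2^k-1$.

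I do not anticipate any real obstacle beyond carefully tracking the running accumulator through the saturation clamps at each step. The conceptual point is simply that saturation addition on signed integers fails to be associative the moment the accumulator hits either endpoint: once a prefix has been clipped to $\max(T)$ or $\min(T)$, information about ``how far past'' the clip we went is irretrievably lost, so permuting the same multiset can steer the final value to either extreme.
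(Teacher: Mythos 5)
Your proof is correct and takes essentially the same approach as the paper: construct two permutation-equivalent block datasets (a block of $U$'s and a block of $L$'s, in opposite orders) so that in one ordering the accumulator saturates up to $\max(T)$ then down to $\min(T)$, and vice versa in the other. The only difference is cosmetic: to reach an arbitrary length $n \geq a+b$, you pad with zeros (which is clean, since $0 \in T_{[L,U]}$ and $y \boxplus 0 = y$ leaves the accumulator fixed), whereas the paper instead lengthens the $U$- and $L$-blocks to sizes $c' \geq c$ and $b' \geq b$ with $b' + c' = n$. Both are fine; the step-by-step tracing you sketch is exactly what the paper formalizes via its Property~\ref{prop:sat-add-result}.
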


\begin{exampleattack}\label{ea:saturation}[32-bit Signed Integers]

Let $L=-2^{14}=-16{,}384$, $U=2^{15}=32{,}768$, and $k=32$. \cref{thrm:sens-ints-sat-unknown} gives us datasets $u,v\in T_{[L,U]}^n$ of size
$$n = \left\lceil \frac{\max(T)-\min(T)}{|L|} \right\rceil + \left\lceil \frac{\max(T)-\min(T)}{U}\right\rceil$$
where $$|\bsfp_{L,U,n}(u) - \bsfp_{L,U,n}(v)| = 2^{32}-1, $$ despite the fact that $d_\mathit{Sym}(u,v)=0$ would suggest that the difference in sums should be 0.

Moreover, the idealized sensitivities presented in Section \ref{sec:theoreticalsensitivity} suggest a maximum difference in sums of $\ssym\bs_{L,U} = \max\{ |L|,|U| \} = U = 2^{15}$ and $\sco\bs_{L,U,n} = (U-L) = 2^{15}+2^{14}$. However, $u$ and $v$ actually experience a difference in sums of $2^{32}-1$, more than a factor of $2^{16}$ larger than these idealized sensitivities. This means, then, that a DP mechanism that claims to offer $\varepsilon$-DP here but calibrates its random distribution to the idealized sensitivity would instead offer no better than $2^{16} \varepsilon$-DP.
\end{exampleattack}

The proof of \cref{thrm:sens-ints-sat-unknown} relies on the fact that saturation addition is not associative, i.e., it is not always the case that $(a\boxplus b) \boxplus c = a\boxplus (b \boxplus c)$. For example,  consider $a=\max(T),b=\max(T),c=-1$. Then $(a\boxplus b) \boxplus c = \max(T) \boxplus -1 = \max(T)-1$, while $a\boxplus (b \boxplus c) = \max(T) \boxplus (\max(T)-1) = \max(T)$. The operation $\boxplus$ is commutative, so if it were associative we could freely reorder the summation without changing the result, which would imply that, for all $u,v$ such that $\dsym(u,v) = 0$, $\bsfp_{L,U}(u) = \bsfp_{L,U}(v)$. However, $\boxplus$ is not associative, and we explore the impacts of that below.

We first state a property of saturation arithmetic.

\begin{property}
\label{prop:sat-add-result}
For any starting summation value $z_0$ of type $T$ and any sequence of values $z_1,\ldots,z_n$ of type $T$, if $z_i\geq 0$ for all $i>0$, $$z_0\boxplus\cdots\boxplus z_n = \min\{z_0+\cdots+ z_n,\max(T)\}.$$
Likewise, if $z_i\leq 0$ for all $i>0$, then $$z_0\boxplus\cdots\boxplus z_n = \max\{z_0+\cdots+ z_n,\min(T)\}.$$
\end{property}
\begin{proof}[Proof of \cref{prop:sat-add-result}]
We first prove the case where, for all $i>0$, $z_i\geq 0$. We use a proof by induction. By the definition of saturation addition in \cref{defn:sat-add}, we see that the base case holds: $z_0 \boxplus z_1 = \min\{z_0 + z_1, \max(T)\}$. Assume that $z_0\boxplus \cdots \boxplus z_{k-1} = \min\{z_0 + \cdots + z_{k-1}, \max(T)\}$. Then, $z_0 \boxplus \cdots \boxplus z_k = \min\{z_0 + \cdots + z_{k-1}, \max(T)\} \boxplus z_k = \min\{z_0 + \cdots + z_{k}, \max(T)\}$. Therefore, the claim holds.

A symmetric argument can be used to prove the claim associated with the case where $z_i\leq 0$ for all $i>0$.
\end{proof}

Note that \cref{prop:sat-add-result} means that, for $L,U\geq 0$ or $L,U\leq 0$, the idealized sensitivity presented in \cref{thrm:reals-sens} is an upper bound on the sensitivity $\ssym\bsfp_{L,U}$ and $\sco\bsfp_{L,U,n}$, and thus the condition $U>0$ and $L<0$ in \cref{thrm:sens-ints-sat-unknown} is essential. We now proceed with the proof of \cref{thrm:sens-ints-sat-unknown}.

\begin{proof}[Proof of \cref{thrm:sens-ints-sat-unknown}]
Let $b=\lceil \frac{\max(T)-\min(T)}{|L|} \rceil$ and let $c=\lceil \frac{\max(T)-\min(T)}{U}\rceil $. Consider the two datasets
$$u=[L_1,\ldots,L_b, U_1,\ldots, U_c], v=[U_1,\ldots,U_c, L_1,\ldots, L_b ],$$
where, for all $i$, $L_i = L$ and $U_i = U$.
We note that $\dsym(u,v) = 0$, so $u\simeq_\mathit{Sym} v$. We now evaluate and compare the results of $\bsfp_{L,U}(u)$ and $\bsfp_{L,U}(v)$.

We first show that $\bsfp_{L,U}(u) = \max(T)$.

We begin by showing that $\bsfp_{L,U}[L_1,\ldots, L_b] =\min(T)$. By Property \ref{prop:sat-add-result}, we know that $L_1\boxplus\cdots \boxplus L_b = \max\{L_1+\cdots+ L_b,\min(T)\}$. We see that $L_1+\cdots+ L_b = b\cdot L \leq \frac{\max(T)-\min(T)}{|L|}\cdot L = -\max(T)+\min(T).$ Therefore, $$\bsfp_{L,U}[L_1,\ldots,L_b] = \max\{-\max(T)+\min(T), \min(T)\} = \min(T).$$

To complete the evaluation of $\bsfp_{L,U}(u)$, we now consider the addition of the $U_i$ terms to the intermediate sum $\bsfp_{L,U}[L_1,\ldots,L_b] = \min(T)$. By Property \ref{prop:sat-add-result}, we know that $\min(T)\boxplus U_1\boxplus\cdots \boxplus U_c = \min\{ \min(T) + U_1+\cdots+ U_c,\max(T) \}$. We see that $U_1+\cdots+ U_c = c\cdot U \geq \frac{\max(T)-\min(T)}{U}\cdot U = \max(T)-\min(T).$ Therefore, by Property \ref{prop:sat-add-result}, $\bsfp_{L,U}(u) = \min\{ \max(T)-\min(T)+\min(T), \max(T) \} = \max(T)$.

We next show that $\bsfp_{L,U}(v) = \min(T)$. By an argument similar to the one used above, $\bsfp_{L,U}[U_1,\ldots,U_c] = \max(T)$; then, when we consider the addition of the $L_i$ terms, we find that $\bsfp_{L,U}(v) = \min(T)$.

We see that $|\bsfp_{L,U}(u)-\bsfp_{L,U}(v)| = \max(T)-\min(T)$. Because $u\simeq_\mathit{Sym} v$, we have $\Delta_\mathit{Sym}\bsfp_{L,U} \geq \max(T)-\min(T) = 2^k-1$. Additionally, by \cref{defn:signed-ints}, $U\leq 2^{k-1}-1$ and $|L|\leq 2^{k-1}$, so we necessarily have $\ssym\bsfp_{L,U} = 2^k-1 > \max\{ |L|, U\} = \ssym\bs_{L,U}$.

The results described above will hold for any pair of datasets $$u'=[L_1,\ldots,L_{b'}, U_1,\ldots, U_{c'}], v'=[U_1,\ldots,U_{c'}, L_1,\ldots, L_{b'} ]$$ with $b'\geq b$ and $c'\geq c$. We note that $\dco(u',v') = 0$. Therefore, for all $$n = b'+c' \geq \left\lceil \frac{\max(T)-\min(T)}{|L|} \right\rceil + \left\lceil \frac{\max(T)-\min(T)}{U}\right\rceil,$$
$\sco \bsfp_{L,U,n} = 2^{k}-1$. By \cref{defn:signed-ints}, $U\leq 2^{k-1}-1$ and $|L|\leq 2^{k-1}$, so we necessarily have $\sco\bsfp_{L,U,n} = 2^k-1 \geq U-L = \sco\bs_{L,U,n}$.
\end{proof}

\subsection{Reordering Attack on Floats}
\label{sec:attack-floats-assoc}

We now provide a generalized counterexample showing that the idealized sensitivities $\Delta_\mathit{Sym}\bs_{L,U}$ and $\Delta_\mathit{CO}\bs_{L,U,n}$ in Section \ref{sec:theoreticalsensitivity} do not apply to the floating-point numbers. Similarly to how the example in \cref{sec:non-assoc-ints} leverages the non-associativity of saturating addition on integers, this counterexample leverages the non-associativity of addition on floats.

Because the sensitivities $\ssym \bsfp_{L,U}$ and $\sco \bsfp_{L,U,n}$ on the floats are neither easy to calculate nor useful, we only prove lower bounds on the sensitivities to show that  $\ssym \bsfp_{L,U}\neq \ssym \bs_{L,U}$ and $\sco \bsfp_{L,U,n}\neq \sco \bs_{L,U,n}$. We prove useful upper bounds on sensitivities for alternative implementations of floating-point summation in \cref{sec:splitsumrp}.

\begin{theorem}
\label{prop:non-assoc-floats}
Let $\bsfp_{L,U} : \Vect(T) \to T$ be iterative bounded sum on the type $T$ of $(\mantlen,\explen)$-bit floats. Let $j,a,d\in\mathbb{Z}$ satisfy $-(2^{(\explen-1)}-2)\leq j \leq 2^{(\explen-1)}-3-\mantlen$, $a\geq 0$, $d>0$, and $a+d \leq k+1$. For $L = 2^j$, $U=2^{j+d}$, and $n = 2^{(\mantlen + 1) - a} + 2^{(\mantlen + 1) - d}$, there are datasets $u,v\in T^n_{L,U}$ with $\dsym(u,v) = \dco(u,v) = 0$ such that
$$|\bsfp_{L,U}(u) - \bsfp_{L,U}(v)|\geq 2^{(\mantlen + 1) - (a+d)}\cdot U .$$
In particular,
$$\frac{\ssym \bsfp_{L,U}}{\ssym \bs_{L,U}}\geq 2^{(\mantlen + 1) - (a+d)}$$
and
$$\frac{\sco \bsfp_{L,U,n}}{\sco \bs_{L,U,n}}\geq 2^{(\mantlen + 1) - (a+d)}\cdot \left(\frac{U}{U-L}\right).$$

\end{theorem}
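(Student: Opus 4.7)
The plan is to exploit the non-associativity of floating-point addition by constructing $u$ and $v$ as two orderings of the same multiset. I will set $n_L := 2^{(\mantlen+1)-a}$ and $n_U := 2^{(\mantlen+1)-d}$ (so $n = n_L + n_U$) and take
$$u = [\underbrace{L,\ldots,L}_{n_L},\underbrace{U,\ldots,U}_{n_U}], \qquad v = [\underbrace{U,\ldots,U}_{n_U},\underbrace{L,\ldots,L}_{n_L}].$$
Since $v$ is a permutation of $u$, $\dsym(u,v) = \dco(u,v) = 0$. The idea is that summing all the $L$'s first keeps every partial sum small enough that every addition is exactly representable, but summing all the $U$'s first drives the running sum up to $2^{(\mantlen+1)+j}$, where the $\ulp$ equals $2L$, so every subsequent $L$-addition is a halfway case that banker's rounding will discard.

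For $\bsfp_{L,U}(u)$ I will argue by induction that every partial sum is exactly representable and hence equals the true real sum. In the $L$-phase, the partial sum after $m$ additions is $m\cdot 2^j$; writing $m = 2^t q$ with $q$ odd and using $q \leq m \leq n_L \leq 2^{\mantlen+1}$, Lemma~\ref{lemma:ulp-float-test} yields exact representability, and its exponent-range condition is handled by $j \leq 2^{\explen-1}-3-\mantlen$ together with $a+d \leq \mantlen+1$. After the $L$-phase the running sum is $S_L := n_L L = 2^{(\mantlen+1)-a+j}$. In the $U$-phase I will factor each intermediate value as $S_L + mU = 2^{j+d}\cdot M_m$ with $M_m := 2^{(\mantlen+1)-a-d} + m$ a positive integer; since $M_m \leq 2^{(\mantlen+2)-d} \leq 2^{\mantlen+1}$ (using $d \geq 1$), the same lemma applies, giving $\bsfp_{L,U}(u) = 2^{(\mantlen+1)-a+j} + 2^{(\mantlen+1)+j}$ exactly.

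For $\bsfp_{L,U}(v)$ the $U$-phase is handled identically and produces $S_U := n_U U = 2^{(\mantlen+1)+j}$ exactly. The crucial step is analyzing each subsequent addition of $L$ to $S_U$: the true real result is $2^{(\mantlen+1)+j} + 2^j$, which sits precisely midway between the two adjacent floats $S_U$ and $S_U + 2^{j+1}$ (since $\ulp(S_U) = 2^{j+1}$). I will read off the last mantissa bits---the mantissa of $S_U$ ends in $0$, that of $S_U + 2^{j+1}$ ends in $1$---and invoke banker's rounding (Definition~\ref{defn:bankers-rounding}) to conclude that the addition returns $S_U$. Because the running sum does not change, every one of the $n_L$ subsequent $L$-additions is the same halfway case and is discarded, so $\bsfp_{L,U}(v) = S_U$.

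Combining these gives $|\bsfp_{L,U}(u) - \bsfp_{L,U}(v)| = 2^{(\mantlen+1)-a+j} = 2^{(\mantlen+1)-(a+d)}\cdot U$, and the two displayed ratio bounds then follow by dividing by the idealized sensitivities $\ssym\bs_{L,U} = U$ (since $L = 2^j > 0$) and $\sco\bs_{L,U,n} = U - L$ from Theorem~\ref{thrm:reals-sens}. The hardest part of the writeup will be the bookkeeping in the $U$-phase of $\bsfp_{L,U}(u)$, since a single rounding error there would ruin the clean identity; the factorization $S_L + mU = 2^{j+d} M_m$ is what reduces the representability of \emph{every} intermediate sum to the uniform integer bound $M_m \leq 2^{\mantlen+1}$, which is exactly where the hypothesis $a+d \leq \mantlen+1$ does its work.
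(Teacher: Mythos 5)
Your proposal is correct and follows essentially the same route as the paper: identical datasets $u=[L,\ldots,L,U,\ldots,U]$ and $v=[U,\ldots,U,L,\ldots,L]$, exactness of every intermediate sum for $u$, and the banker's-rounding halfway-case argument showing $\bsfp(v)$ stalls at $n_U\cdot U$. The only cosmetic difference is in the $U$-phase bookkeeping for $u$, where you factor each partial sum as $2^{j+d}M_m$ and bound the integer $M_m$, while the paper bounds $\ulp(x)$ directly and shows each $x$ is a multiple of $U$ (handling the boundary case $a=0$, $x=bL+cU$ separately); note the hypothesis $a+d\le\mantlen+1$ is what makes $2^{(\mantlen+1)-a-d}$ (equivalently $bL/U$) an integer, while the bound $M_m\le 2^{\mantlen+1}$ actually follows from $d\ge 1$.
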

To maximize the blow-up in sensitivity, we should minimize $a+d$, namely by setting $a=0$ and $d=1$, which yields a blow-up of $2^k$, albeit on datasets of size $n=3\cdot 2^k$. To obtain smaller datasets, we should maximize both $a$ and $d$, subject to the constraint $a+d\leq k+1$. For example, setting $a = d = k/2$, we obtain a blow-up by a factor of $2$ (or a factor of $2\cdot (\frac{U}{U-L})$ in the $\sco$ case) on datasets of size $n = 4\cdot 2^{k/2}$.

\begin{exampleattack}[32-bit floats]
\label{ea:32-nonassoc}
Here, $\mantlen = 23$ and $\explen = 8$. We set $a=0$, $d=1$, and $ j=0$. We note that these settings satisfy the conditions of \cref{prop:non-assoc-floats}, which lead to $L=1$, $U=2$, and $n = 2^{24} + 2^{23}$. This dataset construction results in a difference in sums of $2^{24-1}\cdot U = 2^{24}$. This is a factor of $2^{24}$ larger than the sensitivity $\sco \bs_{L,U,n} = (U-L) = 1$, and it is a factor of $2^{23}$ larger than the sensitivity $\ssym \bs_{L,U} = \max\{|L|,U\} = 2$.

This means, then, that a DP mechanism that claims to offer $\varepsilon$-DP but calibrates its random distribution to the idealized sensitivity would instead offer no better than $2^{24} \varepsilon$-DP when dealing with the change-one distance notion of neighboring datasets, and no better than $2^{23} \varepsilon$-DP when dealing with the symmetric distance notion of neighboring datasets.
\end{exampleattack}

\begin{exampleattack}[64-bit floats]
\label{ea:64-nonassoc}
Here, $\mantlen = 52$ and $\explen = 11$. We set $a=26$, $d=26$, and $ j=-26$. We note that these settings satisfy the conditions of \cref{prop:non-assoc-floats}, which leads to $L=2^{-26}$, $U=1$, and $n = 2^{28}$. This dataset construction results in a difference in sums of $2\cdot U = 2$. This is more than a factor of $2$ larger than the sensitivity $\sco \bs_{L,U,n} = (U-L) = 1-2^{-26} < 1$, and it is a factor of $2$ larger than the sensitivity $\ssym \bs_{L,U} = \max\{|L|,U\} = 2$.

This means, then, that a DP mechanism that claims to offer $\varepsilon$-DP but calibrates its random distribution to the idealized sensitivity would instead offer no better than $2 \varepsilon$-DP when dealing with the change-one distance notion of neighboring datasets, and no better than $2 \varepsilon$-DP when dealing with the symmetric distance notion of neighboring datasets.
\end{exampleattack}

\begin{proof}[Proof of \cref{prop:non-assoc-floats}]

Let $b = 2^{(\mantlen+1)-a}$ and $c = 2^{(\mantlen+1)-d}$, and consider the datasets $$u = [L_1,\ldots,L_b,U_1,\ldots,U_c], v = [U_1,\ldots,U_c,L_1,\ldots,L_b]$$ where, for all $i$, $L_i = L$ and $U_i = U$. Note that $\dsym(u,v) = \dco(u,v) = 0$.

We first show that the first $b-1$ terms of $u$ can be added exactly; that is, we show that $\bsfp_{L,U}(u) = \bs_{L,U}(u) = b\cdot L + c\cdot U$.

We begin by showing that $$\bsfp_{L,U}[L_1,\ldots, L_b] = \bs_{L,U}[L_1,\ldots, L_b] = b\cdot L.$$ We take the approach of showing that all intermediate sums computed in the calculation of $\bsfp_{L,U}[L_1,\ldots, L_b]$ can be represented exactly as $(\mantlen,\explen)$-bit floats, which (with the exception of $b\cdot L$) is done by first showing that all of these intermediate sums are multiples of their $\ulp$ and then applying \cref{lemma:ulp-float-test}.

We observe that $\{L,\ldots, L\cdot (b-1)\}$ is the set of all intermediate sums that result from calculating $\bs_{L,U}[L_1,\ldots, L_{b-1}]$ and note that $b\cdot L = 2^{j+k+1-a}$. For all $x\in \{L,\ldots, L\cdot (b-1)\}$,
$$\begin{aligned}
\ulp(x)&\leq \ulp(L\cdot (b-1))\\
& < \ulp(2^{j+\mantlen + 1-a})\\
&= 2^{j+1-a}\\
&\leq 2^{j+1}.
\end{aligned}
$$
All such $x$ are multiples of $L=2^j$, so they are necessarily multiples of $\ulp(x)$. By \cref{lemma:ulp-float-test}, all of these values can be represented exactly as $(\mantlen,\explen)$-bit floats. We also note that $b\cdot L = 2^{j+k+1-a}$ can be represented exactly as a $(\mantlen,\explen)$-bit float (by our constraint on $j$), so by \cref{lemma:floats-exact-add}, $\bsfp_{L,U}[L_1,\ldots, L_b] = \bs_{L,U}[L_1,\ldots, L_b] = b\cdot L$.

We now complete the evaluation of $\bsfp_{L,U,n}(u)$. We use \cref{lemma:ulp-float-test} to show that all of the intermediate sums of $\bs_{L,U}(u)=b\cdot L + c\cdot U$ are exactly representable as $(\mantlen,\explen)$-bit floats, which then shows that $\bsfp_{L,U}(u) = \bs_{L,U}(u)=b\cdot L + c\cdot U$.

We consider the addition of the $U_i$ terms to the intermediate sum $b\cdot L$. We note that $b\cdot L = 2^{j+k+1-a}$ is an integer multiple of $U = 2^{j+d}$ since $a+d\leq k+1$. Thus each intermediate sum $x\in \{ b\cdot L, b\cdot L + U, \ldots,  b\cdot L + c\cdot U \}$ is a multiple of $U$. Moreover,
$$\begin{aligned}
\ulp(x)&\leq \ulp(b\cdot L + c\cdot U)\\
&= \ulp(2^{(\mantlen + 1) + j - a} + 2^{(\mantlen + 1) + j })\\
&\leq 2^{j+1},
\end{aligned}$$
unless $a=0$ and $x= b\cdot L + c\cdot U$. Thus $U = 2^{j+d}$ is a multiple of $\ulp(x)$ and hence $x$ is representable as a normal $(\mantlen, \explen)$-bit float. For the case $a = 0$ and $x = b\cdot L + c\cdot U$, we note that $b\cdot L + c\cdot U = 2^{j+\mantlen + 2}$, which is a representable $(\mantlen,\explen)$-bit float by the condition on $j$.

We next evaluate $\bsfp_{L,U,n}(v)$. We use \cref{lemma:ulp-float-test} to show that 
\[
    \bsfp_{L,U}[U_1,\ldots, U_c] = \bs_{L,U}[U_1,\ldots, U_c]=c\cdot U
\]
We observe that $\{U,\ldots, c\cdot U\}$ is the set of intermediate sums that result from calculating the value $\bs_{L,U}[U_1,\ldots, U_c]$ and note that $c\cdot U = 2^{k+1+j}$. For all $x\in \{U,\ldots, c\cdot U\}$, $\ulp(x) = 2^{j+1-m}$ for some integer $m \geq 0$. All $x\in \{U,\ldots, c\cdot U\}$ are multiples of $U=2^{j+d}$ with $d>0$, so they are necessarily multiples of $\ulp(x) \leq \ulp(c\cdot U) = 2^{j+1}$. By \cref{lemma:ulp-float-test}, all of these values can be represented exactly as $(\mantlen,\explen)$-bit floats, so by \cref{lemma:floats-exact-add}, $\bsfp_{L,U}[U_1,\ldots, U_c] = \bs_{L,U}[U_1,\ldots, U_c] = c\cdot U$.

We now complete the evaluation of $\bsfp_{L,U}(v)$. We use \cref{lemma:ulp-float-test} to show that the intermediate sums in the calculation of $\bs_{L,U}(v)$ are not exactly representable as $(\mantlen,\explen)$-bit floats, and we show how to calculate the result $\bsfp_{L,U}(v)$. 

We note that $c\cdot U + L_1 = 2^{\mantlen+1+j} + 2^j \in [2^{\mantlen+1+j}, 2^{\mantlen+2+j})$, so $\ulp(c\cdot U + L_1) = 2^{j+1}$. However, $c\cdot U + L_1 = 2^{\mantlen+1+j} + 2^j$ is not a multiple of $2^{j+1}$ so by \cref{lemma:ulp-float-test}, $c\cdot U + L_1$ is not exactly representable as a $(\mantlen,\explen)$-bit float. $c\cdot U + L_1$ lies exactly halfway between the consecutive $(\mantlen,\explen)$-bit floats $2^{\mantlen+1+j}$ and $2^{\mantlen+1+j} + 2^{j+1}$. Because the mantissa associated with $2^{\mantlen+1+j}$ ends in an even bit, by banker's rounding (\cref{defn:bankers-rounding}), $c\cdot U \oplus L_1 = 2^{\mantlen+1+j}$. For all $i$, then, the summation $c\cdot U \oplus L_i = 2^{(\mantlen+1)+ j} = c\cdot U $. Therefore, $\bsfp_{L,U}(v) = c\cdot U.$

Thus, $$|\bsfp_{L,U}(u)-\bsfp_{L,U}(v)| = b\cdot L = 2^{(\mantlen+1)-a}\cdot 2^j = 2^{(\mantlen + 1) - (a+d)}\cdot U ,$$
so $\ssym \bsfp_{L,U} \geq 2^{(\mantlen + 1) - (a+d)}\cdot U$ and $\sco \bsfp_{L,U,n} \geq 2^{(\mantlen + 1) - (a+d)}\cdot U$. 

We compare this result to the idealized sensitivities presented in \cref{thrm:reals-sens}: $\ssym \bs_{L,U} = \max\{|L|,U\} = U$ and $\sco \bs_{L,U,n} = U-L$. We see that $$\frac{\ssym \bsfp_{L,U}}{\ssym \bs_{L,U}}\geq 2^{(\mantlen + 1) - (a+d)}$$
and
$$\frac{\sco \bsfp_{L,U,n}}{\sco \bs_{L,U,n}}\geq 2^{(\mantlen + 1) - (a+d)}\cdot \left(\frac{U}{U-L}\right).$$

\end{proof}

\subsection{Rounding-Based Attack on Floats}
\label{sec:round-attack-floats} 

In the proof of \cref{prop:non-assoc-floats}, $\dsym(u,v) = 0$, but $\dham(u,v)$ is large. Now we show that the sensitivity can be unexpectedly large even when neighboring datasets are defined in terms of an ordered distance metric such as $\dham$ or $\did$.

\begin{theorem}[$\sham \bsfp_{L,U,n}\neq \sham \bs_{L,U,n}$]
\label{prop:rounding-ham-floats}

    Let $\bsfp_{L,U,n}: T^n \to T$ be iterative bounded sum on the type $T$ of $(\mantlen, \explen)$-bit floats. Let $j,k\in\mathbb{Z}$ satisfy $1<j< (\mantlen+1)/2$ and $-2^{(\explen-1)}+2\leq m \leq 2^{\explen-1}-1-j$.
    For $n=2^j+1$, $L=(1+2^{-(\mantlen+1)+j})\cdot 2^{m}$, and $U = L + 2^{m-\mantlen}$, there are datasets $u,v\in T^n_{[L,U]}$ with $\dham(u,v) = 1$ such that
    $$|\bsfp_{L,U,n}(u) - \bsfp_{L,U,n}(v)| \geq 2^{j+m-\mantlen} = 2^j \cdot (U-L).$$
    In particular, $$\frac{\sham \bsfp_{L,U,n}}{\sham \bs_{L,U,n}} \geq (n-1).$$
    
\end{theorem}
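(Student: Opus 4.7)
The plan is to instantiate the generic rounding attack sketched in the introduction with the specific parameters of the theorem. Let $u = [L, L, \ldots, L]$ consist of $n = 2^j+1$ copies of $L$, and let $v = [L, \ldots, L, U]$ agree with $u$ on the first $n-1 = 2^j$ entries and differ only at the last. Then $\dham(u,v) = 1$ immediately, so it suffices to prove $|\bsfp_{L,U,n}(v) - \bsfp_{L,U,n}(u)| \geq 2^{j+m-\mantlen}$.

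The first step is to show that the iterated sum of the initial $n-1 = 2^j$ copies of $L$ incurs \emph{no} rounding: every partial sum $i \cdot L$ for $i \in [1, 2^j]$ is an exact $(\mantlen,\explen)$-bit float. Writing $L = 2^m + 2^{m+j-\mantlen-1}$, the product $iL$ expands as a sum of at most $2\cdot|\mathrm{bits}(i)|$ shifted powers of two at positions $\{m+t : t \in \mathrm{bits}(i)\}$ and $\{m+j-\mantlen-1+t : t \in \mathrm{bits}(i)\}$. The hypothesis $j < (\mantlen+1)/2$ makes the gap $\mantlen+1-j$ between the two clusters exceed $j-1$, so all these bit positions are distinct and collectively lie in a window of at most $\mantlen+1$ consecutive positions. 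Hence $iL$ satisfies the divisibility condition of \cref{lemma:ulp-float-test} and is exactly representable, so by \cref{lemma:floats-exact-add} the iterated sum reaches $2^j L$ exactly.

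The second step analyzes the final addition, which is the one that rounds. The relevant $\ulp$ around $2^j L$ is $2^{j+m-\mantlen}$. For $u$, the real sum $(2^j+1)L$ divided by this $\ulp$ equals $N + \tfrac{1}{2}$, where $N = 2^\mantlen + 2^{j-1} + 2^{\mantlen-j}$, so $(2^j+1)L$ lies \emph{exactly} midway between the two adjacent floats $N\cdot 2^{j+m-\mantlen}$ and $(N+1)\cdot 2^{j+m-\mantlen}$. In the normalized form $(1.M)\cdot 2^{j+m}$, the lower candidate's mantissa has ones only at positions $-(\mantlen-j+1)$ and $-j$; the constraints $1 < j < (\mantlen+1)/2 \leq \mantlen$ place both strictly above the least significant mantissa bit at position $-\mantlen$, so $M_{\mathrm{low}}$ ends in $0$ while $M_{\mathrm{up}}$ (obtained by adding $1$ at position $-\mantlen$) ends in $1$. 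Banker's rounding (\cref{defn:bankers-rounding}) therefore selects the lower candidate, giving $\bsfp_{L,U,n}(u) = N\cdot 2^{j+m-\mantlen}$. For $v$, the real sum $(2^j+1)L + (U-L)$ divided by the same $\ulp$ equals $N + \tfrac{1}{2} + 2^{-j}$; since $j>1$ gives $2^{-j} < \tfrac{1}{2}$, the fractional part lies strictly in $(\tfrac{1}{2},1)$ and the sum rounds strictly up to $(N+1)\cdot 2^{j+m-\mantlen}$.

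Subtracting, $|\bsfp_{L,U,n}(v) - \bsfp_{L,U,n}(u)| = 2^{j+m-\mantlen} = 2^j(U-L) = (n-1)(U-L)$, which combined with the idealized sensitivity $\sham \bs_{L,U,n} = U-L$ from \cref{thrm:reals-sens-known-n} gives the claimed ratio. The most delicate step will be the banker's-rounding parity argument: pinning down which of the two equidistant floats has the even-ending mantissa depends on verifying that the bits $2^{j-1}$ and $2^{\mantlen-j}$ contributed to $N$ are separated from each other and from the least significant mantissa position, and this is exactly where the conditions $j > 1$ and $j < (\mantlen+1)/2$ are used. The exact-representability argument for the early partial sums is routine once the bit-window computation is set up, but needs to be stated carefully because a single unexpected rounding before the final step would derail the construction.
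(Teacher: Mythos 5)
Your proposal is correct and follows essentially the same argument as the paper: the datasets are the same (with $u$ and $v$ labels swapped), the first stage shows all partial sums of the $2^j$ copies of $L$ are exactly representable, and the second stage exploits the tie-breaking rule of banker's rounding on the final addition to send the two sums to opposite sides of the midpoint. The only cosmetic difference is that you carry the exponent shift $m$ through the computation, whereas the paper fixes $m=0$ and scales by $2^m$ at the end, and you phrase exact representability via a bit-window argument rather than via \cref{lemma:ulp-float-test} directly.
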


\begin{exampleattack}[64-bit floats]
\label{ea:ham-round-once}

Here, $\mantlen = 52$ and $\explen = 11$. We set $j=4$ and $m=-1$. We note that these settings satisfy \cref{prop:rounding-ham-floats}, which lead to $L = (1+2^{-49})/2$, $U = (1+2^{-49} + 2^{-53})/2$, and $n = 2^4 + 1 = 17$. The result is a difference in sums of $2^{4-1-52} = 2^{-49}$. This is a factor of $2^4$ larger than $\Delta_\mathit{Ham}\bs_{L,U,n} = (U-L) = 2^{-53}$. 

This means, then, that a DP mechanism that claims to offer $\varepsilon$-DP but calibrates its random distribution to the idealized sensitivity would instead offer no better than $ 16 \varepsilon$-DP in the known $n$ case. More generally, no better than $2^{j}\varepsilon$-DP is offered for $1<j<26.5$.

\end{exampleattack}

\begin{proof}[Proof of \cref{prop:rounding-ham-floats}]

Let $b=2^j$ and $m=0$, and consider the datasets $$u = [L_1,\ldots,L_b,U], v = [L_1,\ldots,L_b,L_{b+1}],$$
where, for all $i$, $L_i = L$. 
Note that $\dham(u,v) = 1$.

We begin by showing that $$\bsfp_{L,U}[L_1,\ldots, L_b] = \bs_{L,U}[L_1,\ldots, L_b] = b\cdot L.$$ We take the approach of showing that all intermediate sums computed in the calculation of $\bsfp_{L,U}[L_1,\ldots, L_b]$ can be represented exactly as $(\mantlen,\explen)$-bit floats, which is done by first showing that all of these intermediate sums are multiples of their $\ulp$ and then applying \cref{lemma:ulp-float-test}.

We observe that $\{L,\ldots, L\cdot (b-1)\}$ is the set of intermediate sums that result from calculating $\bs_{L,U}[L_1,\ldots, L_{b-1}]$. Let $\LL$ denote this set $\{L,\ldots, (b-1)\cdot L\}$. We note that
$$ (b-1)\cdot L = b\cdot (L-1) - L + b < b = 2^j,$$
where the inequality holds because $b\cdot (L-1) = 2^{2j-(\mantlen+1)}\leq 1 < L$. Therefore, $\ulp((b-1)\cdot L) = 2^{j-k-1}$. For all $x\in \LL$, then, $$\ulp(x)\leq \ulp(L\cdot (b-1)) = 2^{-(\mantlen+1)+j}.$$

Next, we show that for all $x \in \LL$, $L = 1+2^{-(\mantlen+1)+j}$ is a multiple of $\ulp(x)$. We note that $\ulp(x) = 2^{-(\mantlen+1)+j}$ divides $2^{-(\mantlen+1)+j}$ and, because we require $j<\frac{\mantlen+1}{2}$, we note that $\ulp(x) = 2^{j-k-1-a}$ also divides $1=2^0$. Therefore, $L$ is a multiple of $\ulp(x)$ for all $x\in \LL$.

All $x\in\LL$ are multiples of $L$, so they are necessarily multiples of $\ulp(x)$. By \cref{lemma:ulp-float-test}, all $x\in \LL$ can be represented exactly as $(\mantlen,\explen)$-bit floats. We also note that $b\cdot L = 2^j+2^{-(k+1)+2j} = (1+2^{-(k+1)+j})\cdot 2^j$ can be represented exactly as a $(\mantlen,\explen)$-bit float, so by \cref{lemma:floats-exact-add}, $\bsfp_{L,U}[L_1,\ldots, L_b] = \bs_{L,U}[L_1,\ldots, L_b] = b\cdot L$.

We now evaluate $\bsfp_{L,U,n}(u)$. We use \cref{lemma:ulp-float-test} to show that $\bs_{L,U,n}(u)=b\cdot L + U$ is not exactly representable as a $(\mantlen,\explen)$-bit float, and we show how to calculate the result $\bsfp_{L,U,n}(u)$.

As shown above, the sum of the first $b$ terms is $$\bsfp_{L,U}[L_1,\ldots, L_b] = b\cdot L.$$ Thus, $$\bsfp_{L,U}(u) = b\cdot L \oplus U.$$ We note that
$$b\cdot L + U = 2^j+2^{-(k+1)+2j} + 1+2^{-(\mantlen+1)+j} + 2^{-\mantlen} \in [2^j, 2^{j+1}),$$
so $\ulp(b\cdot L + U) = 2^{j-k}$. However, $b\cdot L + U$ is not an integer multiple of $2^{j-k}$, so by \cref{lemma:ulp-float-test}, $b\cdot L + U$ is not exactly representable as a $(\mantlen,\explen)$-bit float. $b\cdot L + U$ lies between the consecutive $(\mantlen,\explen)$-bit floats $2^{j-\mantlen}(2^\mantlen + 2^{j-1} + 2^{\mantlen-j})$ and
$2^{j-\mantlen}(2^\mantlen + 2^{j-1} + 2^{\mantlen-j} + 1)$. The sum $b\cdot L + U$ is closer to the larger value, $2^{j-\mantlen}(2^\mantlen + 2^{j-1} + 2^{\mantlen-j} + 1)$, so by banker's rounding as defined in \cref{defn:bankers-rounding}, the result is $\bsfp_{L,U,n}(u) = 2^{j-\mantlen}(2^\mantlen + 2^{j-1} + 2^{\mantlen-j} + 1)$.

We next evaluate $\bsfp_{L,U,n}(v)$. We again use \cref{lemma:ulp-float-test} to show that $\bs_{L,U,n}(v)=b\cdot L + L$ is not exactly representable as a $(\mantlen,\explen)$-bit float, and we show how to calculate the result $\bsfp_{L,U,n}(v)$.

As shown above, the sum of the first $b$ terms is $$\bsfp_{L,U}[L_1,\ldots, L_b] = b\cdot L.$$ Thus, $\bsfp_{L,U}(v) = b\cdot L \oplus L$. We note that $$b\cdot L + L = 2^j+2^{-(k+1)+2j} + 1+2^{-(\mantlen+1)+j} \in [2^j, 2^{j+1}),$$ so $\ulp(b\cdot L + L) = 2^{j-k}$. However, $b\cdot L + L$ is not a multiple of $2^{j-k}$ (it is a multiple of $2^{j-k-1}$) so by \cref{lemma:ulp-float-test}, $b\cdot L + L$ is not exactly representable as a $(\mantlen,\explen)$-bit float. $b\cdot L + L$ lies exactly between the consecutive $(\mantlen,\explen)$-bit floats $2^{j-\mantlen}(2^\mantlen + 2^{j-1} + 2^{\mantlen-j})$ and
$2^{j-\mantlen}(2^\mantlen + 2^{j-1} + 2^{\mantlen-j} + 1)$. Because the mantissa associated with the smaller value $2^{j-\mantlen}(2^\mantlen + 2^{j-1} + 2^{\mantlen-j})$ ends in an even bit, by banker's rounding, $b\cdot L \oplus L = 2^{j-\mantlen}(2^\mantlen + 2^{j-1} + 2^{\mantlen-j})$, so $\bsfp_{L,U,n}(u) = 2^{j-\mantlen}(2^\mantlen + 2^{j-1} + 2^{\mantlen-j})$.

We see that $|\bsfp(u)-\bsfp(v)| = 2^{j-\mantlen}$.

From the bounds $-2^{(\explen-1)}+2\leq m \leq 2^{\explen-1}-1-j$, we see that $m$ only affects the exponent of floating point values in the proof and does not affect whether values are representable as $(\mantlen,\explen)$-bit normal floats or the direction of rounding. All values in the proof above, then, can be multiplied by $2^m$, so, for all $m$ such that $-2^{\explen-1}+2\leq m \leq 2^{\explen-1}-1-j$, $\sham \bsfp_{L,U,n} \geq 2^{j+m-\mantlen}$.

We observe that $U-L = 2^{m-\mantlen}$, $\sham \bs_{L,U,n} = U-L = 2^{m-\mantlen}$, and $n-1 = 2^j$. Therefore, $$\frac{\sham \bsfp_{L,U,n}}{\sham \bs_{L,U,n}} \geq (n-1).$$

\end{proof}

\begin{definition}[Compensated Summation]
\label{defn:comp-sum}
A \emph{compensated summation} algorithm, denoted $\mathit{CS}:\Vect(T)\to T$, attempts to track the error that is accumulated in the computation of an iterative summation, and uses this error-tracking to adjust its final answer accordingly. One limitation of a compensated summation algorithm that we assume here is that its output type must be the same as its input type. An idealized version of a compensated summation algorithm, then, computes the exact sum, and then rounds or casts this value to some output of type $T$.
\end{definition}

\begin{remark}[Resistance to Compensated Summation]
\label{rem:resist-kahan-thesis}
The same increased sensitivity seen in the proof of Proposition \ref{prop:rounding-ham-floats} also applies when Kahan summation \cite{h93} (or any type of compensated summation as defined in \cref{defn:comp-sum}) is used. This is because the inaccuracy in the final answer occurs due to an inability to represent the idealized sum, rather than due to an accumulation of rounding error throughout the summation. (Rounding only occurs in the addition of the final term.) We see, then that the idealized sensitivity of $\sco \bs_{L,U,n} = (U-L)$ and $\sham \bs_{L,U,n} = (U-L)$ is fundamentally unsalvageable under the assumption that the data type of elements in the input vector is the same as the data type of the output of $\bsfp_{L,U,n}$.
\end{remark}

\begin{remark}[Resistance to Pairwise Summation]
\label{rem:resist-pair}
The same increased sensitivity seen in the proof of Proposition \ref{prop:rounding-ham-floats} also applies when pairwise summation is used. This is because the dataset $u$ is of length $2^k+1$. As a result, the final term is only added once the sum of the rest of the dataset has been calculated, so the same analysis presented in the proof of Proposition \ref{prop:rounding-ham-floats} applies when calculating the sum by way of pairwise summation.
\end{remark}

\subsection{Repeated Rounding Attack on Floats I}
\label{sec:accum-round-floats}

The counterexample described in \cref{sec:round-attack-floats}, which shows that we can have $\sham\bsfp_{L,U,n}\gg \sham \bs_{L,U,n}$, only takes advantage of rounding that happens at the very last summation step, when we add the last elements $u_n$ and $v_n$ in the datasets $u$ and $v$. Since there last elements are bounded in absolute value by $U$, the rounding changes each result by at most $\pm U$, so $\bsfp_{L,U}(u) - \bsfp_{L,U}(v)|\leq 2U$. The reason for which we get a large blow-up in sensitivity is that $\sham \bs_{L,U,n} = (U-L)$, which can be much smaller than $U$. To get a large blow-up relative to $\sid \bs_{L,U} = U$, we need to take advantage of rounding throughout the summation. Here, we show that when $u$ and $v$ differ in their first element, rounding errors can accumulate throughout the summation, yielding $\sid \bsfp_{L,U} \gg \sid\bs_{L,U}$.

\begin{theorem}
\label{prop:rounding-id-floats}
    Let $\bsfp_{L,U} : \Vect(T_{[L,U]}) \to T$ be iterative bounded sum on the type $T$ of $(\mantlen, \explen)$-bit floats. Let $j,m\in\mathbb{Z}$ satisfy $0<j\leq \mantlen$ and $-(2^{\explen-1}-2) \leq m \leq 2^{\explen-1}-2-j-\mantlen$.
    
    Then for $L = 0$, $U=2^{\mantlen + m}$, and $n = j\cdot 2^{\mantlen} + 2$, there are datasets $u,v\in\Vect(T)$ where $len(u) = n$ and $\len(v) = n-1$ such that $\did(u,v) = 1$ and
    $$|\bsfp_{L,U}(u) - \bsfp_{L,U}(v)| \geq 2^{\mantlen+j+m}.$$
    In particular,
    $$\frac{\sid \bsfp_{L,U}}{\sid \bs_{L,U}} \geq 2^j.$$
\end{theorem}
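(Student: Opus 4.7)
The plan is to construct two datasets $u$ of length $n = j\cdot 2^\mantlen + 2$ and $v$ of length $n-1$ with $\did(u,v) = 1$, such that $\bsfp_{L,U}(u)$ and $\bsfp_{L,U}(v)$ differ by at least $2^{\mantlen+j+m} = 2^j \cdot U$. This extends the single-rounding argument from \cref{prop:rounding-ham-floats} by engineering rounding errors that \emph{accumulate} over roughly $j\cdot 2^\mantlen$ consecutive additions instead of happening at only one step. Following the outline in \cref{sec:lowerbounds}, I would fill a prefix of $u$ with copies of $U$ to push the partial sum into a high-ulp regime, then append a tail of values $a$ whose addition rounds up at each step, so that the total rounding error scales as a per-step discrepancy of order $U$ multiplied by the number of tail steps.

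Concretely, I would set the first $2^\mantlen$ entries of $u$ equal to $U$, so that the partial sum reaches the exactly-representable value $S_0 \defeq 2^\mantlen \cdot U = 2^{2\mantlen + m}$ (by \cref{lemma:ulp-float-test}), at which point $\ulp(S_0) = U$. I would then choose a tail value $a$ just barely above $U/2$ (for instance $a = U/2 + U/2^\mantlen$), fill the remaining $(j-1)\cdot 2^\mantlen + 2$ positions of $u$ with copies of $a$, and define $v$ by deleting one carefully chosen element from $u$. The deletion is placed so that $v$'s partial sums cross each ulp-doubling boundary at a \emph{different} step than $u$'s, putting the two trajectories into different ulp regimes for long stretches of the tail. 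In each such stretch, banker's rounding (\cref{defn:bankers-rounding}) causes the addition of $a$ to round up by a full ulp in one trajectory while contributing only about $U/2$ in the other, producing per-step discrepancies of order $U$ over $\Theta(j\cdot 2^\mantlen)$ steps. Summing across the $O(j)$ ulp regimes traversed and comparing to the idealized sensitivity $\sid \bs_{L,U} = U$ from \cref{thrm:reals-sens} then yields the claimed ratio $\sid \bsfp_{L,U} / \sid \bs_{L,U} \geq 2^j$.

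The hard part will be the phase-by-phase bookkeeping of partial sums across the multiple ulp regimes traversed. For each phase I would need to verify, via \cref{lemma:ulp-float-test}, that the predicted partial sum is either exactly representable or lies just past the rounding midpoint in the appropriate direction given banker's tie-breaking on mantissa parity; and I would need to ensure that $a$ continues to trigger rounding-up throughout the intended range of partial sums without either trajectory saturating at $2 S_0$ prematurely. The construction must also respect the dataset-length constraint $n = j\cdot 2^\mantlen + 2$ while still covering enough ulp regimes to accumulate the full $2^j$ blow-up. The technical core is an induction on the step index showing that the gap between the running sums of $u$ and $v$ grows (rather than being absorbed by rounding), which follows from the deterministic structure of banker's rounding at each parity tie, combined with the constraints on $j$ and $m$ that guarantee every relevant partial sum stays within the normal-float range.
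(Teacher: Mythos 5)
Your high-level idea — engineer a prefix that puts the partial sum into a high-$\ulp$ regime, then append a tail whose additions systematically round up — matches the spirit of the paper's repeated-rounding attack, but your concrete construction cannot produce the required $2^j$ blow-up, and the gap is a real one, not just bookkeeping.

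The problem is the \emph{constant} tail value $a$. Once your prefix of $2^\mantlen$ copies of $U$ pushes the sum to $S_0 = 2^\mantlen\cdot U = 2^{2\mantlen+m}$, you have $\ulp(S_0)=U$ and each addition of $a\approx U/2$ rounds up by a full $\ulp$. But after about $2^\mantlen$ such additions the sum crosses into $[2S_0, 4S_0)$, where $\ulp = 2U$. Now $a\leq U = \ulp/2$, so by banker's rounding the additions of $a$ round \emph{down} and the sum stalls permanently. A fixed $a\in[0,U]$ can drive growth across at most one $\ulp$-doubling boundary, so you can gain at most a constant factor, not $2^j$. You flagged this as something you ``would need to ensure,'' but it is not a condition you can arrange: it is a structural obstruction to the constant-tail design. (Separately, your starting point $S_0 = 2^{2\mantlen+m}$ already sits $\mantlen$ exponent-levels higher than the paper's, so even if the accumulation worked, the trajectory would overflow the normal range well before covering $j$ regimes under the theorem's constraint $m \leq 2^{\explen-1}-2-j-\mantlen$.)

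The paper's construction fixes exactly this by using a \emph{phase-scaled} tail: $u$ begins with just two copies of $2^{\mantlen+m}$ (not $2^\mantlen$ copies of $U$), followed by $2^\mantlen$ copies of $f(0) = (1+2^{-\mantlen})2^m$, then $2^\mantlen$ copies of $f(1)=2f(0)$, and so on up to $f(j-1)$. Because the tail value doubles each time the $\ulp$ doubles, every phase's value sits just above the rounding midpoint for the current regime, so the rounding-up never stalls. Meanwhile $v$ drops the single leading copy of $2^{\mantlen+m}$, which keeps $v$'s partial sum exactly one $\ulp$-regime behind $u$'s throughout; in each phase $u$'s additions contribute twice as much per step as $v$'s, so the gap \emph{doubles} per phase and ends at $2^{\mantlen+j+m}$. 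That exponential-per-phase growth — rather than linear accumulation ``of order $U$ per step'' — is what yields $2^j$, and it is unavailable without a tail whose magnitude tracks the growing $\ulp$.
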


\begin{exampleattack}[32-bit floats]
\label{ea:32-ham-round-accum}
Here, $\mantlen = 23$ and $\explen = 8$. We set $j=15$ and $m=0$. We note that these settings satisfy the conditions of \cref{prop:rounding-id-floats}, which lead to $L=0$, $U=2^{23}$, and $\len(u) = 15\cdot 2^{23} + 2$ and $\len(v) = 15\cdot 2^{23} + 1$. This dataset construction results in a difference in sums of $2^{15}\cdot U = 2^{38}$. This is a factor of $2^{15}$ larger than the sensitivity $\sid \bs_{L,U} = \max\{|L|, U \} = 2^{23}$.

This means, then, that a DP mechanism that claims to offer $\varepsilon$-DP but calibrates its random distribution to the idealized sensitivity would instead offer no better than $2^{15} \varepsilon$-DP when dealing with the insert-delete distance notion of neighboring datasets.
\end{exampleattack}

\ifnum\CCSFORMAT=0
The proof of Theorem~\ref{prop:rounding-id-floats} can be found in Appendix \ref{appendix:sec5}.
\fi

\ifnum\CCSFORMAT=1

\begin{proof}[Proof of \cref{prop:rounding-id-floats}]

Let $f(x)$ be defined as $x\mapsto(2^{x} + 2^{x-\mantlen})\cdot 2^{m}$, and consider the datasets
$$    \begin{aligned}
u ={} & [2^{\mantlen+m},2^{\mantlen+m}, f(0)_{1}, \ldots, f(0)_{2^{\mantlen}}, \\
      & f(1)_{2^{\mantlen}+1},\ldots,f(1)_{2\cdot 2^{\mantlen}},\ldots, \\
      & f(j-1)_{(j-1)\cdot 2^{\mantlen}+1},\ldots, f(j-1)_{j\cdot 2^{\mantlen}}],
      \end{aligned}
$$
and $v$ such that $v_i = u_{i+1}$ for all $i$ (so the first term of $u$ is not in $v$). Note that $\did(u,v) = 1$. Additionally, we consider $m=0$ until the conclusion of the proof.

We first evaluate $\bsfp_{L,U}(u)$. We begin by seeing that the sum of the first two terms can be computed exactly: $2^\mantlen \oplus 2^\mantlen = 2^{\mantlen + 1}$.

We next use \cref{lemma:ulp-float-test} to show that the sum of the first three terms cannot be computed exactly: $2^{\mantlen+1} + f(0) = 2^{\mantlen+1} + 1 + 2^{-\mantlen}\in [2^{\mantlen+1}, 2^{\mantlen+2})$, so $\ulp(2^{\mantlen+1} + f(0)) = 2$. However,  $2^{\mantlen+1} + f(0)$ is not a multiple of 2, so by \cref{lemma:ulp-float-test}, $2^{\mantlen+1} + f(0)$ cannot be represented exactly as a $(\mantlen,\explen)$-bit float. This means that we must use banker's rounding (described in \cref{defn:bankers-rounding}) to determine the value of $2^{\mantlen+1} \oplus f(0)$. The value $2^{\mantlen+1} + f(0)$ is between the adjacent $(\mantlen,\explen)$-bit floats $2^{\mantlen+1} + 0$ and $2^{\mantlen+1} + 2$. It is closer to $2^{\mantlen+1} + 2$, so we have $2^{\mantlen+1} \oplus f(0) = 2^{\mantlen+1} + 2$.

Having determined the result of adding $f(0)_1$ to the intermediate sum $2^{k+1}$, we now determine the result of adding the remaining $f(0)$ values. By reasoning similar to the logic used above, we find that for every addition of $f(0)_i$, adding $f(0) = 1 + 2^{-\mantlen}$ has the effect of adding $2$. Therefore, $2^{\mantlen+1}\oplus f(0)_1\oplus \cdots \oplus f(0)_{2^{\mantlen}} = 2^{\mantlen+2}$.

We use \cref{lemma:ulp-float-test} to show that none of the remaining additions are exact, and we show the rules of banker's rounding affect the computation of $\bsfp_{L,U}(u)$. By \cref{lemma:ulp-float-test}, because $\ulp(2^{\mantlen+2}) = 4$ and $2^{\mantlen+2} + f(1)$ is not a multiple of 4, we must use banker's rounding (described in \cref{defn:bankers-rounding}) to determine the value of $2^{\mantlen+2} \oplus f(1)$. Reasoning similar to the logic above shows that adding each of the $f(1) = 2+2^{-\mantlen+1}$ terms has the effect of adding $4$, so $2^{\mantlen+2}\oplus f(1)_{\mantlen+1}\oplus \cdots \oplus f(1)_{2\cdot 2^{\mantlen}} = 2^{\mantlen+3}$. Such logic follows for all of the floating point additions, resulting in an overall sum of $\bsfp_{L,U}(u) = 2^{\mantlen+j+1}$.

We now evaluate $\bsfp_{L,U}(v)$. The general structure of the proof is similar, but the resulting sums are different due to different effects of banker's rounding.

We next use \cref{lemma:ulp-float-test} to show that the sum of the first three terms cannot be computed exactly: $2^{\mantlen} + f(0) = 2^{\mantlen} + 1 + 2^{-\mantlen}\in [2^{\mantlen}, 2^{\mantlen+1})$, so the $\ulp(2^{\mantlen} + f(0)) = 2^0$. However,  $2^{\mantlen+1} + f(0)$ is not a multiple of 1,
so by \cref{lemma:ulp-float-test}, $2^{\mantlen} + f(0)$ cannot be represented exactly as a $(\mantlen,\explen)$-bit float. This means that we must use banker's rounding (described in \cref{defn:bankers-rounding}) to determine the value of $2^{\mantlen} \oplus f(0)$. The value $2^{\mantlen} + f(0)$ is between the adjacent $(\mantlen,\explen)$-bit floats $2^{\mantlen} + 0$ and $2^{\mantlen} + 1$. It is closer to $2^{\mantlen} + 1$, so we have $2^{\mantlen} \oplus f(0) = 2^{\mantlen} + 1$.

Having determined the result of adding $f(0)_1$ to the intermediate sum $2^{k}$, we now determine the result of adding the remaining $f(0)$ values. By reasoning similar to the logic used above, we find that for every addition of $f(0)_i$, adding $f(0) = 1 + 2^{-\mantlen}$ has the effect of adding $1$ (note in the calculation of $\bsfp_{L,U}(u)$ above that adding $f(0) = 1 + 2^{-\mantlen}$ has the effect of adding $2$ -- this is a critical difference). Therefore, $2^{\mantlen}\oplus f(0)_1\oplus \cdots \oplus f(0)_{2^{\mantlen}} = 2^{\mantlen+1}$.

We now use \cref{lemma:ulp-float-test} to show that none of the remaining additions are exact, and we show the rules of banker's rounding affect the computation of $\bsfp_{L,U}(v)$. By \cref{lemma:ulp-float-test}, because $\ulp(2^{\mantlen+1}) = 2$ and $2^{\mantlen+1} + f(1)$ is not a multiple of 2, we must use banker's rounding (described in \cref{defn:bankers-rounding}) to determine the value of $2^{\mantlen+1} \oplus f(1)$. Reasoning similar to the logic above shows that adding each of the $f(1) = 2+2^{-\mantlen+1}$ terms has the effect of adding $2$, so $2^{\mantlen+1}\oplus f(1)_{\mantlen+1}\oplus \cdots \oplus f(1)_{2\cdot 2^{\mantlen}} = 2^{\mantlen+2}$. Such logic follows for all of the floating point additions, resulting in an overall sum of $\bsfp_{L,U}(u) = 2^{\mantlen+j}$.

We see that $|\bsfp(u)-\bsfp(v)| = 2^{\mantlen+j}$.

From the bounds $-(2^{\explen-1}-2) \leq m \leq 2^{\explen-1}-2-j-\mantlen$, we see that $m$ only affects the exponent of floating point values in the proof and does not affect whether values are representable as $(\mantlen,\explen)$-bit normal floats or the direction of rounding. All values in the proof above, then, can be multiplied by $2^m$, so, for all $m$ such that $-(2^{\explen-1}-2) \leq m \leq 2^{\explen-1}-2-j-\mantlen$, $\sid\bsfp_{L,U} \geq 2^{\mantlen+j+m}$.

We observe that $U = 2^{\mantlen+m}$ and $\sid \bs_{L,U} = \max\{ |L|, U \} = U$. Therefore, $$\frac{\sid \bsfp_{L,U}}{\sid \bs_{L,U}} \geq 2^j.$$

\end{proof}

\fi

\subsection{Repeated Rounding Attack on Floats II}
\label{sec:accum-round-floats-ii}

We now show that we can have an even larger blow-up in sensitivity relative to $\sid\bs_{L,U} = U$ as a function of the dataset length $n$ when we consider datasets consisting of both positive and negative values.

\begin{theorem}
\label{thrm:quadratic-round-attack}
Let $\bsfp_{L,U}:\Vect(T_{[L,U]})\to T$ be iterative bounded sum on the type $T$ of $(\mantlen, \explen)$-bit floats. Additionally, let $a,j\in\mathbb{Z}$ satisfy $2\leq j < k$, $-(2^{\explen-1}-2)\leq a$, $-(2^{\explen-1}-2) + 1 + \mantlen\leq j+a\leq 2^{\explen-1}-1$ (these conditions ensure that values are representable as $(\mantlen, \explen)$-bit floats), $n = 2^j$, $m = n/2$. Then, for
$$U = 2^a, L  = -\left(\frac{U\cdot m}{2^\mantlen}\right)\cdot \left(\frac{1}{2} - \frac{1}{2^\mantlen}\right),$$
there are datasets $u,v\in\Vect(T_{[L,U]})$ with $\did(u,v) = 1$ such that
$$|\bsfp_{L,U}(u) - \bsfp_{L,U}(v)| \geq \frac{n^2}{2^{\mantlen + 3}} \cdot U + U.$$
In particular,
$$\frac{\sid \bsfp_{L,U}}{\sid \bs_{L,U}} = \frac{n^2}{2^{\mantlen+3}} + 1.$$
\end{theorem}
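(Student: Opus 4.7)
The plan is to construct two datasets $u, v$ of lengths $n$ and $n+1$ that differ by a single insertion, and to compute that their iterative floating-point sums differ by the claimed amount. Take
\[
u = [\underbrace{U, \ldots, U}_{m \text{ copies}}, \underbrace{L, \ldots, L}_{m \text{ copies}}], \qquad
v = [\underbrace{U, \ldots, U}_{m+1 \text{ copies}}, \underbrace{L, \ldots, L}_{m \text{ copies}}],
\]
so $\did(u, v) = 1$. The key observation is that $u$'s running sum lands exactly on the power-of-two boundary $mU = 2^{a+j-1}$, whereas $v$'s running sum overshoots the boundary to $(m+1)U = 2^{a+j-1} + 2^a$. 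The asymmetric float spacing at powers of two---the gap just below $mU$ is $2^{a+j-2-k}$, exactly half the gap $2^{a+j-1-k}$ just above---will cause the subsequent $L$-additions to round in opposite directions for the two datasets.

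For $\bsfp_{L,U}(u)$, I would first invoke \cref{lemma:ulp-float-test} to verify that every partial sum $iU = i \cdot 2^a$ for $i \in \{1, \ldots, m\}$ is exactly representable (using $j < k$, so $i$ fits easily in the mantissa), and hence the first $m$ additions are exact and produce $mU$. The float just below $mU$ is $mU - 2^{a+j-2-k}$. Since $|L| = 2^{a+j-2-k} - 2^{a+j-1-2k}$ exceeds the half-gap $2^{a+j-3-k}$ (for $k \geq 3$) but falls short of the full gap $2^{a+j-2-k}$, the value $mU + L$ lies below the midpoint and rounds down to $mU - 2^{a+j-2-k}$. From then on the accumulator lives in $[2^{a+j-2}, 2^{a+j-1})$, where every float gap equals $2^{a+j-2-k}$, and the same inequality $|L| > 2^{a+j-3-k}$ forces each subsequent $L$-addition to decrement the accumulator by exactly one more $2^{a+j-2-k}$. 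Hence
\[
\bsfp_{L,U}(u) \;=\; mU - m \cdot 2^{a+j-2-k} \;=\; 2^{a+j-1} - 2^{a+2j-3-k}.
\]

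For $\bsfp_{L,U}(v)$, the first $m+1$ additions are similarly exact and produce $(m+1)U = 2^{a+j-1} + 2^a$, which sits strictly inside $[2^{a+j-1}, 2^{a+j})$ where every float gap equals $2^{a+j-1-k}$ symmetrically around the accumulator. The half-gap is now $2^{a+j-2-k}$, which \emph{strictly exceeds} $|L|$, so $(m+1)U + L$ rounds back up to $(m+1)U$ and the sum is unchanged; the same occurs at every subsequent $L$-addition. Thus $\bsfp_{L,U}(v) = 2^{a+j-1} + 2^a$, and combining,
\[
|\bsfp_{L,U}(u) - \bsfp_{L,U}(v)| \;=\; 2^a + 2^{a+2j-3-k} \;=\; U + \frac{n^2}{2^{k+3}}\, U.
\]
Dividing by $\sid \bs_{L,U} = \max\{|L|, U\} = U$ then yields the claimed ratio.

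The hard part will be nailing down the per-step rounding direction, since the entire attack is engineered around the asymmetry at the single point $mU = 2^{a+j-1}$. The factor $(1/2 - 1/2^k)$ in the definition of $L$ is precisely calibrated so that $|L|$ lies strictly between $2^{a+j-3-k}$ (half the downward gap at $mU$) and $2^{a+j-2-k}$ (half the upward gap at $(m+1)U$), which forces opposite rounding directions in the two datasets: rounding \emph{down} by almost a full ULP in $u$ (small per-step error, $O(|L|/2^k)$) but rounding \emph{up} back to the accumulator in $v$ (large per-step error of size $\approx 2^{a+j-2-k}$). Once this is established, accumulating $m = n/2$ such errors gives the quadratic-in-$n$ blow-up, on top of the additive $U$ coming from the one extra copy of $U$ in $v$.
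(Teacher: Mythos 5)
Your proof is correct, and it reaches the paper's bound via a genuinely different construction. The paper's $u$ and $v$ consist of $m$ (resp.\ $m-1$) copies of $U$ followed by an \emph{alternating} sequence of a positive value $x=\frac{Um}{2^k}\bigl(\frac12+\frac1{2^k}\bigr)$ and the negative value $L$: in the paper's $u$, the partial sum lands exactly on the power-of-two boundary $mU=2^{a+j-1}$, whence each $x$-addition creeps the sum \emph{upward} by one $\ulp(mU)$ and each $L$-addition is a no-op, while in the paper's $v$ the partial sum undershoots the boundary to $(m-1)U$, where the ULP is halved, so the $(x,L)$ pairs cancel. You dispense with the positive value $x$ entirely: your $u$ and $v$ append only copies of $L$, you perturb $v$ to \emph{overshoot} the boundary to $(m+1)U$ rather than undershoot, and the accumulated error in $u$ is a \emph{downward} decrement by one $\ulp$ per $L$-addition rather than an upward creep. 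Both arguments hinge on exactly the same phenomenon---the ULP doubles when crossing $2^{a+j-1}$, and $|L|=\frac{Um}{2^k}\bigl(\frac12-\frac1{2^k}\bigr)$ is calibrated to lie strictly between the half-gap below the boundary ($2^{a+j-3-k}$) and the half-gap above it ($2^{a+j-2-k}$), forcing opposite rounding behavior on the two sides---and both yield exactly $|\bsfp(u)-\bsfp(v)|=U+\frac{n^2}{2^{k+3}}U$. Your version is somewhat cleaner: it uses two distinct data values ($U$ and $L$) rather than three ($U$, $x$, $L$), avoids the alternating pattern, and makes the midpoint comparisons that drive each rounding decision more transparent. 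The one caveat, which your write-up flags but is worth stressing, is that the argument requires verifying that the decreasing accumulator in $u$ never leaves $[2^{a+j-2},2^{a+j-1})$; this reduces to $m\cdot 2^{a+j-2-k}\le 2^{a+j-2}$, i.e.\ $j\le k+1$, which the hypothesis $j<k$ supplies with room to spare.
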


\begin{exampleattack}\label{ea:rr2}[64-bit Floats]
Here, $\mantlen = 52$ and $\explen = 11$. We set $j=30$ and $a = 0$. We note that these settings satisfy the conditions of \cref{thrm:quadratic-round-attack}, which lead to $L=2^{-23}\cdot -(\frac{1}{2} + 2^{-52})$, $U=1$, and $\len(u) = 2^{30}$ and $\len(v) = 2^{30}-1$. This dataset construction results in a difference in sums of $2^{5} + U = 33$. This is a factor of $33$ larger than the sensitivity $\sid \bs_{L,U} = \max\{|L|, U \} = 1$.

This means, then, that a DP mechanism that claims to offer $\varepsilon$-DP but calibrates its random distribution to the idealized sensitivity would instead offer no better than $33 \varepsilon$-DP when dealing with the insert-delete distance notion of neighboring datasets.
\end{exampleattack}

\ifnum\CCSFORMAT=0
The proof of Theorem~\ref{thrm:quadratic-round-attack} can be found in Appendix~\ref{appendix:sec5}.
\fi

\ifnum\CCSFORMAT=1

\begin{proof}[Proof of \cref{thrm:quadratic-round-attack}]

Let $$x = \left(\frac{U\cdot m}{2^\mantlen}\right)\cdot \left(\frac{1}{2} + \frac{1}{2^\mantlen}\right),$$ and consider the datasets
$$u = [U_1,\ldots, U_{m-1}, U_{m}, x_1, L_2, x_3, L_4,\ldots, x_{m-1}, L_m]$$
and
$$v = [U_1,\ldots, U_{m-1}, x_1, L_2, x_3, L_4,\ldots, x_{m-1}, L_m].$$
where, for all $i$, $L_i = L$ and $U_i = U$. 
Note that $u$ and $v$ are equivalent with the exception that $v$ contains $m-1$ copies of $U$ rather than $m$ copies of $U$, so $\did(u,v) = 1$.

We first show that $\bsfp_{L,U}[U_1,\ldots,U_{m-1}] = (m-1)\cdot U$ and $\bsfp_{L,U}[U_1,\ldots,U_{m}] = m\cdot U$. We take the approach of showing that all intermediate sums computed in the calculation of $\bsfp_{L,U}[U_1,\ldots, U_m]$ can be represented exactly as $(\mantlen,\explen)$-bit floats, which is done by first showing that all of these intermediate sums are multiples of their $\ulp$ and then applying \cref{lemma:ulp-float-test}.

We observe that $\{U,\ldots, m\cdot U \}$ is the set of intermediate sums that result from calculating $\bs_{L,U}[U_1,\ldots, U_m]$. Let $\UU$ denote this set $\{U,\ldots, m\cdot U\}$. We note that $m\cdot U = 2^{a+j-1}$, so $\ulp(m\cdot U) = 2^{a+j-1-\mantlen}$. For all $x\in \UU$, then, $$\ulp(x)\leq \ulp(m\cdot U) = 2^{a+j-1-\mantlen}.$$
Note that $j<k$, so $2^{a+j-1-\mantlen}$ necessarily divides $U=2^a$. All such $x$ are multiples of $U=2^a$, so they are necessarily multiples of $\ulp(x)$. By \cref{lemma:ulp-float-test}, all $x\in \UU$ can be represented exactly as $(\mantlen,\explen)$-bit floats, so by \cref{lemma:floats-exact-add}, $\bsfp_{L,U}[U_1,\ldots,U_{m-1}] = (m-1)\cdot U$ and $\bsfp_{L,U}[U_1,\ldots,U_{m}] = m\cdot U$.

We now evaluate $\bsfp_{L,U,n}(u)$. We use \cref{lemma:ulp-float-test} to show that $\bs_{L,U,n}(u)=m\cdot U + x$ is not exactly representable as a $(\mantlen,\explen)$-bit float, and we show how to calculate the result $\bsfp_{L,U,n}(u)$.

As shown above, the sum of the first $m$ terms is $$\bsfp_{L,U}[U_1,\ldots, U_m] = m\cdot U.$$ Thus, $\bsfp_{L,U}[U_1,\ldots, U_m,x] = m\cdot U \oplus x$. We note that
$$m\cdot U + x \in [m\cdot U, m\cdot U + \ulp(m\cdot U)].$$ Because $m\cdot U + x$ is closer to $m\cdot U + \ulp(m\cdot U)$, $m\cdot U \oplus x = m\cdot U + \ulp(m\cdot U)$. We now consider the addition of $L$, $m\cdot U + \ulp(m\cdot U) \oplus L$. We see that $$m\cdot U + \ulp(m\cdot U) + L \in [m\cdot U, m\cdot U + \ulp(m\cdot U)],$$ and $m\cdot U + \ulp(m\cdot U) + L$ is closer to $m\cdot U + \ulp(m\cdot U)$, so $m\cdot U + \ulp(m\cdot U) \oplus L = m\cdot U + \ulp(m\cdot U) \oplus L$. Similar logic applies for all $x,L$, in which adding $L$ has the effect of adding $0$ and adding $x$ has the effect of adding $\ulp(m\cdot U)$ for a total sum of $\bsfp_{L,U}(u) = m\cdot U + \frac{m^2 \cdot U}{2^{\mantlen+1}}$.

We now evaluate $\bsfp_{L,U,n}(v)$. As shown above, the sum of the first $m-1$ terms is $\bsfp_{L,U}[U_1,\ldots, U_{m-1}] = (m-1)\cdot U$. Thus, $\bsfp_{L,U}[U_1,\ldots, U_{m-1},x] = (m-1)\cdot U \oplus x$. We note that
$$(m-1)\cdot U + x \in [(m-1)\cdot U, (m-1)\cdot U + \ulp((m-1)\cdot U)].$$ Because $(m-1)\cdot U + x$ is closer to $(m-1)\cdot U + \ulp((m-1)\cdot U)$, $(m-1)\cdot U \oplus x = (m-1)\cdot U + \ulp((m-1)\cdot U)$. We now consider the addition of $L$, $(m-1)\cdot U + \ulp((m-1)\cdot U) \oplus L$.
We see that
$$\begin{aligned}
\MoveEqLeft(m-1)\cdot U + \ulp((m-1)\cdot U) + L\\
&\in [(m-1)\cdot U, (m-1)\cdot U + \ulp((m-1)\cdot U)],
\end{aligned}$$
and $(m-1)\cdot U + \ulp((m-1)\cdot U) + L$ is closer to $(m-1)\cdot U$, so $(m-1)\cdot U + \ulp((m-1)\cdot U) \oplus L = (m-1)\cdot U$. Similar logic applies for all $x,L$, in which adding $x$ and $L$ in succession has the effect of adding $0$ for a total sum of $\bsfp_{L,U}(v) = (m-1)\cdot U$.

Therefore, $$|\bsfp_{L,U}(u) - \bsfp_{L,U}(v)| = \frac{m^2\cdot U}{2^{\mantlen + 1}} + U = \frac{n^2\cdot U}{2^{\mantlen + 3}} + U,$$ so, because $\did(u,v) = 1$, $\did\bsfp_{L,U} = \frac{m^2\cdot U}{2^{\mantlen + 1}} + U$. We observe that $\sid\bs_{L,U} = \max\{|L|,U\} = U$, so $$\frac{\sid \bsfp_{L,U}}{\sid \bs_{L,U}} = \frac{n^2}{2^{\mantlen+3}} + 1.$$

\end{proof}

\fi

\subsection{Concrete Implementations of the Attacks on DP Libraries}
\label{sec:concreteattack}

In this section, we show how malicious data analysts can exploit the fact that, as shown in \cref{sec:ideal-sens-libraries}, all libraries with a (supposedly) DP bounded sum function scale its noise according to the idealized sensitivity $\sd \bs_{L,U}$ rather than to the implemented sensitivity $\sd \bsfp_{L,U}$.

As shown in the previous parts of \cref{sec:bounded-sum-computers}, there are many cases for which $\sd \bs_{L,U} < \sd \bsfp_{L,U}$. By \cref{thrm:scale-laplace}, the fact that, for a user-specified $\varepsilon$, the scale parameter $\lambda$ is set to $\lambda = \sd \bs_{L,U}/\varepsilon < \sd \bsfp_{L,U}/\varepsilon$ means that $\varepsilon$-DP is not fulfilled by functions which claim to offer $\varepsilon$-DP.

The fact that $\varepsilon$-DP is not fulfilled indicates that there could be opportunities for malicious data analysts to get more precise information about datasets for a given value of $\varepsilon$ than should be allowed. In this section, we demonstrate how a malicious data analyst can take advantage of these errors and use the outputs of allegedly DP functions to perform membership-inference attacks and reconstruction attacks. Our attacks use the Python wrappers for OpenDP / SmartNoise and for IBM's \texttt{diffprivlib}; and we use OpenMined's PyDP Python wrapper for our interactions with Google's DP libraries.

\begin{definition}[$\mathcal{M}^*_{\varepsilon}$]
For all of the examples below, let $\mathcal{M}^*_{\varepsilon}:\Vect(\mathcal{D})\to\mathbb{R}$ represent the library's attempt to implement the function being used in the attack (either a mean function or a summation function) such that it fulfills $\varepsilon$-DP. In all of the libraries we explore, this is done using a discrete version of the Laplace mechanism, often implemented using the Snapping Mechanism described in \cite{Mironov12}.
\end{definition}

In our attacks, we perform the following process: First, we create (sometimes slightly modified) versions of the datasets in the example attacks from the previous parts of \cref{sec:bounded-sum-computers}, and we execute supposedly $\varepsilon$-DP functions on these datasets. We then apply post-processing to the result which, by \cref{prop:post-proc}, will not make the result any less differentially private. We show, though, that the results we achieve after post-processing would be very unlikely to occur for the specified value of $\varepsilon$. Finally, we show how the fact that $\varepsilon$-DP is not fulfilled can be used to recover information about the datasets that would have been protected by an $\varepsilon$-DP function.

\subsubsection{Implementing \cref{ea:modular-sum}}\label{sec:ibmattack}
Only IBM's diffprivlib was susceptible to \cref{ea:modular-sum}. This is because, although many libraries -- perhaps intentionally, perhaps naively -- compute the bounded sum function with modular arithmetic (i.e., overflow is allowed) and then also add noise such that overflow is allowed, a solution which we prove to be correct in \cref{sec:modularnoise}. IBM's diffprivlib, on the other hand, appears to compute the bounded sum with modular arithmetic and then add noise using saturation addition.

\begin{attacklibrary}[IBM's diffprivlib; Signed 32-bit Integers]
\label{al:ibm-overflow}

We create datasets as described in \cref{sec:mod-attack}, and we set $U=2^{24}$, $L=0$, and $n = 2^7$. Let $\mathcal{M}^*_{\varepsilon}$ be the bounded sum function as implemented by IBM's diffprivlib for $U$ and $L$.

We run $\mathcal{M}^*_{\varepsilon}(u)$ and $\mathcal{M}^*_{\varepsilon}(v)$, where $\mathcal{M}^*_{\varepsilon}$ is the mechanism used by the library for computing $\bsfp_{L,U}$ with DP and where $\varepsilon = 0.5$. We define the post-processing function $f:\mathbb{R}\to\{0,1\}$ as follows: $f(x) = 1$ if and only if $x > 0$; this value was chosen since it lies between the maximum and minimum values.

Running each of $f(\mathcal{M}^*_{\varepsilon}(u))$ and $f(\mathcal{M}^*_{\varepsilon}(v))$ $10{,}000$ times gives the following results.

\begin{figure}[H]
\begin{center}
    \begin{tabular}{c|c c }
     & $f(\mathcal{M}^*_{\varepsilon}(u))$ & $f(\mathcal{M}^*_{\varepsilon}(v))$ \\ \hline
     0 & 0 & 10{,}000 \\  
     1 & 10{,}000 &	0
    \end{tabular}
    \caption{``$0.5$-DP'' results from IBM's diffprivlib (with post-processing).}
    \label{fig:ibm-overflow}
\end{center}
\end{figure}

By the fact that $u\nid v$, the definition of $0.5$-DP, and the fact that DP is robust to post-processing per \cref{prop:post-proc}, we would expect to have
$$\frac{\Pr[f(\mathcal{M}^*_{\varepsilon}(v)) = 0]}{\Pr[f(\mathcal{M}^*_{\varepsilon}(u)) = 0]} \leq e^{0.5} < 1.65$$
and
$$\frac{\Pr[f(\mathcal{M}^*_{\varepsilon}(u)) = 1]}{\Pr[f(\mathcal{M}^*_{\varepsilon}(v)) = 1]} \leq e^{0.5} < 1.65.
$$

Let $p_u = \Pr[f(\mathcal{M}^*_{\varepsilon}(u)) = 1]$ and $p_v = \Pr[f(\mathcal{M}^*_{\varepsilon}(v)) = 1]$, and suppose that $\frac{p_u}{p_v}\leq e^\varepsilon$; that is, suppose the queries did satisfy $\varepsilon$-DP. Then,
$$\begin{aligned}
    (p_u)^{10{,}000} \cdot (1-p_v)^{10{,}000} &\leq (e^\varepsilon p_v \cdot (1-p_v))^{10{,}000}\\
    &\leq \left( \frac{e^\varepsilon}{4} \right)^{10{,}000}.
\end{aligned}$$

The results in \cref{fig:ibm-overflow}, then, would occur with probability $\leq 2^{-12{,}000}$ if $\mathcal{M}^*_\varepsilon$ were truly an $\varepsilon$-DP function for $\varepsilon = 0.5$. These results, then, support the calculation in \cref{ea:modular-sum} that $\mathcal{M}^*_{\varepsilon}$ does not offer $\varepsilon$-DP for $\varepsilon = 0.5$.

\end{attacklibrary}

\subsubsection{Implementing \cref{ea:32-nonassoc,ea:64-nonassoc}}
\label{sec:implement-nonassoc-attacks}

We show how a malicious data analyst can use the results shown in \cref{sec:attack-floats-assoc} to perform membership-inference attacks on datasets. We work with the ``unknown $n$'' summation functions provided in \cref{sec:ideal-sens-libraries} (where ``unknown $n$'' means that the function is designed to offer DP under the $\dsym$ notion of neighboring datasets).

\begin{motivateattack}

We create the datasets as described in \cref{ea:32-nonassoc} if we are working with 32-bit floats or as described in \ref{ea:64-nonassoc} if we are working with 64-bit floats, with the exception that we set $v = [v_1,\ldots,v_{\len(v)-1}]$ (i.e., the last element is removed).

We now describe a setting in which the example attacks can be used to perform membership-inference attacks.

Suppose we know that we either have dataset $u'$ which contains individual $i$ and has the property $h_{u'}=h_u$, or dataset $v'$ which does not contain individual $i$ and has the property $h_{v'}=h_v$. (Note that the presence or absence of individual $i$ is why $\dsym(u',v') = 1$.)

Additionally, suppose the library of functions we are working with provides us with the bounded sum function used in Example Attacks~\ref{ea:32-nonassoc} and \ref{ea:64-nonassoc}, and suppose that the DP library also provides us with the ability to reorder datasets based on some condition. Since this implementation of bounded sum indicates that the creator of the library thinks floating point addition is associative, the creator should have no issues with offering this functionality.

Then, we can conditionally reorder the dataset based on the absence or presence of $i$. If $i$ is in the dataset (meaning we have $u'$), some permuted dataset $\pi(u') = u$ is returned where $\pi\in S_{\len(u)}$ and where $\pi(u')$ has the property that all rows with value $L$ appear before any row with value $U$. Likewise, if $i$ is not in the dataset (meaning we have $v'$), some permuted dataset $\sigma(v') = v$ is returned where $\sigma\in S_{\len(v)}$ and where $\sigma(v')$ has the property that all rows with value $U$ appear before any row with value $L$.

Finally, we perform an ``$\varepsilon$-DP'' bounded sum for some $\varepsilon$ on the reordered dataset, which is either $\pi(u') = u$ or $\sigma(v') = v$. Using the ideas and evidence described in the attacks below, we will be able to identify whether we are working with $u'$ or $v'$ with higher probability than $\varepsilon$ should allow -- and, correspondingly, we will be able to determine whether $i$ is in the dataset.

(Although this attack still relies on having specific values in the dataset, a library that provides access to a ``join'' function like the one offered in \cite{wzldsg19} and a ``resize'' function like the one offered in \cite{opendp} could enable a malicious data analyst to manipulate a dataset such that this attack could work in less specific settings.)

\end{motivateattack}

\begin{attacklibrary}[SmartNoise; 32-bit Floats]
We create the datasets as described in \cref{ea:32-nonassoc}, with the exception that we set $v = [v_1,\ldots,v_{\len(v)-1}]$ (i.e., the last element is removed). The result, then, is that $\dsym(u,v) = 1$, so $u\nsym v$. From \cref{ea:32-nonassoc}, we expect to have $|\bsfp_{L,U}(u) - \bsfp_{L,U}(v)| = b\cdot L = 2^{24}$, despite the fact that $\ssym \bs_{L,U} = \max\{|L|,U\} = 2$.

We run $\mathcal{M}^*_{\varepsilon}(u)$ and $\mathcal{M}^*_{\varepsilon}(v)$, where $\mathcal{M}^*_{\varepsilon}$ is the mechanism used by the library for computing $\bsfp_{L,U}$ with DP and where $\varepsilon = 0.5$. We define the post-processing function $f:\mathbb{R}\to\{0,1\}$ as follows: $f(x) = 1$ if and only if $x\geq 25{,}165{,}824$ (this value was chosen since it lies halfway between $\bsfp_{L,U}(u)$ and $\bsfp_{L,U}(v)$).

Running each of $f(\mathcal{M}^*_{\varepsilon}(u))$ and $f(\mathcal{M}^*_{\varepsilon}(v))$ 100 times gives the following results. (We only run these computations 100 times each due to the time needed to compute these summations of large datasets.)

\begin{figure}[H]
\begin{center}
    \begin{tabular}{c|c c }
     & $f(\mathcal{M}^*_{\varepsilon}(u))$ & $f(\mathcal{M}^*_{\varepsilon}(v))$ \\ \hline
     0 & 0 & 100 \\  
     1 & 100 &	0
    \end{tabular}
\end{center}
    \caption{``$0.5$-DP'' results from SmartNoise (with post-processing).}
    \label{fig:opendp-attack-nonassoc-32}
\end{figure}
By the fact that $u\nsym v$, the definition of $0.5$-DP, and the fact that DP is robust to post-processing per \cref{prop:post-proc}, we would expect to have
$$\frac{\Pr[f(\mathcal{M}^*_{\varepsilon}(v)) = 0]}{\Pr[f(\mathcal{M}^*_{\varepsilon}(u)) = 0]} \leq e^{0.5} \leq 1.65$$
and
$$
\frac{\Pr[f(\mathcal{M}^*_{\varepsilon}(u)) = 1]}{\Pr[f(\mathcal{M}^*_{\varepsilon}(v)) = 1]} \leq e^{0.5} \leq 1.65.
$$

Let $p_u = \Pr[f(\mathcal{M}^*_{\varepsilon}(u)) = 1]$ and $p_v = \Pr[f(\mathcal{M}^*_{\varepsilon}(v)) = 1]$, and suppose that $\frac{p_u}{p_v}\leq e^\varepsilon$ -- that is, suppose the queries did satisfy $\varepsilon$-DP. Then,
$$\begin{aligned}
    (p_u)^{100} \cdot (1-p_v)^{100} &\leq (e^\varepsilon p_v \cdot (1-p_v))^{100}\\
    &\leq \left( \frac{e^\varepsilon}{4} \right)^{100}.
\end{aligned}$$

The results in \cref{fig:opendp-attack-nonassoc-32}, then, would occur with probability $\leq 2^{-127}$ if $\mathcal{M}^*_\varepsilon$ were truly an $\varepsilon$-DP function for $\varepsilon = 0.5$.

Moreover, for these 100 rounds, $\min(\mathcal{M}^*_{\varepsilon}(u))> 33{,}554{,}400$ and $\max(\mathcal{M}^*_{\varepsilon}(v)) < 16{,}777{,}230$.
These results support the calculation in \cref{ea:32-nonassoc} that we do not have $\varepsilon$-DP for $\varepsilon = 0.5$ but instead have at least $2^{23}\varepsilon$-DP.

\end{attacklibrary}

\begin{attacklibrary}[Google DP (OpenMined wrapper); 64-bit Floats]
We create the datasets as described in \cref{ea:64-nonassoc}, with the exception that we set $v = [v_1,\ldots,v_{\len(v)-1}]$ (i.e., the last element is removed). The result, then, is that $\dsym(u,v) = 1$, so $u\nsym v$. From \cref{ea:64-nonassoc}, we expect to have $|\bsfp_{L,U}(u) - \bsfp_{L,U}(v)| = 3$, despite the fact that $\ssym \bs_{L,U} = \max\{|L|,U\} = 1$.

We run $\mathcal{M}^*_{\varepsilon}(u)$ and $\mathcal{M}^*_{\varepsilon}(v)$, where $\mathcal{M}^*_{\varepsilon}$ is the mechanism used by the library for computing $\bsfp_{L,U}$ with DP and where $\varepsilon = 1.0$. We define the post-processing function $f:\mathbb{R}\to\{0,1\}$ as follows: $f(x) = 1$ if and only if $x\geq 134{,}217{,}729.5$ (this value was chosen since it lies exactly between $\bsfp_{L,U}(u)$ and $\bsfp_{L,U}(v)$).

Running each of $f(\mathcal{M}^*_{\varepsilon}(u))$ and $f(\mathcal{M}^*_{\varepsilon}(v))$ 10 times gives the following results. (We only run these computations 10 times each due to the time needed to compute these summations of large datasets.)

\begin{figure}[H]
\begin{center}
    \begin{tabular}{c|c c }
     & $f(\mathcal{M}^*_{\varepsilon}(u))$ & $f(\mathcal{M}^*_{\varepsilon}(v))$ \\ \hline
     0 & 1 & 10 \\  
     1 & 9 &	0
    \end{tabular}
    \caption{``$1$-DP'' results from Google's DP library (with post-processing).}
    \label{fig:google-attack-nonassoc-64}
\end{center}
\end{figure}
By the fact that $u\nsym v$, the definition of $1$-DP, and the fact that DP is robust to post-processing per \cref{prop:post-proc}, we would expect to have
$$\frac{\Pr[f(\mathcal{M}^*_{\varepsilon}(v)) = 0]}{\Pr[f(\mathcal{M}^*_{\varepsilon}(u)) = 0]} \leq e
\text{ and }
\frac{\Pr[f(\mathcal{M}^*_{\varepsilon}(u)) = 1]}{\Pr[f(\mathcal{M}^*_{\varepsilon}(v)) = 1]} \leq e.
$$

Let $p_u = \Pr[f(\mathcal{M}^*_{\varepsilon}(u)) = 1]$ and $p_v = \Pr[f(\mathcal{M}^*_{\varepsilon}(v)) = 1]$, and suppose that $\frac{p_u}{p_v}\leq e^\varepsilon$ -- that is, suppose the queries did satisfy $\varepsilon$-DP. Maximizing
$$(10(p_u)^9(1-p_u) + (p_u)^{10})\cdot (1-p_v)^{10}$$ over $p_u,p_v\in[0,1]$ such that $p_u\leq e^\varepsilon \cdot p_v$ and $(1-p_v)\leq e^\varepsilon \cdot p_u$ for $\varepsilon = 1$ yields a maximum of $< 0.015$.

The results in \cref{fig:opendp-attack-nonassoc-32}, then, would occur with probability $< 0.015$ if $\mathcal{M}^*_\varepsilon$ were truly an $\varepsilon$-DP function for $\varepsilon = 1$.

These results support the calculation in \cref{ea:64-nonassoc} that we do not have $\varepsilon$-DP for $\varepsilon = 1$ but instead have at least $3\varepsilon$-DP.

\end{attacklibrary}

\subsubsection{Implementing \cref{ea:ham-round-once}}
\label{sec:implement-round-attack}

We next show how a malicious data analyst can use the results shown in \cref{sec:attack-floats-assoc} to recover sensitive bits in a dataset. We work with the ``known $n$'' summation functions provided in \cref{sec:ideal-sens-libraries} (where ``known $n$'' means that the function is designed to offer DP under the $\dco$ notion of neighboring datasets). The Google DP library does not have a ``known $n$'' bounded sum, but it does have a ``known $n$'' bounded mean, so we use this.

\begin{motivateattack} Suppose that our library has a clamp function as described in \cref{defn:bounded-sum} (all of the libraries that we work with offer this function) and that we can cast between data types. 

In this attack, we essentially create the datasets described in \cref{ea:ham-round-once}, but we do not need to start with these datasets directly. Instead, we can start with datasets of binary values, cast these binary values to floats, and apply clamping function to these datasets to effectively get the datasets described in \cref{ea:ham-round-once}.

Suppose we have a dataset with a field containing binary values (for example, a dataset with a ``had\_covid'' field, where the specified column has a value of 1 if and only if the individual has had \textsc{covid-19}), and suppose we know $2^j$ individuals in the dataset with ``had\_covid'' value $0$. We begin by selecting these $2^j$ individuals. Then, suppose we want to learn the ``had\_covid'' value of the individual located at index $i$ in the dataset. We select this individual as well, and, in our selected dataset of $2^j+1$ items, we cast the values in the ``had\_covid'' column to 64-bit floats. Then, we carefully set $U$ and $L$ as in \cref{ea:ham-round-once} and clamp our values. This has the effect of setting $0$ to $L$ and $1$ to $U$.

Finally, we perform an ``$\varepsilon$-DP'' bounded sum (or bounded mean) on this dataset for some $\varepsilon$. Using the ideas and evidence in the attacks below, we will be able to correctly recover the specified individual's ``had\_covid'' value with much larger probability than the specified value of $\varepsilon$ should allow. We perform one example using the ``known $n$'' bounded sum and one example using the ``known $n$'' bounded mean.

\end{motivateattack}

\begin{attacklibrary}[IBM \texttt{diffprivlib}: Bounded Sum, 64-bit Floats]
\label{al:ibm-sized-64}
We create the datasets as described in \cref{ea:ham-round-once}, except we set $j=5$ and set $L,U$ correspondingly. Additionally, in the construction of our datasets, we set all $L_i$ in the datasets to 0 instead of to $L$; and for all $U$ in the datasets, we instead put 1. The result is two datasets $u,v$ of binary values, with $\len(u) = \len(v) = 33$ and $\dham(u,v) = 1$.

We next clamp these datasets to values between $L$ and $U$. Since $0\leq L < U \leq 1$, in clamped datasets $u$ and $v$ we have $0\mapsto L$ and $1\mapsto U$.

Finally, we run $\mathcal{M}^*_{\varepsilon}(u)$ and $\mathcal{M}^*_{\varepsilon}(v)$, where $\mathcal{M}^*_{\varepsilon}$ is the mechanism used by the library for computing $\bsfp_{L,U}$ with $\varepsilon$-DP and where $\varepsilon = 0.5$. We define the post-processing function $f:\mathbb{R}\to\{0,1\}$ as follows: $f(x) = 1$ if and only if $x > \bround(33\cdot L)$.

Running each of $f(\mathcal{M}^*_{\varepsilon}(u))$ and $f(\mathcal{M}^*_{\varepsilon}(v))$ $10{,}000$ times gives the following results.

\begin{figure}[H]
\begin{center}
    \begin{tabular}{c|c c }
     & $f(\mathcal{M}^*_{\varepsilon}(u))$ & $f(\mathcal{M}^*_{\varepsilon}(v))$ \\ \hline
     0 & 1 & 9998 \\  
     1 & 9999 &	2
    \end{tabular}
    \caption{``$0.5$-DP'' results from IBM's \texttt{diffprivlib} (with post-processing).}
    \label{fig:ibm-ham-round}
\end{center}
\end{figure}
By the fact that $u\nham v$, the definition of $0.5$-DP, and the fact that DP is robust to post-processing per \cref{prop:post-proc}, we would expect to have
$$\frac{\Pr[f(\mathcal{M}^*_{\varepsilon}(v)) = 0]}{\Pr[f(\mathcal{M}^*_{\varepsilon}(u)) = 0]} \leq e^{0.5} < 1.65$$
and
$$\frac{\Pr[f(\mathcal{M}^*_{\varepsilon}(u)) = 1]}{\Pr[f(\mathcal{M}^*_{\varepsilon}(v)) = 1]} \leq e^{0.5} < 1.65.
$$

Let $p_u = \Pr[f(\mathcal{M}^*_{\varepsilon}(u)) = 1]$ and $p_v = \Pr[f(\mathcal{M}^*_{\varepsilon}(v)) = 1]$, and suppose that $\frac{p_u}{p_v}\leq e^\varepsilon$ -- that is, suppose the queries did satisfy $\varepsilon$-DP. Maximizing
$$\begin{aligned}
\MoveEqLeft \left(p_u^{10{,}000} + 10{,}000p_u^{9999}(1-p_u)\right)\\
&\cdot ((1-p_v)^{10{,}000}+10{,}000(1-p_v)^{9999}p_v\\
&+ \binom{10{,}000}{2}(1-p_v)^{9998}p_v^2)
\end{aligned}
$$
over $p_u,p_v\in[0,1]$ such that $p_u\leq e^\varepsilon \cdot p_v$ and $(1-p_v)\leq e^\varepsilon \cdot p_u$ for $\varepsilon = 0.5$ yields a maximum of $<2^{-100}$. The results in \cref{fig:ibm-ham-round}, then, would occur with probability $<2^{-100}$ if $\mathcal{M}^*_\varepsilon$ were truly an $\varepsilon$-DP function for $\varepsilon = 0.5$.

These results, then, support the calculation in \cref{ea:ham-round-once} that $\mathcal{M}^*_{\varepsilon}$ does not offer $\varepsilon$-DP for $\varepsilon = 0.5$.

\end{attacklibrary}

\begin{attacklibrary}[Google DP (OpenMined wrappings): Bounded Mean, 64-bit Floats]
We perform the same attack as described in \cref{al:ibm-sized-64}, except we set $j=7$ and set $L,U$ correspondingly. The result is two datasets $u,v$ of binary values, with $\len(u) = \len(v) = 33$ and $\dham(u,v) = 1$.

Finally, we run $\mathcal{M}^*_{\varepsilon}(u)$ and $\mathcal{M}^*_{\varepsilon}(v)$, where $\mathcal{M}^*_{\varepsilon}$ is the mechanism used by the library for computing the bounded mean (with values bounded by $L$ and $U$) with $\varepsilon$-DP and where $\varepsilon = 0.5$. We define the post-processing function $f:\mathbb{R}\to\{0,1\}$ as follows: $f(x) = 1$ if and only if $x > L$.

Running each of $f(\mathcal{M}^*_{\varepsilon}(u))$ and $f(\mathcal{M}^*_{\varepsilon}(v))$ $10{,}000$ times gives the following results.

\begin{figure}[H]
\begin{center}
    \begin{tabular}{c|c c }
     & $f(\mathcal{M}^*_{\varepsilon}(u))$ & $f(\mathcal{M}^*_{\varepsilon}(v))$ \\ \hline
     0 & 0 & 10{,}000 \\  
     1 & 10{,}000 &	0
    \end{tabular}
    \caption{``$0.5$-DP'' results from Google's DP library (with post-processing).}
    \label{fig:google-dp-ham}
\end{center}
\end{figure}
By the fact that $u\nham v$, the definition of $0.5$-DP, and the fact that DP is robust to post-processing per \cref{prop:post-proc}, we would expect to have
$$\frac{\Pr[f(\mathcal{M}^*_{\varepsilon}(v)) = 0]}{\Pr[f(\mathcal{M}^*_{\varepsilon}(u)) = 0]} \leq e^{0.5} < 1.65$$
and
$$\frac{\Pr[f(\mathcal{M}^*_{\varepsilon}(u)) = 1]}{\Pr[f(\mathcal{M}^*_{\varepsilon}(v)) = 1]} \leq e^{0.5} < 1.65.
$$

Let $p_u = \Pr[f(\mathcal{M}^*_{\varepsilon}(u)) = 1]$ and $p_v = \Pr[f(\mathcal{M}^*_{\varepsilon}(v)) = 1]$, and suppose that $\frac{p_u}{p_v}\leq e^\varepsilon$ -- that is, suppose the queries did satisfy $\varepsilon$-DP. Then,
$$\begin{aligned}
    (p_u)^{10{,}000} \cdot (1-p_v)^{10{,}000} &\leq (e^\varepsilon p_v \cdot (1-p_v))^{10{,}000}\\
    &\leq \left( \frac{e^\varepsilon}{4} \right)^{10{,}000}.
\end{aligned}$$

The results in \cref{fig:google-dp-ham}, then, would occur with probability $\leq 2^{-12{,}000}$ if $\mathcal{M}^*_\varepsilon$ were truly an $\varepsilon$-DP function for $\varepsilon = 0.5$. These results, then, support the calculation in \cref{ea:ham-round-once} that $\mathcal{M}^*_{\varepsilon}$ does not offer $\varepsilon$-DP for $\varepsilon = 0.5$.

\end{attacklibrary}

\section{Proposed Solutions to Incorrect Sensitivities}
\label{sec:sols}

In this section, we propose several solutions in which we modify the implementation of bounded sum in order to recover a sensitivity that equals or is close to its idealized sensitivity. Each of the five methods is fitting in different types of situations and has different advantages and disadvantages, as we will explain.

We can broadly classify our solutions into two different types of fixes: the ones that modify the ``standard'' iterative sum function and that ones that check for the parameters before the call of the sum function. 

We remark that all theorems in this section hold for normal floats (Definition~\ref{defn:fp}). We believe that the results also hold for subnormal floats (Definition~\ref{note:subnormal-floats}), and we will formalize this in a future version of this paper.

\subsection{Bounding \texorpdfstring{$n$}{n}
}\label{sec:checkmul}

This strategy can be used as a stand-alone solution for the integers, and ideas from this strategy are important when implementing a solution over floats.

\subsubsection{Solution for Integers}\label{sec:61ints}

A natural first solution for integer data types is to directly avoid any overflow. We can achieve this by checking for overflow before calling the bounded sum function. This method applies to integer data types. Its main benefit is that it allows for the use of the idealized sensitivity (as proven in Theorem~\ref{thrm:reals-sens}). However, this method only works when the length of the input dataset is known.

The direct check for overflow method, which we call \textit{check multiplication} for conciseness, only works when $\len(u)$ is known to the data analyst (or when a DP count is performed to estimate $\len(u)$, or when an upper bound on $\len(u)$ is known), and it is overprotective in the sense that the potential overflow may only occur in a worst-case scenario. This method does not immediately work for floating-point arithmetic because blow-up can occur even when $U \cdot n < \max(T)$ and $L \cdot n > \min(T)$ due to rounding effect. The examples in Sections~\ref{sec:round-attack-floats}, \ref{sec:accum-round-floats}, and \ref{sec:accum-round-floats-ii} illustrate this effect.

\begin{theorem}[Sensitivity of Bounded Sum With Check Multiplication]
\label{thrm:checked-arith-known}
    Let $\bs^*_{L, U, n}: T_{[L, U]}^n \rightarrow T$ be the iterative bounded sum function on the type $T$ of $k$-bit integers such that $U \cdot n \leq \max(T), L \cdot n \geq \min(T)$. Then, $$\sco \bs^*_{L, U, n} = \sham \bs^*_{L, U, n} = U-L.$$
\end{theorem}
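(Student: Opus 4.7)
The plan is to reduce the implemented case directly to the idealized case by showing that the parameter conditions $U \cdot n \leq \max(T)$ and $L \cdot n \geq \min(T)$ force every intermediate partial sum produced by the iterative algorithm in Definition~\ref{def:bsiterative} to lie inside $[\min(T), \max(T)]$. Once no overflow occurs, the implemented sum agrees with the idealized integer sum, and the result follows from Theorem~\ref{thrm:reals-sens} (Part~2).

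More concretely, fix any $u \in T_{[L,U]}^n$ and let $s_j = \sum_{i=1}^j u_i$ denote the $j$-th partial sum computed in $\mathbb{Z}$ (without any modular reduction or saturation). Since $L \leq u_i \leq U$ for every $i$, I would show by a one-line monotonicity argument that $L \cdot j \leq s_j \leq U \cdot j$ for all $j \leq n$. Using $j \leq n$ together with the hypotheses, this yields $\min(T) \leq L \cdot n \leq L \cdot j \leq s_j \leq U \cdot j \leq U \cdot n \leq \max(T)$ in the cases where $L \leq 0$ and $U \geq 0$; the remaining sign cases (e.g.\ $L \geq 0$, which gives $s_j \geq 0 \geq \min(T)$ automatically, and symmetrically for $U \leq 0$) are handled by the same bound on whichever side is non-trivial. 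Consequently each $s_j$ is already a representable element of $T$, so by Definition~\ref{defn:mod-add} (respectively Definition~\ref{defn:sat-add}) the modular or saturation step is the identity at every iteration, and $\bs^*_{L,U,n}(u) = \bs_{L,U,n}(u)$ as elements of $\mathbb{Z}$.

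The sensitivity claim then follows immediately: for any $u \simeq_{\mathrm{Ham}} u'$ (or $u \simeq_{\mathrm{CO}} u'$) in $T_{[L,U]}^n$, both inputs satisfy the no-overflow hypothesis, so
\[
|\bs^*_{L,U,n}(u) - \bs^*_{L,U,n}(u')| = |\bs_{L,U,n}(u) - \bs_{L,U,n}(u')|,
\]
and Theorem~\ref{thrm:reals-sens}, Part~2 bounds the right-hand side by $U-L$ while also producing matching witnesses (e.g.\ $u = [L,L,\ldots,L]$ and $u' = [U,L,\ldots,L]$, both of which lie in $T_{[L,U]}^n$ under the parameter condition). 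Hence both $\sham \bs^*_{L,U,n}$ and $\sco \bs^*_{L,U,n}$ equal $U-L$.

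I do not expect a serious obstacle here: the entire content of the theorem is the no-overflow invariant, and the only mild care needed is handling the different sign regimes of $L$ and $U$ uniformly when comparing $L \cdot j$ and $U \cdot j$ against the type bounds. I would state the invariant as a short lemma (monotonicity of partial sums of clamped data) to keep the main argument clean.
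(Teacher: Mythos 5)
Your proposal is correct and takes essentially the same approach as the paper: reduce to the idealized sensitivity (Theorem~\ref{thrm:reals-sens}) by arguing that the parameter check prevents any intermediate overflow, so $\bsfp_{L,U,n} = \bs_{L,U,n}$ pointwise. The paper's proof merely asserts this no-overflow claim in one line, whereas you spell out the partial-sum invariant $L \cdot j \leq s_j \leq U \cdot j$ and carefully handle the sign regimes of $L$ and $U$, and you also supply explicit witnesses for the matching lower bound; these are welcome elaborations of the same argument rather than a different route.
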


\begin{proof}

Let $u, v \in T^n_{[L, U]}$ be two datasets such that $u \nco v$. If $n \cdot U \leq \max(T)$ and $n \cdot L \geq \min(T)$ then we claim that $\bsfp(u) = \bs(u)$ and $\bsfp(v) = \bs(v)$, per \cref{defn:sat-add}. Thus,
\[
    |\bsfp(u) - \bsfp(v)| = |\bsfp(u) - \bsfp(v)| \leq U-L,
\]
so $$\sco \bs^*_{L, U, n} = \sham \bs^*_{L, U, n} = U-L.$$

\end{proof}

\subsubsection{Use Case for Floats}\label{sec:61floats}

When working with floats, it is also necessary to ensure that overflow cannot occur -- that is, we must check that a computation cannot result in an answer of $\pm\texttt{inf}$. Per \cref{defn:inf}, this is necessary for all the solutions we present for floating-point values. As a result of this behavior, a dataset that results in a sum of $\pm\texttt{inf}$ will always return an answer of $\pm\texttt{inf}$, even after noise addition; this deterministic behavior will cause privacy violations.

We now describe a generic check for overflow. Let $\bs_{L,U,n}$ be the idealized bounded sum, $\bsfp_{L,U,n}$ be an implementation of the bounded sum function, and $\mathit{acc}\in\mathbb{R}$ (where \textit{acc} stands for accuracy) be some bound such that $|\bs_{L,U,n}(v) - \bsfp_{L,U,n}(v)|\leq \mathit{acc}$ for all $v\in T^n_{[L,U]}$. Additionally, let $\mathit{Round}_{\infty}:\mathbb{R}\to\mathbb{R}$ return the smallest float greater than or equal to the input; and let $\mathit{Round}_{-\infty}:\mathbb{R}\to\mathbb{R}$ return the largest float less than or equal to the input. We remark that in the IEEE 754 standard, $\mathit{Round}_{\infty}$ is referred to as ``round toward $+\infty$'', and $\mathit{Round}_{-\infty}$ is referred to as ``round toward $-\infty$'' \cite{ieee08}.

\begin{enumerate}
    \item Compute $L' = \mathit{Round}_{-\infty}(L\cdot n)$ and $U' = \mathit{Round}_{\infty}(U\cdot n)$ using floating-point arithmetic. (This can be computed by performing floating-point multiplication with banker's rounding and then going to the next float toward $\pm\texttt{inf}$ as appropriate.)
    \item Ensure that $\mathit{Round}_{-\infty}(L' - \mathit{acc})\neq -\texttt{inf}$ and that $\mathit{Round}_{\infty}(U' + \mathit{acc})\neq \texttt{inf}$. (This can be computed by performing floating-point addition with banker's rounding and then going to the next float toward $\pm\texttt{inf}$ as appropriate.)
    
    \item If these checks pass, overflow cannot occur, and the computation can proceed; otherwise, reject computations using these parameter values $n,L,U$.
\end{enumerate}

\begin{remark}
    
    While this check for overflow works for any value of $acc$ such that $|\bs_{L,U,n}(v) - \bsfp_{L,U,n}(v)|\leq \mathit{acc}$ for all $v\in T^n_{[L,U]}$, the accuracy bounds presented in \cref{thrm:acc-upper-bounds} can be used to compute $\mathit{acc}$.
    
\end{remark}

\subsection{Modular Noise Addition 
}\label{sec:modularnoise}

As a next step, we would like a solution that works for integer types when the length of the input dataset is not known, and which is not as conservative as the check multiplication method, given that check multiplication might not be applicable even if no overflow would actually occur in the computation of the bounded sum.

For this reason, we introduce \emph{modular bounded sum}, which computes a bounded sum using modular arithmetic. We remark that modular bounded sum applies to both \emph{signed} and \emph{unsigned} integers, since we can treat the signed integers as a re-naming of the unsigned integers. This idea is described more formally in Section~\ref{sec:ints-are-mod-n}.

This method allows for overflow, which has the effect that two adjacent datasets can produce answers that, in terms of the absolute value of the difference between outputs, are as far from each other as possible; e.g., 0 and $2^k-1$ for $k$-bit unsigned integers. However, when doing modular arithmetic, we should consider these two values to have a distance of only 1, captured by the following definition:

\begin{definition}[Modular Distance]\label{defn:moddist}
    Let $x, y \in \mathbb{Z}/m\mathbb{Z}$ for some $m \in \mathbb{Z}$. Then, the \emph{modular distance} between $x$ and $y$ is defined as
    $$\dmod^m(x, y) = \min\{x-y, y-x\}$$
    where we use $\{0,\ldots, m-1\}$ to represent the elements of $\mathbb{Z}/m\mathbb{Z}$ and, for the purposes of evaluating ``$\min$'', we specify that the elements are ordered as $0 < 1 < \cdots < m-1$.
    
\end{definition}

Because we are working with modular distance, we measure sensitivity using \emph{modular sensitivity}, as defined below.

\begin{definition}[Modular Sensitivity]
\label{defn:mod-sens}
We define the \emph{modular sensitivity} of a function $f:\Vect(\mathcal{D}) \rightarrow \mathbb{Z}/m\mathbb{Z}$, with respect to a metric $d$ on $\Vect(\mathcal{D})$, as

$$
    \Delta^m_d f = \sup_{
    \substack{
        u,v\in \Vect(\mathcal{D})  \\
        u\simeq_\mathit{d} v
        }
    }\dmod^m(f(u),f(v)).
$$
\end{definition}

We recall from the \cref{al:ibm-overflow} that modular bounded sum does not fulfill DP when used with non-modular noise.

If noise were added to these bounded sum results in a non-modular fashion, $u$ and $u'$ would be very unlikely to result in similar answers, so a data analyst could readily tell the difference between a sum resulting from $u$ and a sum resulting from $u'$, even though $u\simeq_\mathit{d} u'$. For this reason, we define \emph{modular noise addition} and prove that modular noise addition can fulfill DP.

\begin{theorem}[Modular Noise Addition Offers $\varepsilon$-DP]\label{thm:modularnoise}

Let $Z$ be an integer-valued random variable and $d$ a metric on $\Vect(D)$. Suppose that for every function $f: \Vect(D) \rightarrow \mathbb{Z}$ of sensitivity at most $c$ with respect to $d$, the mechanism $\mathcal{M}(n) = f(n) + Z$ is $\varepsilon$-DP. Then for every function $f: \Vect(D) \rightarrow \mathbb{Z}/n\mathbb{Z}$ of modular sensitivity at most $c$ with respect to $d$, the mechanism $\mathcal{M}_{\textrm{mod}}(n) = f(n) + Z \mod m$ is $\varepsilon$-DP.

\end{theorem}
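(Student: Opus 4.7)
The plan is to reduce the modular statement to the non-modular hypothesis by post-processing. The crucial observation is that the distribution of $\mathcal{M}_{\textrm{mod}}(u)$ depends on $u$ only through the value $f(u) \in \mathbb{Z}/m\mathbb{Z}$. Thus I need only show that for any two residues $a, a' \in \mathbb{Z}/m\mathbb{Z}$ with $\dmod^m(a, a') \leq c$, the distributions $(a + Z) \bmod m$ and $(a' + Z) \bmod m$ are $\varepsilon$-indistinguishable, and then apply this to $a = f(u)$, $a' = f(u')$ for each adjacent pair $u \simeq_d u'$.

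Given such residues $a, a'$, the definition of modular distance (Definition \ref{defn:moddist}) guarantees integer lifts $A, A' \in \mathbb{Z}$ with $A \equiv a$, $A' \equiv a' \pmod{m}$, and $|A - A'| \leq c$. The key step is then to cook up, for this particular pair, a function $\tilde f : \Vect(D) \to \mathbb{Z}$ of (ordinary) sensitivity at most $c$ whose values on some adjacent pair are $A$ and $A'$. Fix any adjacent pair $u_0 \simeq_d u_0'$ (if no such pair exists the modular statement is vacuous). Define
\[
    \tilde f(v) = \begin{cases} A' & \text{if } v = u_0', \\ A & \text{otherwise.} \end{cases}
\]
For any $v_1 \simeq_d v_2$, either neither equals $u_0'$, in which case $|\tilde f(v_1) - \tilde f(v_2)| = 0$, or exactly one equals $u_0'$, in which case $|\tilde f(v_1) - \tilde f(v_2)| = |A - A'| \leq c$. (If both equal $u_0'$, the difference is $0$.) So $\Delta_d \tilde f \leq c$, and the hypothesis of the theorem applies to $\tilde{\mathcal{M}}(v) = \tilde f(v) + Z$. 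Applied to the adjacent pair $(u_0, u_0')$, this yields that the integer-valued random variables $A + Z$ and $A' + Z$ are $\varepsilon$-indistinguishable.

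Finally, because reduction modulo $m$ is a deterministic function, post-processing (Proposition \ref{prop:post-proc}) preserves $\varepsilon$-DP. Hence $(A + Z) \bmod m$ and $(A' + Z) \bmod m$ are also $\varepsilon$-indistinguishable. Since $A \equiv a$ and $A' \equiv a' \pmod{m}$, these distributions coincide with $(a + Z) \bmod m$ and $(a' + Z) \bmod m$, completing the reduction. Instantiating at $a = f(u)$, $a' = f(u')$ for each $u \simeq_d u'$ then gives the claim.

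The main obstacle I expect is not any hard inequality, but the conceptual step of constructing the lift $\tilde f$: one might be tempted to look for a single global integer-valued function $\tilde f$ satisfying $\tilde f \equiv f \pmod{m}$ with $\Delta_d \tilde f \leq c$, but such a simultaneous lift need not exist (the obstruction is essentially a ``monodromy'' around cycles in the adjacency graph). The trick is to notice that DP is a pairwise property, so for each adjacent pair we only need a lift that behaves correctly on that pair, and a two-valued function suffices. The rest is routine bookkeeping with the post-processing lemma.
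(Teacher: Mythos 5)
Your proof is correct and takes essentially the same approach as the paper: lift the two residues to integer representatives at distance at most $c$, invoke the non-modular hypothesis to get $\varepsilon$-closeness of the lifted noisy values, then post-process via reduction mod $m$. Your construction of the auxiliary two-valued function $\tilde f$ makes rigorous a step that the paper's proof leaves implicit, namely that the universal quantification over sensitivity-$c$ functions in the hypothesis yields $\varepsilon$-indistinguishability of $A+Z$ and $A'+Z$ for \emph{any} pair of integers $A,A'$ with $|A-A'|\leq c$; your observation that a pairwise, rather than global, lift suffices is exactly the right way to avoid the monodromy obstruction.
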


\begin{proof}

Let $\Delta_d f \leq c$. This means that, for $u\simeq_d u'$, we have $|f(u) - f(u')| \leq c$. Therefore, for all $m\in\mathbb{Z}$, there exists $k\in\mathbb{Z}$ such that $f(u) - f(u') \leq b+km$, where $b\in [-c,c]\cap \mathbb{Z}$. This is equivalent to $|f(u) - (f(u') + km)|\leq c$.

Because $|f(u) - (f(u') + km) |\leq c$, by the theorem statement $f(u)+Z$ and $f(u')+km + Z$ are $\varepsilon$-close. By post-processing, then, $f(u)+Z\bmod m$ and $f(u')+km+Z\equiv f(u') + Z\bmod m$ are $\varepsilon$-close. Therefore, adding $Z$ modulo $m$ provides $\varepsilon$-DP for $\mathbb{Z}/m\mathbb{Z}$-valued functions with modular sensitivity $\leq c$ with respect to $d$ on input datasets.
\end{proof}

We now proceed to the proof that modular bounded sum results in a useful sensitivity. 

\begin{theorem}[Modular Sensitivity of Modular Bounded Sum]
\label{thrm:modular-arith-unknown}
\label{thrm:modular-arith-known}
    Let $T$ be a type of (signed or unsigned) $k$-bit integers and let $L, U \in T$. Let $\bs^*_{L, U}: \Vect(T_{[L, U]}) \rightarrow T$ be the iterative bounded sum function on $k$-bit integers of type $T$ implemented with modular addition, and similarly for $\bs^*_{L, U, n}: T_{[L, U]}^n \rightarrow T$ . Then, where we let $m=2^k$,
    \begin{enumerate}
        \item (Unknown $n$.) 
        
        \[
            \ssym^m \bsfp_{L, U} = \sid^m \bsfp_{L, U} =
        \]
        \[
            = \min\Big\{\left\lfloor \frac{m}{2} \right\rfloor, \max\{|L|,U\}\Big\} \leq \max\{|L|,U\}.
        \]
        \item (Known $n$.) $$\sco^m \bsfp_{L, U, n} = \sham^m \bsfp_{L, U, n} = \min\Big\{\left\lfloor \frac{m}{2} \right\rfloor, U-L\Big\} \leq U-L.$$
    \end{enumerate}
    
\end{theorem}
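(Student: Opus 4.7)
The plan is to leverage the fact that modular integer addition is both associative and commutative---it is just addition in the ring $\mathbb{Z}/m\mathbb{Z}$ where $m = 2^k$---so the iterative sum $\bsfp_{L,U}(u)$ equals the formal sum $\sum_i u_i$ taken mod $m$, and therefore depends only on the histogram of $u$. In particular $\bsfp_{L,U}(\pi(u)) = \bsfp_{L,U}(u)$ for every permutation $\pi$, and $\dmod^m$ is translation-invariant on $\mathbb{Z}/m\mathbb{Z}$. These two properties drive everything.

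First I would reduce the unordered-metric claims to the ordered-metric ones. Given $u \simSym v$, Lemma~\ref{lemma:metric-relate} supplies a permutation $\pi$ with $\pi(u) \simID v$, whence
\[
    \dmod^m(\bsfp_{L,U}(u),\bsfp_{L,U}(v))
    = \dmod^m(\bsfp_{L,U}(\pi(u)),\bsfp_{L,U}(v))
    \leq \sid^m \bsfp_{L,U},
\]
so $\ssym^m \bsfp_{L,U} \leq \sid^m \bsfp_{L,U}$; the reverse inequality is immediate from $\dsym \leq \did$. The same argument applied to $\dco$ and $\dham$ gives $\sco^m \bsfp_{L,U,n} = \sham^m \bsfp_{L,U,n}$. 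For the ordered-metric computation, take any $u \simID u'$; after permuting I may assume $u'$ arises from $u$ by inserting a single $x \in T_{[L,U]}$, so $\bsfp_{L,U}(u') \equiv \bsfp_{L,U}(u) + x \pmod{m}$ and translation invariance gives
\[
    \dmod^m(\bsfp_{L,U}(u),\bsfp_{L,U}(u')) = \dmod^m(0, x) = \min\{x \bmod m,\,(-x) \bmod m\}.
\]
This quantity is at most $\lfloor m/2 \rfloor$ by definition of $\dmod^m$, and at most $|x| \leq \max\{|L|,U\}$ (for signed $T$ because $|x| \leq m/2$; for unsigned because $\min\{x,m-x\} \leq x$), yielding the upper bound. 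The known-$n$ half is parallel: for $u \simHam u'$ differing only at coordinate $i$, $\bsfp_{L,U,n}(u') - \bsfp_{L,U,n}(u) \equiv u'_i - u_i \pmod m$ with $|u'_i - u_i| \leq U - L$, which gives the same bound with $\max\{|L|,U\}$ replaced by $U-L$. Tightness in the generic regime ($\max\{|L|,U\} \leq \lfloor m/2 \rfloor$ in part~1, or $U-L \leq \lfloor m/2 \rfloor$ in part~2) follows by choosing $u=[\,]$, $u' = [x^*]$ (respectively, $u$ constant and $u'$ differing in one coordinate) where $x^* \in T_{[L,U]}$ maximizes $\min\{x^* \bmod m,\,(-x^*) \bmod m\}$.

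The main obstacle will be the lower bound in edge cases where the unrestricted maximizer of $\min\{x \bmod m,\,(-x) \bmod m\}$ is not in $T_{[L,U]}$---for instance, unsigned integers with $L > \lfloor m/2 \rfloor$, where the best achievable $x^*$ realizes only $m - L$. These cases require a short case analysis on the position of $[L,U]$ relative to $m/2$ to confirm the closed-form $\min\{\lfloor m/2 \rfloor, \max\{|L|,U\}\}$ (or to interpret it as the tight attainable bound in each regime). Once this bookkeeping is done, the proof is essentially just the observation that $\bsfp_{L,U}$ factors as the formal integer sum followed by reduction mod $m$, which is exactly the homomorphism property that makes modular noise addition work per Theorem~\ref{thm:modularnoise}.
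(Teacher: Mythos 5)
Your proposal is correct and matches the paper's proof in substance: both rest on modular addition being a commutative, associative (indeed, ring) operation so that $\bsfp_{L,U}$ depends only on the histogram of its input, and a single insertion or change therefore shifts the output by exactly one element of $T_{[L,U]}$ in $\dmod^m$. The ordering of your reduction (ordered sensitivity first, then permutation invariance to equate with unordered) is the mirror image of the paper's (unordered bound via histograms first, then Lemma~\ref{thm:dcotoham} to transfer to ordered), and the edge-case caveat you raise about intervals $[L,U]$ not straddling $0$ is genuine but inherited rather than introduced by you: the paper's own lower-bound witness $u=[0]$, $u'=[0,x^*]$ is not even admissible when $0\notin[L,U]$, and the claimed equality is tight only under the standing convention $-U\le L\le U$ that the paper uses in its summary tables.
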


\begin{note}[Implement with caution, unknown $n$]
When working with $k$-bit signed integers, we must ensure that computing the sensitivity $|L|$ does not cause overflow. This is most readily done by disallowing $L=-2^k$ or by storing the sensitivity as an unsigned $k$-bit integer. When working with the 32-bit integers, for example, in the case where $L=-2^{31}$ we would have $|L|$ overflow to $-2^{31}$, which would lead to an underestimate when computing $\max\{|L|,U\}$. An easy method is to disallow $L=-2^{31}$; or, more generally, to disallow $L=\min(T)$ when working with the signed integers of type $T$.
\end{note}

\begin{note}[Implement with caution, known $n$]
When working with $k$-bit signed integers, the computation of the sensitivity $(U-L)$ should be stored as the corresponding unsigned value (i.e., the value in $[0,2^k-1]\cap \mathbb{Z}$, rather than a value in $[-2^{k-1},2^{k-1}-1]\cap \mathbb{Z}$) to avoid issues of overflow. For example, when working with the 32-bit integers, setting $U=0$ and $L=-2^{31}$ would result in a sensitivity of $2^{31}$, which would overflow to $-2^{31}$ in the signed integers and result in a negative sensitivity.
\end{note}

\begin{proof}[Proof of \cref{thrm:modular-arith-known}]

We prove each part below. Let $m=2^k$. Because addition over $\mathbb{Z}/m\mathbb{Z}$ is associative, we can essentially use the proof of Theorem~\ref{thrm:reals-sens}, with slight modifications.

\medskip

(1) Let $u,u'\in\Vect(T_{[L,U]})$ be two datasets such that $u \nsym u'$. From the formal definition of a histogram in Definition~\ref{defn:histogram}, we observe that, for all $v\in\Vect(T_{[L,U]})$, 
$$\begin{aligned}
    \bsfp_{L,U}(v) &= \left( \sum_{i=1}^{\len(v)} v_i \right) \bmod{m}\\
    &= \left( \sum_{z\in T }h_v(z)\cdot z \right) \bmod{m}
\end{aligned}
$$
Because $u\nsym u'$, we know that there is at most one value $z^*$ such that $|h_u(z^*) - h_{u'}(z^*)| = 1$, and that, for all $z\neq z^*$, we have $|h_u(z) - h_{u'}(z)| = 0$. 

We now calculate $\dmod^m(\bsfp_{L,U}(u), \bsfp_{L,U}(u'))$. Recall that $\dmod^m(x, y) = \min\{x-y, y-x\}$. We only evaluate the left term within the ``$\min$'' expression and then compute the other side of the expression by a symmetric argument. Assume, without loss of generality, that $h_u(z^*)\geq h_{u'}(z^*)$. We can then write the following expressions. 
\begin{align*} 
\MoveEqLeft\bsfp_{L,U}(u) - \bsfp_{L,U}(u') \\
&= \left(\sum_{z\in T}h_u(z)\cdot z - \sum_{z\in T}h_{u'}(z)\cdot z \right)\bmod{m} \\
&= \left(\sum_{z\in T\setminus z^*}h_u(z)\cdot z - \sum_{z\in T\setminus z^*}h_{u'}(z)\cdot z \right)
+ \\
&+\left(h_u(z^*)\cdot z^* - h_{u'}(z^*)\cdot z^*\right) \bmod{m} \\
&= z^* \cdot (h_u(z^*) - h_{u'}(z^*)) \bmod{m}\\
&\in \{0,\ldots,z^*\}.
\end{align*}
By a symmetric argument, $\bsfp_{L,U}(u') - \bsfp_{L,U}(u)\in \{0,\ldots,m-z^*\}$. Modular sensitivity is defined as the maximum possible modular distance between the sums of neighboring datasets. We observe that setting $$z^*=\min\{\lfloor m/2 \rfloor, \max\{|L|,U\}\}$$ will maximize the expression $\min\{z^*, -z^*\}$. By the definition of sensitivity, then,
$$\ssym^m \bsfp_{L,U} \leq \min\{\lfloor m/2 \rfloor, \max\{|L|,U\}\} \leq \max\{|L|,U\}.$$
By Theorem~\ref{thm:dcotoham}, we also have
$$\sid^m \bsfp_{L,U} \leq \min\{\lfloor m/2 \rfloor, \max\{|L|,U\}\} \leq \max\{|L|,U\}.$$

For the lower bound, consider the two datasets $u=[0]$ and $u'=[0,\min\{\lfloor m/2 \rfloor, \max\{|L|,U\}\}]$. Then, $u\nid u'$. Moreover, $$\dmod^m(\bsfp_{L,U}(u), \bsfp_{L,U}(u')) = \min\{\lfloor m/2 \rfloor, \max\{|L|,U\}\}.$$ This means, then, that $\sid^m \bsfp_{L,U} \geq \min\{\lfloor m/2 \rfloor, \max\{|L|,U\}\}$. By the contrapositive of Theorem~\ref{thm:dcotoham}, then, $\ssym^m \bsfp_{L,U} \geq \max\{|L|,U\}$.

Combining these upper and lower bounds on the idealized sensitivity tells us, then, that $$\ssym^m \bsfp_{L,U} = \sid^m \bsfp_{L,U} = \min\{\lfloor m/2 \rfloor, \max\{|L|,U\}\}.$$

\medskip

(2) Let $u,u'\in\mathbb{R}^n$ be two datasets such that $u\simeq_\mathit{CO} u'$. By Lemma~\ref{lemma:metric-relate}, there is a permutation $\pi\in S_n$ such that $\pi(u)\simeq_\mathit{Ham} u'$. This means there is at most one index $i^*$ such that $\pi(u)_{i^*}\neq u'_{i^*}$. We now calculate $\dmod^m(\bsfp_{L,U}(u), \bsfp_{L,U}(u'))$. Recall that $\dmod^m(x, y) = \min\{x-y, y-x\}$. We only evaluate the left term within the ``$\min$'' expression and then compute the other side of the expression by a symmetric argument. We can then write the following expressions. 

\begin{align*} 
\MoveEqLeft\bsfp_{L,U,n}(u) - \bsfp_{L,U,n}(u') = \bsfp_{L,U,n}(\pi(u)) - \bsfp_{L,U,n}(u') \\
&= \left( \sum_{i=1}^n \pi(u)_i - \sum_{i=1}^n u'_i \right) \bmod{m} \\
&= \left( \pi(u)_{i^*} - u'_{i^*} \right) \bmod{m}\\
&\in \{0,\ldots, U-L\}, 
\end{align*}
where we represent $U-L\in\{0,\ldots, m-1\}$. The final line follows from the fact that all values are clamped to the interval $[L,U]$, so the largest difference in sums arises when, without loss of generality, $\pi(u)_{i^*} = U$ and $u'_{i^*} = L$. By similar logic, $\bsfp_{L,U,n}(u') - \bsfp_{L,U,n}(u)\in \{0,\ldots,m-(L-U)\}$. Thus
\[
    \dmod^m(\bsfp_{L, U}(u'), \bsfp_{L, U}(u)) \leq \min\{U-L, m-(U-L)\} =
\]
\[
    = \min\Big\{\left\lfloor \frac{m}{2} \right\rfloor, U-L\Big\} \leq U-L.
\]

For the lower bound, consider the datasets $u=[z]$ and $u'= [z']$, where $z,z'\in[L,U]\cap \mathbb{Z}$ such that $z-z'=\min\{\lfloor \frac{m}{2} \rfloor, U-L\}$. We note that $u \simeq_{\mathit{Ham}} u'$ and $\dmod^m(\bs_{L,U,n}(u), \bs_{L,U,n}(u')) = \min\{\lfloor \frac{m}{2} \rfloor, U-L\}$. This means, then, that $\sid^m \bsfp_{L,U} \geq \min\{\lfloor \frac{m}{2} \rfloor, U-L\}$. By the contrapositive of Theorem~\ref{thm:dcotoham}, it follows that $\sco^m \bsfp_{L,U,n} \geq \min\{\lfloor \frac{m}{2} \rfloor, U-L\}$. Combining these upper and lower bounds on the idealized sensitivity tells us, then, that $$\sco^m \bsfp_{L,U,n} = \sham^m \bsfp_{L,U,n} = \min\Big\{\left\lfloor \frac{m}{2} \right\rfloor, U-L\Big\}.$$

This concludes the proof.

\end{proof}

Note that \cref{thrm:modular-arith-unknown} relies on modular arithmetic, so it can only be applied to integer types. The primary disadvantage of this summation method is that computing a DP bounded sum on a dataset with a very large true sum may output a very small sum (or vice versa) due to the modular reduction. An advantage of this method is that it does \textit{not} require loosening the sensitivity relation, and thus Theorem~\ref{thrm:modular-arith-unknown} achieves the same sensitivities as the corresponding idealized sensitivities in Section \ref{sec:theoreticalsensitivity}. 

Another benefit of the modular addition method is that it preserves associativity. This means that applying bounded sum with modular addition to different orderings of an unordered dataset will always yield the same output. As we show in Sections~\ref{sec:non-assoc-ints} and \ref{sec:attack-floats-assoc}, some implementations of bounded sum do not necessarily preserve associativity, in which case applying a function to different orderings of an unordered dataset can yield different outputs.

A final benefit of modular sum is that many libraries perform modular arithmetic on the integers by default. For example, overflow in both Python and C++ occurs in the modular fashion described in this section. Therefore, a naive implementation of bounded sum on the integers may actually be implementing the modular bounded sum and modular noise addition method described here. Interestingly, this means that libraries that ignore issues of overflow often fulfill the privacy guarantees; on the other hand, libraries that try to prevent overflow by using a strategy like saturation addition may encounter the vulnerabilities described in Section \ref{sec:non-assoc-ints}. 

Moreover, as demonstrated in \cref{al:ibm-overflow}, it is essential that the added noise is also of integer type and that noise addition is done with modular arithmetic. Lastly, we remark that \cref{thm:modularnoise} (modular noise addition offers $\varepsilon$-DP) only applies to additive mechanisms, and hence this modular approach does not necessarily work for other uses of sensitivity, such as in the exponential mechanism.

\subsection{Split Summation for Saturation Arithmetic}\label{sec:splitsum}

Before we turn to methods that work for floats, we present two methods that work for integers with saturation arithmetic even when the length of the input dataset is not known, as an alternative to the modular arithmetic method presented in Section \ref{sec:modularnoise}. In this section we present Method \ref{method:split-sum-ints}, which we call \emph{split summation for saturating integers}, and in the next section we present Method \ref{method:sat-add-RP-ints}, called \emph{randomized permutation} (RP). We will show that the implemented sensitivity for both methods matches the idealized sensitivity for bounded sum stated in Theorem~\ref{thrm:reals-sens}.

As we showed in \cref{sec:non-assoc-ints}, integers with saturating arithmetic do \emph{not} satisfy associativity. However, as noted in Property~\ref{prop:sat-add-result}, we observe that non-associativity only arises from the mixing of positive and negative integers. This motivates the introduction of the \emph{split summation method}, which consists of adding the negative terms and non-negative terms separately. We remark that this is the first of our methods which changes the summation function: instead of adding the elements in the order in which they appear in the dataset, we add them in a different pre-defined order.

\begin{method}[Split Summation]
\label{method:split-sum-ints}
    Let $U$ denote the upper bound, $L$ the lower bound, and $u$ the input dataset with numeric elements of type $T$. We define the \textit{split summation} function as follows: 
\begin{lstlisting}[language = Python, escapechar=|,frame=single]
def split_summation(u)
    P = 0
    N = 0
    for elt in u:
        if elt < 0:
            N = N + elt
        else:
            P = P + elt
    return P + N
\end{lstlisting}

\end{method}

\begin{theorem}[Sensitivity of bounded sum with split summation]\label{thrm:splitsummation}
    Let $\bs^*_{L, U}: \Vect(T_{[L, U]}) \rightarrow T$ be the bounded sum function with split summation (Method~\ref{method:split-sum-ints}) on the $k$-bit integers of type $T_{[L, U]}$ with saturation arithmetic, and similarly for $\bs^*_{L, U, n}: T_{[L, U]}^n \rightarrow T$. Then,
    \begin{enumerate}
        \item (Unknown $n$.) $\ssym \bsfp_{L, U} = \sid \bsfp_{L, U}  = \max \{ |L|, U \}$.
        \item (Known $n$.) $\sco \bsfp_{L, U, n} = \sham \bsfp_{L, U,n} = U-L$.
    \end{enumerate}
\end{theorem}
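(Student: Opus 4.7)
The plan is to exploit the fact that splitting the sum into a positive subtotal $P$ and a negative subtotal $N$ restores associativity, since Property~\ref{prop:sat-add-result} then applies to each piece individually. First I would apply that property to each loop in Method~\ref{method:split-sum-ints} to obtain the closed form
$$\bsfp_{L,U}(u) \;=\; \min\{S_+(u),\max(T)\} \;+\; \max\{S_-(u),\min(T)\},$$
where $S_+(u)=\sum_{u_i\ge 0} u_i$ and $S_-(u)=\sum_{u_i<0} u_i$ are computed exactly over $\mathbb{Z}$. To justify that the final $\boxplus$ does no saturation, I would note that the clamped $P\in[0,\max(T)]$ and clamped $N\in[\min(T),0]$, so $P+N\in[\min(T),\max(T)]$ and hence $P\boxplus N=P+N$.

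Second, the closed form above depends only on the histogram $h_u$, so split summation is permutation-invariant. Combined with Lemma~\ref{lemma:metric-relate}, this collapses $\ssym$ to $\sid$ and $\sco$ to $\sham$, so it suffices to prove the upper bounds with respect to the ordered metrics $\dham$ and $\did$. Using the fact that each clamping map is $1$-Lipschitz, I would then bound
$$|\bsfp_{L,U}(u)-\bsfp_{L,U}(v)| \;\le\; |S_+(u)-S_+(v)| \;+\; |S_-(u)-S_-(v)|.$$

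Third, for the known-$n$ case $u\simham v$ differing only at one index $i$, a four-way case split on the signs of $u_i,v_i\in[L,U]$ shows the right-hand side is at most $U-L$ in every case (e.g., if $u_i\ge 0>v_i$ it equals $u_i+|v_i|=u_i-v_i\le U-L$; if $u_i,v_i\ge 0$ the second term is $0$ and the first is $|u_i-v_i|\le U-\max\{0,L\}\le U-L$). For the unknown-$n$ case $u\simID v$ differing by one inserted/deleted element $z\in[L,U]$, only one of $S_+,S_-$ changes, and by at most $|z|$, giving the bound $\max\{U,|L|\}$. The matching lower bounds transfer directly from Theorem~\ref{thrm:reals-sens}: the witness datasets there (e.g., $u=[L]$, $v=[U]$, or $u=[0]$, $v=[0,\max\{U,|L|\}]$) are small enough that $S_\pm$ never leaves $[\min(T),\max(T)]$, so split summation agrees with the idealized sum on them.

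The main obstacle is really just bookkeeping: establishing the clamped closed form cleanly (including the degenerate corners $L\ge 0$ or $U\le 0$ where one subtotal is vacuous and the other reduces to ordinary saturating iterated sum, to which Property~\ref{prop:sat-add-result} applies directly), and carrying out the four-way sign case analysis without missing the boundary case $u_i=0$ or $v_i=0$. Once the closed form is in place, the Lipschitz step and the histogram-invariance argument make both sensitivity bounds essentially immediate.
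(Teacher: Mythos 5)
Your proposal is correct and takes essentially the same approach as the paper: decompose into the positive subtotal $P$ and negative subtotal $N$, invoke Property~\ref{prop:sat-add-result} so that each subtotal behaves as an exact clamped sum, reduce unordered to ordered adjacency via permutation-invariance, and do a case analysis on the sign of the changed (or inserted/deleted) element, with the idealized witnesses from Theorem~\ref{thrm:reals-sens} giving the matching lower bounds. Your write-up is somewhat more careful than the paper's --- you state the closed form $\min\{S_+,\max(T)\}+\max\{S_-,\min(T)\}$ explicitly and observe that the final $P\boxplus N$ cannot saturate since $P\in[0,\max(T)]$ and $N\in[\min(T),0]$, a step the paper's proof leaves implicit (and indeed the paper's remark that ``saturation addition is associative'' is loosely worded, where what is really used is associativity on same-sign operands via Property~\ref{prop:sat-add-result}).
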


\begin{proof}

We prove each part below. Let $m = 2^k$.

\begin{enumerate}

\item Let $u,u'\in\Vect(T)$ such that $u\nsym u'$. Additionally, for computing $\bsfp_{L,U}(u)$, let $P,N$ be the final values of $\texttt{P},\texttt{N}$ in Method~\ref{method:split-sum-ints}. Likewise, for computing $\bsfp_{L,U}(u')$, let $P',N'$ be the final values of $\texttt{P},\texttt{N}$ in Method~\ref{method:split-sum-ints}.

Note that saturation addition is associative. Because $u\simeq_\mathit{Sym} u'$, we know that there is at most one value $z^*$ such that $|h_u(z^*) - h_{u'}(z^*)| = 1$, and that, for all $z\neq z^*$, we have $|h_u(z) - h_{u'}(z)| = 0$.

Consider the case where $z^* \geq 0$. By the logic in the proof of Theorem~\ref{thrm:reals-sens}, then,
$|N-N'| = 0$ and $|P-P'|\leq U$.

Now, consider the case where $z^* < 0$. By the logic in the proof of Theorem~\ref{thrm:reals-sens}, then,
$|N-N'| \leq |L|$ and $|P-P'| = 0$.

By the definition of sensitivity, then, $\ssym \bsfp_{L,U} \leq \max\{|L|,U\}$. By Theorem~\ref{thm:dcotoham}, we also have $\sid \bs_{L,U} \leq \max\{|L|,U\}$.

By the logic used in the proof of Theorem~\ref{thrm:reals-sens}, we see that this is also a lower bound on the sensitivity, which implies that $\ssym \bsfp_{L,U} = \sid \bsfp_{L,U} = \max\{|L|, U\}$.

\item Let $u,u'\in T^n$ such that $u\nco u'$. Additionally, for computing $\bsfp_{L,U,n}(u)$, let $P,N$ be the final values of $\texttt{P},\texttt{N}$ in Method~\ref{method:split-sum-ints}. Likewise, for computing $\bsfp_{L,U,n}(u')$, let $P',N'$ be the final values of $\texttt{P},\texttt{N}$ in Method~\ref{method:split-sum-ints}.

Recall that saturation addition is associative. By Lemma~\ref{lemma:metric-relate}, there is a permutation $\pi\in S_n$ such that $\pi(u)\simeq_\mathit{Ham} u'$. This means there is at most one index $i^*$ such that $\pi(u)_{i^*}\neq u'_{i^*}$.

Let $\pi(u)^P$ represent the vector of values in $\pi(u)\geq 0$; likewise for $\pi(u)^N, u'^P, u'^N$. We consider the following cases and apply the logic of Theorem~\ref{thrm:reals-sens} due to associativity.

\smallskip
\noindent \textit{(Case i.)} $\pi(u)_{i^*} \geq 0$ and $u'_{i^*}\geq 0$. Here, $\pi(u)^P\nham u'^P$, so $|\bsfp_{L,U,n}(\pi(u)^P) - \bsfp_{L,U,n}(u'^P)| \leq U-\max\{0,L\}$. Also, $\pi(u)^N = u'^N$, so $|\bsfp_{L,U,n}(u) - \bsfp_{L,U,n}(u')| \leq U-\max\{0,L\}$.

\smallskip
\noindent \textit{(Case ii.)} $\pi(u)_{i^*} \geq 0$ and $u'_{i^*} < 0$. Here, $\pi(u)^P\nsym u'^P$, so $|\bsfp_{L,U,n}(\pi(u)^P) - \bsfp_{L,U}(u'^P)| \leq U$. Likewise, $\pi(u)^N \nsym u'^N$, so $|\bsfp_{L,U,n}(\pi(u)^N) - \bsfp_{L,U}(u'^N)| \leq L$. Hence, it follows that $|\bsfp_{L,U,n}(u) - \bsfp_{L,U,n}(u')| \leq U - L$.

\smallskip
\noindent \textit{(Case iii.)} $\pi(u)_{i^*} < 0$ and $u'_{i^*} \geq 0$. Here, $\pi(u)^P\nsym u'^P$, so $|\bsfp_{L,U,n}(\pi(u)^P) - \bsfp_{L,U}(u'^P)| \leq U$. Likewise, $\pi(u)^N \nsym u'^N$, so $|\bsfp_{L,U,n}(\pi(u)^N) - \bsfp_{L,U}(u'^N)| \leq L$. ence, it follows that $|\bsfp_{L,U,n}(u) - \bsfp_{L,U,n}(u')| \leq U - L$.

\smallskip
\noindent \textit{(Case iv.)} $\pi(u)_{i^*} < 0$ and $u'_{i^*}< 0$. Here, $\pi(u)^N \nham u'^N$, so $|\bsfp_{L,U,n}(\pi(u)^N) - \bsfp_{L,U,n}(u'^N)| \leq \min\{0,U\} - L$. Also, $\pi(u)^P = u'^P$, so $|\bsfp_{L,U,n}(u) - \bsfp_{L,U,n}(u')| \leq \min\{0,U\} - L$.

By the definition of sensitivity, then, $\sco\bsfp_{L,U,n} \leq U-L$. By Theorem~\ref{thm:dcotoham}, we have $\sham \bsfp_{L,U,n} \leq U-L$.

By the logic used in the proof of Theorem~\ref{thrm:reals-sens}, we see that this is also a lower bound on the sensitivity, which implies that $\sco \bsfp_{L,U,n} = \sham \bsfp_{L,U,n} = U-L$.
    
\end{enumerate}

\end{proof}

Note that Method~\ref{method:split-sum-ints} would also work with modular addition instead of saturation arithmetic. This is because, when working with integers, issues with integer arithmetic only arise due to non-associativity. However, modular addition is associative, so the same procedure would hold.

\subsection{Randomized Permutation for Saturating Integers 
}\label{sec:rp-sat-int}

We now present the second alternative method for integers with saturating arithmetic, called \emph{randomized permutation (RP) for saturating integers}. Like the split summation method, RP matches the sensitivity of bounded sum, but when combined with the split summation method in Method~\ref{method:split-sum-ints}, will lead to one of our methods for float types (\cref{sec:splitsumrp}). The RP method randomly permutes the dataset before carrying out naive iterative summation. A potential accuracy advantage of RP over split summation is that since the positive and negative values are spread out, it is less likely that saturation will occur and cause a deviation from the true answer. On the other hand, the randomization has the downside that different runs may produce different results, but that is less of a concern given that we will be adding noise for privacy at the end anyway.

\begin{definition}[Randomized Permutation (RP) Function]
\label{defn:rr-transform}

We define $\Pi$ as a randomized function that, on input some dataset $u\in\mathcal{D}^n$ returns $\pi(u)$, where $\pi\in S_n$ is chosen uniformly at random from $S_n$, and $\pi(n)$ is as in Definition~\ref{def:perm}.
\end{definition}

Until now, the notion of sensitivity that we have employed (Definition~\ref{defn:abs-sens}) was for deterministic functions. To define sensitivity for randomized functions, we follow \cite{sv21} and require \emph{couplings} between output distributions of neighboring datasets, and we then work with a definition of sensitivity based on these couplings. The idea of using couplings for this purpose was introduced in \cite{sv21}.

\begin{definition}[Coupling]\label{defn:coupling}
    Let $r_1, r_2$ be two random variables defined over the probability spaces $\mathcal{R}_1$ and $\mathcal{R}_2$, respectively. A \emph{coupling} of $r_1$ and $r_2$ is a joint variable $(\tilde{r_1}, \tilde{r_2})$ taking values in the product space $\mathcal{R}_1 \times \mathcal{R}_2$ such that $\tilde{r_1}$ has the same marginal distribution as $r_1$ and $\tilde{r_2}$ has the same marginal distribution as $r_2$.
\end{definition}

\begin{definition}[Sensitivity of Randomized Functions]

A randomized function $f: \Vect(\mathcal{D}) \rightarrow \mathbb{R}$ has \textit{sensitivity} $\Delta f$ if for all neighboring datasets $u \nd u'$ there exists a coupling $(\tilde{s}_u, \tilde{s}_{u'})$ of the random variables $s_u = f(u)$ and $s_{u'} = f(u')$ such that with probability 1, $d (\tilde{s}_u, \tilde{s}_{u'}) \leq \Delta f$.

\end{definition}

Having defined the necessary terms, we now state our new summation method based on randomized permutations.

\begin{method}[Randomized Permutation]
\label{method:sat-add-RP-ints}

Let $u\in\Vect(T)$ for some type $T$. We define the \emph{random permutation} of $u$ as follows:

\begin{lstlisting}[language = Python, escapechar=|, frame=single]
def randomly_permute(u):
    return |$\Pi$|(u) // per Definition |\ref{defn:rr-transform} |
\end{lstlisting}

\end{method}

\begin{theorem}[Sensitivity of Bounded Sum with RP-based Summation]
\label{thrm:rpsummation}
    Let $T$ be the type of $k$-bit signed integer with saturation arithmetic. Let $\bs^*_{L, U}: \Vect(T_{[L, U]}) \rightarrow T$ be the bounded sum function on a dataset to which RP has been applied (Method~\ref{method:sat-add-RP-ints}), and similarly for $\bs^*_{L, U, n}: T_{[L, U]}^n \rightarrow T$. Then, 
    \begin{enumerate}
        \item (Unknown $n$.) $\Delta_\mathit{Sym} \bs^*_{L, U} = \max \{|L|, U\}$.
        \item (Known $n$.) $\Delta_\mathit{CO} \bs^*_{L, U, n} = (U-L)$.
    \end{enumerate}
\end{theorem}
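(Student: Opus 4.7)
The plan is to prove matching upper and lower bounds. The upper bound is the heart of the argument: it exploits the fact that applying a uniformly random permutation $\Pi$ to an unordered-adjacent pair $u \nco u'$ (resp.\ $u \nsym u'$) yields, under an appropriate coupling, an ordered-adjacent pair $\Pi(u) \nham \Pi(u')$ (resp.\ $\Pi(u) \nid \Pi(u')$) with probability one. We then invoke the ordered-metric sensitivity of iterative summation with saturation arithmetic (Theorem~\ref{thrm:ord-dist-saturation-sens}), which gives the idealized bounds $U-L$ and $\max\{|L|, U\}$ with respect to $\nham$ and $\nid$, respectively.

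For the known-$n$ case, given $u \nco u'$ I would use Lemma~\ref{lemma:metric-relate} to obtain $\sigma \in S_n$ with $\dham(\sigma(u), u') \leq 1$, then couple by sampling $\pi \in S_n$ uniformly and taking the coupled outputs to be $(\pi \sigma)(u)$ and $\pi(u')$. Since $\pi \mapsto \pi\sigma$ is a bijection of $S_n$, both marginals are uniform over permutations of $u$ and $u'$ respectively, and applying the same $\pi$ to the Hamming-adjacent pair $\sigma(u), u'$ preserves Hamming distance. For the unknown-$n$ case I would split on $\dsym(u,u')$. If $\dsym(u,u') = 0$, then $u$ and $u'$ have the same histogram and I couple $\Pi(u) = \Pi(u')$ (zero distance). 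If $\dsym(u,u') = 1$, WLOG $u'$'s multiset is $u$'s multiset augmented by some element $x^*$; the coupling samples $\Pi(u)$ uniformly and then inserts $x^*$ at a uniformly random position of $\Pi(u)$ to obtain $\Pi(u')$, which is always insert-delete-adjacent to $\Pi(u)$ by construction.

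For the lower bounds I would exhibit explicit witnesses on which the output of $\bs^*$ is deterministic (independent of the permutation), so that every coupling exhibits the claimed distance with probability one. For unknown $n$, take $u = [0]$ and $u' = [0, M]$ where $M \in \{L, U\}$ achieves $\max\{|L|, U\}$; both $\bs^*(u) = 0$ and $\bs^*(u') = M$ regardless of the permutation, and no saturation occurs. For known $n$, take $u = [L, 0, \ldots, 0]$ and $u' = [U, 0, \ldots, 0]$; every permutation of $u$ sums to $L$ and every permutation of $u'$ sums to $U$, so any coupling gives distance $U - L$ always.

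The main obstacle is verifying that the unknown-$n$ coupling in the $\dsym = 1$ case produces the correct uniform marginal for $\Pi(u')$. This reduces to a direct combinatorial check: for each permutation of $u'$, count the number of (permutation of $u$, insertion position) pairs that map to it, and confirm this count is the same across all permutations of $u'$ (it turns out to equal the number of copies of $x^*$ in $u'$ times $\len(u)!/$ the product of multiset multiplicities, which is constant across permutations of $u'$). Once this is in place, the rest is a mechanical application of Theorem~\ref{thrm:ord-dist-saturation-sens} to the coupled outputs.
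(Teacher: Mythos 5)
Your proposal is correct and follows the same high-level strategy as the paper: couple the output distributions of the randomly permuted datasets so that, with probability one, the coupled pair is ordered-adjacent ($\nham$ or $\nid$), and then invoke Theorem~\ref{thrm:ord-dist-saturation-sens} for the ordered-metric sensitivities. The paper factors the coupling step out into Theorem~\ref{thrm:couple-unknown-n} and Corollary~\ref{cor:coupling-known-n}; you construct the couplings inline. The one place you genuinely diverge is the known-$n$ coupling. The paper derives it indirectly: it observes $\dsym(u,u')\leq 2$, uses the path property and the unknown-$n$ coupling twice to get $\did(\Pi(u),\Pi(u'))\leq 2$, and then concludes $\dham(\Pi(u),\Pi(u'))\leq 1$; the last inference cites Lemma~\ref{lemma:metric-relate}(4), which actually states $\did\leq 2\dham$ (the wrong direction), and in general $\did\leq 2$ does not force $\dham\leq 1$ for equal-length vectors. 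Your direct construction — pick $\sigma$ with $\dham(\sigma(u),u')\leq 1$ via Lemma~\ref{lemma:metric-relate}(2), then couple by applying a common uniformly random $\pi$ and noting that $\pi\mapsto\pi\sigma$ is a bijection of $S_n$ while any common $\pi$ preserves Hamming distance — is cleaner, evidently correct, and sidesteps that step entirely. Your marginal check for the unknown-$n$ coupling (each ordering of $u'$ receives exactly as many (ordering of $u$, insertion position) pairs as its $x^*$-multiplicity, which cancels against the multinomial normalization) is the right calculation and is more explicit than the paper's ``injective with surjective inverse'' phrasing. Your lower-bound witnesses are also right, and you correctly flag that one must check no saturation occurs on them so that $\bsfp=\bs$ there; the paper elides this by pointing to ``the logic used in the proof of Theorem~\ref{thrm:reals-sens}.''
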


We first show that the idealized sensitivity is preserved if an \emph{ordered} distance metric is used. Then, we show that the randomized permutation allows datasets at some unordered distance $c$ to be treated as if they are at ordered distance $c$.
\subsubsection{Sensitivities with Respect to Ordered Distance for Saturating Integers}
\label{sec:rp-sat-ints}

We first show that the implemented sensitivity with respect to ordered distance metrics is equal to the idealized sensitivity.

\begin{theorem}[Ordered Distance Sensitivities]
\label{thrm:ord-dist-saturation-sens}

Let $T$ be the type of $k$-bit integers with saturation arithmetic, let $\bs^{*I}_{L,U}: \Vect(T) \rightarrow T$ be the iterative bounded sum function from Definition~\ref{def:bsiterative}, and similarly for $\bsfp_{L, U, n}: T^n \rightarrow T$. Then,
\begin{enumerate}
    \item (Unknown $n$.) $\Delta_\mathit{ID} \bs^*_{L, U} = \max \{ |L|, U\}$.
    \item (Known $n$.) $\Delta_\mathit{Ham} \bs^*_{L, U, n} = (U-L)$.
\end{enumerate}
\end{theorem}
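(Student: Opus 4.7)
The plan is to reduce both parts to a single \emph{contraction property} of saturation addition. Writing $\mathrm{clamp}(t) = \max\{\min(T), \min\{\max(T), t\}\}$ so that $x \boxplus z = \mathrm{clamp}(x+z)$, the map $\mathrm{clamp}: \mathbb{Z} \to T$ is monotone and non-expansive, so for all $x, x', z, z' \in T$,
\[
|(x \boxplus z) - (x' \boxplus z')| \;\leq\; |(x+z) - (x'+z')|.
\]
Two specializations will do all the work: perturbing only the summand gives $|(x \boxplus z) - (x \boxplus z')| \leq |z - z'|$, and perturbing only the pre-state gives $|(x \boxplus z) - (x' \boxplus z)| \leq |x - x'|$. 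I will state and prove this as a short lemma before attacking either part.

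For the upper bound on $\Delta_{\mathit{Ham}} \bs^*_{L,U,n}$, I will take adjacent $u, u' \in T^n_{[L,U]}$ differing (if at all) at a single position $i^*$. The partial sums $s_j$ and $s'_j$ agree for $j < i^*$; the summand-perturbation specialization bounds $|s_{i^*} - s'_{i^*}| \leq |u_{i^*} - u'_{i^*}| \leq U - L$; and for $j > i^*$, both computations add the same value $u_j = u'_j$, so the pre-state-perturbation specialization gives the inductive contraction $|s_j - s'_j| \leq |s_{j-1} - s'_{j-1}|$. Hence $|\bs^*_{L,U,n}(u) - \bs^*_{L,U,n}(u')| \leq U - L$.

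For the upper bound on $\Delta_{\mathit{ID}} \bs^*_{L,U}$, assume without loss of generality that $u'$ is obtained from $u$ by inserting a value $z \in [L,U]$ at some position $i$. The partial sums agree up to position $i-1$ at a common value $a$; the two computations then proceed from states $a$ (for $u$) and $a \boxplus z$ (for $u'$), adding the same remaining sequence of elements. A direct case check against the three clamping cases yields $|a - (a \boxplus z)| \leq |z| \leq \max\{|L|, U\}$, and repeated application of the pre-state-perturbation specialization preserves this bound through the end of the summation. The matching lower bounds follow from the idealized constructions whenever they avoid saturation: $u = [0, \ldots, 0, L]$ versus $u' = [0, \ldots, 0, U]$ for Hamming (when $0 \in [L,U]$), and $u = []$ versus $u' = [z]$ with $|z|$ maximal in $[L,U]$ for insert-delete.

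The main obstacle will be handling the Hamming lower bound in the regime $0 \notin [L,U]$ together with $n$ large enough that any padding of $L$'s or $U$'s itself saturates and erases the gap between $\bs^*(u)$ and $\bs^*(u')$. I expect to resolve this either by an explicit construction using mixed-sign padding when $T$ is signed (alternating $L$ with a value of the opposite sign in $[L,U]$ so that the intermediate sum stays near $0$), or by restricting to parameter regimes where no single non-differing position forces saturation; the remainder of the argument is a routine induction on the contraction lemma.
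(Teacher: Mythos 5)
Your upper-bound argument is essentially the paper's own proof. The paper proves exactly the two specializations you isolate, as Lemma~\ref{lemma:sat-inequality} ($|(y\boxplus z)-(y\boxplus z')|\leq|z-z'|$, by a three-way case analysis on whether the two additions saturate) and Corollary~\ref{cor:sat-inequality-cor} (the pre-state version, obtained by commutativity), and then applies them to the index where $u$ and $u'$ differ followed by an induction on the remaining suffix, just as you describe. Deriving both from the single statement that $\mathrm{clamp}$ is monotone and non-expansive is a modest unification; it replaces the paper's case analysis with an appeal to $1$-Lipschitzness and handles the pre-state and summand perturbations in one stroke, but it is the same underlying observation. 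For the insert-delete upper bound, the paper inserts a $0$ at the gap position to reduce to the Hamming argument, whereas you reason directly from the matching prefix state $a$ and bound $|a-(a\boxplus z)|\leq|z|$; these are interchangeable and both give $\max\{|L|,U\}$.

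The one place you flag genuine trouble is the Hamming lower bound, and you are right to be uneasy: if $0\notin[L,U]$ then every element of a dataset in $T^n_{[L,U]}$ has the same sign, and once $n$ is large enough that $n\cdot\min\{|L|,U\}$ exceeds the range of $T$, every length-$n$ dataset saturates to the same endpoint and the Hamming sensitivity collapses to $0$, strictly below $U-L$. Mixed-sign padding is unavailable in that regime because $[L,U]$ contains values of only one sign, so your first proposed fix cannot apply; the correct resolution is the second one, namely a parameter restriction ensuring no forced saturation, of the form $(n-1)\cdot|L|+|U|\leq\max(T)-\min(T)$ or similar. You should be aware, though, that this is not a defect in your plan relative to the paper: the paper's own proof of this theorem asserts the lower bound by citing the lower-bound construction of Theorem~\ref{thrm:reals-sens}, which uses the single-element pair $u=[L]$, $u'=[U]$ and does not address how to pad to length $n$ without saturating when $[L,U]$ excludes $0$. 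Your insert-delete lower bound construction ($u=[]$ versus $u'=[z]$ with $|z|$ maximal) is correct and unconditional, since $0\boxplus z=z$ always.

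Net assessment: your upper bounds reproduce the paper's argument with a slight repackaging of the key lemma, and your stated concern about the Hamming lower bound identifies a gap that the paper glosses over rather than a gap peculiar to your plan. Proceed, but state the Hamming-direction equality with the no-overflow hypothesis made explicit rather than trying to construct around it.
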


Before we begin with the proof of Theorem \ref{thrm:ord-dist-saturation-sens}, we provide two lemmas.

\begin{lemma}
\label{lemma:sat-inequality}

Let $T$ be the type of $k$-bit integers. Then, for all $y,z,z'\in T$, $$|(y\boxplus z) - (y\boxplus z')| \leq |z-z'|.$$
\end{lemma}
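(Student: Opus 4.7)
The plan is to recognize that saturation addition is the composition of ordinary integer addition (over $\mathbb{Z}$, where no overflow occurs) with a clamping function, and then to exploit the fact that clamping is 1-Lipschitz.

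First, I would introduce an auxiliary function $\mathrm{sat}_T : \mathbb{Z} \to T$ defined by $\mathrm{sat}_T(x) = \min\{\max\{x, \min(T)\}, \max(T)\}$. Directly from Definition~\ref{defn:sat-add}, for any $y, w \in T$ we have $y \boxplus w = \mathrm{sat}_T(y + w)$, where $y + w$ is evaluated in $\mathbb{Z}$. In particular, with $a := y + z$ and $b := y + z'$ (both in $\mathbb{Z}$), we get $y \boxplus z = \mathrm{sat}_T(a)$, $y \boxplus z' = \mathrm{sat}_T(b)$, and $a - b = z - z'$.

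Second, I would establish the key lemma that $\mathrm{sat}_T$ is 1-Lipschitz on $\mathbb{Z}$, i.e., $|\mathrm{sat}_T(a) - \mathrm{sat}_T(b)| \leq |a - b|$ for all $a, b \in \mathbb{Z}$. Without loss of generality assume $a \geq b$; since $\mathrm{sat}_T$ is monotone non-decreasing, $\mathrm{sat}_T(a) \geq \mathrm{sat}_T(b)$, so it suffices to show $\mathrm{sat}_T(a) - \mathrm{sat}_T(b) \leq a - b$. This follows by a short case analysis on the positions of $a$ and $b$ relative to the interval $[\min(T), \max(T)]$: if both lie inside the interval, equality holds; if one or both are clamped, the clamped endpoint replaces a more extreme value, which can only decrease the gap (for example, if $a > \max(T)$ and $b \in [\min(T), \max(T)]$, then $\mathrm{sat}_T(a) - \mathrm{sat}_T(b) = \max(T) - b \leq a - b$; the remaining cases are symmetric).

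Combining the two steps gives $|(y \boxplus z) - (y \boxplus z')| = |\mathrm{sat}_T(a) - \mathrm{sat}_T(b)| \leq |a - b| = |z - z'|$, as claimed. There is no real obstacle here beyond being careful with the case analysis for clamping; the lemma is essentially a restatement that clamping a common translate by $y$ is a non-expansive map.
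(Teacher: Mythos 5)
Your proof is correct, and it takes a genuinely different route from the paper's. The paper proves the lemma by a direct case analysis on which of $y\boxplus z$ and $y\boxplus z'$ saturate (neither, exactly one, or both), handling each case with a chain of inequalities; the case where exactly one saturates requires a small argument that clamping moves the value toward the un-saturated sum. You instead factor $\boxplus$ through $\mathbb{Z}$, writing $y\boxplus w = \mathrm{sat}_T(y+w)$ for the clamp map $\mathrm{sat}_T(x) = \min\{\max\{x,\min(T)\},\max(T)\}$, observing that the two integer pre-images $a=y+z$ and $b=y+z'$ satisfy $a-b=z-z'$ exactly, and then reducing everything to the single reusable fact that $\mathrm{sat}_T$ is 1-Lipschitz (non-expansive) and monotone on $\mathbb{Z}$. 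The two proofs cover the same underlying cases, but yours isolates the essential structural fact (clamping is a non-expansive retraction onto $[\min(T),\max(T)]$), which makes the argument shorter, makes Corollary~\ref{cor:sat-inequality-cor} immediate by symmetry of addition in $\mathbb{Z}$, and gives a lemma that could be reused elsewhere (e.g., in a cleaner proof of Property~\ref{prop:sat-add-result} or of the later per-element clamp transformations). The paper's case analysis is more self-contained but slightly more opaque in its middle case.
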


\begin{proof}

Consider the cases where saturation does not occur (i.e., where $\min(T)\leq y+z\leq \max(T)$, and $\min(T)\leq y+z'\leq \max(T)$). Then $| (y\boxplus z) - (y\boxplus z') | = |(y+z) - (y+z')| = |z-z'|$.

Now, consider the case where exactly one of the sums saturates, say $y\boxplus z$. Then, $|y\boxplus z| < |y+z|$, and thus
\[
    |(y\boxplus z) - (y\boxplus z')| \leq |(y+z) - (y\boxplus z')| \leq |(y+z) - (y+z')| = |z-z'|.
\]
Finally, if both $y \boxplus z$ and $y \boxplus z'$ saturate, then they must both equal $\max(T)$ (if $y >0$) or both equal $\min(T)$ (if $y<0$), so $|(y \boxplus z) - (y \boxplus z')| = 0 \leq |z-z'|$.

\end{proof}

\begin{corollary}
\label{cor:sat-inequality-cor}
    Let $T$ be the type of $k$-bit integers. Then, for all $y,z,z'\in T$, $$|(z\boxplus y) - (z'\boxplus y)| \leq |z-z'|.$$
\end{corollary}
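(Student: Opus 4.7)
The plan is to derive the corollary as an immediate consequence of Lemma \ref{lemma:sat-inequality} by invoking the commutativity of saturation addition. Specifically, from Definition~\ref{defn:sat-add}, the operation $\boxplus$ is defined in terms of the (commutative) integer sum $x + y$ followed by a sign-symmetric clamping to $[\min(T), \max(T)]$, so $z \boxplus y = y \boxplus z$ and $z' \boxplus y = y \boxplus z'$ for all $y, z, z' \in T$.

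Given this, I would simply rewrite the left-hand side of the corollary using commutativity and then apply Lemma \ref{lemma:sat-inequality} directly. Concretely:
\[
|(z \boxplus y) - (z' \boxplus y)| = |(y \boxplus z) - (y \boxplus z')| \leq |z - z'|,
\]
where the inequality is exactly the statement of Lemma \ref{lemma:sat-inequality}.

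There is no substantial obstacle here; the only thing worth double-checking is that commutativity indeed holds for the implemented $\boxplus$ on the type $T$, but this follows directly from Definition~\ref{defn:sat-add} since both the underlying $\mathbb{Z}$-sum and the clamping rule are symmetric in their operands. Thus the proof will be a single line after the commutativity remark, and no new combinatorial or case analysis is required beyond what was already handled in Lemma \ref{lemma:sat-inequality}.
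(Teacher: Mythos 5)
Your proposal is correct and takes exactly the same approach as the paper: invoke commutativity of $\boxplus$ to rewrite $(z \boxplus y) - (z' \boxplus y)$ as $(y \boxplus z) - (y \boxplus z')$ and then apply Lemma~\ref{lemma:sat-inequality} directly. The paper's proof is a one-liner stating this same argument.
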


\begin{proof}
Saturation addition is commutative. Therefore, by Lemma \ref{lemma:sat-inequality}, we see that $|(z\boxplus y) - (z'\boxplus y)| \leq |z-z'|$.
\end{proof}

\begin{proof}[Proof of Theorem \ref{thrm:ord-dist-saturation-sens}]

We begin with the proof of (2). 

(2) Let $u\simeq_\mathit{Ham} u'$ be two datasets of length $n$ with integer values of type $T$. Because $u\simeq_\mathit{Ham} u'$, there is at most one value $i\in [n]$ such that $u_{i}\neq u'_{i}$.

We now consider what happens at index $i$. For all $j<i$, we have $u_j = u'_j$, so we see that $$\bsfp_{L,U,{i-1}}[u_1,\ldots, u_{i-1}] = \bsfp_{L,U,{i-1}}[u'_1,\ldots, u'_{i-1}].$$ By Lemma \ref{lemma:sat-inequality}, we see that 
\begin{equation}
\label{ex:sat-ham-ineq}
|\bsfp_{L,U,i}[u_1,\ldots, u_{i}] -  \bsfp_{L,U,i}[u'_1,\ldots, u'_i] | \leq |u_i - u'_i| \leq U-L.
\end{equation}

Note that, for $j>i$, we have $u_j = u'_j$. By \cref{cor:sat-inequality-cor} and by induction, the sums will get no further apart from each other than they were in the expression above. Therefore, $|\bsfp_{L,U,n}(s) - \bsfp_{L,U,n}(s')| \leq (U-L)$. By the logic used in the proof of Theorem~\ref{thrm:reals-sens}, we see that this is also a lower bound on the sensitivity, so $$\sham \bsfp_{L,U,n} = (U-L).$$

(1) Let $u\simeq_\mathit{ID} u'$ be two datasets with integer values of type $T$. There is at most one index $i^*$ such that, without loss of generality, $u_i = u'_i$ for all $i<i^*$, and $u_j = u'_{j+1}$ for all $j>i^*$. 

Consider the case where there is no such $i^*$. Then $u=u'$, so $\bsfp_{L,U}(u) = \bsfp_{L,U}(u')$ and $|\bsfp_{L,U}(u) = \bsfp_{L,U}(u')| = 0$.

Now, consider the case where there is such an $i^*$.
We note that, for all $y\in T$, we have $y\boxplus 0 = y$. Therefore,inserting a value $u_{i^*} = 0$ will not affect the final sums. It will also provide the effect that, for all $i\neq i^*$, $u_i = u'_i$.

We can now borrow the logic from \cref{ex:sat-ham-ineq} to see that
\begin{equation}
|\bsfp_{L,U}[u_1,\ldots, u_{i}] -  \bsfp_{L,U}[u'_1,\ldots, u'_i] | \leq |u_{i^*} - u'_i| \leq \max\{|L|,U\}.
\end{equation}
By \cref{cor:sat-inequality-cor} and by induction, the sums will get no further apart from each other than they were in the expression above. Therefore, $|\bsfp_{L,U}(u) - \bsfp_{L,U}(u')| \leq \max\{|L|,U\}$. By the logic used in the proof of Theorem~\ref{thrm:reals-sens}, we see that this is also a lower bound on the sensitivity, so $$\sid \bsfp_{L,U} = \max\{|L|,U\}.$$

\end{proof}

\subsubsection{Randomized Permutation is a Transformation between Distance Metrics}
\label{sec:rp-transform-dist}

Having shown that useful sensitivities can be established for ordered distance metrics, we show that randomized permutation provides a transformation from unordered distance metrics to ordered distance metrics. Using randomized permutation in combination with the theorems in Section \ref{sec:rp-sat-ints} will then allow us to establish useful sensitivities for unordered distance metrics.

\begin{theorem}[Coupling, Unknown $n$]
\label{thrm:couple-unknown-n}
    Let $u\in\mathcal{D}^m$ and $u'\in\mathcal{D}^n$. Additionally, let $\Pi: \mathcal{D}^m \rightarrow \mathcal{D}^m, \Pi': \mathcal{D}^n \rightarrow \mathcal{D}^n$ be randomized permutations of datasets $u$ and $u'$, respectively. For all datasets $u, u'$ such that $u\simeq_\mathit{Sym} u'$, there exists a coupling $(\tilde{s}_u, \tilde{s}_{u'})$ of the datasets $s_u = \Pi(x)$ and $s_{u'} = \Pi'(u')$ such that $\tilde{s}_u \simeq_{ID} \tilde{s}_{u'}$. 
\end{theorem}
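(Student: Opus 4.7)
The plan is to split on whether $d_\mathit{Sym}(u,u') = 0$ or $d_\mathit{Sym}(u,u')=1$, and construct a coupling explicitly in each case, using the fact that a uniformly random permutation depends only on the multiset of its input.

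\textbf{Case 1 ($d_\mathit{Sym}(u,u')=0$):} By Lemma~\ref{lemma:hists-equal}, $u$ and $u'$ have the same length $m$ and there is some $\sigma \in S_m$ with $\sigma(u)=u'$. Sample a single $\pi$ uniformly from $S_m$ and set $\tilde{s}_u = \pi(u)$ and $\tilde{s}_{u'} = \pi(u) = \pi(\sigma^{-1}(u'))$. Then $\tilde s_u = \tilde s_{u'}$, so $\did(\tilde s_u, \tilde s_{u'})=0$. The marginal of $\tilde s_u$ is $\Pi(u)$ by construction, and the marginal of $\tilde s_{u'}$ equals the distribution of $(\pi\circ\sigma^{-1})(u')$ which, since $\pi\circ\sigma^{-1}$ is uniform in $S_m$, matches $\Pi'(u')$.

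\textbf{Case 2 ($d_\mathit{Sym}(u,u')=1$):} WLOG $u \in \mathcal{D}^m$ and $u' \in \mathcal{D}^{m-1}$, with some $z^*$ such that $h_u(z^*) = h_{u'}(z^*)+1$ and $h_u(z)=h_{u'}(z)$ for all $z\neq z^*$. Fix any index $j^*$ with $u_{j^*}=z^*$, and let $u''$ denote $u$ with position $j^*$ deleted, so $u''$ and $u'$ have the same histogram. Now sample $\pi$ uniformly from $S_m$, set $\tilde s_u = \pi(u)$, let $I=\pi^{-1}(j^*)$, and let $\tilde s_{u'}$ be $\tilde s_u$ with the entry at position $I$ deleted. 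By construction, $\tilde s_{u'}$ is obtained from $\tilde s_u$ by a single deletion, so $\did(\tilde s_u, \tilde s_{u'})\leq 1$.

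\textbf{Marginal check (the main obstacle):} The only nontrivial part is verifying that $\tilde s_{u'}$ is distributed as $\Pi'(u')$. Since $\pi$ is uniform, $\pi^{-1}$ is uniform, so $I$ is uniform on $\{1,\dots,m\}$, and conditional on $I=i$, the restriction of $\pi$ to $\{1,\dots,m\}\setminus\{i\}$ is a uniformly random bijection onto $\{1,\dots,m\}\setminus\{j^*\}$. After reindexing both sides to $\{1,\dots,m-1\}$, this restriction acts as a uniform permutation $\tau \in S_{m-1}$ on the deleted sequence $u''$. Hence $\tilde s_{u'}$ has the same distribution as $\tau(u'')$, i.e., $\Pi(u'')$. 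Finally, since $u''$ and $u'$ share a histogram, the argument from Case~1 (applied to $u''$ and $u'$) shows $\Pi(u'')$ and $\Pi'(u')$ have the same distribution, completing the proof.

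I expect the marginal-distribution verification in Case~2 to be the main technical step; the key lemma behind it is the claim that a uniformly random permutation of a sequence depends only on its multiset, which should be stated and used explicitly. Everything else reduces to bookkeeping about which index gets deleted.
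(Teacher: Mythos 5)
Your proof is correct, and your Case~1 and Case~2 decomposition mirrors the paper's implicit split (the paper dismisses the equal-length case as trivial with the same matched-ordering coupling). The substantive difference is in how Case~2's coupling is established. The paper builds a bipartite structure: it argues that each permutation $\pi(u)$ can be extended (by inserting the leftover element in any of the $m+1$ positions) to $m+1$ orderings of $u'$, and then appeals to a counting argument --- $m!\cdot(m+1)=(m+1)!$ edges, each ordering of $u'$ reached once --- so that the uniform distribution on edges projects to the correct marginals. You instead build the coupling from a single source of randomness $\pi\sim\mathrm{Unif}(S_m)$ and verify the marginal of $\tilde s_{u'}$ directly: conditioning on $I=\pi^{-1}(j^*)$, the restriction of $\pi$ is a uniform bijection $\{1,\dots,m\}\setminus\{I\}\to\{1,\dots,m\}\setminus\{j^*\}$, which after order-preserving reindexing is a uniform $\tau\in S_{m-1}$ applied to $u''$. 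You then invoke the multiset-invariance of $\Pi$ (your Case~1 lemma) to go from $u''$ to $u'$. Your route is more explicit about the random variables and avoids the paper's informal phrasing ("the mapping from permutations of $u$ to permutations of $u'$ is injective, with its inverse surjective"), which as written reads awkwardly since each $\pi(u)$ maps to $m+1$ orderings; what the paper really means is that the incidence relation is biregular. Your version makes the marginal check the explicit center of the argument, which is a genuine improvement in rigor; the paper's version is shorter but leaves that check to the reader. Both approaches end up using the same structural fact that deleting the tracked occurrence of $z^*$ from a uniform permutation leaves a uniform permutation of the remainder.

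One small presentational suggestion: you flag the lemma "a uniformly random permutation of a sequence depends only on its multiset" as something to state and use explicitly. You should indeed isolate it as a named lemma (your Case~1 argument is exactly its proof), since you invoke it twice --- once as Case~1 itself and once at the end of Case~2 to pass from $u''$ to $u'$. Also state explicitly that the case $\len(u')=\len(u)+1$ is handled symmetrically by swapping the roles of $u$ and $u'$, since the theorem is stated symmetrically but your Case~2 only treats the deletion direction.
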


\begin{proof}

Without loss of generality, let $\len(u') = \len(u) + 1$. (In the $u\simeq_\mathit{Sym}u'$ case, we could have $\len(u) = \len(u')$, but then we would have a trivial coupling where each ordering $\Pi(u)$ is matched with the same ordering of elements $\Pi'(u')$, so we only consider the $|\len(u) = \len(u') + 1$ case.) We can then think of $u'$ as being $u$, but with one ``insertion''.

We now describe a way to enumerate $u$ and $u'$ so that equivalent terms are matched. Let $\mathcal{J}[i]$ be the $i^\mathit{th}$ term in a sequence $\mathcal{J}$. Because $\dsym(u,u') = 1$, we know that $u$ and $u'$ are identical from an unordered perspective, except $u'$ has one additional row.  This means, then, that there is some sequence $\mathcal{J}$ of terms $\{1,\ldots, m+1 \}$ such that, for all $i\in\{1,\ldots,m\}$, we have $u_i = u'_{\mathcal{J}[i]}$. As stated, there will be an unmatched $u'$ term since $n| = m + 1$; we call this term $u'_{\mathcal{J}[m+1]}$.

We now use this enumeration scheme to show that, for every permutation $\Pi(u)$, there is a permutation $\Pi'(u')$ such that $\did(\Pi(u),\Pi'(u')) = 1$. We first construct an intermediate ordering $u''$. Let $\Pi(u)_k$ denote the $k^\mathit{th}$ value in the vector $\Pi(u)$. Suppose that $\Pi(u)_k = u_i$. We pair this with $u_{\mathit{J}[i]}$. We proceed in this way to construct a dataset $u''$ where $\did(\Pi(u), u'') = 0$.

Note that $u''$ does not include the leftover term $u'_{\mathcal{J}[m+1]}$. This means, then, that there are $(m+1)$ permutations $\Pi'(u')$ such that $\did(\Pi(u),\Pi'(u')) = 1$. This construction applies to every $\Pi(u)$. Therefore, every $\Pi(u)$ has $m+1$ permutations $\Pi'(u')$ such that $\did(\Pi(u),\Pi'(u')) = 1$. 

Recall that, without loss of generality, we are assuming that $m\geq n$. This means that $u$ has $m!$ permutations, while $u'$ has $n!$ permutations. Note that our method pairs each permutation of $u$ with $(|u|+1)$ permutations of $u'$, and the mapping from permutations of $u$ to permutations of $u'$ is injective, with its inverse surjective. Because RP samples over permutations uniformly at random, we have shown that there is a coupling of randomness $(\tilde{s}_u, \tilde{s}_{u'})$ between $s_u = \Pi(u)$ and $s_{u'} = \Pi(u')$ such that $\tilde{s}_u \simeq_{ID} \tilde{s}_{u'}$.

\end{proof}

\begin{corollary}[Coupling, Known $n$]\label{cor:coupling-known-n}
    Let $\Pi: \mathcal{D}^n \rightarrow \mathcal{D}^n$ be a randomized permutation of datasets of length $n$. For all datasets $u, u'\in\mathcal{D}^n$ such that $u\simeq_\mathit{CO} u'$, there exists a coupling $(\tilde{s}_u, \tilde{s}_{u'})$ of the datasets $s_u = \Pi(u)$ and $s_{u'} = \Pi(u')$ such that $\tilde{s}_x \simeq_\mathit{Ham} \tilde{s}_{u'}$.
\end{corollary}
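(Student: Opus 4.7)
The plan is to mimic the construction used for the unknown-$n$ case in Theorem~\ref{thrm:couple-unknown-n}, but exploiting the fact that here both datasets have the same length $n$, which lets us couple by using the \emph{same} random permutation on both sides (up to a fixed correction). Concretely, I will first invoke Lemma~\ref{lemma:metric-relate}(2) to extract a witness permutation, and then show that applying a uniformly random permutation to both datasets (composed with this witness on one side) produces the desired joint distribution.

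First, by Lemma~\ref{lemma:metric-relate}(2), $\dco(u,u') = \min_{\pi \in S_n} \dham(\pi(u), u')$, so since $u \simCO u'$ there exists a fixed permutation $\pi^* \in S_n$ with $\dham(\pi^*(u), u') \leq 1$. Setting $\tau = (\pi^*)^{-1}$ and using that $\dham$ is invariant under applying the same permutation to both of its arguments, this can be rewritten as $\dham(u, \tau(u')) \leq 1$. (In the edge case $\dco(u,u')=0$, Lemma~\ref{lemma:hists-equal} gives $\pi^*$ with $\pi^*(u) = u'$, and the bound $\dham(u,\tau(u'))=0$ follows.)

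Next, I will define the coupling by sampling a single $\pi$ uniformly from $S_n$ and outputting the pair
$$(\tilde{s}_u, \tilde{s}_{u'}) = \bigl(\pi(u),\ (\pi \circ \tau)(u')\bigr).$$
The first marginal is $\Pi(u)$ by construction. For the second marginal, since $\tau$ is a fixed element of $S_n$ and $\pi$ is uniform, the composition $\pi \circ \tau$ is also uniform over $S_n$, and therefore $(\pi \circ \tau)(u')$ is distributed as $\Pi(u')$. So both marginals are correct.

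Finally, I will verify the Hamming bound by using the invariance of $\dham$ under simultaneous application of a common permutation:
$$\dham\bigl(\pi(u),\, (\pi \circ \tau)(u')\bigr) = \dham\bigl(\pi(u),\, \pi(\tau(u'))\bigr) = \dham(u, \tau(u')) \leq 1,$$
so with probability $1$ we have $\tilde{s}_u \simham \tilde{s}_{u'}$, as required. I do not expect any significant obstacle here: the only subtle point is confirming that $\pi \circ \tau$ is uniform when $\pi$ is (a standard group-invariance fact), and that $\dham$ is invariant under simultaneous permutation, both of which are immediate from the definitions.
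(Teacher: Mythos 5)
Your proof is correct, and it takes a genuinely different — and cleaner — route than the paper. The paper routes through the unknown-$n$ theorem: it notes $\dsym(u,u') = 2\dco(u,u') \leq 2$, uses the path property to split into two $\dsym$-adjacent steps, invokes Theorem~\ref{thrm:couple-unknown-n} (twice, via \cite{sv21}) to produce a coupling with $\did(\Pi(u),\Pi(u')) \leq 2$, and then asserts $\dham(\Pi(u),\Pi(u')) \leq 1$. You instead build the coupling directly: extract a witness permutation $\tau$ from Lemma~\ref{lemma:metric-relate}(2), sample one uniform $\pi \in S_n$, and output $(\pi(u), \pi(\tau(u')))$; permutation-invariance of $\dham$ and group-invariance of the uniform distribution under right (or left) multiplication by a fixed $\tau$ give everything in one step. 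Your construction buys transparency and self-containedness: it avoids chaining, avoids the unknown-$n$ machinery, and — importantly — avoids the delicate final conversion in the paper's argument, since the cited Lemma~\ref{lemma:metric-relate}(4) bounds $\did$ by $\dham$ rather than the other way around, and $\did(a,b)\le 2$ with $\len(a)=\len(b)$ does not in general imply $\dham(a,b)\le 1$ (it requires structure specific to the chained coupling that the paper does not spell out). Your approach sidesteps that subtlety entirely. One minor point worth tightening: under the paper's convention $\pi(v) = [v_{\pi(1)},\ldots,v_{\pi(n)}]$, composing vector actions gives $\pi(\tau(u')) = (\tau\circ\pi)(u')$ rather than $(\pi\circ\tau)(u')$; the uniformity argument goes through either way (both left- and right-translation by a fixed $\tau$ preserve the uniform distribution on $S_n$), but the composition order should be stated consistently with the chosen convention.
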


\begin{proof}

Suppose $u\simeq_\mathit{CO} u'$. By Theorem \ref{lemma:dcotodsym}, this means that $\dsym(u,u') = 2$. Using the path property of $\dsym$ (Lemma~\ref{lemma:metrics-path-prop}) and by Theorem \ref{thrm:couple-unknown-n}, then, there is a coupling of randomness such that $\did(\Pi(u), \Pi(u')) \leq 2$ \cite{sv21}. By Theorem \ref{lemma:dhamtodid}, then, $\dham(\Pi(u), \Pi(u')) \leq 1$, so there is a coupling of randomness $(\tilde{s}_x, \tilde{s}_{u'})$ between $s_u = \Pi(u)$ and $s_{u'} = \Pi(u')$ such that $\tilde{s}_u \simeq_\mathit{Ham} \tilde{s}_{u'}$.

\end{proof}

We remark that the proofs of Theorem~\ref{thrm:couple-unknown-n} and Corollary~\ref{cor:coupling-known-n} are \emph{constructive}, and so not only do we show that the coupling \emph{exists}, but we also show how to find it efficiently.

\subsubsection{Sensitivities for Unordered Distance}

Now that we have proven the coupling between unordered distance metrics and ordered distance metrics, we can finally prove the sensitivity of bounded sum with RP-based summation; i.e., Theorem~\ref{thrm:rpsummation}.

\begin{proof}[Proof of Theorem~\ref{thrm:rpsummation}]

We now prove each part of Theorem~\ref{thrm:rpsummation}, which shows that the implemented sensitivities for Method \ref{method:sat-add-RP-ints} match the idealized sensitivities of bounded sum, with respect to unordered distances between datasets.

\begin{enumerate}
    \item Method \ref{method:sat-add-RP-ints} first applies RP to the input dataset, and Theorem \ref{thrm:couple-unknown-n} tells us that RP transforms datasets that are neighboring with respect to $\dsym$ to a coupling of datasets that are neighboring with respect to $\did$. We can now work with datasets that are neighboring with respect to $\did$. Theorem \ref{thrm:ord-dist-saturation-sens} tells us that $\Delta_\mathit{ID}\bsfp_{L,U}\leq \max\{|L|,U\}$, so we must also have $\Delta_\mathit{Sym}\bsfp_{L,U}\leq \max\{|L|,U \}$. By the logic used in the proof of Theorem~\ref{thrm:reals-sens}, we see that this is also a lower bound on the sensitivity, so $$\Delta_\mathit{Sym}\bsfp_{L,U} = \max\{|L|,U \},$$ completing the proof.

    \item Method \ref{method:sat-add-RP-ints} first applies RP to the input dataset, and Corollary \ref{cor:coupling-known-n} tells us that RP transforms datasets that are neighboring with respect to $\dco$ to a coupling of datasets that are neighboring with respect to $\dham$. We can now work with datasets that are neighboring with respect to $\dham$. Theorem \ref{thrm:ord-dist-saturation-sens} tells us that $\Delta_\mathit{Ham}\bsfp_{L,U,n}\leq (U-L)$, so we must also have $\Delta_\mathit{CO}\bsfp_{L,U,n}\leq (U-L)$. By the logic used in the proof of Theorem~\ref{thrm:reals-sens}, we see that this is also a lower bound on the sensitivity, so $$\Delta_\mathit{CO}\bsfp_{L,U,n} = U-L,$$ completing the proof.

\end{enumerate}

\end{proof}

\subsection{Sensitivity from Accuracy}\label{sec:accuracy}

In this section, we show a relationship between an upper bound on a function's implemented sensitivity, and the function's idealized sensitivity and accuracy. The sensitivities that we derive apply directly to an unaltered implementation of iterative summation. The accuracy of a summation function is dependent on the number of terms being summed (meaning the sensitivities we derive are dependent on the number of terms being summed), so we also introduce a method called \textit{truncated summation} that allows these sensitivities to apply to the unbounded DP setting in which the number of terms being summed is not necessarily known to the data analyst.

Because these upper bounds on implemented sensitivity depend on a function's accuracy, we also present methods from the numerical analysis literature for improving the accuracy of a summation function, and we compute the resulting sensitivities for these methods.

\begin{lemma}[Relating Sensitivity and Accuracy]\label{lemma:sa}
$$\begin{aligned}
\MoveEqLeft \Delta \bsfp \leq \\
& \max_{u\simeq u'}\left(|\bs(u) - \bs(u')| + |\bsfp(u) - \bs(u)| + |\bsfp(u') - \bs(u')|\right).
\end{aligned}
$$
\end{lemma}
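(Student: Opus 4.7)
The plan is to derive the inequality pointwise for every adjacent pair $u \simeq u'$ and then pass to the maximum. The key algebraic move is to insert $\pm \bs(u) \pm \bs(u')$ inside the absolute value $|\bsfp(u) - \bsfp(u')|$ to split the discrepancy between the two implemented sums into (i) the idealized gap between the true sums and (ii) the two individual accuracy gaps between each implemented sum and its idealized counterpart.

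Concretely, I would start by fixing an arbitrary pair $u \simeq u'$ and writing
$$\bsfp(u) - \bsfp(u') = \bigl(\bsfp(u) - \bs(u)\bigr) + \bigl(\bs(u) - \bs(u')\bigr) + \bigl(\bs(u') - \bsfp(u')\bigr).$$
Applying the triangle inequality to this decomposition yields
$$|\bsfp(u) - \bsfp(u')| \leq |\bs(u) - \bs(u')| + |\bsfp(u) - \bs(u)| + |\bsfp(u') - \bs(u')|.$$
Since this holds for every adjacent pair, taking the maximum over $u \simeq u'$ on both sides and noting that $\max_{u \simeq u'}$ of a sum is at most the sum of the individual terms evaluated at the pair attaining the left-hand max — equivalently, the single-pair bound immediately implies the bound on the max — gives the claimed inequality.

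This is essentially a one-line argument and there is no real obstacle: the lemma is just the triangle inequality packaged so that the blow-up in implemented sensitivity can be controlled by the idealized sensitivity plus twice a worst-case accuracy bound. The only thing worth flagging in the write-up is that the statement uses the same maximization domain $u \simeq u'$ on both sides, so the final step is simply monotonicity of the max, not a coupling of two separate maxima.
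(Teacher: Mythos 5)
Your proof is correct and is the same argument the paper uses: decompose $\bsfp(u)-\bsfp(u')$ via telescoping through $\bs(u)$ and $\bs(u')$, apply the triangle inequality, and pass to the supremum. Nothing to add.
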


\begin{proof}
    This follows immediately from two applications of the triangle inequality.
\end{proof}

The term $|\bs(u) - \bs(u')|$ is bounded by the idealized sensitivity. The terms $|\bsfp(u) - \bs(u)|$ and $|\bsfp(u') - \bs(u')|$ can be bounded by the accuracy of the function $\bsfp$ as compared to its idealized counterpart $\bs$. We follow Higham, Wilkinson's, and Kahan's work on the accuracy of the sum function \cite{h93, w60, kahan1965pracniques} for derivations of the accuracies of various summation strategies, including iterative summation. 

\begin{theorem}[Accuracy of Summation Algorithms \cite{h93, w60, kahan1965pracniques}]
\label{thrm:acc-upper-bounds}

Let $\bsfp_{L, U,n}: T^n_{[L, U]} \rightarrow T$ be an implementation of the bounded sum function on the normal $(k, \ell)$-bit floats of type $T$, and let $\bs_{L,U,n}: \mathbb{R}^n_{[L, U]} \rightarrow \mathbb{R}$ compute the idealized bounded sum.
\begin{enumerate}
    \item If $\bsfp$ corresponds to iterative summation (Definition~\ref{def:bsiterative}), then for all $u\in T^n_{[L,U]}$
    \[
        \left|\bsfp_{L, U, n}(u) - \bs_{L,U,n}(u)\right| \leq \frac{n^2}{2^{k+1}}\cdot \max\{|L|,U\}.
    \]
    \item If $\bsfp$ corresponds to pairwise summation (Definition~\ref{def:bspairwise}) and $n<2^k$, then for all $u\in T^n_{[L,U]}$
    \[
        \left|\bsfp_{L, U, n}(u) - \bs_{L,U,n}(u)\right| \leq O\left(\dfrac{n \log n}{2^k}\right)\cdot \max\{|L|,U\}.
    \]
    \item If $\bsfp$ corresponds to Kahan summation (Definition~\ref{def:bskahan}) and $n<2^k$, then for all $u\in T^n_{[L,U]}$
    \[
        \left|\bsfp_{L, U, n}(u) - \bs_{L,U,n}(u)\right| \leq O\Big(\dfrac{n}{2^k}\Big)\cdot \max\{|L|,U\}.
    \]
\end{enumerate}

\end{theorem}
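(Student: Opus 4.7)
The plan is to reduce each of the three bounds to a classical accuracy result from the numerical analysis literature, which bounds the summation error in terms of $\sum_{i=1}^n |u_i|$, and then apply the uniform bound $\sum_{i=1}^n |u_i| \leq n \cdot \max\{|L|, U\}$ that follows immediately from the clamping constraint $u_i \in T_{[L,U]}$. In all three cases, the relevant machine precision is $\epsilon = 2^{-(k+1)}$, the maximum relative error of a single banker's-rounded addition on normal $(k,\ell)$-bit floats.

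For part (1), I would apply Wilkinson's bound from \cite{w60}: for iterative summation of $n$ floating-point values, each of the $n-1$ additions $\hat{S}_{i+1} = \mathrm{BRound}(\hat{S}_i + u_{i+1})$ introduces a relative error of magnitude at most $\epsilon$, and an induction on $i$ showing $|\hat{S}_i - S_i| \leq \bigl((1+\epsilon)^{i-1} - 1\bigr)\sum_{j\leq i} |u_j|$ yields the first-order bound $|\bsfp(u) - \bs(u)| \leq (n-1)\epsilon \sum_{i=1}^n |u_i|$ (the higher-order terms in $\epsilon$ being negligible). Substituting $\sum_i |u_i| \leq n \cdot \max\{|L|,U\}$ and using $(n-1)\epsilon \leq n/2^{k+1}$ gives the stated $n^2/2^{k+1} \cdot \max\{|L|, U\}$ bound.

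For parts (2) and (3), I would invoke the corresponding results of Higham \cite{h93}. The pairwise summation bound follows from an induction on the depth of the recursion tree, which is $\lceil \log_2 n \rceil$; errors at each level compose multiplicatively with factor $(1+\epsilon)$, giving a total first-order error of $\log_2 n \cdot \epsilon \sum |u_i|$ (the $n<2^k$ assumption ensures $n\epsilon$ is bounded so that higher-order terms stay absorbed in the $O(\cdot)$). Kahan summation is more subtle: its compensation variable $c$ cancels the leading-order rounding error at each step, and Higham's analysis shows the overall error is bounded by $(2\epsilon + O(n\epsilon^2))\sum |u_i|$. Under the hypothesis $n < 2^k$, the term $O(n\epsilon^2)$ is dominated by $\epsilon$, yielding the $O(n/2^k) \cdot \max\{|L|, U\}$ bound after multiplying by $\sum |u_i| \leq n \cdot \max\{|L|, U\}$.

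The main obstacle is that these bounds must hold for \emph{every} $u \in T_{[L,U]}^n$, including adversarial inputs that maximize rounding, so I should state them as worst-case guarantees and make sure the hypotheses $n < 2^k$ in parts (2)--(3) are correctly used to control the higher-order terms; since the theorem uses $O(\cdot)$ notation in parts (2) and (3), the constants from Higham's analysis can be absorbed into the $O$. The iterative case in part (1) is stated with an explicit constant, so I should verify that $n-1 \leq n$ combined with $\epsilon = 2^{-(k+1)}$ gives exactly $n^2 / 2^{k+1}$ without any extra slack. Beyond these bookkeeping points, the proof is essentially a direct invocation of the classical numerical analysis bounds combined with one elementary inequality.
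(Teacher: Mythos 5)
Your proposal is correct and follows essentially the same route as the paper's proof: an induction on the prefix sums for the iterative case (yielding the $(1+\epsilon)^{n-1}-1$ coefficient, linearized to first order), citations to Higham's bounds for pairwise and Kahan summation, and the uniform substitution $\sum_i |u_i| \leq n\cdot\max\{|L|,U\}$. The bookkeeping caveat you raise about the explicit constant in part (1) is also present in the paper, which likewise only asserts $C_{n,k}\approx (n-1)/2^{k+1}$ rather than proving a clean inequality, so there is no substantive divergence.
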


\begin{proof}
We show each bound for the corresponding sum function separately. For all of the proofs below, let $U\geq |L|$; a symmetric result follows for $U<|L|$, which allows us to get the bounds stated in the theorem.

\begin{enumerate}
    \item 
    We first show that 
    \[
        |\bsfp(u) - \bs(u)| \leq C_{n, k} \cdot \sum_{i=1}^n |u_i| \leq C_{n, k} \cdot n U,
    \]
    where
    \[
        C_{n, k} = \Big( 1 + \dfrac{1}{2^{\mantlen+1}} \Big)^{n-1} - 1 \approx \dfrac{n-1}{2^{\mantlen+1}}
    \]
    for $n<2^k$. We proceed by induction on $n$. Hence assume that the proposition holds for $n-1$, and consider the addition step $\bsfp(u) = \bsfp(u_1, \ldots, u_{n-1}) \oplus u_n$. Then,
    \[
        |\bsfp(u) - \bs(u)| \leq |\bsfp(u) - (\bsfp(u_1, \ldots, u_{n-1}) + u_n)| +
    \]
    \[
        |\bsfp(u_1, \ldots, u_{n-1} + u_n) - \bs(u_1, \ldots, u_{n-1}) + u_n|.
    \]
    Let $s^* = \bsfp(u_1, \ldots, u_{n-1}), s = \bs(u_1, \ldots, u_{n-1})$. Then,
    $$\begin{aligned}
    \MoveEqLeft
        |\bsfp(u) - \bs(u)|\\
        &= |\br(s^*+u_n) - (s^*+u_n)| + |s^*-s|\\
        &\leq \dfrac{|s^*+u_n|}{2^{\mantlen+1}} + |s^*-s| \\
        &\leq \dfrac{|s^*|+|u_n|}{2^{\mantlen+1}} + \Big( 1 + \dfrac{1}{2^{\mantlen + 1}} \Big) \cdot |s^*-s| \\
        &\leq \dfrac{\sum_{i=1}^n |u_i|}{2^{\mantlen+1}} + \Big( 1 + \dfrac{1}{2^{\mantlen+1}} \Big) \cdot |s^*-s| \\
        &\leq \Big( \dfrac{1}{2^{\mantlen+1}} + \Big(1 + \dfrac{1}{2^{\mantlen+1}} \Big) C_{n-1} \Big) \cdot \sum_{i=1}^n |u_i|.
    \end{aligned}
    $$
    Since 
    \[
        C_n = \dfrac{1}{2^{\mantlen+1}} + \Big( 1 + \dfrac{1}{2^{\mantlen+1}} \Big) C_{n-1},
    \]
    the induction step concludes. We remark that a similar accuracy bound is derived in \cite{h93} and \cite{kahan1965pracniques}.
    
    \item We employ the accuracy bound shown in \cite{h93}. Let $t = 1/2^{\mantlen+1}$ (sometimes referred to as ``unit roundoff'' or ``machine epsilon). Then, \cite[Eq. 3.6]{h93} shows that
    \[
        |\bsfp(u) - \bs(u)| \leq \dfrac{\log_2 n t}{1-\log_2 n t} \sum_{i=1}^n |v_i|.
    \]
    In our setting, we have an upper bound on $\sum_{i=1}^n |v_i|$, and hence we can write
    \[
        |\bsfp(u) - \bs(u)| \leq \dfrac{\log_2 n t}{1-\log_2 n t} \cdot nU.
    \]

    \item We employ the accuracy bound shown in \cite{h93}. Let $t = 1/2^{\mantlen + 1}$. Then, \cite[Eq. 3.11]{h93} shows that
    \[
        |\bsfp(u) - \bs(u)| \leq (2t +O(nt^2)) \sum_{i=1}^n |v_i|
    \]
    In our setting, we have an upper bound on $\sum_{i=1}^n |v_i|$ (namely, $nU$), and hence we can write
    \[
        |\bsfp(u) - \bs(u)| \leq (2t +O(nt^2)) \cdot nU.
    \]

\end{enumerate}
\end{proof}

\begin{theorem}[Sensitivity upper bounds]
\label{thrm:sens-upper-bounds}
    Let $\bsfp_{L, U,n}: T^n_{[L, U]} \rightarrow T$ be a bounded sum function on the normal $(k, \ell)$-bit floats of type $T$. Additionally, let $-U\leq L < U$. Then,
    \begin{enumerate}
        \item If $\bsfp$ corresponds to iterative summation (Definition~\ref{def:bsiterative}), then
        \[
            \sco \bsfp_{L, U, n} \leq (U-L) + \Big( \dfrac{n^2}{2^k} \Big)\cdot \max\{|L|,U\}.
        \]
        \item If $\bsfp$ corresponds to pairwise summation (Definition~\ref{def:bspairwise}) and $n<2^k$, then
        \[
            \sco \bsfp_{L, U, n} \leq (U-L) + O\left(\dfrac{n \log n}{2^k}\right) \cdot \max\{|L|,U\}.
        \]
        \item If $\bsfp$ corresponds to Kahan summation (Definition~\ref{def:bskahan}) and $n<2^k$, then
        \[
            \sco \bsfp_{L, U, n} \leq (U-L) + O\Big(\dfrac{n}{2^k}\Big) \cdot \max\{|L|,U\}.
        \]
    \end{enumerate}

\end{theorem}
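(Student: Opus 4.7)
The plan is to derive each of the three sensitivity bounds as a direct application of Lemma~\ref{lemma:sa} (the triangle-inequality reduction from sensitivity to accuracy), combined with the idealized sensitivity from Theorem~\ref{thrm:reals-sens} and the appropriate accuracy bound from Theorem~\ref{thrm:acc-upper-bounds}. All three cases follow the same three-line argument with the accuracy bound swapped out.

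First I would fix any two datasets $u, u' \in T^n_{[L,U]}$ with $u \simeq_{CO} u'$. A crucial point here is that $\dco$ only compares datasets of equal length $n$, so both inputs have the same $n$, which is exactly what the accuracy bounds of Theorem~\ref{thrm:acc-upper-bounds} are parameterized by. Applying Lemma~\ref{lemma:sa} gives
$$|\bsfp_{L,U,n}(u) - \bsfp_{L,U,n}(u')| \leq |\bs_{L,U,n}(u) - \bs_{L,U,n}(u')| + |\bsfp_{L,U,n}(u) - \bs_{L,U,n}(u)| + |\bsfp_{L,U,n}(u') - \bs_{L,U,n}(u')|.$$
The first right-hand-side term is bounded by the idealized sensitivity $\sco \bs_{L,U,n} = U - L$ (Theorem~\ref{thrm:reals-sens}, part 2).

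Next I would bound the two accuracy terms. For each of the three summation methods, Theorem~\ref{thrm:acc-upper-bounds} provides a uniform bound on $|\bsfp_{L,U,n}(v) - \bs_{L,U,n}(v)|$ for any $v \in T^n_{[L,U]}$, so each of the two accuracy terms is bounded separately by the same quantity. For iterative summation this yields $2 \cdot \frac{n^2}{2^{k+1}} \cdot \max\{|L|,U\} = \frac{n^2}{2^k} \cdot \max\{|L|,U\}$, matching the claimed expression exactly. For pairwise and Kahan summation, the factor of $2$ is absorbed into the big-$O$ constant, yielding $O(n\log n / 2^k)\cdot \max\{|L|,U\}$ and $O(n/2^k)\cdot \max\{|L|,U\}$ respectively, under the assumption $n < 2^k$ inherited from Theorem~\ref{thrm:acc-upper-bounds}. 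Taking the supremum over $u \simeq_{CO} u'$ then gives the theorem.

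There is no significant mathematical obstacle: the main ingredients (accuracy analysis for each summation method, and the reduction from sensitivity to accuracy) are already in place. The only subtlety to flag is that the accuracy bounds of Theorem~\ref{thrm:acc-upper-bounds} were proved under the implicit assumption that no intermediate sum overflows to $\pm\texttt{inf}$, so the same assumption is needed here; in the known-$n$ (bounded DP) setting this can be ensured by the overflow check described in Section~\ref{sec:61floats}, which is the natural companion to this theorem.
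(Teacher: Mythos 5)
Your proof is correct and is exactly the route the paper takes: the paper's own proof is a one-line citation to Lemma~\ref{lemma:sa} and the corresponding part of Theorem~\ref{thrm:acc-upper-bounds}, which is precisely the argument you spell out (including the arithmetic that $2\cdot n^2/2^{k+1} = n^2/2^k$). Your remark about the implicit no-overflow assumption and the companion check from Section~\ref{sec:61floats} is a reasonable and correct observation, even though the paper does not make it explicit at this point.
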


\begin{proof}
    The proof of each part follows from \cref{lemma:sa} and the corresponding part of \cref{thrm:acc-upper-bounds}.
\end{proof}

In the cases where $n$ is unknown, we can still make use of the upper bounds on the sensitivity that we showed in \cref{thrm:sens-upper-bounds} by introducing a truncation method. To that end, we introduce the definition of the \textit{truncated bounded sum} method.

\begin{method}[Truncated Bounded Sum]\label{method:trunc}

Let $u$ be a dataset of integers of type $T_{[L, U]}$, $U$ an upper bound on the largest number in the dataset, $L$ a lower bound on the smallest number in the dataset, $n_{\max}$ a point at which to truncate the summation, and $\bsfp_{L,U}$ any summation method. Then, the \textit{truncated bounded sum} $\bs^{**}_{L,U,n_{\max}}:\Vect(T_{[L,U]})\to T$ returns
\begin{itemize}
    \item $\bsfp_{L, U}(u)$ if $\len(u)\leq n_{\max}$.
    \item $\bsfp_{L, U}[u_1,\ldots, u_{n_{\max}}]$ if $\len(u) > n_{\max}$.
\end{itemize}
\end{method}

\begin{theorem}[Idealized Sensitivity of Truncated Bounded Sum]
\label{thrm:ideal-trunc-sens}

When the summation method $\bs^{**}_{L,U,n_{\max}}:\Vect(T_{[L,U]})\to T$ corresponds to the idealized bounded sum $\bs_{L,U}$ (Definition~\ref{defn:bounded-sum}), the truncated bounded sum $\bs^{**}_{L,U,n_{\max}}:\Vect(T_{[L,U]})\to T$ as defined in Method~\ref{method:trunc} has sensitivity $\sid \bs^{**}_{L,U,n_{\max}} = \max\{|L|,U,U-L\}$.

\end{theorem}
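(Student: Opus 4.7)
The plan is to analyze the sensitivity via a case split on the lengths of neighboring inputs. Given $u \simeq_\mathit{ID} v$, either $u = v$ (trivial) or, without loss of generality, $v$ is obtained from $u$ by inserting some element $z \in [L, U]$ at some position $i$; let $m = \len(u)$ so that $\len(v) = m + 1$. Write $u^\dagger$ and $v^\dagger$ for the possibly-truncated prefixes used by Method~\ref{method:trunc}, so that $\bs^{**}_{L,U,n_{\max}}(u) = \bs_{L,U}(u^\dagger)$ and $\bs^{**}_{L,U,n_{\max}}(v) = \bs_{L,U}(v^\dagger)$.

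For the upper bound, I would split into two cases. First, if $m + 1 \leq n_{\max}$ then neither vector is truncated, so $u^\dagger = u$ and $v^\dagger = v$, and a direct calculation gives $|\bs^{**}(v) - \bs^{**}(u)| = |z| \leq \max\{|L|, U\}$. Otherwise $m \geq n_{\max}$, in which case both $u^\dagger$ and $v^\dagger$ have length exactly $n_{\max}$. If $i > n_{\max}$ the insertion is discarded by truncation and $v^\dagger = u^\dagger$, giving difference $0$; if instead $i \leq n_{\max}$, a telescoping calculation shows $\bs(v^\dagger) - \bs(u^\dagger) = z - u_{n_{\max}}$, whose absolute value is at most $U - L$ since both terms lie in $[L,U]$. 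Combining the cases yields the upper bound $\max\{|L|, U, U - L\}$.

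For the matching lower bound, two explicit constructions suffice. Taking $u = [\,]$ and $v = [w]$ with $w \in \{L, U\}$ chosen to maximize $|w|$ realises the value $\max\{|L|, U\}$. To realise $U - L$, take $u$ to be $n_{\max}$ copies of $L$ and $v = [U, L, L, \ldots, L]$ of length $n_{\max} + 1$ (inserting $U$ at position $1$); here $v^\dagger = [U, L, \ldots, L]$ has length $n_{\max}$ and the difference in sums equals $U - L$.

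The proof is essentially careful bookkeeping of indices, and I do not foresee any real technical obstacle. The one conceptual point worth emphasising is that truncation interacts gracefully with the insert/delete relation on $\Vect(T_{[L,U]})$: an insertion beyond position $n_{\max}$ is erased entirely, and an insertion within the first $n_{\max}$ positions collapses to a single bounded swap of one clamped value for another. This is precisely the stability property used informally around Equation~(\ref{eqn:truncation}) and is what makes truncation an effective tool for controlling sensitivity in the unknown-$n$ setting.
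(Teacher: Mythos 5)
Your proposal is correct and follows essentially the same structure as the paper's proof: a case split on whether the truncation point is reached, bounding the difference by $\max\{|L|,U\}$ when it is not and by $U-L$ when it is, plus explicit lower-bound witnesses. One place where your argument is actually tighter than the paper's: in the truncated case the paper appeals to the Hamming-distance bound on pairs ``$u\simeq_\mathit{Ham} u'$ in $T_{[L,U]}^{n_{\max}}$,'' but the two truncated prefixes generically differ in \emph{all} coordinates from the insertion point onward, so their Hamming distance can be as large as $n_{\max}$ (it is their change-one distance that is $1$). Your telescoping calculation $\bs(v^\dagger)-\bs(u^\dagger) = z - u_{n_{\max}}$ sidesteps that entirely and is the cleaner way to see why the bound is $U-L$ rather than $n_{\max}(U-L)$.
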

\begin{proof}
    Consider all pairs of neighboring datasets $u\nid u'$. There are two possible cases: (1) Truncation does not occur for either dataset. (2) Truncation occurs for at least one dataset. 
    \begin{enumerate}
        \item In case 1, we consider the standard bounded sum over a subset of all pairs of datasets $u\nid u'$, so the idealized sensitivity can be bounded by $\max\{|L|,U\}$ due to \cref{thrm:reals-sens}.
        \item In case 2, we consider the standard bounded sum over all pairs of datasets $u\nham u'$, where $u,u'\in T^{n_{\max}}_{[L,U]}$, so the idealized sensitivity can be bounded by $U-L$ due to \cref{thrm:reals-sens}.
    \end{enumerate}
    Therefore, $\sid \bs^{**}_{L,U,n_{\max}}\leq \max\{|L|,U,U-L\}$. We next prove that $\sid \bs^{**}_{L,U,n_{\max}}\geq \max\{|L|,U,U-L\}$. Consider the datasets $u = [U_1,\ldots,U_{n_{\max}},L]$ and $u'= [U_1,\ldots,U_{n_{\max}-1},L]$. We see that $u\nid u'$. We also see that $\bs^{**}_{L,U,n_{\max}} = U-L$, and $\bs^{**}_{L,U,n_{\max}+1} = \max\{|L|,U\}$.
    
    Therefore, $\sid \bs^{**}_{L,U,n_{\max}} = \max\{|L|,U,U-L\}$.
\end{proof}

\begin{corollary}
\label{cor:trunc-sens}
Let $\bs^{**}_{L, U, n_{\max}}: \Vect(T_{[L, U]}) \rightarrow T$ be the bounded sum function implemented with Method~\ref{method:trunc} on normal $(k, \ell)$ floats. Then,
\begin{enumerate}
    \item If Method~\ref{method:trunc} is implemented with iterative summation, then
        \[
        \begin{aligned}
            \sid \bs^{**}_{L, U, n_{\max}}
            &\leq \Big( 1 + \dfrac{n_{\max}^2}{2^k} \Big)\cdot \max\{|L|, U, U-L\} \\
            &\leq \Big( 2 + \dfrac{n_{\max}^2}{2^k} \Big)\cdot \max\{|L|, U\}.
        \end{aligned}
        \]
        
    \item If Method~\ref{method:trunc} is implemented with pairwise summation, then
        \[
        \begin{aligned}
            \sid \bs^{**}_{L, U, n_{\max}}
            &\leq \Big( 1 + O\left(\dfrac{n_{\max} \log n}{2^k}\right) \Big)\cdot \max\{|L|, U, U-L\} \\
            &\leq \Big( 2 + O\left(\dfrac{n_{\max} \log n}{2^k}\right) \Big)\cdot \max\{|L|, U\}.
        \end{aligned}
        \]
        
    \item If Method~\ref{method:trunc} is implemented with Kahan summation, then
        \[
        \begin{aligned}
            \sid \bsfp_{L, U, n_{\max}}
            &\leq \Big( 1 + O\Big(\dfrac{n_{\max}}{2^k}\Big) \Big)\cdot \max\{|L|, U, U-L\} \\
            &\leq \Big( 2 + O\Big(\dfrac{n_{\max}}{2^k}\Big) \Big)\cdot \max\{|L|, U\}.
        \end{aligned}
        \]
        
\end{enumerate}
\end{corollary}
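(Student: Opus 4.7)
The plan is to combine three ingredients already established in the paper: Lemma~\ref{lemma:sa} (the triangle-inequality reduction from sensitivity to accuracy), Theorem~\ref{thrm:ideal-trunc-sens} (the idealized sensitivity of the truncated bounded sum), and the three accuracy bounds in Theorem~\ref{thrm:acc-upper-bounds} (for iterative, pairwise, and Kahan summation, respectively). The overall target is an upper bound of the shape ``idealized sensitivity $+$ twice the worst-case accuracy.''

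First I would apply Lemma~\ref{lemma:sa} to $\bs^{**}_{L,U,n_{\max}}$, viewing it as the ``implemented'' function and letting its idealized counterpart (Method~\ref{method:trunc} instantiated with real-valued summation) play the role of $\bs$. This gives
\begin{equation*}
\sid \bs^{**}_{L,U,n_{\max}} \;\leq\; \sid \bs^{**,\mathrm{ideal}}_{L,U,n_{\max}} \;+\; 2 \cdot \sup_{u\in\Vect(T_{[L,U]})} \bigl|\bs^{**}_{L,U,n_{\max}}(u) - \bs^{**,\mathrm{ideal}}_{L,U,n_{\max}}(u)\bigr|.
\end{equation*}
By Theorem~\ref{thrm:ideal-trunc-sens}, the first term equals $\max\{|L|, U, U-L\}$.

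Next I would bound the accuracy term. The key observation is that Method~\ref{method:trunc} simply feeds a prefix of length $n' \leq n_{\max}$ into the underlying summation $\bsfp$; the idealized truncated sum feeds the same prefix into $\bs$. Hence the worst-case accuracy of $\bs^{**}$ relative to $\bs^{**,\mathrm{ideal}}$ is upper bounded by $\max_{n' \leq n_{\max}} \sup_{u \in T^{n'}_{[L,U]}} |\bsfp_{L,U,n'}(u) - \bs_{L,U,n'}(u)|$. For each of the three summation methods, the right-hand side in Theorem~\ref{thrm:acc-upper-bounds} is monotonically non-decreasing in $n$ (the factors $n^2$, $n\log n$, and $n$ are all monotone), so the supremum is achieved at $n' = n_{\max}$, yielding the stated bound with $n$ replaced by $n_{\max}$. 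Plugging this in, doubling, and using $\max\{|L|, U\} \leq \max\{|L|, U, U-L\}$ to factor out $\max\{|L|, U, U-L\}$ from both terms gives the first inequality in each of parts (1)--(3).

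For the second inequality in each part, I would invoke the elementary observation $U - L \leq U + |L| \leq 2\max\{|L|, U\}$, which implies $\max\{|L|, U, U-L\} \leq 2\max\{|L|, U\}$. Substituting into the first inequality and re-expressing in terms of $\max\{|L|, U\}$ gives the $(2 + \ldots)\cdot \max\{|L|, U\}$ form. I do not anticipate any real obstacle: the entire argument is a mechanical assembly of prior results, and the only point requiring mild care is verifying the monotonicity-in-$n$ of the accuracy bounds so that the supremum over variable input lengths collapses to the $n = n_{\max}$ case.
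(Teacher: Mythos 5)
Your decomposition --- Lemma~\ref{lemma:sa} $\Rightarrow$ idealized sensitivity via Theorem~\ref{thrm:ideal-trunc-sens} plus twice the accuracy via Theorem~\ref{thrm:acc-upper-bounds}, with the monotonicity-in-$n$ observation to handle variable-length inputs after truncation --- is exactly the intended route, and the derivation of the first inequality in each part is correct.

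There is one small slip in how you derive the second inequality. You propose to substitute $\max\{|L|,U,U-L\} \leq 2\max\{|L|,U\}$ into the already-combined bound $(1 + n_{\max}^2/2^k)\cdot\max\{|L|,U,U-L\}$. That substitution yields $(2 + 2n_{\max}^2/2^k)\cdot\max\{|L|,U\}$, which has an extra factor of 2 on the accuracy term relative to the stated $(2 + n_{\max}^2/2^k)\cdot\max\{|L|,U\}$. (The literal chain
\begin{equation*}
\Big(1 + \tfrac{n_{\max}^2}{2^k}\Big)\max\{|L|,U,U-L\} \;\leq\; \Big(2 + \tfrac{n_{\max}^2}{2^k}\Big)\max\{|L|,U\}
\end{equation*}
is in fact false when $L=-U$, so it cannot be proved by this route.) To get the stated constant, derive both inequalities directly from the intermediate bound
\begin{equation*}
\sid \bs^{**}_{L,U,n_{\max}} \;\leq\; \max\{|L|,U,U-L\} \;+\; \tfrac{n_{\max}^2}{2^k}\cdot\max\{|L|,U\},
\end{equation*}
which is what Lemma~\ref{lemma:sa}, Theorem~\ref{thrm:ideal-trunc-sens}, and Theorem~\ref{thrm:acc-upper-bounds} give you before any rebalancing: the first displayed bound follows by enlarging $\max\{|L|,U\}$ to $\max\{|L|,U,U-L\}$ in the second term, and the second displayed bound follows by replacing $\max\{|L|,U,U-L\}$ with $2\max\{|L|,U\}$ in the first term only. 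For parts 2 and 3 the $O(\cdot)$ absorbs the constant so your version is fine there; the discrepancy only matters in part 1. Otherwise your proof is complete and matches the paper's implicit argument.
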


\begin{remark}

We note that, where $U,L\geq 0$ or $U,L\leq 0$, we have $U-L\leq \max\{|L|,U\}$, which gives us the following improvements on the sensitivity bounds provided in \cref{cor:trunc-sens}.

\begin{enumerate}
    \item If Method~\ref{method:trunc} is implemented with iterative summation, then
        \[
            \sid \bs^{**}_{L, U, n_{\max}}
            \leq \Big( 1 + \dfrac{n_{\max}^2}{2^k} \Big)\cdot \max\{|L|, U\}.
        \]
        
    \item If Method~\ref{method:trunc} is implemented with pairwise summation, then
        \[
            \sid \bs^{**}_{L, U, n_{\max}}
            \leq \Big( 1 + O\left(\dfrac{n_{\max} \log n}{2^k}\right) \Big)\cdot \max\{|L|, U\}.
        \]

    \item If Method~\ref{method:trunc} is implemented with Kahan summation, then
        \[
            \sid \bsfp_{L, U, n_{\max}}
            \leq \Big( 1 + O\Big(\dfrac{n_{\max}}{2^k}\Big) \Big)\cdot \max\{|L|, U\}.
        \]

\end{enumerate}

\end{remark}

\color{black}

Note that, although \cref{cor:trunc-sens} is stated in terms of ordered sensitivities, pre-processing datasets with RP (\cref{method:sat-add-RP-ints}) will yield the same sensitivity bounds with respect to unordered distance metrics. Additionally, we can also employ a truncation strategy in the known $n$ case if we want to establish an upper bound on the sensitivity of the implemented bounded sum function.

Data analysts should carefully choose $n_{\max}$. Because the implemented sensitivity grows with the length of the dataset, a larger $n_{\max}$ means that the result will be noisier. However, if $n_{\max}\ll n$, causing many values of the dataset to be truncated, then the statistic may not be meaningful to the data analyst. Often, setting $n_{\max}$ to an unrealistically large value can ensure that the sum is computed over the entire dataset while  offering a minimal blow-up in the implemented sensitivity; for example, when working with 64-bit floats, the accuracy term will be less than 1 times the idealized sensitivity for $n_{\max} < 67~\text{million}$ (meaning that the majority of the implemented sensitivity will be attributable to the idealized sensitivity). In some cases, a noisy count can be helpful for choosing $n_{\max}$.

\subsubsection{Shifting Bounds}
\label{sec:shift-bounds}

We note that, when working in the bounded DP (known $n$) setting with $|U|,|L| \gg U-L$, the sensitivity provided in \cref{thrm:sens-upper-bounds} may be much larger than the idealized sensitivity $\sham \bs_{L,U,n} = \sco \bs_{L,U,n} = U-L$. (To reduce notation, the following discussion considers the case where $U,L>0$; a symmetric method can be used when $U,L<0$.) We can achieve a sensitivity much closer to the idealized sensitivity by subtracting $L$ from every element of the dataset so that all elements lie in the interval $[L'=0,U'=U-L]$. We can then apply our solutions from \cref{sec:accuracy}, using 0 and $U'$ in the sensitivity bound, and adding $L\cdot n$ at the end as post-processing.  That is, we convert a method $\bsfp$ from \cref{sec:accuracy} into a method $\bs^{**}$ with sensitivity depending on $U-L$ as follows:
$$\bs^{**}_{L',U',n}(u) = \bs^*_{L,U,n}(u_1-L,u_2-L,\ldots,u_n-L)+L\cdot n,$$
where a noisy version of the $\bsfp_{L,U,n}$ term is computed first and $L\cdot n$ is added as a post-processing step.

Per \cref{cor:trunc-sens}, then, this results in a sensitivity of
\[
        \begin{aligned}
            \sid \bs^{**}_{L', U', n_{\max}}
            &\leq \Big( 1 + \dfrac{n_{\max}^2}{2^k} \Big)\cdot \max\{|L'|, U', U'-L'\} \\
            &\leq \Big( 1 + \dfrac{n_{\max}^2}{2^k} \Big)\cdot (U-L).        \end{aligned}
        \]

\subsection{Split Summation with Random Permutation}\label{sec:splitsumrp}

In this section, we present our method for a bounded sum function on floats that recovers sensitivity within a constant factor of the idealized sensitivity. While the attack presented in \cref{sec:accum-round-floats-ii} shows that the application of RP alone does not solve issues related to floating point arithmetic, we will show that we can achieve a good sensitivity for bounded sum by combining RP with our previous method of split summation and by changing the rounding mode from banker's rounding to round toward 0 (see Definition~\ref{defn:roundto0}).

Round toward 0 allows us to achieve tighter bounds due to the property that, in many cases, the rounding never leads to an increase in the magnitude of the sum. We use this to argue that the sums of adjacent datasets rarely diverge. On the other hand -- as we saw in the counterexamples in Sections \ref{sec:round-attack-floats} and \ref{sec:accum-round-floats} -- banker's rounding can make the magnitude of the sum increase much more quickly or much less quickly than expected, causing similar datasets to sometimes produce surprisingly dissimilar sums.

\begin{method}[Split Summation on Floats]
\label{method:fsplit-sum}

Let $U$ denote the upper bound, $L$ the lower bound, and $u$ the input dataset with elements of type $T$ with round toward zero. We define \texttt{fsplit\_sum} as follows:

\begin{lstlisting}[language = Python, escapechar=|,frame=single]
def fsplit_sum(u)
    P = 0
    N = 0
    for elt in s:
        if elt < 0:
            N = N + elt
        else:
            P = P + elt
    return P |$\oplus$| N 
\end{lstlisting}

\end{method}

There are two important features to note in the algorithm above.
\begin{enumerate}
    \item The iterative sums $N$ and $P$ are computed using round toward zero as defined in Definition~\ref{defn:roundto0}.
    \item The returned sum $P\oplus N$ is computed using standard banker's rounding as defined in Definition~\ref{defn:bankers-rounding}. (This is the only use of banker's rounding in the algorithm.)
\end{enumerate}

\begin{remark}[Avoiding Overflow]
Let \texttt{max} be the largest float $<\texttt{inf}$, and let $\texttt{min}$ be the smallest float $>-\texttt{inf}$. To prevent overflow from occurring (which is necessary, as described in \cref{sec:61floats}), we do the following. If $P=\texttt{inf}$, we set $P = \texttt{max}$, and if $N = -\texttt{inf}$, we set $N = \texttt{min}$. Issues related to non-associativity will not arise because we are using split summation and issues of non-associativity do not affect split summation (see \cref{sec:splitsum}).

\end{remark}

We now state a theorem about the sensitivity of the function described in \cref{method:fsplit-sum}. 

\begin{theorem}[$\ssym$ and $\sco$ of \cref{method:fsplit-sum}]
\label{thrm:rp-split-summation}
    Let $\bs^*_{L, U}: \Vect(T) \rightarrow T$ be the bounded sum function with split summation and round toward zero (\cref{method:fsplit-sum}) on the normal $(\mantlen,\explen)$-bit floats of type $T$, and similarly for $\bs^*_{L, U, n}: T^n \rightarrow T$. Define $f_{L,U} = \bsfp_{L,U} \circ \rp$ and $f_{L,U,n} = \bsfp_{L,U,n}\circ \rp$. Then,
    \begin{enumerate}
        \item (Unknown $n$.) Let $b_U = \lfloor \log_2 U\rfloor, b_L = \lfloor \log_2|L|\rfloor$.
        Then, 
        $$
        \begin{aligned}
        \ssym f_{L, U} &\leq \max\{2^{b_U} + U, 2^{b_L} + |L|\} +\max\{2^{b_U},2^{b_L} \}\\
        &\leq 3\cdot\max\{U,|L|\}.
        \end{aligned}
        $$
        
        \item (Known $n$.) Let $c_U = \lfloor \log_2 U\rfloor$, $b_U = \min\{c_U, \lfloor \log_2(nU)\rfloor - k\}$,
        $c_L = \lfloor \log_2 |L|\rfloor$, $b_L = \min\{c_L, \lfloor \log_2(|nL|\rfloor - k\}$.
        
        Then,
        $$
        \begin{aligned}
        \MoveEqLeft
        \sco f_{L, U, n}\\
        &\leq 2^{b_U} + U + 2^{b_L} + |L|\\
        &\quad +\max\{\min\{2^{b_U},2^{c_U+1}\},\min\{2^{b_L}, 2^{c_L+1}\}\}\\
        &\leq 5\cdot\max\{U,|L|\}.
        \end{aligned}
        $$
    \end{enumerate}
\end{theorem}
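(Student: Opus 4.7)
The plan is to proceed in three stages. First, I would invoke the random permutation coupling to reduce both statements from unordered to ordered dataset adjacency. By Theorem~\ref{thrm:couple-unknown-n} and Corollary~\ref{cor:coupling-known-n}, the composed function $f = \bsfp \circ \rp$ inherits a deterministic sensitivity bound under $\dham$ (resp.\ $\did$) as an upper bound on its randomized sensitivity under $\dco$ (resp.\ $\dsym$). So from here on I only need to analyze split summation with RTZ (Method~\ref{method:fsplit-sum}) under the ordered metrics.

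Second, I would decompose the analysis by sign. In split summation, $P$ accumulates only the non-negative entries via RTZ, $N$ accumulates only the non-positive entries via RTZ, and the final output is $P \oplus N$ (one banker-rounded addition, \cref{defn:bround-add}). For $\did$ adjacency a single insertion/deletion affects exactly one of $P, N$; for $\dham$ adjacency, changing a sign-preserving element affects one accumulator, while a sign-flipping change affects both (a deletion from one, an insertion to the other). Thus $|f(u) - f(u')|$ is bounded by $|\Delta P| + |\Delta N|$ plus the banker's rounding error at the final $\oplus$, where the latter is at most $\max\{\mathrm{ulp}(P), \mathrm{ulp}(N)\}/2$, which (under the worst case of magnitudes bounded by $2^{c_U+1}$ or $2^{c_L+1}$) gives the trailing $\max\{\min\{2^{b_U}, 2^{c_U+1}\}, \min\{2^{b_L}, 2^{c_L+1}\}\}$ in the known-$n$ bound, and $\max\{2^{b_U}, 2^{b_L}\}$ in the unknown-$n$ case.

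The technical heart is a per-accumulator lemma: for iterative RTZ summation on a dataset of non-negative values each in $[0, U]$, inserting or deleting a single element of value $x \in [0, U]$ changes the final computed sum by at most $x + 2^{b_U} \leq U + 2^{b_U}$, where $b_U$ is as defined in the theorem statement. I would prove this by induction on the intermediate sums, coupling the two computations after aligning them past the insertion point. Two structural properties of RTZ on same-sign inputs are crucial here: (i) monotonicity, both in the input (adding a non-negative entry never decreases the running sum) and in the accumulator (if $a \leq b$ then $\rtz{a + v} \leq \rtz{b + v}$), which prevents the two coupled accumulators from crossing; and (ii) one-sided error, $0 \leq s + v - \rtz{s + v} \leq \mathrm{ulp}(\rtz{s+v})$, which lets me bound error growth per step by an ulp that is itself bounded above: in the unknown-$n$ case the gap between the two accumulators never exceeds $U + 2^{c_U}$, and in the known-$n$ case the accumulator magnitude is at most $nU$, so ulps are at most $2^{\lfloor \log_2(nU)\rfloor - k}$.

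The main obstacle will be the induction in step three: because insertion shifts subsequent indices, I need to set up the coupling carefully (processing the inserted element first, then matching index-by-index) and show that the ``gap'' $s'_i - s_i$ between the two RTZ accumulators stays non-negative and bounded by $U + 2^{b_U}$ across every subsequent step without compounding. Monotonicity prevents sign changes in the gap; one-sided error prevents unbounded growth, since any rounding-induced decrease of the gap is by at most one ulp of one accumulator and any rounding-induced increase is likewise one ulp. Summing the $P$ and $N$ contributions and the final banker's rounding error, and observing $2^{b_U} + U + 2^{b_L} + |L| \leq 4\max\{U, |L|\}$ and $\max\{2^{b_U}, 2^{b_L}\} \leq \max\{U,|L|\}$, yields the stated $\leq 3\max\{U,|L|\}$ and $\leq 5\max\{U,|L|\}$ closed-form upper bounds.
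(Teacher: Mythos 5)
Your overall structure matches the paper's: reduce to ordered adjacency via the RP coupling (Theorem~\ref{thrm:couple-unknown-n}, Corollary~\ref{cor:coupling-known-n}), split the analysis by sign into the $P$ and $N$ accumulators, bound a per-accumulator sensitivity, and add a final banker's-rounding error for the $P \oplus N$ step. The sign-flip case analysis for $\dham$ versus $\did$ and the shape of the closed-form constants are also right. The problem is in the technical heart, the per-accumulator lemma.

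You say the coupled gap $s'_i - s_i$ ``stays bounded by $U + 2^{b_U}$ across every subsequent step without compounding'' because ``any rounding-induced increase is likewise one ulp.'' A per-step, one-ulp-per-step bound does not give the claimed conclusion: summing ulps over $n$ steps yields a Wilkinson-type $O(n^2 U / 2^k)$ bound, which is exactly what iterative summation with banker's rounding already gives (Theorem~\ref{thrm:sens-upper-bounds}) and is far larger than the $2^{b_U} \leq U$ term you need. Monotonicity keeps the gap non-negative but says nothing about its size; one-sidedness per se only controls each individual step. The missing ingredient is the paper's Lemma~\ref{lemma:dist-rarely-grows}: as long as both accumulators stay in the same dyadic interval $[2^m, 2^{m+1})$, they are both integer multiples of the same $\ulp$, RTZ commutes with that rescaling, and the gap does not change at all. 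Growth happens only at boundary crossings (Lemma~\ref{lemma:round-at-boundary-crossing}), by at most one ulp of the smaller accumulator, and since there are only $O(\log)$ crossings and the per-crossing growth doubles each time, the total telescopes geometrically to $\sum_{m} 2^m \lesssim 2^{b_U}$ (Lemma~\ref{lemma:rpf-sens-0-known-n}). Without this interval-invariance plus telescoping argument you cannot get from ``bounded per step'' to the constant-factor bound the theorem asserts. A secondary, smaller issue: the final $\oplus$ error must account for rounding on \emph{both} datasets (one can round down, the other up), which is why the paper's trailing term is $\max\{\cdot\}$ rather than $\max\{\cdot\}/2$.
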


Recall, from Theorem~\ref{thrm:couple-unknown-n} for unknown dataset length and Corollary~\ref{cor:coupling-known-n} for known dataset length, that RP allows us to reason about the ordered distance between datasets, even when we are only provided with a guarantee about the unordered distance between the datasets. For this reason, we state a theorem about the sensitivities of $\bsfp_{L,U}$ and $\bsfp_{L,U,n}$ with respect to ordered distance metrics, since sensitivities with respect to ordered distance metrics will be able to be converted to statements about sensitivities with respect to unordered distances.

\begin{theorem}[Ordered Sensitivity of Split Summation with RP for Floats]
\label{thrm:ord-toward-zero-known}
\label{thrm:toward-zero-known}
    Let $\bs^*_{L, U}: \Vect(T) \rightarrow T$ be the bounded sum function with split summation and round toward zero (\cref{method:fsplit-sum}) on the normal $(\mantlen,\explen)$-bit floats of type $T$, and similarly for $\bs^*_{L, U, n}: T^n \rightarrow T$. Then,
\begin{enumerate}
        \item (Unknown $n$.) Let $b_U = \lfloor \log_2 U\rfloor, b_L = \lfloor \log_2|L|\rfloor$.
        Then, 
        $$
        \begin{aligned}
        \sid \bsfp_{L, U} &\leq \max\{2^{b_U} + U, 2^{b_L} + |L|\} +\max\{2^{b_U},2^{b_L} \} \\
        &\leq 3\cdot\max\{U,|L|\}.
        \end{aligned}
        $$
        \item (Known $n$.) Let $c_U = \lfloor \log_2 U\rfloor$, $b_U = \min\{c_U, \lfloor \log_2(nU)\rfloor - k\}$,
        $c_L = \lfloor \log_2 |L|\rfloor$, $b_L = \min\{c_L, \lfloor \log_2(|nL|\rfloor - k\}$.
        Then,
        $$
        \begin{aligned}
        \MoveEqLeft
        \sham \bsfp_{L, U, n}\\
        &\leq 2^{b_U} + U + 2^{b_L} + |L|+\max\{\min\{2^{b_U},2^{c_U+1}\},\\
        &\quad \min\{2^{b_L}, 2^{c_L+1}\}\}\\
        &\leq 5\cdot\max\{U,|L|\}.
        \end{aligned}
        $$
\end{enumerate}
\end{theorem}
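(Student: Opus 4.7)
The plan is to exploit the magnitude-preserving property of round-toward-zero: for $0 \leq x \leq y$, $\rtz{x} \leq \rtz{y}$. Because \cref{method:fsplit-sum} accumulates positive elements into a partial sum $P \geq 0$ and negative elements into $N \leq 0$ separately using RTZ, this monotonicity precisely controls how each partial sum reacts to a single dataset perturbation.

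Given $u \nid u'$ (respectively $u \nham u'$), write the changed or extra element as $z \in [L, U]$. The sign of $z$ determines which of $P, N$ is untouched; the other partial sum differs from its counterpart by one insertion or deletion of a same-signed element. The heart of the proof is a technical lemma: if $v'$ is obtained from a non-negative sequence $v \in T_{[0,U]}^n$ by inserting one $z \in [0, U]$, and $P, P'$ denote their iterative RTZ sums, then
\[
    0 \;\leq\; P' - P \;\leq\; z + 2^{b_U}.
\]
The lower bound follows by inductively applying RTZ monotonicity to every running sum past the insertion point. The upper bound uses the ``stuck'' principle of round-toward-zero: once $\ulp(P) > U$, no further element with $v_i \leq U$ can advance the running sum, which caps $P$ at roughly $2^{b_U + k + 1}$ in the unknown-$n$ case and hence caps every intermediate ulp by $O(2^{b_U})$. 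In the known-$n$ case the trivial bound $P \leq nU$ refines this to $b_U = \min\{c_U, \lfloor \log_2(nU)\rfloor - k\}$.

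A symmetric analysis of the non-positive side yields $|N' - N| \leq |L| + 2^{b_L}$ when $z \leq 0$. The two sides are then combined by the final banker's-rounded addition $\br(P + N)$; since $|P + N| \leq \max\{P, |N|\}$, standard bounds on banker's rounding contribute at most $\max\{2^{b_U}, 2^{b_L}\}$ of additional error. Two applications of the triangle inequality to $|\br(P + N) - \br(P' + N')|$ then yield the stated sensitivity bounds for both the unknown-$n$ and known-$n$ cases.

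The hard part will be the upper bound in the technical lemma. The subtle point is that the two sequences may round asymmetrically for many steps after the insertion, so a priori the ``lag'' $P'_i - P_{i-1}$ could accumulate across steps. Controlling it requires an invariant, proved by induction on $i$, that this lag is at most $z$ plus one current-step ulp, with a case analysis splitting on whether $\rtz{P'_{i-1} + v_{i-1}}$ and $\rtz{P_{i-1} + v_i}$ land in the same floating-point bucket. The ``stuck'' bound on the maximum running sum then converts the per-step ulp control into the claimed global constant.
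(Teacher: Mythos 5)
Your proposal is correct and follows essentially the same approach as the paper's. The heart of both is the observation that under round-toward-zero each partial sum is monotone in the dataset, the lag between the two running sums grows only at boundary crossings (each contributing at most one old-interval ULP), and the ``stuck'' cap on the maximum attainable running sum telescopes this into an $O(U)$ bound; the paper packages this as Lemmas~\ref{lemma:dist-rarely-grows}--\ref{lemma:max-boundary-known-n} plus the geometric-series accounting in Lemma~\ref{lemma:rpf-sens-0-known-n}, whereas you fold it into a single inductive invariant (lag $\leq z$ plus one current-step ULP), but the content is the same, and the assembly of the $P$, $N$, and final banker's-rounding contributions matches the paper's.
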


Before proceeding with the proof of Theorem~\ref{thrm:toward-zero-known}, we state and prove some lemmas about the behavior of floating point arithmetic. The motivation for these lemmas is to determine the situations in which rounding can cause the iterative sums of adjacent datasets to diverge. Specifically, we prove that this divergence-causing rounding only occurs when the intermediate sum moves from a floating point interval of the form $[2^k,2^{k+1})$ for $k\in\mathbb{Z}$ to a floating point interval of the form $[2^\ell,2^{\ell+1})$ where $k\neq\ell\in\mathbb{Z}$.

    \begin{lemma}[Distance Between Sums -- Part 1]
    \label{lemma:dist-rarely-grows}
    
    Let $s,s',v$ be $(\mantlen, \explen)$-bit floats such that $\textit{sign}(s) = \textit{sign}(s') = \textit{sign}(v)$, $|s'| \geq |s|$, and $\ulp(s+v) = \ulp(s)$. Then it follows that $|\rtz{s'+v} - \rtz{s+v}|\leq |s-s'|$.
    \end{lemma}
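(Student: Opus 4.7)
The plan is to reduce the claim to a statement about the rounding error introduced by $\rtz$ and then bound that error by a clean arithmetic argument on residues modulo $\ulp$-values.

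By hypothesis all three of $s, s', v$ share a sign, and $\rtz$ is an odd function ($\rtz{-x} = -\rtz{x}$), so I may assume without loss of generality that $s, s', v \geq 0$. Let $p = \ulp(s)$. Since $s$ is a normal float with $\ulp(s) = p$, $s$ is a non-negative integer multiple of $p$. The hypothesis $\ulp(s+v) = p$ forces $s+v$ to lie in the same dyadic interval $[2^m, 2^{m+1})$ as $s$, which in particular yields $v < 2^m \leq s$; this bound will be used to control how big $\ulp(s'+v)$ can be.

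For positive reals $x$, round-toward-zero satisfies $\rtz{x} = x - r_x$ with $r_x := x - \rtz{x} \in [0, \ulp(x))$. Writing $v = \beta p + r$ for the unique $\beta \in \mathbb{Z}_{\geq 0}$ and $r \in [0, p)$, the hypothesis gives $\rtz{s+v} = s + \beta p$, so the rounding error on the left is exactly $r = v \bmod p$. Let $r' := (s'+v) - \rtz{s'+v} \in [0, \ulp(s'+v))$. Then
\[
\rtz{s'+v} - \rtz{s+v} = (s' - s) + (r - r'),
\]
and $\rtz$ is monotone non-decreasing on non-negatives, so $\rtz{s'+v} \geq \rtz{s+v}$. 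It therefore suffices to prove $r' \geq r$, which will give $|\rtz{s'+v} - \rtz{s+v}| = \rtz{s'+v} - \rtz{s+v} \leq s' - s = |s - s'|$.

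The core of the proof is this residue inequality. Let $p' := \ulp(s'+v)$. Since $\ulp$-values are powers of $2$ and $s'+v \geq s+v$, we have $p' = 2^j p$ for some $j \geq 0$; the bound $v < 2^m$ combined with $s' < 2^{m'+1}$ will be used to pin down the dyadic interval containing $s'+v$. Because $s'$ is a multiple of $\ulp(s')$ and $\ulp(s')$ divides $p'$ (since $\ulp(s'+v) \geq \ulp(s')$), the number $s'$ is in particular a multiple of $p$. Hence, writing $s' + v = s' + \beta p + r$, I can compute
\[
r' = (s' + v) \bmod p' = \bigl((s' + \beta p) \bmod p'\bigr) + r,
\]
where the addition requires no wraparound modulo $p'$ precisely because $r < p \leq p'$ and $(s' + \beta p) \bmod p'$ is itself a non-negative multiple of $p$. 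Thus $r' \geq r$, closing the argument. The main obstacle is checking the "no wraparound" step cleanly; the key observation making it work is that $r < p$ divides $p'$, so adding $r$ to any non-negative multiple of $p$ in $[0, p')$ stays in $[0, p')$.
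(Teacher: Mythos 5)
Your proof is correct and follows the same broad strategy as the paper's proof (reduce to the all-positive case, then compare the round-toward-zero operations via their floors/remainders), but it is more careful and in fact closes a genuine gap in the paper's argument. The paper asserts the chain
\[
\rtz{s'+v} = \Big\lfloor \tfrac{s'+v}{\ulp(s'+v)}\Big\rfloor \cdot \ulp(s'+v) \;\leq\; s' + \Big\lfloor \tfrac{v}{\ulp(s'+v)}\Big\rfloor\cdot \ulp(s'+v),
\]
implicitly treating $s'$ as if it were a multiple of $\ulp(s'+v)$. That holds when $s'+v$ stays in the same dyadic interval as $s'$, but fails whenever $s'+v$ crosses into a higher interval. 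Concretely, take $\mantlen=2$ (so floats in $[1,2)$ are $\{1,1.25,1.5,1.75\}$ with $\ulp=0.25$), $s=1$, $v=0.25$, $s'=1.75$: then $\ulp(s+v)=\ulp(s)$ as required, but $s'+v=2$, $\ulp(s'+v)=0.5\nmid s'$, and the paper's displayed inequality claims $\rtz{2}=2\leq 1.75+0=1.75$, which is false. The lemma's conclusion still holds in this example (with equality), so the bug is in the proof, not the statement.

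Your remainder-tracking argument is the right fix. Writing $r=(s+v)-\rtz{s+v}=v\bmod p$ with $p=\ulp(s)$ and $r'=(s'+v)-\rtz{s'+v}$, you reduce the lemma to $r'\geq r$, and the key to that is exactly the observation that $|s'|\geq |s|$ forces $\ulp(s')\geq\ulp(s)=p$, hence $p\mid s'$; this makes $(s'+\beta p)\bmod p'$ a nonnegative multiple of $p$, so adding $r<p$ causes no wraparound modulo $p'=\ulp(s'+v)$, giving $r'=\bigl((s'+\beta p)\bmod p'\bigr)+r\geq r$. This handles the boundary-crossing case cleanly and also makes precise why the remark following the lemma (about $\ulp(\rtz{s'+v})$) is indeed harmless.
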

    
    \begin{proof}
    We present the proof for the case that $s,s',v$ are all positive; the case in which they are all negative is similar. We first consider $\rtz{s+v}$.
    $$
    \begin{aligned}
    \rtz{s+v} &= \Big\lfloor \frac{s+v}{\ulp(s+v)}\Big\rfloor \cdot \ulp(s+v)\\
    &= \Big\lfloor \frac{s+v}{\ulp(s)} \Big\rfloor \cdot \ulp(s)\\
    &= s+ \Big\lfloor\frac{v}{\ulp(s)}\Big\rfloor\cdot \ulp(s),
    \end{aligned}
    $$
    with the final equality following from the fact that $s$ is an integer multiple of $\ulp(s)$.
    
    We now consider $\rtz{s'+ v}$.
    $$
    \begin{aligned}
    \rtz{s'+v} &= \Big\lfloor \frac{s'+v}{\ulp(s'+v)}\Big\rfloor \cdot \ulp(s+v)\\
    &\leq s' + \Big\lfloor \frac{v}{\ulp(s'+v)}\Big\rfloor\cdot \ulp(s'+v)\\
    &\leq s' + \Big\lfloor \frac{v}{\ulp(s)} \Big\rfloor \cdot \ulp(s).
    \end{aligned}
    $$

    Therefore, we see that $|\rtz{s'+v} - \rtz{s+v}|\leq |s-s'|$.
    \end{proof}
    
    Observe that Lemma~\ref{lemma:dist-rarely-grows} is not affected by $\ulp(\rtz{s'+v})$. This is because the spacing between NRFs increases in each interval, so adding $v$ to $s'$ will not have a larger effect RTZ on the sum than adding $v$ to $s$, where $|s'| > |s|$. Note that this does not hold if banker's rounding were used instead of round toward zero.
    
    \begin{lemma}[Distance Between Sums -- Part 2]
    \label{lemma:round-at-boundary-crossing}
    
    If two $(\mantlen, \explen)$-bit floats $s, s'\geq 0$ are such that $s < s'$ (with $s'-s = d$), and we add a term $v > 0$ to both terms such that $\ulp(s + v) = 2\cdot \ulp(s)$, it follows that $\rtz{s'+v} - \rtz{s+v} \leq d + \ulp(s).$

    \end{lemma}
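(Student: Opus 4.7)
The plan is to bound the difference of rounding losses. Let $r(x) := x - \rtz{x} \in [0, \ulp(x))$, and write $a = s+v$, $b = s'+v = a+d$, $u_0 = \ulp(s)$, $u_1 = \ulp(a) = 2u_0$, and $U = \ulp(b) = 2^j u_1$ for some integer $j \geq 0$ (since $b \geq a$). Then
\[
\rtz{b} - \rtz{a} = d + r(a) - r(b),
\]
so it suffices to prove $r(a) - r(b) \leq u_0$.

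Next, I would decompose $a \bmod U = e + r(a)$ with $e \in \{0, u_1, 2u_1, \dots, (2^j - 1)u_1\}$, which is possible because $U$ is an integer multiple of $u_1 = \ulp(a)$. Setting $e + r(a) + d = qU + r(b)$ with integer $q \geq 0$ and $r(b) = b \bmod U \in [0, U)$, the case $q = 0$ (no wrap) gives $r(a) - r(b) = -(e+d) \leq 0$ immediately, so I only need to handle the wrap case $q \geq 1$, where $r(a) - r(b) = qU - e - d$.

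The key arithmetic step is to exploit divisibility. Since $s, s' > 0$ are floats with $s' \geq s$, we have $\ulp(s') \geq \ulp(s) = u_0$, so both $s$ and $s'$ are multiples of $u_0$, whence $d = s' - s$ is a multiple of $u_0$; moreover, $e$ and $U$ are multiples of $u_1 = 2u_0$, hence of $u_0$. The wrap inequality $qU \leq e + r(a) + d$ together with $r(a) < u_1 = 2u_0$ yields $e + d > qU - 2u_0$, and by divisibility $e + d \geq qU - u_0$, giving $r(a) - r(b) = qU - e - d \leq u_0 = \ulp(s)$ as required. The main obstacle I anticipate is precisely this factor-of-two improvement over the naive bound $r(a) - r(b) < u_1$; the tightening hinges on $d$ being a multiple of the smaller spacing $u_0$ rather than merely of $u_1$. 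Minor corner cases (e.g., $s = 0$ where $\ulp(s)$ is undefined, or $s'+v$ overflowing to $\texttt{inf}$) can be ruled out under the standing normal-float assumptions.
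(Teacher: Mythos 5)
Your proof is correct, and it takes a genuinely different route from the paper's. The paper computes $\rtz{s+v}$ and $\rtz{s'+v}$ explicitly by placing $s$, $s'$, and $v$ on grids of spacings $2^m$ and $2^{m+1}$ (writing $v \approx p\cdot 2^m$ and $2^{k+m+1}-s = q\cdot 2^m$), and then does a case analysis on the parities of $p$ and $p-q$ (plus a separate subcase for whether $s'$ has or has not yet crossed the boundary). Your argument instead writes $\rtz{x} = x - r(x)$ and reduces everything to a single modular decomposition of $a = s+v$ relative to $U = \ulp(b)$, where the entire conclusion boils down to one divisibility observation: $e+d$ is forced onto a grid of pitch $u_0 = \ulp(s)$ (because $s$, $s'$, $e$, and $U$ are all multiples of $u_0$), so the strict open bound $e+d > qU - 2u_0$ snaps up to $e+d \geq qU - u_0$. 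This is exactly the factor-of-two tightening the paper obtains case-by-case, but you get it in one stroke.

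Two things your approach buys that the paper's does not: (i) it handles the case $\ulp(s'+v) \gg \ulp(s+v)$ uniformly via the parameter $j \geq 0$, whereas the paper's case split ($s'$ above vs. below the boundary $2^{\mantlen+m+1}$) implicitly focuses on a single boundary; and (ii) it generalizes to Corollary~\ref{cor:round-at-boundary-crossing} for free --- replacing $u_1 = 2u_0$ with $u_1 = 2^m u_0$ in your divisibility step yields $e+d \geq qU - (2^m-1)u_0$ and hence $r(a)-r(b) \leq (2^m-1)\ulp(s)$, which is precisely the corollary. The one thing you should make explicit when writing this up is the step $\rtz{b} = \lfloor b/U \rfloor U$ (equivalently $r(b) = b \bmod U$): it holds because $\rtz{b}$, being the largest float $\leq b$, lies in the same binade as $b$ (the power of two $2^{\lfloor \log_2 b\rfloor}$ is itself a float $\leq b$) and is therefore a multiple of $U = \ulp(b)$. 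Similarly, your claim that $e = \rtz{a} - \lfloor a/U\rfloor U \geq 0$ uses that $\lfloor a/U\rfloor U$ is itself representable (a multiple of the power of two $U$, with $U \geq \ulp$ of whatever binade it lands in); this is true but worth a sentence. Otherwise the argument is airtight under the standing assumptions ($s>0$, normal floats, no overflow).
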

    
    \begin{proof}
    
    Let $s\in [2^{\mantlen + m}, 2^{\mantlen + m})$ for some $m\in\mathbb{Z}$, so $\ulp(s) = \frac{2^{\mantlen + m}}{2^{\mantlen + m}} = 2^m$. Also, suppose that $v\in [p\cdot 2^m, p\cdot 2^{m+1})$ for some $p\in\mathbb{Z}$, and that $2^{\mantlen + m + 1} - s = q\cdot 2^{m}$ for some $q\in\mathbb{Z}$. Additionally, suppose that $s' \geq 2^{\mantlen + m + 1}$.
    
    We now note that $\rtz{s+v} = s + q\cdot 2^{m} + \lfloor \frac{p-q}{2}\rfloor \cdot 2^{m+1}$. We proceed with a proof by cases.

    \begin{enumerate}
        \item Suppose that $p$ is even, but that $(p-q)$ is odd. This means, then, that computing $\rtz{s'+v}$ will yield $\rtz{s'+v}\leq s' + \frac{p}{2}\cdot 2^{m+1} = s' + p\cdot 2^m$. On the other hand, computing $\rtz{s+v}$ will result in a value of $\rtz{s+v} = s + q\cdot 2^m + \frac{p-q-1}{2}\cdot 2^{m+1} = s + q\cdot 2^m + (p-q-1)\cdot 2^m = s + (p-1)\cdot 2^m$.
        
        In this case, then, $\rtz{s' + v} - \rtz{s+v} \leq d + 2^k = d + \ulp(s)$.
    
        \item Now, suppose that $p$ is even and $(p-q)$ is even. This means, then, that computing $\rtz{s'+v}$ will yield $\rtz{s'+v}\leq s' + \frac{p}{2}\cdot 2^{m+1} = s' + p\cdot 2^m$. On the other hand, computing $\rtz{s+v}$ will result in a value of $\rtz{s+v} = s + q\cdot 2^m + \frac{p-q}{2}\cdot 2^{m+1} = s + p\cdot 2^m$.
        
        In this case, then, $\rtz{s' + v} - \rtz{s+v} \leq d$.

        \item Finally, suppose that $p$ is odd. This means, then, that computing $\rtz{s'+v}$ will yield $\rtz{s'+v}\leq \frac{p-1}{2}\cdot 2^{m+1} = s' + (p-1)\cdot 2^m$. On the other hand, computing $\rtz{s+v}$ will result in a value of $\rtz{s+v} = s + q\cdot 2^m + \lfloor\frac{p-q}{2}\rfloor\cdot 2^{m+1} \geq s + q\cdot 2^m + (p-q-1)\cdot 2^m = s + (p-1)\cdot 2^m.$
        
        In this case, then, $\rtz{s' + v} - \rtz{s+v} \leq d$.
    
    \end{enumerate}
    
    Note that, in the cases above, we only considered $s' \geq 2^{\mantlen+m+1}$. We now consider the case where $s,s'\in [2^{\mantlen+m}, 2^{\mantlen+m+1})$. Let $q'$ be the corresponding ``$q$ value'' for $s'$. The worst case arises when $(p-q)$ is odd and $(p-q')$ is even. We then see the behavior provided in case (1), where $p$ is even but $(p-q)$ is odd, so the bounds on the difference as provided in the theorem statement still hold.
    
    Therefore, $\rtz{s' + v} - \rtz{s+v} \leq d + 2^m = d + \ulp(s)$.
    
    \end{proof}
    
    \begin{corollary}
    \label{cor:round-at-boundary-crossing}
    
        If two $(\mantlen, \explen)$-bit floats $s, s'\geq 0$ are such that $s < s'$ (with $s'-s = d$), and we add a term $v > 0$ to both terms such that $\ulp(\rtz{s + v}) = 2^m\cdot \ulp(s)$ for $m\in\mathbb{Z}$, it follows that $\rtz{s'+v} - \rtz{s+v} \leq d + (2^m\cdot\ulp(s)-\ulp(s)).$
    \end{corollary}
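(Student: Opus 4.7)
The plan is to prove the corollary by reducing it directly to a truncation-residual computation on a single RTZ step, rather than iterating Lemma~\ref{lemma:round-at-boundary-crossing} across $m$ separate boundary crossings (which is awkward because RTZ is not associative). First I would set $t \;=\; \rtz{s+v}$ and write $s + v = t + u$, where $u \in [0,\,\ulp(t))$ is the downward truncation residual. Since (as in the proof of Lemma~\ref{lemma:round-at-boundary-crossing}) we may assume $v$ is an integer multiple of $\ulp(s)$, and $s$ is trivially a multiple of $\ulp(s)$, and $t$ is a multiple of $\ulp(t) = 2^m\,\ulp(s)$ and hence of $\ulp(s)$, the residual $u$ is also an integer multiple of $\ulp(s)$. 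Combined with the strict bound $u < \ulp(t) = 2^m\,\ulp(s)$, this gives the key inequality $u \le (2^m - 1)\,\ulp(s)$.

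Next I would observe that $s' + v = (s+v) + d = t + u + d$, so $\rtz{s'+v} = \rtz{t + u + d}$. Applying the monotonicity property of round-toward-zero for non-negative reals ($\rtz{x} \le x$) gives $\rtz{s'+v} \le t + u + d$. Subtracting $\rtz{s+v} = t$ from both sides yields
\[
  \rtz{s'+v} - \rtz{s+v} \;\le\; u + d \;\le\; d + (2^m - 1)\,\ulp(s),
\]
which is exactly the bound claimed. Two sanity checks confirm the formula: at $m=1$ it recovers $d + \ulp(s)$, matching Lemma~\ref{lemma:round-at-boundary-crossing}; at $m=0$ it degenerates to $d$, matching Lemma~\ref{lemma:dist-rarely-grows}.

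The main obstacle is entirely bookkeeping: one must verify that $u$ is a multiple of $\ulp(s)$, since it is precisely this integrality that lets us shave off one $\ulp(s)$ from the naive bound $u < 2^m\,\ulp(s)$. A minor additional worry is the edge case where $t + u + d$ crosses the next power-of-two boundary above $t$ (so $\rtz{t+u+d}$ has ulp $2^{m+1}\,\ulp(s)$ rather than $2^m\,\ulp(s)$); but since the argument only uses $\rtz{x} \le x$, which holds uniformly across intervals, this case requires no separate treatment.
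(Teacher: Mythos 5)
The paper states this corollary right after Lemma~\ref{lemma:round-at-boundary-crossing} with no explicit proof; an honest derivation from the lemma would seem to require iterating it across several boundary crossings, which is awkward for exactly the reason you note (RTZ truncation is not associative across intermediate steps). Your one-shot argument via the residual $u = (s+v)-\rtz{s+v}$ is cleaner, and the skeleton---bound $\rtz{s'+v}-\rtz{s+v}$ by $u+d$, then tighten the strict bound $u < 2^m\ulp(s)$ to $u \leq (2^m-1)\ulp(s)$ via an integrality observation---has the right shape.

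However, the step ``we may assume $v$ is an integer multiple of $\ulp(s)$'' is a genuine gap. It is not established in the proof of Lemma~\ref{lemma:round-at-boundary-crossing} (which merely introduces an integer $p$ with $v\in[p\cdot 2^m,\ldots)$ and then computes as if $v$ were exactly $p\cdot 2^m$, without arguing that the sub-$\ulp(s)$ part of $v$ can be dropped), and in general a float $v$ with magnitude smaller than $s$ has $\ulp(v)<\ulp(s)$, so $v$ is \emph{not} a multiple of $\ulp(s)$. The reduction is in fact valid here: if $\bar v$ is the largest multiple of $\ulp(s)$ with $\bar v\le v$, then both $\rtz{s+v}$ and $\rtz{s'+v}$ are unchanged when $v$ is replaced by $\bar v$, because each of these outputs is itself a multiple of $\ulp(\rtz{s+v})=2^m\ulp(s)\ge\ulp(s)$ and the sub-$\ulp(s)$ remainder of $v$ is swallowed by the truncation in both cases. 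But this must be stated and checked, not cited. Alternatively you can dodge the WLOG entirely by placing the integrality on quantities where it is automatic: set $D=\rtz{s'+v}-\rtz{s+v}$; both RTZ values are multiples of $2^m\ulp(s)$ with $2^m\ge 1$ (since $\rtz{s+v}\ge s$), so $D$ is a multiple of $\ulp(s)$; and $d=s'-s$ is a multiple of $\ulp(s)$ since $s,s'$ are nonnegative floats with $s'\ge s$. From $\rtz{s'+v}\le s'+v$ and $\rtz{s+v}>s+v-2^m\ulp(s)$ one gets $D<d+2^m\ulp(s)$, so $D-d$ is a multiple of $\ulp(s)$ strictly below $2^m\ulp(s)$, forcing $D-d\le(2^m-1)\ulp(s)$ with no assumption on the granularity of $v$.
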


    Before proceeding to the next lemma, we introduce a new term.
    \begin{definition}[Boundary]
    \label{defn:boundary}
    
    We refer to a value of the form $2^m$ for $m\in\mathbb{Z}$ as a \textit{boundary}. This is because, for values $s\in[2^m,2^{m+1})$, the distance between NRFs is twice that of the distance between NRFs for values $s\in[2^{m-1},2^m)$. Whenever we have a floating point value $s\in[2^{m},2^{m+1})$, and we add a value $v$ where $\rtz{s+v} \in(2^{m+j},2^{m+j+1})$ for $j\in\mathbb{Z}^+$, we say that the sum \textit{crossed a boundary}.
    
    \end{definition}

    \begin{lemma}[Maximum Boundary Crossing Given $U$]
    \label{lemma:max-boundary-crossing}
    Let $U$ be the upper bound for the bounded sum function over the $(\mantlen,\explen)$-bit floats, and let $b \in \mathbb{Z}$ be such that $U \in [2^b, 2^{b+1})$. Then, in the worst case, the final crossed boundary corresponds to the boundary $2^{\mantlen+b}$, between the intervals $[2^{\mantlen+b-1},2^{\mantlen+b})$ and $[2^{\mantlen+b},2^{\mantlen+b+1})$.
    \end{lemma}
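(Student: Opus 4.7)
The plan for this lemma is to analyze how far a nonnegative running partial sum can progress under round toward zero when each addend is bounded by $U$. Because \cref{method:fsplit-sum} handles the positive and negative accumulators separately, it suffices to consider the positive accumulator $s$ that begins at $0$ and is updated by $s \leftarrow \rtz{s + v}$ with $0 \leq v \leq U$; the $L$ case follows by symmetry. Under round toward zero the accumulator is monotonically nondecreasing, so it advances through successive intervals $[2^m, 2^{m+1})$, and ``crossing a boundary $2^{m+1}$'' (in the sense of \cref{defn:boundary}) simply means moving from one such interval into the next.

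The key observation I would exploit is a ``sticky'' property of round toward zero at large magnitudes: if $s \in [2^m, 2^{m+1})$ is a normal $(\mantlen,\explen)$-bit float, then by \cref{defn:ulp} the gap between $s$ and the next float is $\ulp(s) = 2^{m-\mantlen}$, so whenever $v < \ulp(s)$ we have $s + v \in [s, s + \ulp(s))$ and $\rtz{s+v} = s$. Hence the accumulator can continue to advance only so long as $U \geq \ulp(s)$, i.e.\ $U \geq 2^{m-\mantlen}$. Combining this with $U < 2^{b+1}$ gives $2^{m - \mantlen} \leq U < 2^{b+1}$, so $m < \mantlen + b + 1$, i.e.\ $m \leq \mantlen + b$.

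From this bound I would conclude that the final interval in which the accumulator is still capable of changing is $[2^{\mantlen+b}, 2^{\mantlen+b+1})$, and therefore the last boundary crossed in the worst case is the boundary $2^{\mantlen+b}$ between $[2^{\mantlen+b-1}, 2^{\mantlen+b})$ and $[2^{\mantlen+b}, 2^{\mantlen+b+1})$, as stated. To show the bound is tight, I would exhibit a simple worst-case construction in which we start from $s = 0$ and repeatedly add $U$: an easy induction on $m$ (using that $U \geq 2^b \geq \ulp(s)$ for every intermediate $s$ of exponent at most $\mantlen+b$) shows that the accumulator eventually reaches the interval $[2^{\mantlen+b}, 2^{\mantlen+b+1})$, so the boundary $2^{\mantlen+b}$ is actually attained.

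The main obstacle I anticipate is being precise about what ``the final crossed boundary'' means. One more addition beyond the threshold could in principle round up to exactly $2^{\mantlen+b+1}$, nudging the accumulator into $[2^{\mantlen+b+1}, 2^{\mantlen+b+2})$, after which every future addition satisfies $v \leq U < 2^{b+1} = \ulp(s)$ and is absorbed by rounding so the accumulator is permanently frozen. I would therefore need to distinguish this trailing, absorbed crossing from the last boundary after which meaningful accumulation can still occur; the latter is $2^{\mantlen+b}$, and it is precisely this boundary that governs the per-addition rounding error used in the sensitivity bound of \cref{thrm:toward-zero-known}.
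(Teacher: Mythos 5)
Your proposal is correct and follows essentially the same line of reasoning as the paper: bound the maximum value the positive accumulator can reach under round toward zero by observing that once $\ulp(s) > U$ every further addition is absorbed, and conclude the sum cannot advance past $2^{\mantlen+b+1}$, so $2^{\mantlen+b}$ is the last boundary crossed. You spell out the sticky/absorption property and the corner case at $s = 2^{\mantlen+b+1}$ more explicitly than the paper (which simply asserts the maximum reachable sum is $2^{b+1}\cdot 2^{\mantlen}$), but the underlying argument is the same.
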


    \begin{proof}
    Under round toward zero, the maximum sum that a dataset can reach with $U \in [2^b, 2^{b+1})$ is $2^{b+1}\cdot 2^{\mantlen}$. Therefore, the final boundary that we cross is no greater than the boundary between the intervals $[2^{\mantlen+b-1},2^{\mantlen+b})$ and $[2^{\mantlen+b},2^{\mantlen+b+1})$. Note that, because the maximum sum is $2^{\mantlen+b+1}$, we have not yet \emph{crossed} into the interval $(2^{\mantlen+b+1},2^{\mantlen+b+2})$.

    \end{proof}
    
    \begin{lemma}[Maximum Boundary Crossing Given $U,n$]\label{lemma:max-boundary-known-n}
        Let $U$ be the upper bound for the bounded sum function over the $(\mantlen, \explen)$-bit floats, let $n$ be the length of the dataset, and let $m \in \mathbb{Z}$ be such that $U\cdot n \in [2^m, 2^{m+1})$. Then, in the worst case, the final crossed boundary corresponds to the boundary $2^{m}$, between the intervals $[2^{m-1},2^m)$ and $[2^{m},2^{m+1})$.
    \end{lemma}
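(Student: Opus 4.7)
The plan is to leverage the key property of round toward zero: it never increases the magnitude of a sum beyond its idealized value. Since Method~\ref{method:fsplit-sum} accumulates the non-negative entries (and separately the non-positive entries) with RTZ, every intermediate sum appearing in either accumulator is bounded in absolute value by the corresponding idealized partial sum.

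First I would formalize this by a short induction: letting $s_0 = 0$ and $s_{i+1} = \rtz{s_i + v_{i+1}}$ for nonnegative $(\mantlen,\explen)$-bit floats $v_1,\ldots,v_{n'} \in [0, U]$ (with $n' \leq n$), I claim that $0 \leq s_i \leq v_1 + \cdots + v_i$ holds for every $i$. The base case is trivial; for the inductive step, by Definition~\ref{defn:roundto0} we have $\rtz{s_i + v_{i+1}} \leq s_i + v_{i+1}$, and by the inductive hypothesis $s_i + v_{i+1} \leq v_1+\cdots+v_{i+1}$, giving the bound. The analogous statement holds for the negative accumulator with reversed inequalities.

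Now applying this to the full accumulation, the positive accumulator $P$ satisfies $0 \leq P \leq v_1 + \cdots + v_{n'} \leq n' \cdot U \leq n \cdot U < 2^{m+1}$ by the hypothesis that $nU \in [2^m, 2^{m+1})$. Hence every intermediate value taken by $P$ lies in $[0, 2^{m+1})$, so the largest dyadic interval $[2^j, 2^{j+1})$ that $P$ can ever occupy has $j \leq m$. In particular, the boundary $2^{m+1}$ is never reached, and the highest-indexed boundary the sum can cross is $2^m$, separating $[2^{m-1}, 2^m)$ from $[2^m, 2^{m+1})$, as claimed. A symmetric argument handles the negative accumulator, replacing $U$ with $|L|$ where appropriate and giving an analogous bound.

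I do not expect a real obstacle here, since the only nontrivial ingredient is the monotonicity of RTZ (each addition step rounds toward zero, so $\rtz{x+y} \leq x+y$ when both are nonnegative), which follows directly from Definition~\ref{defn:roundto0}. The one subtlety to be careful about is that the lemma is phrased only about the single accumulator during the positive (or negative) pass, \emph{not} about the final combination $P \oplus N$, which is performed once with banker's rounding; the statement concerns boundary crossings during accumulation, so restricting attention to the RTZ phase is exactly what is needed.
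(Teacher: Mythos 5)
Your proof is correct and takes essentially the same route as the paper: the key fact is that RTZ never overshoots the idealized sum, so the accumulated positive sum stays below $n\cdot U < 2^{m+1}$, which caps the highest boundary crossed at $2^m$. The paper simply asserts that the maximum RTZ-accumulated sum is at most $n\cdot U$, while you supply the short induction (using $\rtz{s_i + v_{i+1}} \leq s_i + v_{i+1}$) that justifies this bound, which is a reasonable amount of extra rigor.
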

    \begin{proof}
        Since we are in the round-toward-zero rounding mode, the maximum sum of a length $n$ dataset with upper bound $U$ is $n\cdot U$. Because we define $m \in \mathbb{Z}$ be such that $U\cdot n \in [2^m, 2^{m+1})$ and $U\cdot n \in [2^m, 2^{m+1})$, we see that $2^m$ is the largest boundary.
    \end{proof}
    
    Having developed lemmas about the behavior of floating point arithmetic under round toward 0, we are now ready to reason about the sensitivity of the bounded sum function described in Method~\ref{method:fsplit-sum}.
    
    We remark that the intervals stated in Lemma~\ref{lemma:max-boundary-crossing} are a worst-case scenario, and thus are not necessarily reached by the sum value.
    
    \begin{lemma}[Sensitivity $\Delta_\mathit{Ham}\bsfp_{0,U,n}$]
    \label{lemma:rpf-sens-0-known-n}
    Suppose we have two length $n$ datasets $u\simeq_\mathit{Ham} u'$ of $(\mantlen,\explen)$-bit floats, with values clipped within the interval $[0,U]$ with $U\in[2^c,2^{c+1})$, where the maximum possible sum (in this case, $\min\{ n\cdot U, 2^{\mantlen+c} \}$) is $\in [2^{\mantlen+b}, 2^{\mantlen+b+1})$. Then, the maximum difference in sums will be 
    $$\Delta_\mathit{Ham}\bsfp_{0,U,n}
    \leq \sum_{i=-\infty}^{\mantlen+b-1} \frac{2^i}{2^{\mantlen}} + U = 2^b + U\leq 2U,$$
    where $2^b \leq \min\{U,\frac{n\cdot U}{2^{\mantlen}} \}$.
    
    \end{lemma}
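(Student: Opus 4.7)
My approach is to track how the gap between the two round-toward-zero partial sums evolves, using Lemma~\ref{lemma:dist-rarely-grows} to show the gap cannot grow within a single binary interval and Corollary~\ref{cor:round-at-boundary-crossing} to bound its growth at each boundary crossing. Since $u\simeq_{\mathit{Ham}} u'$, there is at most one index $i^*$ at which the datasets differ; let $s_j,s'_j$ denote the partial sums after the first $j$ additions, so $s_j=s'_j$ for $j<i^*$. Without loss of generality take $s'_{i^*}\ge s_{i^*}$. Because the two sequences agreed strictly before step $i^*$, the initial gap is $d_{i^*}:= s'_{i^*}-s_{i^*}\le |u_{i^*}-u'_{i^*}|\le U$.

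For $j>i^*$ the same nonnegative value $v=u_j=u'_j$ is added to both running sums, and I would split into two cases. If the smaller sum stays in its interval, i.e.\ $\ulp(s_{j-1}+v)=\ulp(s_{j-1})$, then Lemma~\ref{lemma:dist-rarely-grows} applies (its hypotheses $\textrm{sign}(s_{j-1})=\textrm{sign}(s'_{j-1})=\textrm{sign}(v)$ and $s'_{j-1}\ge s_{j-1}$ are maintained inductively from all values being nonnegative) and yields $d_j\le d_{j-1}$. If instead the step carries $s_{j-1}$ from $[2^{r-1},2^r)$ into an interval of order $2^{r+m-1}$ for some $m\ge 1$, Corollary~\ref{cor:round-at-boundary-crossing} bounds the new gap by $d_{j-1}+(2^m-1)\cdot\ulp(s_{j-1})$. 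A multi-interval jump $m\ge 2$ decomposes into single-boundary contributions $2^{r'-1-\mantlen}$ for $r'=r,r+1,\dots,r+m-1$, so without loss of generality I can charge one ULP-sized term per boundary crossed.

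Summing over the whole run and invoking Lemma~\ref{lemma:max-boundary-known-n} to cap the highest boundary that can be crossed at level $\mantlen+b$, the telescoped total growth is bounded by the geometric series
\[
\sum_{i=-\infty}^{\mantlen+b-1}\frac{2^i}{2^{\mantlen}} \;=\; 2^b.
\]
Combining with the initial gap gives $|\bsfp_{0,U,n}(u)-\bsfp_{0,U,n}(u')|=d_n\le U+2^b$, as claimed. The final inequality $2^b+U\le 2U$ follows from $2^b\le U$, which in turn is a consequence of $2^{\mantlen+b}\le\min\{nU,2^{\mantlen+c}\}\le 2^{\mantlen+c}$ (so $b\le c$) and $U\in[2^c,2^{c+1})$.

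The main obstacle I anticipate is the asymmetric-crossing regime, in which $s'_{j-1}$ already sits in a strictly higher interval than $s_{j-1}$ at the moment a crossing happens. In that case Corollary~\ref{cor:round-at-boundary-crossing} (stated for the ``aligned'' situation where both sums live in the same interval just before the step) does not directly apply, and I would need to verify, by decomposing the step into an equivalent sequence of single-boundary crossings of $s$ alone and arguing that any additional round-toward-zero truncation on the larger sum $s'$ can only shrink the gap, that the same telescoped ULP bound still holds. A secondary bookkeeping point is to confirm that the very first step at $i^*$ does not itself contribute an extra boundary-crossing term beyond the $U$ already charged; this should follow from $s_{i^*-1}=s'_{i^*-1}$, which makes the $i^*$-th crossings aligned.
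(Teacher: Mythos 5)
Your proposal is correct and follows essentially the same route as the paper's proof: charge the initial gap of $\leq U$ at the differing index, use Lemma~\ref{lemma:dist-rarely-grows} to show the gap is non-increasing when no boundary is crossed, charge one ULP-sized increment per boundary crossing via Lemma~\ref{lemma:round-at-boundary-crossing}, cap the highest boundary using the maximum-sum lemma, and telescope the resulting geometric series to $2^b$. One small note: the ``asymmetric-crossing regime'' you flag as a potential obstacle is already handled inside Lemma~\ref{lemma:round-at-boundary-crossing} — its proof explicitly treats both the case $s' \geq 2^{\mantlen+m+1}$ (where $s'$ already sits in a strictly higher interval) and the aligned case $s,s'\in[2^{\mantlen+m},2^{\mantlen+m+1})$, and the statement itself only requires $0\leq s < s'$ with no alignment hypothesis — so no extra decomposition argument is needed there.
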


    \begin{proof}
    
    Let $u\simeq_\mathit{Ham} u'$ be two length $n$ datasets of $(\mantlen, \explen)$-bit floats. 
    In the worst case, this means that $u$ and $u'$ are identical, except they differ at some value $u_i\neq u'_i$. In the worst case, without loss of generality, $u_i = 0$ and $u'_i = U$.
    
    We note that we have  $\bsfp_{0,U}([u'_1,\ldots, u'_{i}]) - \bsfp_{0,U}([u_1,\ldots, u_{i}])\leq U$. 
    
    By Lemma \ref{lemma:dist-rarely-grows}, if for all $j\geq i$ and some $m\in\mathbb{Z}$ we have 
    $$\bsfp_{0,U}( [u_1,\ldots, u_{j-1}] ), \bsfp_{0,U}( [u_1,\ldots, u_{j}] ) \in [2^m,2^{m+1}),$$ then we will continue to have $$\bsfp_{0,U}([u'_1,\ldots, u'_{j}]) - \bsfp_{0,U}([u_1,\ldots, u_{j}])\leq U.$$
    
    However, we may have $\bsfp_{0,U}( [u_1,\ldots, u_{j-1}] )\in [2^m,2^{m+1})$ but $\bsfp_{0,U}( [u_1,\ldots, u_{j}] ) \not\in [2^m,2^{m+1})$. In this case, we apply Lemma \ref{lemma:round-at-boundary-crossing}. This tells us that the rounding that causes the difference in sums to grow only occurs when adding a value to the smaller sum (in this case, adding $u_j$) causes the sum to cross from an interval of values in $[2^m,2^{m+1})$ to an interval of values in $[2^p,2^{p+1})$ for some $p\in\mathbb{Z}>m$. By Lemma \ref{lemma:dist-rarely-grows}, the difference in sums does \emph{not} grow when the smaller sum does \emph{not} cross from one interval into another interval. Therefore, we only experience growth in the difference of sums (recall that the ``true'' difference is $\bs_{0,U}(u') - \bs_{0,U}(u) = U$, so we expect this much of a difference in sums) when these interval crossings occur.
    
    Suppose we are crossing from a value $r$ with $\ulp(r) = 2^q$ to a value $r'$ with $\ulp(r') = 2^{q+1}$. By Lemma $\ref{lemma:round-at-boundary-crossing}$, we experience growth in the difference in sums of at most $2^q$.
    
    By Lemma~\ref{lemma:max-boundary-crossing}, the maximum round-up occurs when crossing from an interval with a $\ulp$ of $2^{b-2}$ to an interval with a $\ulp$ of $2^{b-1}$. Therefore, the greatest growth in the difference in sums that we can experience is $2^{b-1}$. We note that the crossing immediately prior to this can result in a growth in the difference in sums of $2^{b-2}$. Therefore, we can bound the overall growth in difference in sums by $2^{b-1} + 2^{b-2} + 2^{b-3} + \ldots = \sum_{m=-\infty}^{b-1} 2^m = \sum_{m = -\infty}^{\mantlen+b-1} \frac{2^m}{2^\mantlen}$. Because $m$ ranges from $-\infty$ to $(\mantlen+b-1)$, the growth in the difference in sums is bounded by $2 \cdot 2^{b-1} = 2^b$, which is turn bounded by $U$ given that, by assumption, $U < 2^{b+1}$. (Note that the bound provided in Corollary~\ref{cor:round-at-boundary-crossing} would have a smaller effect than the roundings at every boundary crossing that we have here, so we only consider the case described in Lemma~\ref{lemma:round-at-boundary-crossing} when evaluating worst-case behavior.)

    Therefore, the overall difference in sums for two datasets with difference in sums $|\bs_{L,U}(u) - \bs_{L,U}(u')| \leq U$ is at most the bound provided in the lemma statement,
    $$|\bsfp_{0,U,n}(u) - \bsfp_{0,U,n}(u')|\leq \sum_{i=-\infty}^{\mantlen+b-1} \frac{2^i}{2^{\mantlen}} + U = 2^b + U\leq 2U.$$
    \end{proof}

    \begin{corollary}[Sensitivity $\Delta_\mathit{ID}\bsfp_{0,U}$]
    \label{cor:rpf-sens-0-unknown-n}
    Suppose we have two datasets $u\simeq_\mathit{ID} u'$ of $(\mantlen,\explen)$-bit floats, with values clipped to the interval $[0,U]$ and with $U\in[2^b,2^{b+1})$, meaning that the maximum possible sum is $\in [2^{\mantlen+b}, 2^{\mantlen+b+1})$. Then, the maximum difference in sums will be 
    $$\Delta_\mathit{ID}\bsfp_{0,U}
    \leq \sum_{i=-\infty}^{\mantlen+b-1} \frac{2^i}{2^{\mantlen}} + U = 2^b + U\leq 2U.$$
        
    \end{corollary}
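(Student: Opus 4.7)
The plan is to reduce the insert-delete case to the already-proven Hamming case from Lemma \ref{lemma:rpf-sens-0-known-n} by padding with zeros. Concretely, suppose $u \simeq_\mathit{ID} u'$, so without loss of generality $u'$ is obtained from $u$ by inserting a single element $z \in [0, U]$ at some position $i$. Let $n = \len(u')$, and define $\tilde u \in T^n_{[0,U]}$ by inserting the value $0$ at position $i$ of $u$. Then $\tilde u$ and $u'$ agree at every index except possibly $i$, so $\tilde u \simeq_\mathit{Ham} u'$, and Lemma \ref{lemma:rpf-sens-0-known-n} applies to $\tilde u$ and $u'$.

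Next, I would argue that $\bsfp_{0,U}(\tilde u) = \bsfp_{0,U}(u)$. All intermediate sums in the iterative computation of $\bsfp_{0,U}$ on a dataset with non-negative entries are non-negative floats, and for any such float $s \geq 0$ we have $s \oplus_{\mathit{RTZ}} 0 = s$ exactly (no rounding is needed since $s + 0 = s$ is representable). Hence inserting a zero into $u$ at position $i$ leaves every subsequent partial sum in the iterative computation unchanged, so $\bsfp_{0,U}(\tilde u) = \bsfp_{0,U}(u)$.

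Combining these two observations gives
\[
|\bsfp_{0,U}(u) - \bsfp_{0,U}(u')| = |\bsfp_{0,U}(\tilde u) - \bsfp_{0,U}(u')| \leq 2^b + U \leq 2U,
\]
where the inequality is exactly the bound from Lemma \ref{lemma:rpf-sens-0-known-n} applied to the pair $\tilde u \simeq_\mathit{Ham} u'$ of length $n$, with $U \in [2^b, 2^{b+1})$ and the maximum possible sum in $[2^{\mantlen+b}, 2^{\mantlen+b+1})$. Taking the supremum over all insert-delete-adjacent pairs yields the stated bound on $\Delta_\mathit{ID} \bsfp_{0,U}$.

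The only subtle point --- and the one I expect to require the most care --- is verifying that the zero-padding is truly invisible to the iterative summation under round-toward-zero. This is immediate under RTZ because $s + 0 = s$ is always exactly representable, but one must be careful if $s$ could be negative or if a different rounding mode were in use; since we restricted to values in $[0, U]$ and to RTZ, this step is clean. All other aspects of the argument are direct invocations of Lemma \ref{lemma:rpf-sens-0-known-n}.
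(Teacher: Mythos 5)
Your proof is correct and takes essentially the same approach as the paper: both reduce the insert-delete case to the Hamming case of Lemma~\ref{lemma:rpf-sens-0-known-n} by padding $u$ with a zero at the insertion position and observing that adding a zero does not perturb the iterative floating-point sum. The paper is slightly more explicit about taking the worst case over $n$ in the Hamming lemma (``pretend $n=\infty$'' so the $\min$ term in that lemma evaluates to $2^{\mantlen+b}$), but your invocation of the lemma with the stated bound on the maximum sum carries the same content.
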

    
    \begin{proof}
    Let $u\simeq_\mathit{ID} u'$ be two datasets of $(\mantlen,\explen)$-bit floats. Without loss of generality, let $\len(u)<\len(u')$. Let $u''$ be the dataset that results from inserting a 0 immediately before the first location $u_i$ where we have $u_i\neq u'_i$. (Because 0's have no effect on a floating point summation, this insertion will not affect the difference in sums.)
    
     We now note that $u''\simeq_\mathit{Ham} u'$, so we can apply Lemma \ref{lemma:rpf-sens-0-known-n}. Because $n$ is not known, we can pretend that we have $n=\infty$, which causes the ``$\min$'' in the lemma statement to evaluate to $2^{\mantlen+b}$. (We could not actually have $n=\infty$ since computers are finite, but this helps evaluate the worst case value for the ``$\min$'' term.) Using this lemma and by the definition of sensitivity, we have
    $$|\bsfp_{0,U,n}(u'') - \bsfp_{0,U,n}(u')|
    \leq \sum_{i=-\infty}^{\mantlen+b-1} \frac{2^i}{2^{\mantlen}} + d = 2^b + d\leq 2U.$$
    
    Because of how we constructed $u''$, we equivalently have
    $$|\bsfp_{0,U}(u) - \bsfp_{0,U}(u')|
    \leq \sum_{i=-\infty}^{\mantlen+b-1} \frac{2^i}{2^{\mantlen}} + d = 2^b + d\leq 2U.$$
    
    We know that we have $u\simeq_\mathit{ID}u'$, so by the definition of sensitivity, we have 
    $$\Delta_\mathit{ID}\bsfp_{0,U}
    \leq \sum_{i=-\infty}^{\mantlen+b-1} \frac{2^i}{2^{\mantlen}} + d = 2^b + d\leq 2U.$$

    \end{proof}

    \begin{proof}[Proof of Theorem \ref{thrm:ord-toward-zero-known}]
    
    Recall that we are working with the bounded sum function as described in Method~\ref{method:fsplit-sum}. Additionally, note that we prove (1) $\Delta_\mathit{ID}\bsfp_{L,U}$ and (2) $\Delta_\mathit{Ham}\bsfp_{L,U,n}$.
    
    \begin{enumerate}[1.]
        \item
        Let $u\simeq_\mathit{Ham} u'$ be two datasets of length $n$ with $(\mantlen,\explen)$-bit floating point values of type $T$. In the worst case, this means that $u$ and $u'$ are identical, except they differ at some value $u_i\neq u'_i$. 
        
    By Corollary~\ref{cor:rpf-sens-0-unknown-n}, where $U\in[2^b,2^{b+1})$, the calculation of $P$ has a sensitivity of at most $2^b+U$. By generalizing Corollary~\ref{cor:rpf-sens-0-unknown-n} to work with values in $[L,0]$ instead, it follows by symmetry that, where $L\in(-2^{b'+1},-2^{b'}]$, the calculation of $N$ has a sensitivity of at most $2^{b'} + |L|$.
    
    The last step of Method~\ref{method:fsplit-sum} consists of adding $P$ and $N$ using banker's rounding. We need to account for the rounding that can occur in this step. Let $M_P$ be the maximum value of $P$, and let $M_N$ be the minimum value (i.e., maximum magnitude) of $N$. We see, then, that $M_P = 2^{b}\cdot 2^{\mantlen+1}$, and that $M_N = -2^{b'}\cdot 2^{\mantlen+1}$.
    
    Suppose that $M_P\geq |M_N|$. In the worst case, for all values $v<M_P$, we have $\ulp(v) \leq 2^b$, so the worst-case round-down that we can get from combining $P$ and $N$ (using banker's rounding) is $\frac{2^b}{2}$. Likewise, the worst-case round-up that we can get from combining $P$ and $N$ (using banker's rounding) is $\frac{2^b}{2}$. Therefore, the worst-case overall rounding is $2^b$. Likewise, for the $M_P\leq |M_N|$ case, the worst-case rounding is $2^{b'}$.
    
    We recall that the sensitivity for computing $N$ is $\leq 2^{b'} + |L|$; and that the sensitivity for $P$ is $\leq 2^{b} + U$. Therefore, the overall sensitivity initially appears to be 
    $$\Delta_\mathit{ID}\bsfp_{L,U}\leq 2^{b} + U + 2^{b'} + |L|+\max\{2^b,2^{b'} \}.$$
    
    We can improve on this upper bound on the sensitivity, though, with the observation that we have $\did(s,s') = 1$. A single insertion or a single deletion will only affect one of $M_P$ or $M_N$ (whereas the sensitivity above assumes that both $M_P$ and $M_N$ would be affected). In the worst case, the single insertion or deletion would affect the value that causes maximum rounding. Therefore, we get
    $$\Delta_\mathit{ID}\bsfp_{L,U}\leq \max\{2^{b} + U, 2^{b'} + |L|\} +\max\{2^b,2^{b'} \}.$$
    
    For $U\geq |L|$, then, the overall sensitivity is $\leq 3U$, and for $U<|L|$, the overall sensitivity is $\leq 3\cdot|L|$. The overall sensitivity, then, can be upper-bounded by $3\cdot \max\{U,|L|\}$.

    \item 
    Let $u\simeq_\mathit{ID} u'$ be two datasets with $(\mantlen,\explen)$-bit floating-point values of type $T$. Without loss of generality, let $\len(u) < \len(u')$.
    
    By Lemma~\ref{lemma:rpf-sens-0-known-n}, where $U\in[2^c,2^{c+1})$, and where $\min\{n\cdot U, 2^{\mantlen+c}\}\in[2^{\mantlen+b},2^{\mantlen+b+1})$, the calculation of $P$ has a sensitivity of at most $2^b+U$. By generalizing Lemma~\ref{lemma:rpf-sens-0-known-n} to work with values in $[L,0]$ instead, it follows by symmetry that, where $L\in(-2^{c'+1},-2^{c'}]$, and where $\min\{|n\cdot L|, 2^{\mantlen+c'}\}\in(-2^{\mantlen+b'+1},-2^{\mantlen+b'}]$, the calculation of $N$ has a sensitivity of at most $2^{b'} + |L|$.
    
    The last step of Method~\ref{method:fsplit-sum} consists of adding $P$ and $N$ using banker's rounding. We need to account for the rounding that can occur in this step.
    
    If $n\cdot U\geq 2^{\mantlen+c}$, then -- by the reasoning in (1) -- the maximum rounding effect is $2^b$ when $M_P\geq |M_L|$. Likewise, if $|n\cdot L|\geq 2^{\mantlen+c'}$, then -- by the reasoning in (1) -- the maximum rounding effect is $2^{b'}$ when $M_P < |M_L|$.
    
    We now consider the case where $n\cdot U < 2^{\mantlen+c}$, and we also consider the case where $|n\cdot L|\geq 2^{\mantlen+c'}$.
    
    Let $M_P$ be the maximum value of $P$, and let $M_N$ be the minimum value (i.e., maximum magnitude) of $N$. We see, then, that $M_P \leq 2^{c+1}\cdot 2^{\mantlen}$, and that $|M_N| \leq |-2^{c'+1}\cdot 2^{\mantlen}|$.
    
    Suppose that $M_P\geq |M_N|$. In the worst case, $\ulp(M_P) = 2^{c+1}$, so the worst-case round-down that we can get from combining $P$ and $N$ (using banker's rounding) is $\frac{2^{c+1}}{2}$. Likewise, the worst-case round-up that we can get from combining $P$ and $N$ (using banker's rounding) is $\frac{2^{c+1}}{2}$. Therefore, the worst-case overall rounding is $2^{c+1}$. Likewise, for the $M_P\leq |M_N|$ case, the worst-case rounding is $2^{c'+1}$.
    
    We recall that the sensitivity for computing $N$ is $\leq 2^{b'} + |L|$; and that the sensitivity for $P$ is $\leq 2^{b} + U$. Therefore, the overall sensitivity is 
    $$
    \begin{aligned}
    \MoveEqLeft
    \Delta_\mathit{Ham}\bsfp_{L,U}\\
    &\leq 2^{b} + U + 2^{b'} + |L|+\max\{\min\{2^b,2^{c+1}\},\min\{2^{b'}, 2^{c'+1}\}\}.
    \end{aligned}
    $$
    For $U\geq |L|$, then, the overall sensitivity is $\leq 5U$, and for $U<|L|$, the overall sensitivity is $\leq 5L$. The overall sensitivity, then, can be upper-bounded by $5\cdot \max\{U,|L| \}$.

    \end{enumerate}
    
    \end{proof}

    Having proven sensitivities with respect to ordered distance metrics, we now prove sensitivities with respect to unordered distance metrics.
    
    \begin{proof}[Proof of Theorem \ref{thrm:rp-split-summation}]
    
    Recall that we are working with the bounded sum function as described in the theorem statement. Additionally, note that we prove (1) $\Delta_\mathit{Sym}\bsfp_{L,U}$ and (2) $\Delta_\mathit{CO}\bsfp_{L,U,n}$.
    
    \begin{enumerate}[1.]
        \item We first apply RP to the input dataset, and Theorem~\ref{thrm:couple-unknown-n} tells us that RP transforms datasets that are neighboring with respect to $\dsym$ to a coupling of datasets that are neighboring with respect to $\did$. We can now work with datasets that are neighboring with respect to $\did$. Theorem \ref{thrm:ord-toward-zero-known} tells us that 
        $$\Delta_\mathit{ID}\bsfp_{L,U}\leq \max\{2^{b} + U, 2^{b'} + |L|\} +\max\{2^b,2^{b'} \},$$
        so we also have
        $$\Delta_\mathit{Sym}\bsfp_{L,U}\leq \max\{2^{b} + U, 2^{b'} + |L|\} +\max\{2^b,2^{b'} \},$$

        completing the proof.
        
        \item We first apply RP to the input dataset, and Corollary~\ref{cor:coupling-known-n} tells us that RP transforms datasets that are neighboring with respect to $\dco$ to a coupling of datasets that are neighboring with respect to $\dham$. We can now work with datasets that are neighboring with respect to $\dham$. Theorem \ref{thrm:ord-toward-zero-known} tells us that 
        $$\Delta_\mathit{Ham}\bsfp_{L,U}
        \leq 2^{b} + U + 2^{b'} + |L|+\max\{\min\{2^b,2^{c+1}\},\min\{2^{b'}, 2^{c'+1}\}\},$$
        so we also have
        $$\Delta_\mathit{CO}\bsfp_{L,U}
        \leq 2^{b} + U + 2^{b'} + |L|+\max\{\min\{2^b,2^{c+1}\},\min\{2^{b'}, 2^{c'+1}\}\},$$
        completing the proof.
    \end{enumerate}
    \end{proof}

\subsection{Reducing Floats to Integers}\label{sec:floats2ints}

Another attractive solution for floating-point summation is to reduce it to integer summation, since the latter achieves the idealized
sensitivity with simple solutions. The idea behind this method is to cast floats to fixed-point numbers, which we can think of as $k$-bit integers. To achieve this casting, we introduce a discretization parameter $D$ (which is chosen by the data analyst), round each of the dataset elements according to the discretized interval, and apply one of our integer solutions. We can think of this process of ``integerizing'' as a dataset transform.

We first choose a discretization parameter $D$ (we discuss how to make this choice below) and set $M$ as a function of $L$ and $U$ (if $|U|\geq |L|$, we set $M = \max\{L,0\}$; otherwise, we set $M = \min\{U,0\}$) to shift the discretization range to better take advantage of the range of signed integers and obtain accuracy that is better than a no-shift strategy. We consider the function mapping a dataset
$u=[u_1,\ldots,u_n]$ of floats to the signed integer dataset
\begin{equation}\label{eq:float2int}
    \FloatToInt_{L,U,D}(u) = [\round((u_1-M)/D),\ldots,\round((u_n-M)/D)],
\end{equation}
where $\round(\cdot)$ denotes rounding to the nearest integer (we explore various implementations of the $\round(\cdot)$ function in Definitions \ref{defn:round-to-nearest-int} and \ref{defn:rand-round-to-nearest-int} and explore their impacts throughout the accuracy theorems in this section).

\begin{method}[Reducing floats to integers, bounded DP case]
\label{method:red-float-int}

Let $L$ denote the lower bound, $U$ the upper bound, $u = [u_1, \ldots, u_n]$ the input dataset of normal $(k, \ell)$-bit floats, and $D$ the discretization parameter (where $L,U,D$ are $(\mantlen, \explen)$-bit normal floats). We additionally require that $|U|\geq |L|$, though symmetric results hold for $|L| > |U|$.

Let $M = \max\{L,0\}$, and let $K = \lceil (U-M)/D \rceil$ and $J = \lfloor (L-M)/D \rfloor$. Then, the reducing floats to ints method is as follows: 

\begin{enumerate}
    \item Compute $$\bs^{**}_{L,U,D,n}(u) = (\bs^*_{J,K,n}\circ \FloatToInt_{L,U,D,n})(u),$$ where $\bsfp$ is any one of the integer-summation methods discussed in Section \ref{sec:61ints}, \ref{sec:splitsum}, or \ref{sec:rp-sat-int}.\footnote{This method can be adapted to work with the modular summation described in \cref{sec:modularnoise}, too, though the accuracy guarantees may not hold.}
    \item Perform noise addition to obtain the DP mechanism
    \[
        \cM_{L,U,D,n}(u)
         = \bs^{**}_{L,U,D,n}(u) + \Noise((K-J)/\varepsilon),
    \]
    where $\Noise(s)$ denotes a noise distribution with scale $s$ suitable for $k$-bit integers (e.g., the discrete Laplace Mechanism~\cite{GhoshRoSu12}).
    \item Rescale and shift the resulting number to obtain the final floating-point result
    \[
        \cM_{L,U,D,n}(u) \cdot D + M \cdot n.
    \]
\end{enumerate}

Note that this method can be adapted to the setting in which the length of the dataset is not known by using the truncation technique (and associated sensitivities) in \cref{method:trunc}.

\end{method}

\begin{remark}
    The requirement that $|U| \geq |L|$ is not restrictive. Indeed, if $|L| > |U|$, then we apply Method~\ref{method:red-float-int} with $M = \min\{U, 0\}$ instead of $M = \max\{L, 0\}$.
\end{remark}

\begin{theorem}[Sensitivity of $\bs^{**}_{L,U,D,n}$]
\label{thrm:sens-f2i}
Let $\bs^{**}_{L,U,D,n}$ be as defined in \cref{method:red-float-int}. The sensitivity of this function is $\sco \bs^{**}_{L,U,D,n} = \sco \bsfp_{J,K,n} \leq K-J$, where $K = \left\lceil (U-M)/D \right\rceil $ and $J = \left\lfloor (L-M)/D \right\rfloor $.
\end{theorem}

\begin{proof} 

Let $u, v$ be two datasets of length $n$ of normal $(k, \ell)$-bit floats. Then, we claim that
\[
    \dco(\FloatToInt_{L,U,D}(u), \FloatToInt_{L,U,D}(v)) \leq \dco(u, v).
\]
Indeed, since by definition the $\FloatToInt$ function subtracts $M$ from each element and divides the result by $D$, for any elements $u_j, v_k$ such that $u_j = v_k$, it follows that $\round((u_j-M)/D) = \round((v_j-M/D)$. For any pair $u_j, v_k$ such that $u_j \neq v_k$, either $\round((u_j-M)/D) \neq \round((v_k-M)/D)$ or $\round((u_j-M)/D) = \round((v_k-M)/D)$ due to rounding errors. Either way, the minimum number of elements in $\FloatToInt(u)$ that need to be changed to produce $\FloatToInt(v)$ can only be less than the minimum number of elements in $u$ that need to be changed to produce $v$.

Moreover, the smallest value produced by this operation is $\geq J$, and the largest value produced by this operation is $\leq K$, so $J$ and $K$ are correct lower and upper bounds, respectively, on the elements in our transformed dataset. Therefore, the sensitivity is $\sco\bs^{**}_{L,U,D,n} \leq K-J$.

\end{proof}

\begin{corollary}
\label{cor:cm-f2i-dp}
$\cM_{L,U,D,n}$ satisfies $\epsilon$-DP.
\end{corollary}
\begin{proof}
By \cref{thrm:sens-f2i}, the sensitivity of $\bs^{**}_{L,U,D,n}$ is $\sco \bs^{**}_{L,U,D,n} = (K-J)$ so, because scaling our noise distribution to $\sco \bs^{**}_{L,U,D,n} / \varepsilon$ satisfies $\varepsilon$-DP, it follows that $\cM_{L,U,D,n}$ satisfies $\varepsilon$-DP.
\end{proof}

\begin{corollary}
\cref{method:red-float-int} satisfies $\varepsilon$-DP.
\end{corollary}
\begin{proof}
By \cref{cor:cm-f2i-dp}, $\cM_{L,U,D,n}$ satisfies $\varepsilon$-DP. Moreover, that $D,M,n$ are all non-private values. Step 3 of \cref{method:red-float-int}, therefore, just post-processes a result that satisfies $\varepsilon$-DP. Therefore, because DP is robust to post-processing (\cref{prop:post-proc}), Step 3 satisfies $\varepsilon$-DP and \cref{method:red-float-int} satisfies $\varepsilon$-DP.
\end{proof}

In order to analyze the accuracy of $\bs^{**}_{L,U,D,n}$, we need to set the $round(\cdot)$ function to a concrete rounding mode. We consider the following two rounding modes.

\begin{definition}[Round to nearest integer]
\label{defn:round-to-nearest-int}
For a value $x\in\mathbb{R}$, \textit{round to nearest integer} maps $x$ to the integer $x'\in\mathbb{Z}$ such that, for all $y\in\mathbb{Z}$, we have $|x-x'| < |x-y|$.
\end{definition}

\begin{definition}[Randomized round to nearest integer]
\label{defn:rand-round-to-nearest-int}
Let $x\in\mathbb{R} \setminus \mathbb{Z}$, $y\in\mathbb{Z}$ be the largest integer such that $y<x$, and $z\in\mathbb{Z}$ the smallest integer such that $x<z$. \textit{Randomized round to nearest integer} maps $x$ to $y$ with probability $x-y$, and maps $x$ to $z$ otherwise.
\end{definition}

\begin{theorem}[Accuracy of $\bs^{**}_{L,U,D,n}$ under round to nearest]\label{thm:accuracy1}

When the $round(\cdot)$ function in $\FloatToInt$ (Equation~\ref{eq:float2int}) corresponds to round to nearest integer (\cref{defn:round-to-nearest-int}), then
$$|\bs^{**}_{L,U,D,n}(u) - (\bs_{L,U,n}(u)-M\cdot n) / D| \leq n/2.$$
\end{theorem}

\begin{proof}
Since two consecutive integers are at distance 1, each term $u_i$ in the dataset is rounded to a value within 1/2 of the true value $u_i / D$. Because $u$ is of length $n$, the total value of the sum can be rounded to a value within $n/2$ of the true value, which yields the above accuracy. 
\end{proof}

\begin{theorem}[Accuracy of $\bs^{**}_{L,U,D,n}$ under randomized rounding]\label{thm:accuracy2}

When the $round(\cdot)$ function in $\FloatToInt$ (Equation~\ref{eq:float2int}) corresponds to randomized round to nearest integer (\cref{defn:rand-round-to-nearest-int}), then with probability $1-2/e^{2c^2}$ for all $c>0$,
$$|\bs^{**}_{L,U,D,n}(u) - (\bs_{L,U,n}(u)-M\cdot n) / D| \leq c\sqrt{n} = O(\sqrt{n}).$$ 
\end{theorem}

\begin{proof}
We use Hoeffding bounds. Note that, due to the use of randomized rounding, $\bs^{**}_{L,U,D,n}(u)$ is a random variable. We can write $\bs^{**}_{L,U,D,n}(u) = \sum_{i\in[n]} (u_i-M)/D.$ We see that $u_i-M$ is exactly representable as a floating-point number. We also note that, for all $x\in\mathbb{R}$, $\Ex[\round(x/D)] = x/D$. Therefore, $\Ex[\round((u_i-M)/D)] = (u_i-M)/D$. Therefore, $\Ex[\bs^{**}_{L,U,D,n}(u)] = \bs_{L,U,n}(u)-M\cdot n$.

We now apply Hoeffding bounds to $\bs^{**}_{L,U,D,n}(u)$. For all $c\in\mathbb{R}$ where $c>0$, we have
\[
\Pr[|\bs^{**}_{L,U,D,n}(u) - (\bs_{L,U,n}(u)-M\cdot n) / D| \geq c\sqrt{n}]
\leq
2 \exp(-2c^2).
\]

\end{proof}

Next we prove that overflow in Step 2 of \cref{method:red-float-int} is very unlikely. Note that if (floating-point) overflow occurs in Step 3, this means that the user was dealing with a dataset with very large values (i.e., close to the maximum representable floating-point exponent) that are unlikely to result in an accurate answer regardless of the summation strategy.

\begin{theorem}[Overflow is unlikely]
\label{thrm:overflow-unlikely}

Let the discretization parameter $D$ in Method~\ref{method:red-float-int} be
\[
    D=(U-M)\cdot n/(2^{m-2}),
\]
the $\Noise$ term in Step 2 added from the discrete Laplace mechanism, and let $\FloatToInt$ cast to $m$-bit signed integers. Then, the probability of overflow in the computation of $\cM_{L,U,D,n}(u)$ is
\[
    \exp(-\Omega(K\cdot n/(K/\varepsilon))) = \exp(-\Omega(\eps\cdot n)).
\]

\end{theorem}

\begin{proof}

By Step 2 in Method~\ref{method:red-float-int}, we draw noise with scale $(K-J)/\epsilon \leq 2K / \epsilon$. We know that the geometric mechanism has PDF $f(x) \propto \exp(-\epsilon \cdot |x| / 2)$ \cite{BalcerVa18}.

Let $X\sim \Noise((K-J)/\epsilon)$. We are interested in the probability of overflow -- that is, we are interested in the probability that $K\cdot n + X \geq 2^{m-1}$. Using the CDF of the geometric mechanism, we see that
\[
\begin{aligned}
\Pr[|K\cdot n + X| \geq 2^{m-1}]
&= \exp( -\Omega((2^{m-1} - Kn)/((K-J)/\epsilon) )) \\
&= \exp( -\Omega((2^{m-1} - Kn)/(2K/\epsilon) )) \\
&= \exp( -\Omega(Kn/(2K/\epsilon) )) \\
&= \exp( -\Omega(n \epsilon)),
\end{aligned}
\]
with the third equality following from the fact that $2^{m-1} - Kn > Kn$.

\end{proof}

\begin{theorem}

Set the discretization parameter $D$ in Method~\ref{method:red-float-int} to be
\[
    D=(U-M)\cdot n/2^{m-2},
\]
where we operate with normal $(k, \ell)$-bit floats. Then:
\begin{itemize}
\item If the rounding mode in $\FloatToInt$ corresponds to round to nearest,
\[
\begin{aligned}
|(\bs^{**}_{L,U,n}(u)\cdot D + M\cdot n) - \bs_{L,U,n}(u)| \leq \dfrac{D}{n^k} + \dfrac{Mn}{2^{k+1}} &= \dfrac{(U-M)n}{2^{m+k-2}} + \dfrac{Mn}{2^{k+1}} \\
&= O \left( \frac{U\cdot n^2}{2^{k+1}} \right),
\end{aligned}
\]
where all the operations in the first set of parentheses are done with floating-point values using banker's rounding.

\item If the rounding mode in $\FloatToInt$ corresponds to randomized randomized round to nearest,
\[
\begin{aligned}
|(\bs^{**}_{L,U,n}(u)\cdot D) - \bs_{L,U,n}(u)| \leq  O \Big( \dfrac{D \sqrt{n} + Mn}{2^{k+1}} \Big) &= O \Big( \dfrac{(U-M)n^{3/2}}{2^{m+k-1}} + \dfrac{Mn}{2^{k+1}} \Big) \\
&= O\left(\frac{U\cdot n^{3/2}}{2^{k+1}}\right)
\end{aligned}
\]
with high probability, where all the operations in the first set of parentheses are done with floating-point values using banker's rounding.
\end{itemize}
\end{theorem}

\begin{proof}

For the first inequality, by Theorem~\ref{thm:accuracy1} it follows that
\[
    |\bs^{**}_{L,U,D,n}(u) - \bs_{L,U,n}(u) / D| \leq n/2.
\]
To obtain the term $(\bs^{**}_{L,U,D,n}(u) \cdot D + M \cdot n)$, we multiply by $D$ and obtain a floating point number. This yields the $Dn/2^k$ term in the RHS. Similarly, the term $Mn$ requires multiplying $M$ and $n$ together to obtain a floating point number, which yields the $Mn/2^{k+1}$ term in the RHS. Hence by setting $D$ as above the result follows. 

For the second inequality, by Theorem~\ref{thm:accuracy2} it follows that
\[
    |\bs^{**}_{L,U,D,n}(u) - \bs_{L,U,n}(u) / D| \leq \sqrt{n}
\]
with high probability. The same error analysis as for the first inequality yields the RHS expression, and by setting $D$ as above the result follows.

\end{proof}

\begin{remark}[Adapting \cref{method:red-float-int} to the unbounded DP setting]
    In the unbounded DP setting (i.e., when the size of the dataset is not known to the data analyst), the addition of $M\cdot n$ as post-processing in step 3 will not be possible since $n$ is potentially private. To adapt \cref{method:red-float-int} to this setting, the user should use integer summation where overflow is handled with wraparound (i.e., modular summation -- see \cref{sec:modularnoise}) and set $M = 0$ (resulting in no subtraction from elements).
    
    Additionally, the user should choose an upper bound $n_{\max}$ on their best guess at the dataset size and set the discretization parameter $D = (U-M)\cdot n_{\max} / (2^m)$. Then, the accuracy guarantees (e.g., \cref{thm:accuracy1} and \cref{thm:accuracy2}) and overflow guarantees (e.g., \cref{thrm:overflow-unlikely}) will hold as long as $n\leq n_{\max}$; these guarantees break down for $n > n_{\max}$.
    
    Alternatively, the user can use truncation (\cref{method:trunc}) and the associated sensitivity, 
    \[
        \max\{\sco\bsfp_{L,U,n_{\max}, \sid\bsfp_{L,U}} \}.
    \]Then, for $D = (U-M)\cdot n_{\max} / (2^m)$, the accuracy and overflow guarantees will hold since the truncated dataset is guaranteed to have size $\leq n_{\max}$. (Note that the user should still set $M=0$ since the length of the dataset is not known, so adding $M\cdot n_{\max}$ could provide a signficant overestimate on the true result.)
    
\end{remark}

\section*{Acknowledgements}
    We would like to thank Grace Tian for her contributions in the preliminary stages of this work. We also thank the anonymous CCS reviewers for many helpful comments that improved the presentation.
    
    S\'ilvia Casacuberta was supported by the Harvard Program for Research in Science and Engineering (PRISE) and a grant from the Sloan Foundation. Michael Shoemate was supported by a grant from the Sloan Foundation, a grant from the US Census Bureau, and gifts from Apple and Facebook.
    Salil Vadhan was supported by a grant from the Sloan Foundation, gifts from Apple and Facebook, and a Simons Investigator Award.
    Connor Wagaman was supported by the Harvard College Research Program (HCRP); much of this work was completed during his time at Harvard University.
    
    Any opinions, findings, and conclusions or recommendations expressed in this material are those of the authors and do not necessarily reflect the views of our funders.

\printbibliography[heading=bibintoc]

\ifnum\CCSFORMAT=0

\appendix

\section{Missing Proofs from Section~\ref{sec:preliminaries}}\label{appendix:sec2}

\begin{lemma}[Relating Metrics, Lemma \ref{lemma:metric-relate}]
These metrics are related as follows.
\begin{enumerate}
    \item For $u,v\in\Vect(\mathcal{D})$, $$\dsym(u,v) = \min_{\pi\in S_{\len(u)}} \did(\pi(u),v)\leq \did(u,v).$$
    
    \item For $u,v\in \mathcal{D}^n$, $$\dco(u,v) = \min_{\pi\in S_n} \dham(\pi(u),v)\leq \dham(u,v).$$
    
    \item For $u,v\in \mathcal{D}^n$, $$\dsym(u,v) = 2\cdot \dco(u,v).$$
    
    \item For $u,v\in \mathcal{D}^n$, $$\did(u,v) \leq 2\cdot \dham(u,v).$$
\end{enumerate}
\end{lemma}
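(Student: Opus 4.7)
The plan is to prove each of the four parts separately. For Parts 1 and 2, the nontrivial direction is the upper bound on the left-hand side, which is established by constructing an explicit sequence of edits; for Parts 3 and 4, the proofs are essentially direct computations from the histogram formulas.

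For Part 1, I would prove the two inequalities $\dsym(u,v) \leq \min_\pi \did(\pi(u),v)$ and $\dsym(u,v) \geq \min_\pi \did(\pi(u),v)$ separately. The easy direction uses that $\dsym$ is invariant under permutation of either argument (both sides have the same histogram), so $\dsym(u,v) = \dsym(\pi(u),v) \leq \did(\pi(u),v)$ for every $\pi$. For the other direction, I would exhibit a specific permutation realizing the bound constructively: starting from $v$, for every $z$ with $h_v(z) > h_u(z)$ delete $h_v(z) - h_u(z)$ copies of $z$, and for every $z$ with $h_u(z) > h_v(z)$ insert $h_u(z) - h_v(z)$ copies of $z$. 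This uses exactly $\dsym(u,v)$ insert/delete operations and yields a vector $v'$ with $h_{v'} = h_u$. By Lemma \ref{lemma:hists-equal}, there exists $\pi \in S_{\len(u)}$ with $\pi(u) = v'$, so $\did(\pi(u),v) = \did(v',v) \leq \dsym(u,v)$.

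Part 2 follows the same template, but with element-changes replacing insertions and deletions. The wrinkle is that the intermediate vector must stay of length $n$, so I would use a placeholder symbol $\diamond \notin \mathcal{D}$: first change each excess copy of $z$ (where $h_v(z) > h_u(z)$) to $\diamond$, then change each $\diamond$ into a needed value (where $h_u(z) > h_v(z)$). Since $\len(u) = \len(v)$, the number of changes of each type is exactly $\dco(u,v)$, so the total number of changes is $\dco(u,v)$ and the resulting vector $v'$ satisfies $h_{v'} = h_u$. Again invoking Lemma \ref{lemma:hists-equal} gives the desired $\pi$. The matching lower bound follows from Hamming-invariance of $\dco$ under permutation, exactly as in Part 1.

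Parts 3 and 4 should be short. For Part 3, I would split the sum defining $\dsym(u,v) = \sum_z |h_u(z) - h_v(z)|$ by the sign of $h_u(z) - h_v(z)$; the two resulting sums are each equal to $\dco(u,v)$ (they must be equal because $u$ and $v$ have the same length, so $\sum_z h_u(z) = \sum_z h_v(z)$ forces the positive and negative parts to cancel), giving the factor of $2$. For Part 4, I would observe that each Hamming change at an index $i$ can be realized by one deletion followed by one insertion, so $\did(u,v) \leq 2 \dham(u,v)$ by summing over the at most $\dham(u,v)$ differing indices. The main obstacle, if any, is the bookkeeping in Part 2 to ensure the placeholder construction uses exactly $\dco(u,v)$ Hamming steps rather than $2 \dco(u,v)$; everything else is a direct consequence of the definitions and Lemma \ref{lemma:hists-equal}.
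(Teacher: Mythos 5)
Your proposal matches the paper's proof almost step for step: the same two-inequality decomposition for Parts 1 and 2, the same constructive edit sequence from $v$ to $v'$ followed by an appeal to Lemma~\ref{lemma:hists-equal}, the same placeholder $\diamond$ trick for Part 2, the sign-split of the histogram sum for Part 3, and the one-deletion-plus-one-insertion bound for Part 4. The one place where you flag a possible obstacle---whether the $\diamond$ construction costs $\dco(u,v)$ or $2\dco(u,v)$ Hamming steps---is not actually an obstacle, and the paper handles it exactly the way you anticipate: the positions modified in the "replace with $\diamond$" pass are precisely the positions modified again in the "replace $\diamond$ with needed value" pass, so the final $v'$ differs from $v$ in only $\dco(u,v)$ coordinates.
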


\begin{proof}
We prove each part below.
\begin{enumerate}
    \item Let $u,v\in\Vect(\mathcal{D})$.
    
    We first show that there exists a permutation $\pi\in S_{\len(u)}$ such that $\did(\pi(u),v)\leq \dsym(u,v)$.
    
    We begin by constructing $v'$ such that $\did(v,v') \leq \dsym(u,v)$ and $h_{v'} = h_u$. We show that we can use $\dsym(u,v)$ insertions and deletions on $v$ to get $v'$ such that $h_{v'} = h_u$. We create $v'$ from $v$ as follows: (1) for all $z\in\mathcal{D}$ such that $h_v(z) > h_{u}(z)$, we delete $h_v(z) - h_u(z)$ copies of $z$ from $v$; (2) for all $z\in\mathcal{D}$ such that $h_v(z) < h_u(z)$, we insert $h_v(z) - h_u(z)$ copies of $z$ to $v$. By \cref{defn:id-dist}, then, 
    \begin{equation}
    \label{eqn:metric-relate-did}
    \did(v',v)\leq \dsym(u,v).
    \end{equation}
    
    The resulting vector $v'$ also has the property that, for all $z\in\mathcal{D}$, $h_{v'}(z) = h_u(z)$, so $h_u = h_{v'}$. By \cref{lemma:hists-equal}, $h_u = h_{v'}$ implies that there exists a permutation $\pi\in S_{\len(u)}$ such that $\pi(u) = v'$. Substituting this into \cref{eqn:metric-relate-did}, we get
    \begin{equation}
    \did(\pi(u),v)\leq \dsym(u,v).
    \end{equation}
    
    We next show that, for all permutations $\pi\in S_{\len(u)}$, we have $\dsym(u,v)\leq \did(\pi(u),v)$, which means that $\dsym(u,v) \leq \min_{\pi\in S_{\len(u)}}\did(\pi(u),v)$. This is true because, for all permutations $\pi\in S_{\len(u)}$, $\dsym(u,v) = \dsym(\pi(u),v)\leq \did(\pi(u),v)$, with the equality following from the fact that $h_{\pi(u)} = h_u$.

    \item Let $u,v\in\mathcal{D}^n$.
    
    We first show that there is a permutation $\pi\in S_n$ such that $\dco(u,v) = \dham(\pi(u),v)$.
    
    We begin by constructing $v'$ such that $\dham(v,v') \leq \dco(u,v)$ and $h_{v'} = h_u$. We create $v'$ from $v$ as follows: (1) for all $z\in\mathcal{D}$ such that $h_v(z) > h_{u}(z)$, replace $h_v(z) - h_u(z)$ copies of $z$ with some value $\diamond\not\in\mathcal{D}$; (2) for all $z\in\mathcal{D}$ such that $h_v(z) < h_u(z)$, replace $h_u(z)-h_v(z)$ $\diamond$ values with $z$. By \cref{defn:ham-dist}, then, 
    \begin{equation}
    \label{eqn:metric-relate-dham}
    \dham(v',v)\leq \dco(u,v).
    \end{equation}
    
    The resulting vector $v'$ also has the property that, for all $z\in\mathcal{D}$, $h_{v'}(z) = h_u(z)$, so $h_u = h_{v'}$. By \cref{lemma:hists-equal}, $h_u = h_{v'}$ implies that there exists a permutation $\pi\in S_n$ such that $\pi(u) = v'$. Substituting this into \cref{eqn:metric-relate-dham}, we get
    \begin{equation}
    \dham(\pi(u),v)\leq \dco(u,v).
    \end{equation}
    
    We next show that, for all permutations $\pi\in S_n$, we have $\dco(u,v)\leq \dham(\pi(u),v)$, which means that $\dco(u,v) \leq \min_{\pi\in S_n}\dham(\pi(u),v)$. This is true because, for every permutation $\pi\in S_n$, $$\dco(u,v) = \dco(\pi(u), v)\leq \dham(\pi(u), v),$$
    with the equality following from the fact that $h_{\pi(u)} = h_u$.
    
    \item Let $u,v\in\mathcal{D}^n$. Then,
    $$\begin{aligned}
    \dsym(u,v) &= \sum_{z\in \mathcal{D}} |h_u(z) - h_v(z)| \\
    &= \sum_{\substack{z\in \mathcal{D} \text{ s.t.}\\ h_u(z) > h_v(z)}} \left( h_u(z) - h_v(z)\right) \\
    &+ \sum_{\substack{z\in \mathcal{D} \text{ s.t.}\\ h_v(z) > h_u(z)}} \left( h_v(z) - h_u(z)\right)\\
    &= \dco(u,v) + \dco(u,v)\\
    &= 2\cdot \dco(u,v).
    \end{aligned}$$

    \item Let $u,v\in \mathcal{D}^n$, and let $\dham(u,v) = c$. Let $\mathcal{I}$ be the set of all indices $i^*$ such that $u_{i^*} \neq v_{i^*}$. We note that, to make it so that $u_{i^*} = v_{i^*}$, we can perform 1 insertion and 1 deletion to change $u_{i^*}$ to $v_{i^*}$. We note that the cardinality of $\mathcal{I}$ is $c$. Therefore, a total of (at most) $2c$ insertions and deletions need to be performed to change $u$ into $u'$ such that $u'=v$. By the definition of $\did$, then, $\did(u,v)\leq 2c = 2\cdot \dham(u,v)$.

\end{enumerate}

\end{proof}

\begin{lemma}[Lemma \ref{lemma:metrics-path-prop}]
The dataset metrics $\dsym, \dco, \dham$, and $\did$ all satisfy the path property.
\end{lemma}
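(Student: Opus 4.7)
The plan is to verify both conditions of the path property separately for each of the four metrics, building up from the simpler ordered metrics to the more complex unordered ones by leveraging the relations already established in Lemma~\ref{lemma:metric-relate}.

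First I would handle $\dham$ directly: integrality is immediate from the definition as a count of differing coordinates, and given $u,v\in\mathcal{D}^n$ with $\dham(u,v)=d$, I would enumerate the set $I=\{i_1,\ldots,i_d\}$ of positions where $u$ and $v$ disagree, then construct $u^j$ by overwriting positions $i_1,\ldots,i_j$ of $u$ with the corresponding entries of $v$. Consecutive terms differ in exactly one coordinate. Next I would handle $\did$: integrality is again definitional, and the sequence is obtained by unpacking a minimum-length sequence of single insertions/deletions that witnesses $\did(u,v)=d$, taking $u^j$ to be the intermediate vector after $j$ such edits.

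For the unordered metrics, the strategy is to reduce to the ordered case via Lemma~\ref{lemma:metric-relate}. For $\dco$: integrality holds because the definition is a finite sum of nonnegative integer differences (and equals $\infty$ only when lengths differ, which I would either exclude as a degenerate case or handle by noting $\dco$ is $\mathbb{Z}\cup\{\infty\}$-valued, which still admits finite paths whenever the distance is finite). For the path, I would use Part~2 of Lemma~\ref{lemma:metric-relate} to obtain a permutation $\pi$ with $\dham(\pi(u),v)=\dco(u,v)=d$, apply the already-established path property for $\dham$ to produce $\pi(u)=w^0,w^1,\ldots,w^d=v$ with $\dham(w^{j-1},w^j)\leq 1$, and then set $u^0=u$ and $u^j=w^j$ for $j\geq 1$. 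The single transition from $u^0=u$ to $u^1=w^1$ satisfies $\dco(u,w^1)\leq \dham(\pi(u),w^1)\leq 1$ because $\dco$ is invariant under permutation of either argument, while every subsequent $\dco(w^{j-1},w^j)\leq\dham(w^{j-1},w^j)\leq 1$ by Lemma~\ref{lemma:metric-relate} again. For $\dsym$, I would use Part~1 of Lemma~\ref{lemma:metric-relate} analogously, obtaining $\pi$ with $\did(\pi(u),v)=\dsym(u,v)$, invoking the path property for $\did$, and converting back using $\dsym(\cdot,\cdot)\leq\did(\cdot,\cdot)$ together with the permutation-invariance of $\dsym$.

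The main obstacle I anticipate is the first step of the unordered paths, where I jump from $u$ to $\pi(u)$ without an actual edit. The clean way to handle this is precisely the observation that $\dsym(u,\pi(u))=\dco(u,\pi(u))=0$, so the jump is free under these pseudometrics, and the argument above is self-consistent. A secondary minor subtlety is handling the $\did=\infty$ or $\dco=\infty$ case (unequal lengths for $\dco$, and none for $\did$ since insertions/deletions can always reach $v$), which I would dispatch with a brief remark that the path property is vacuous when $d=\infty$ or by explicitly restricting the statement to finite distances. Apart from these bookkeeping points, everything follows from unwinding definitions plus the two relations from Lemma~\ref{lemma:metric-relate}.
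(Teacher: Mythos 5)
Your proposal is correct, and it departs from the paper's proof in a worthwhile way. The paper proves the path property for all four metrics by \emph{direct construction}: for $\dsym$ it carefully alternates single deletions of symmetric-difference elements in $u$ with single insertions of symmetric-difference elements of $v$ (then finishes with pure insertions), and for $\dco$, $\dham$, and $\did$ it uses analogous direct edit sequences. You instead prove $\dham$ and $\did$ directly (as the paper does), then \emph{reduce} the unordered cases to the ordered ones via Lemma~\ref{lemma:metric-relate}: choose $\pi$ achieving the minimum in $\dco(u,v) = \min_\pi \dham(\pi(u),v)$ (resp.\ $\dsym(u,v) = \min_\pi \did(\pi(u),v)$), take the ordered path from $\pi(u)$ to $v$, and splice $u$ onto the front using the fact that $\dco(u,\pi(u)) = \dsym(u,\pi(u)) = 0$ for pseudometrics. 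Your key observation that the first hop costs nothing (since $\dco$ and $\dsym$ are permutation-invariant) is exactly what makes this work, and your conversion $\dco(w^{j-1},w^j) \le \dham(w^{j-1},w^j) \le 1$ (and likewise for $\dsym \le \did$) is valid by Lemma~\ref{lemma:metric-relate}. The reduction is arguably cleaner than the paper's explicit bookkeeping for $\dsym$ (avoiding the case analysis on $|\mathcal{I}_u|$ versus $|\mathcal{J}_v|$), at the small cost of an extra appeal to Lemma~\ref{lemma:metric-relate} and the pseudometric observation. Your remark about the $\dco = \infty$ case (unequal lengths) is a genuine edge case that the paper leaves implicit; dispatching it as vacuous is fine. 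Both proofs are correct; yours is slightly more modular.
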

\begin{proof}

By Definitions~\ref{defn:symm-dist}, \ref{defn:co-dist}, \ref{defn:id-dist}, \ref{defn:ham-dist} and the fact that the image of the histogram function is $\mathbb{N}$ it is clear that all of $\dsym, \dco, \dham$, and $\did$ satisfy Condition 1 of the Path Property.

For Condition 2, we show constructively how to build the sequence $u^0, u^1, \ldots, u^d$ for each of the metrics. Let $u, v$ be datasets such that $d(u, v) = d$ for some $d \in \mathbb{Z}$.

\begin{enumerate}
    \item For $\dsym$, let set $S$ be the symmetric difference of $u$ and $v$. Let $\mathcal{I}_u$ be the set of indices $k$ such that $u_k \in S$, and similarly for $\mathcal{J}_v$. Wlog, assume that $|\mathcal{I}_u| \leq |\mathcal{J}_v|$ (otherwise, swap $u$ and $v$ in what follows). To construct the $u^{(i)}$, we iteratively apply the following two steps. To go from $u^{(i)}$ to $u^{(i+2)}$, first pick one of the indices $k \in \mathcal{I}_u$ and delete it from $u^{(i)}$, which yields vector $u^{(i+1)}$. Next, insert into $u^{(i+1)}$ element $v_j$ such that $j \in \mathcal{J}_v$, never repeating either index $k$ or $j$ in the process. After $2|\mathcal{I}_u|$ steps, to continue constructing the $u^{(i)}$ iteratively we insert elements $v_j$ for all the remaining $j \in \mathcal{J}_v$ one at a time (without repeating any index $j$). Hence, after $|\mathcal{J}_v| - |\mathcal{I}_u|$ more steps, $u^{(i)} = v$.
    \item For $\dco$, to construct each $u^{(i)}$ we iteratively change an element of $u$ which is different from $v$ (when viewed as multisets). It follows directly from the definition of $\dco$ that this procedure requires $\dco(u, v) = d$ steps.
    \item For $\dham$, we apply the same procedure, except that $u$ and $v$ are now viewed as ordered vectors.
    \item For $\did$, we apply the same procedure as in the case of $\dco$, where we do one insert or one delete operation at a time. It then follows directly from the definition of $\did$ that this procedure requires $\did(u, v) = d$ steps.
\end{enumerate}

\end{proof}

\begin{lemma}[Convert Sensitivities, Lemma \ref{thm:dsymtoid}]

We can convert between sensitivities in the following ways.
\begin{enumerate}
    \item For every function $f: \Vect(\mathcal{D}) \rightarrow \mathbb{R}$, $\Delta_\mathit{ID} f \leq \Delta_\mathit{Sym} f$.

    \item For every function $f: \mathcal{D}^n \rightarrow \mathbb{R}$, $\Delta_\mathit{Ham} f \leq \Delta_\mathit{CO} f$.
    
    \item For every function $f: \mathcal{D}^n \rightarrow \mathbb{R}$, $\Delta_\mathit{CO} f \leq 2 \Delta_\mathit{Sym}f$.

    \item For every function $f: \mathcal{D}^n \rightarrow \mathbb{R}$, $\Delta_\mathit{Ham} f \leq 2 \Delta_\mathit{ID} f$.

\end{enumerate}

\end{lemma}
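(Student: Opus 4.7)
The proof proposal is straightforward given the tools already established: everything will reduce to the metric inequalities in Lemma~\ref{lemma:metric-relate} together with, for parts 3 and 4, the Path Property (Theorem~\ref{thrm:apply-path-prop}). The overall plan is to split the four items into two groups: the ``easy'' containments (parts 1 and 2), where one adjacency relation is simply a subset of another, and the ``factor of 2'' inequalities (parts 3 and 4), where we exploit that a single step in one metric is at most two steps in the other.

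For part~1, the plan is to observe that by Lemma~\ref{lemma:metric-relate}(1) we have $\dsym(u,v)\leq \did(u,v)$ for all $u,v\in \Vect(\mathcal{D})$, and hence the adjacency relation $\nid$ is contained in $\nsym$. Then
\[
\sid f = \sup_{u\nid v}|f(u)-f(v)| \leq \sup_{u\nsym v}|f(u)-f(v)| = \ssym f.
\]
Part~2 is entirely analogous, using Lemma~\ref{lemma:metric-relate}(2) to conclude that $\nham$ is contained in $\nco$, so $\sham f \leq \sco f$.

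For part~3, the plan is to take arbitrary $u\nco v$ and apply Lemma~\ref{lemma:metric-relate}(3), which gives $\dsym(u,v) = 2\cdot \dco(u,v)\leq 2$. By Lemma~\ref{lemma:metrics-path-prop}, $\dsym$ satisfies the path property, so there is a chain $u=u^0,u^1,u^2=v$ with $u^{i-1}\nsym u^i$; applying the triangle inequality and the definition of $\ssym f$ twice yields $|f(u)-f(v)|\leq 2\ssym f$, which gives $\sco f\leq 2\ssym f$. Part~4 proceeds identically, using Lemma~\ref{lemma:metric-relate}(4) to bound $\did(u,v)\leq 2\cdot \dham(u,v)\leq 2$ for $u\nham v$, the path property of $\did$, and the triangle inequality.

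There is essentially no obstacle here --- the work has already been done in Lemma~\ref{lemma:metric-relate} and Lemma~\ref{lemma:metrics-path-prop}. The only mild subtlety is being careful to invoke the path property (or equivalently Theorem~\ref{thrm:apply-path-prop}) rather than trying to bound the difference directly, since $\ssym f$ and $\sid f$ are defined only for neighboring (distance~$\leq 1$) datasets, while the inequalities in parts 3 and 4 naturally produce distance~$2$ witnesses that must be decomposed into single-step pairs.
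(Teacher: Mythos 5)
Your proposal is correct and follows essentially the same route as the paper: parts 1 and 2 via the metric containments of Lemma~\ref{lemma:metric-relate}, and parts 3 and 4 via the path property (Theorem~\ref{thrm:apply-path-prop}) combined with the factor-of-2 metric relations. The only cosmetic difference is that you explicitly unfold the two-step chain $u = u^0, u^1, u^2 = v$ rather than invoking Theorem~\ref{thrm:apply-path-prop} as a black box, but the content is identical.
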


\begin{proof} We use Lemma~\ref{lemma:metric-relate} to prove each part. Let each $f$ below be defined as described in its respective part above.
\begin{enumerate}
    \item By Lemma~\ref{lemma:metric-relate}, we see that, for all $u,v\in \Vect(\mathcal{D})$, we have $\dsym(u,v) \leq \did(u,v)$. Therefore, for all $u\nid u'$, we have $u\nsym u'$. By the definition of sensitivity in Definition~\ref{defn:abs-sens}, this means that $\sid f = \max_{u\nid u'} |f(u)-f(u')| \leq \max_{u\nsym u'} |f(u)-f(u')| = \ssym f.$
    
    \item By Lemma~\ref{lemma:metric-relate}, we see that, for all $u,v\in \mathcal{D}^n$, we have $\dco(u,v) \leq \dham(u,v)$. Therefore, for all $u\nham u'$, we have $u\nco u'$. By the definition of sensitivity in Definition~\ref{defn:abs-sens}, this means that $\sham f = \max_{u\nham u'} |f(u)-f(u')| \leq \max_{u\nco u'} |f(u)-f(u')| = \sco f.$
    
    \item We have:
    $$
    \begin{aligned}
    \sco f &= \max_{u\nco u'}|f(u) - f(u')| \\
    &\leq  \max_{u\nco u'} \ssym f \cdot \dsym(u,u') \quad \text{(\cref{thrm:apply-path-prop})} \\
    &= \max_{u\nco u'}\ssym f \cdot 2 \cdot \dco(u,u') \quad \text{(\cref{lemma:metric-relate})}\\
    &= 2 \ssym f,
    \end{aligned}
    $$
    so $\sco f \leq 2 \ssym f.$
    
    \item We have:
    $$
    \begin{aligned}
    \sham f &= \max_{u\nham u'}|f(u) - f(u')| \\
    &\leq \max_{u\nham u'} \sid f \cdot \did(u,u')  \quad \text{(\cref{thrm:apply-path-prop})}\\
    &\leq \max_{u\nham u'} \sid f \cdot 2 \cdot \dham(u,u') \quad \text{(\cref{lemma:metric-relate})}\\
    &= 2 \sid f,
    \end{aligned}
    $$
    so $\sham f \leq 2 \sid f.$

\end{enumerate}

\end{proof}

\section{Missing Proofs from Section~\ref{sec:bs}}\label{appendix:sec4}

\begin{theorem}[Idealized sensitivities of $\bs_{L, U}$ and $\bs_{L, U, n}$, Theorem \ref{thrm:reals-sens}]

The sensitivities of the bounded sum function are the following.
\begin{enumerate}
    \item (Unknown $n$.) $\Delta_\mathit{Sym}\bs_{L,U} = \Delta_\mathit{ID}\bs_{L,U} = \max\{|L|, U\}.$
    \item (Known $n$.) $\Delta_\mathit{CO}\bs_{L,U,n} = \Delta_\mathit{Ham}\bs_{L,U,n} = U-L.$
\end{enumerate}

\end{theorem}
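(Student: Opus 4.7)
The plan is to handle each part by proving the upper bound with respect to the coarser (unordered) metric and the matching lower bound with an explicit pair of datasets adjacent in the finer (ordered) metric, then combine the two using Lemma~\ref{thm:dsymtoid}. Because $u \nid u' \Rightarrow u \nsym u'$ and $u \nham u' \Rightarrow u \nco u'$ (from Lemma~\ref{lemma:metric-relate}), a lower bound witnessed by ordered-adjacent datasets automatically transfers to the unordered metric, and the upper bound on the unordered sensitivity automatically transfers to the ordered one.

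For Part 1 (unknown $n$), I would first show $\ssym \bs_{L,U} \leq \max\{|L|,U\}$ via the histogram representation. Writing $\bs_{L,U}(v) = \sum_{z \in [L,U]} h_v(z) \cdot z$ (a finite sum since $h_v$ has finite support), whenever $u \nsym u'$ there is at most one value $z^* \in [L,U]$ with $|h_u(z^*) - h_{u'}(z^*)| = 1$ and all other histogram entries agree. Then
\[
|\bs_{L,U}(u) - \bs_{L,U}(u')| = |z^*|\cdot|h_u(z^*) - h_{u'}(z^*)| \leq \max\{|L|,U\}.
\]
Applying Lemma~\ref{thm:dsymtoid}(1) yields the same upper bound for $\sid \bs_{L,U}$. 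For the matching lower bound, I would take $u = [0]$ and $u' = [0, w]$ where $w \in \{L, U\}$ is chosen so that $|w| = \max\{|L|, U\}$. Then $u \nid u'$ and $|\bs(u) - \bs(u')| = \max\{|L|, U\}$, so $\sid \bs_{L,U} \geq \max\{|L|,U\}$, and hence $\ssym \bs_{L,U} \geq \max\{|L|,U\}$ as well.

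For Part 2 (known $n$), I would first establish the upper bound for the Hamming case directly: if $u \nham u'$, then there is at most one index $i^*$ with $u_{i^*} \neq u'_{i^*}$, so
\[
|\bs_{L,U,n}(u) - \bs_{L,U,n}(u')| = |u_{i^*} - u'_{i^*}| \leq U - L.
\]
To bound $\sco$, I would invoke Lemma~\ref{lemma:metric-relate}(2): for $u \nco u'$ there exists $\pi \in S_n$ with $\pi(u) \nham u'$, and since $\bs_{L,U,n}$ is permutation-invariant, $\bs_{L,U,n}(u) = \bs_{L,U,n}(\pi(u))$, reducing to the Hamming bound. For the lower bound, I would take any $u \in [L,U]^n$ with $u_1 = L$ and the dataset $u'$ obtained by replacing $u_1$ with $U$; these satisfy $u \nham u'$ with $|\bs(u) - \bs(u')| = U - L$, so $\sham \bs_{L,U,n} \geq U - L$ and hence $\sco \bs_{L,U,n} \geq U - L$ as well.

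No part of this proof involves a real obstacle; the only subtlety is keeping the direction of the inequalities in Lemma~\ref{thm:dsymtoid} straight (upper bounds flow from unordered to ordered, lower bounds flow from ordered to unordered) and being precise about the histogram argument when the two datasets have different lengths in the symmetric case.
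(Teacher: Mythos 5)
Your proof is correct and takes essentially the same route as the paper's: a histogram-based upper bound for $\dsym$, the single-index mismatch argument for $\dham$ with the permutation step from Lemma~\ref{lemma:metric-relate} to pass between $\dco$ and $\dham$, the sensitivity-conversion lemma (Lemma~\ref{thm:dsymtoid}) to bridge the ordered and unordered bounds, and explicit adjacent-pair witnesses for the matching lower bounds. One small refinement you make is choosing the Part~1 witness $w \in \{L,U\}$ with $|w| = \max\{|L|,U\}$, which guarantees the inserted element actually lies in the clamped domain $[L,U]$ (the paper's choice of $u' = [0,\max\{|L|,U\}]$ glosses over this when $|L| > U$).
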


\begin{proof}
Each part is proven below.
\begin{enumerate}
\item 

Let $u,u'\in\Vect(\mathbb{R}_{[L,U]})$ be two datasets such that $u\simeq_\mathit{Sym} u'$. From the formal definition of a histogram in Definition~\ref{defn:histogram}, we observe that, for all $v\in\Vect(\mathbb{R})$, $\bs_{L,U}(v) = \sum_{i=1}^{\len(v)} v_i = \sum_{z\in\mathbb{R}}h_v(z)\cdot z$, where the last sum is well defined because $h_v(z)\neq 0$ for only finitely many values of $z$. Because $u\simeq_\mathit{Sym} u'$, we know that there is at most one value $z^*$ such that $|h_u(z^*) - h_{u'}(z^*)| = 1$, and that, for all $z\neq z^*$, we have $|h_u(z) - h_{u'}(z)| = 0$. 

We can then write the following expressions.
$$
\begin{aligned} 
\MoveEqLeft\left|\bs_{L,U}(u) - \bs_{L,U}(u') \right| \\ &= \left|\sum_{z\in\mathbb{R}}h_u(z)\cdot z - \sum_{z\in\mathbb{R}}h_{u'}(z)\cdot z \right| \\
&= \left|\sum_{z\in\mathbb{R}\setminus z^*}h_u(z)\cdot z - \sum_{z\in\mathbb{R}\setminus z^*}h_{u'}(z)\cdot z \right|\\
&\quad + \left|h_u(z^*)\cdot z^* - h_{u'}(z^*)\cdot z^*\right| \\
&= 0 + \left|z^* \cdot (h_u(z^*) - h_{u'}(z^*))\right|\\
&\leq |z^*|\\
&\leq \max\{|L|,U\},
\end{aligned}
$$
with the final inequality following from the fact that all values are clamped to the interval $[L,U]$, so the largest difference in sums arises when $z^* = \max\{|L|,U\}$.

By the definition of sensitivity, then, $\Delta_\mathit{Sym}\bs_{L,U} \leq \max\{|L|,U\}$. By Theorem~\ref{thm:dcotoham}, we also have $\Delta_{ID} \bs_{L,U} \leq \max\{|L|,U\}$.

For the lower bound, consider the datasets $u=[0], u'=[0,\max\{|L|,U\}]$. We note that $u\simeq_\mathit{ID} u'$. We also note that $|\bs_{L,U}(u) - \bs_{L,U}(u')| = \max\{|L|,U\}$. This means, then, that $\Delta_\mathit{ID}\bs_{L,U} \geq \max\{|L|,U\}$. By the contrapositive of Theorem~\ref{thm:dcotoham}, then, $\Delta_\mathit{Sym}\bs_{L,U} \geq \max\{|L|,U\}$.

Combining these upper and lower bounds on the idealized sensitivity tells us, then, that $$\Delta_\mathit{Sym}\bs_{L,U} = \Delta_\mathit{ID}\bs_{L,U} = \max\{|L|,U\}.$$

\item

Let $u,u'\in\mathbb{R}^n$ be two datasets such that $u\simeq_\mathit{CO} u'$. By Lemma~\ref{lemma:metric-relate}, there is a permutation $\pi\in S_n$ such that $\pi(u)\simeq_\mathit{Ham} u'$. This means there is at most one index $i^*$ such that $\pi(u)_{i^*}\neq u'_{i^*}$. We can then write the following expressions.
$$\begin{aligned}
    \MoveEqLeft|\bsfp_{L,U,n}(u) - \bsfp_{L,U,n}(u')|\\
    &= |\bsfp_{L,U,n}(\pi(u)) - \bsfp_{L,U,n}(u')|\\
    &=\left| \sum_{i=1}^n (\pi(u)_i) - \sum_{i=1}^n (u'_i) \right|\\
    &=\left| (\pi(u)_{i^*}) - (u'_{i^*}) \right|\\
    &\leq U-L.
\end{aligned}$$
with the final inequality following from the fact that all values are clamped to the interval $[L,U]$, so the largest difference in sums arises when, without loss of generality, $\pi(u)_{i^*} = U$ and $u'_{i^*} = L$.

By the definition of sensitivity, then, $\Delta_\mathit{CO}\bs_{L,U,n} \leq U-L$. By Theorem~\ref{thm:dcotoham}, we have $\Delta_{Ham} \bs_{L,U,n} \leq U-L$.

For the lower bound, consider the datasets $u=[L]$ and $u'= [U]$. Then, $u \simeq_{\mathit{Ham}} u'$ and $|\bs_{L,U,n}(u) - \bs_{L,U,n}(u')| = U-L$. By the contrapositive of Theorem~\ref{thm:dcotoham}, it follows that $\Delta_\mathit{CO}\bs_{L,U,n} \geq U-L$.

Combining these upper and lower bounds on the idealized sensitivity, we then conclude that $$\Delta_\mathit{Ham}\bs_{L,U,n} = \Delta_\mathit{CO}\bs_{L,U,n} = U-L.$$

\end{enumerate}

\end{proof}

\section{Missing Proofs from Section~\ref{sec:bounded-sum-computers}}\label{appendix:sec5}

\begin{theorem}[Theorem \ref{prop:rounding-id-floats}]
    Let $\bsfp_{L,U} : \Vect(T_{[L,U]}) \to T$ be iterative bounded sum on the type $T$ of $(\mantlen, \explen)$-bit floats. Let $j,m\in\mathbb{Z}$ satisfy $0<j\leq \mantlen$ and $-(2^{\explen-1}-2) \leq m \leq 2^{\explen-1}-2-j-\mantlen$.
    Then for $L = 0$, $U=2^{\mantlen + m}$, and $n = j\cdot 2^{\mantlen} + 2$, there are datasets $u,v\in\Vect(T)$ where $len(u) = n$ and $\len(v) = n-1$ such that $\did(u,v) = 1$ and
    $$|\bsfp_{L,U}(u) - \bsfp_{L,U}(v)| \geq 2^{\mantlen+j+m}.$$
    In particular,
    $$\frac{\sid \bsfp_{L,U}}{\sid \bs_{L,U}} \geq 2^j.$$
\end{theorem}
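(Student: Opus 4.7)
The plan is to exhibit two datasets $u$ and $v$ of lengths $n$ and $n-1$ respectively, obtained by placing a single extra copy of $U$ at the front of $u$, and then to track iterative summation through $j$ ``stages,'' each of which consists of $2^{\mantlen}$ additions of a carefully chosen value. The construction will be driven by the function $f(i) = (2^{i} + 2^{i-\mantlen}) \cdot 2^{m}$, chosen so that $f(i)$ is just slightly larger than $2^{i+m}$; the extra bit $2^{i-\mantlen}\cdot 2^{m}$ is exactly the place-value that causes banker's rounding to tip the intermediate sum. Concretely, I would take
$$u = \bigl[\,U,\,U,\,\underbrace{f(0),\ldots,f(0)}_{2^{\mantlen}},\,\underbrace{f(1),\ldots,f(1)}_{2^{\mantlen}},\ldots,\underbrace{f(j-1),\ldots,f(j-1)}_{2^{\mantlen}}\,\bigr],$$
and define $v$ by deleting the first entry of $u$, so $\did(u,v)=1$ by Definition~\ref{defn:id-dist}.

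Next I would evaluate $\bsfp_{L,U}(u)$ inductively stage by stage, maintaining the invariant that after the initial two $U$'s the running sum equals $2^{\mantlen+1+m}$, and after processing all of stage $i$ it equals $2^{\mantlen+i+2+m}$. The initial pair is exact by Lemma~\ref{lemma:floats-exact-add} since $U+U = 2^{\mantlen+1+m}$ is a representable $(\mantlen,\explen)$-bit float (the constraints on $m$ ensure this). For the inductive step I would invoke Lemma~\ref{lemma:ulp-float-test} to show that the running sum is a representable float whose ULP is exactly $2^{i+1+m}$, note that $f(i)$ falls strictly between two adjacent floats, and use Definition~\ref{defn:bankers-rounding} to verify that banker's rounding rounds up, causing the running sum to increase by $2^{i+1+m}$ rather than by the true value $f(i)$. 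Summing these $2^{\mantlen}$ additions per stage yields an increase of $2^{\mantlen+i+1+m}$, and after $j$ stages the final sum is $\bsfp_{L,U}(u) = 2^{\mantlen+j+1+m}$.

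For $\bsfp_{L,U}(v)$ the analysis is parallel but scaled down by a factor of $2$: after the single leading $U$ the running sum is $2^{\mantlen+m}$, and the inductive invariant becomes that the running sum after stage $i$ equals $2^{\mantlen+i+1+m}$. The crucial point here is that the ULP of each intermediate sum is exactly half of what it was in the analysis of $u$, so each addition of $f(i)$ now only rounds up by $2^{i+m}$ rather than $2^{i+1+m}$. I would verify this by the same ULP computation, concluding that $\bsfp_{L,U}(v) = 2^{\mantlen+j+m}$ and hence $|\bsfp_{L,U}(u)-\bsfp_{L,U}(v)| \geq 2^{\mantlen+j+m}$, which combined with $\sid \bs_{L,U} = U = 2^{\mantlen+m}$ gives the claimed ratio $2^j$. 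The final step is simply to observe that the constraints on $m$ allow multiplication of the entire construction by $2^{m}$ without changing any rounding behavior, so the bound holds for the stated range of $m$.

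The main obstacle I anticipate is the bookkeeping in the inductive step: I must show not only that the rounding direction is correct at the first addition of each stage, but that it remains correct throughout all $2^{\mantlen}$ additions of $f(i)$, and crucially that it differs between the $u$-computation and the $v$-computation despite adding the same terms. This requires carefully verifying that the mantissa of each intermediate sum ends in a specific parity so that banker's rounding tie-breaking behaves as claimed, and checking that no intermediate sum exits the floating-point exponent range $[-(2^{\explen-1}-2),\,2^{\explen-1}-1]$, which is what the upper bound $m \leq 2^{\explen-1}-2-j-\mantlen$ is designed to guarantee.
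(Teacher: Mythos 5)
Your proposal is correct and is essentially the paper's proof: you use the same dataset $u$ built from two leading copies of $U$ followed by $j$ stages of $2^{\mantlen}$ copies of $f(i)=(2^i+2^{i-\mantlen})\cdot 2^m$, the same $v$ obtained by deleting the first entry, and the same stage-by-stage ULP-tracking argument. One small terminological slip worth fixing when you write it out: in the $v$-computation the exact intermediate sum lands \emph{below} the midpoint, so banker's rounding rounds \emph{down} (the net increment per addition is one ULP $=2^{i+m}$) rather than ``rounds up by $2^{i+m}$'' as you wrote, and it is $s+f(i)$, not $f(i)$ itself, that falls strictly between adjacent floats — $f(i)$ is exactly representable.
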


\begin{proof}

Let $f(x)$ be defined as $x\mapsto(2^{x} + 2^{x-\mantlen})\cdot 2^{m}$, and consider the datasets
$$    \begin{aligned}
u ={} & [2^{\mantlen+m},2^{\mantlen+m}, f(0)_{1}, \ldots, f(0)_{2^{\mantlen}}, \\
      & f(1)_{2^{\mantlen}+1},\ldots,f(1)_{2\cdot 2^{\mantlen}},\ldots, \\
      & f(j-1)_{(j-1)\cdot 2^{\mantlen}+1},\ldots, f(j-1)_{j\cdot 2^{\mantlen}}],
      \end{aligned}
$$
and $v$ such that $v_i = u_{i+1}$ for all $i$ (so the first term of $u$ is not in $v$). Note that $\did(u,v) = 1$. Additionally, we consider $m=0$ until the conclusion of the proof.

We first evaluate $\bsfp_{L,U}(u)$. We begin by seeing that the sum of the first two terms can be computed exactly: $2^\mantlen \oplus 2^\mantlen = 2^{\mantlen + 1}$.

We next use \cref{lemma:ulp-float-test} to show that the sum of the first three terms cannot be computed exactly: $2^{\mantlen+1} + f(0) = 2^{\mantlen+1} + 1 + 2^{-\mantlen}\in [2^{\mantlen+1}, 2^{\mantlen+2})$, so $\ulp(2^{\mantlen+1} + f(0)) = 2$. However,  $2^{\mantlen+1} + f(0)$ is not a multiple of 2, so by \cref{lemma:ulp-float-test}, $2^{\mantlen+1} + f(0)$ cannot be represented exactly as a $(\mantlen,\explen)$-bit float. This means that we must use banker's rounding (described in \cref{defn:bankers-rounding}) to determine the value of $2^{\mantlen+1} \oplus f(0)$. The value $2^{\mantlen+1} + f(0)$ is between the adjacent $(\mantlen,\explen)$-bit floats $2^{\mantlen+1} + 0$ and $2^{\mantlen+1} + 2$. It is closer to $2^{\mantlen+1} + 2$, so we have $2^{\mantlen+1} \oplus f(0) = 2^{\mantlen+1} + 2$.

Having determined the result of adding $f(0)_1$ to the intermediate sum $2^{k+1}$, we now determine the result of adding the remaining $f(0)$ values. By reasoning similar to the logic used above, we find that for every addition of $f(0)_i$, adding $f(0) = 1 + 2^{-\mantlen}$ has the effect of adding $2$. Therefore, $2^{\mantlen+1}\oplus f(0)_1\oplus \cdots \oplus f(0)_{2^{\mantlen}} = 2^{\mantlen+2}$.

We use \cref{lemma:ulp-float-test} to show that none of the remaining additions are exact, and we show the rules of banker's rounding affect the computation of $\bsfp_{L,U}(u)$. By \cref{lemma:ulp-float-test}, because $\ulp(2^{\mantlen+2}) = 4$ and $2^{\mantlen+2} + f(1)$ is not a multiple of 4, we must use banker's rounding (described in \cref{defn:bankers-rounding}) to determine the value of $2^{\mantlen+2} \oplus f(1)$. Reasoning similar to the logic above shows that adding each of the $f(1) = 2+2^{-\mantlen+1}$ terms has the effect of adding $4$, so $2^{\mantlen+2}\oplus f(1)_{\mantlen+1}\oplus \cdots \oplus f(1)_{2\cdot 2^{\mantlen}} = 2^{\mantlen+3}$. Such logic follows for all of the floating point additions, resulting in an overall sum of $\bsfp_{L,U}(u) = 2^{\mantlen+j+1}$.

We now evaluate $\bsfp_{L,U}(v)$. The general structure of the proof is similar, but the resulting sums are different due to different effects of banker's rounding.

We next use \cref{lemma:ulp-float-test} to show that the sum of the first three terms cannot be computed exactly: $2^{\mantlen} + f(0) = 2^{\mantlen} + 1 + 2^{-\mantlen}\in [2^{\mantlen}, 2^{\mantlen+1})$, so the $\ulp(2^{\mantlen} + f(0)) = 2^0$. However,  $2^{\mantlen+1} + f(0)$ is not a multiple of 1,
so by \cref{lemma:ulp-float-test}, $2^{\mantlen} + f(0)$ cannot be represented exactly as a $(\mantlen,\explen)$-bit float. This means that we must use banker's rounding (described in \cref{defn:bankers-rounding}) to determine the value of $2^{\mantlen} \oplus f(0)$. The value $2^{\mantlen} + f(0)$ is between the adjacent $(\mantlen,\explen)$-bit floats $2^{\mantlen} + 0$ and $2^{\mantlen} + 1$. It is closer to $2^{\mantlen} + 1$, so we have $2^{\mantlen} \oplus f(0) = 2^{\mantlen} + 1$.

Having determined the result of adding $f(0)_1$ to the intermediate sum $2^{k}$, we now determine the result of adding the remaining $f(0)$ values. By reasoning similar to the logic used above, we find that for every addition of $f(0)_i$, adding $f(0) = 1 + 2^{-\mantlen}$ has the effect of adding $1$ (note in the calculation of $\bsfp_{L,U}(u)$ above that adding $f(0) = 1 + 2^{-\mantlen}$ has the effect of adding $2$ -- this is a critical difference). Therefore, $2^{\mantlen}\oplus f(0)_1\oplus \cdots \oplus f(0)_{2^{\mantlen}} = 2^{\mantlen+1}$.

We now use \cref{lemma:ulp-float-test} to show that none of the remaining additions are exact, and we show the rules of banker's rounding affect the computation of $\bsfp_{L,U}(v)$. By \cref{lemma:ulp-float-test}, because $\ulp(2^{\mantlen+1}) = 2$ and $2^{\mantlen+1} + f(1)$ is not a multiple of 2, we must use banker's rounding (described in \cref{defn:bankers-rounding}) to determine the value of $2^{\mantlen+1} \oplus f(1)$. Reasoning similar to the logic above shows that adding each of the $f(1) = 2+2^{-\mantlen+1}$ terms has the effect of adding $2$, so $2^{\mantlen+1}\oplus f(1)_{\mantlen+1}\oplus \cdots \oplus f(1)_{2\cdot 2^{\mantlen}} = 2^{\mantlen+2}$. Such logic follows for all of the floating point additions, resulting in an overall sum of $\bsfp_{L,U}(u) = 2^{\mantlen+j}$.

We see that $|\bsfp(u)-\bsfp(v)| = 2^{\mantlen+j}$.

From the bounds $-(2^{\explen-1}-2) \leq m \leq 2^{\explen-1}-2-j-\mantlen$, we see that $m$ only affects the exponent of floating point values in the proof and does not affect whether values are representable as $(\mantlen,\explen)$-bit normal floats or the direction of rounding. All values in the proof above, then, can be multiplied by $2^m$, so, for all $m$ such that $-(2^{\explen-1}-2) \leq m \leq 2^{\explen-1}-2-j-\mantlen$, $\sid\bsfp_{L,U} \geq 2^{\mantlen+j+m}$.

We observe that $U = 2^{\mantlen+m}$ and $\sid \bs_{L,U} = \max\{ |L|, U \} = U$. Therefore, $$\frac{\sid \bsfp_{L,U}}{\sid \bs_{L,U}} \geq 2^j.$$

\end{proof}

\begin{theorem}[Theorem \ref{thrm:quadratic-round-attack}]

Let $\bsfp_{L,U}:\Vect(T_{[L,U]})\to T$ be iterative bounded sum on the type $T$ of $(\mantlen, \explen)$-bit floats. Additionally, let $a,j\in\mathbb{Z}$ satisfy $2\leq j < k$, $-(2^{\explen-1}-2)\leq a$, $-(2^{\explen-1}-2) + 1 + \mantlen\leq j+a\leq 2^{\explen-1}-1$ (these conditions ensure that values are representable as $(\mantlen, \explen)$-bit floats), $n = 2^j$, $m = n/2$. Then, for
$$U = 2^a, L  = -\left(\frac{U\cdot m}{2^\mantlen}\right)\cdot \left(\frac{1}{2} - \frac{1}{2^\mantlen}\right),$$
there are datasets $u,v\in\Vect(T_{[L,U]})$ with $\did(u,v) = 1$ such that
$$|\bsfp_{L,U}(u) - \bsfp_{L,U}(v)| \geq \frac{n^2}{2^{\mantlen + 3}} \cdot U + U.$$
In particular,
$$\frac{\sid \bsfp_{L,U}}{\sid \bs_{L,U}} = \frac{n^2}{2^{\mantlen+3}} + 1.$$
\end{theorem}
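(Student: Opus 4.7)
The plan is to mimic the structure of the proof of Theorem~\ref{prop:rounding-id-floats}, but instead of chaining one-shot roundings across doubling intervals (which grows the error by a factor of $n$), chain them within a \emph{single} binade using a cleverly chosen small positive value $x$ and the negative lower bound $L$, so that each $(x,L)$-pair produces an additive kick of one $\ulp$ above the running total. Concretely, set
$$ x \;=\; \left(\frac{U\cdot m}{2^{\mantlen}}\right)\!\left(\frac12+\frac{1}{2^{\mantlen}}\right), \qquad L \;=\; -\left(\frac{U\cdot m}{2^{\mantlen}}\right)\!\left(\frac12-\frac{1}{2^{\mantlen}}\right), $$
and consider
$$ u = [\,U_1,\ldots,U_{m-1},U_m,\,x_1,L_2,x_3,L_4,\ldots,x_{m-1},L_m\,], $$
$$ v = [\,U_1,\ldots,U_{m-1},\phantom{U_m,}\,x_1,L_2,x_3,L_4,\ldots,x_{m-1},L_m\,], $$
so that $v$ is $u$ with one $U$ deleted, giving $\did(u,v)=1$ immediately.

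First I would verify, using \cref{lemma:ulp-float-test}, that the prefix sums of $U$'s are exact in both datasets: since $mU=2^{a+j-1}$ has $\ulp(mU)=2^{a+j-1-\mantlen}$ and every intermediate sum $iU$ is an integer multiple of $U=2^{a}$ (and $2^{a+j-1-\mantlen}\mid 2^{a}$ because $j\le k$), \cref{lemma:floats-exact-add} gives $\bsfp_{L,U}[U_1,\ldots,U_{m}]=mU$ and $\bsfp_{L,U}[U_1,\ldots,U_{m-1}]=(m-1)U$. Call these partial sums $S$ and $S'=S-U$ respectively, and write $\eps:=\ulp(S)=2^{a+j-1-\mantlen}$. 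The parameters $x,L$ are engineered so that $|x|=(S/2^{\mantlen})(\tfrac12+2^{-\mantlen})$ and $|L|=(S/2^{\mantlen})(\tfrac12-2^{-\mantlen})$, i.e., $x$ lies just past the midpoint of the next-$\ulp$ interval above $S$, and $L$ lies just below that midpoint in magnitude.

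Next I would analyze the alternating phase for $u$, starting from the exact $S$. Because $S+x$ lies in the upper half of $[S,S+\eps]$, banker's rounding (as in the proof of \cref{prop:rounding-ham-floats}) gives $S\oplus x = S+\eps$. Then $(S+\eps)+L$ lies strictly above the midpoint of $[S,S+\eps]$ (one checks this from the explicit choices of $x$ and $L$, whose sum is positive), so $(S+\eps)\oplus L = S+\eps$. Thus each $(x,L)$-pair applied to $u$'s accumulator nets a gain of exactly $\eps$, and the final sum is $\bsfp_{L,U}(u)=S+\tfrac{m}{2}\eps$. For $v$, starting from $S'=S-U$, I would show symmetrically that $S'\oplus x$ rounds \emph{up} by $\ulp(S')$ while the subsequent addition of $L$ rounds \emph{back down} to $S'$, so each $(x,L)$ pair contributes $0$ and $\bsfp_{L,U}(v)=S-U$. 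Subtracting gives $|\bsfp_{L,U}(u)-\bsfp_{L,U}(v)|\ge \tfrac{m}{2}\eps + U = \tfrac{m^{2}U}{2^{\mantlen+1}}+U = \tfrac{n^{2}U}{2^{\mantlen+3}}+U$, from which the stated ratio to $\sid\bs_{L,U}=U$ follows.

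The main obstacle will be the bookkeeping in step three: I need to verify the direction of banker's rounding for \emph{every} intermediate sum, in both datasets, and in particular to check that the rounding pattern is genuinely \emph{asymmetric} between $u$ (where $x$ and $L$ both keep the accumulator above the midpoint) and $v$ (where $x$ bumps up by one $\ulp$ and $L$ immediately cancels it). This requires tracking the exact ULP of every intermediate sum, confirming that all such sums live in the binade $[S,S+m\eps]$ so the ULP stays at $\eps$ throughout, and confirming that $m\eps<\eps\cdot 2^{\mantlen}$ so we never cross into a coarser binade. The conditions $2\le j<k$ and the bounds on $a$ given in the statement are exactly what ensure these checks go through and that $x,L$ are themselves representable normal floats (by \cref{lemma:ulp-float-test}), so the verification reduces to routine $\ulp$ arithmetic once the construction is in place.
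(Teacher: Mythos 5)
Your proposal reproduces the paper's proof essentially verbatim: the same auxiliary value $x$, the same datasets $u,v$ with $m$ leading copies of $U$ followed by an alternating $x,L$ tail, the same appeal to Lemma~\ref{lemma:ulp-float-test} to show the $U$-prefix sums are exact, and the same per-pair asymmetry analysis (each $(x,L)$ pair nets one $\ulp$ on $u$'s accumulator and nets zero on $v$'s, because $v$'s accumulator starts one binade lower with half the $\ulp$), yielding the identical bound $\tfrac{m^{2}U}{2^{\mantlen+1}}+U$. The "obstacle" you flag — verifying the rounding direction at every step and that the accumulator never crosses a binade — is exactly the bookkeeping the paper also leaves at the level of "similar logic applies," so your plan matches the paper's both in construction and in granularity.
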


\begin{proof}

Let $$x = \left(\frac{U\cdot m}{2^\mantlen}\right)\cdot \left(\frac{1}{2} + \frac{1}{2^\mantlen}\right),$$ and consider the datasets
$$u = [U_1,\ldots, U_{m-1}, U_{m}, x_1, L_2, x_3, L_4,\ldots, x_{m-1}, L_m]$$
and
$$v = [U_1,\ldots, U_{m-1}, x_1, L_2, x_3, L_4,\ldots, x_{m-1}, L_m].$$
where, for all $i$, $L_i = L$ and $U_i = U$. 
Note that $u$ and $v$ are equivalent with the exception that $v$ contains $m-1$ copies of $U$ rather than $m$ copies of $U$, so $\did(u,v) = 1$.

We first show that $\bsfp_{L,U}[U_1,\ldots,U_{m-1}] = (m-1)\cdot U$ and $\bsfp_{L,U}[U_1,\ldots,U_{m}] = m\cdot U$. We take the approach of showing that all intermediate sums computed in the calculation of $\bsfp_{L,U}[U_1,\ldots, U_m]$ can be represented exactly as $(\mantlen,\explen)$-bit floats, which is done by first showing that all of these intermediate sums are multiples of their $\ulp$ and then applying \cref{lemma:ulp-float-test}.

We observe that $\{U,\ldots, m\cdot U \}$ is the set of intermediate sums that result from calculating $\bs_{L,U}[U_1,\ldots, U_m]$. Let $\UU$ denote this set $\{U,\ldots, m\cdot U\}$. We note that $m\cdot U = 2^{a+j-1}$, so $\ulp(m\cdot U) = 2^{a+j-1-\mantlen}$. For all $x\in \UU$, then, $$\ulp(x)\leq \ulp(m\cdot U) = 2^{a+j-1-\mantlen}.$$
Note that $j<k$, so $2^{a+j-1-\mantlen}$ necessarily divides $U=2^a$. All such $x$ are multiples of $U=2^a$, so they are necessarily multiples of $\ulp(x)$. By \cref{lemma:ulp-float-test}, all $x\in \UU$ can be represented exactly as $(\mantlen,\explen)$-bit floats, so by \cref{lemma:floats-exact-add}, $\bsfp_{L,U}[U_1,\ldots,U_{m-1}] = (m-1)\cdot U$ and $\bsfp_{L,U}[U_1,\ldots,U_{m}] = m\cdot U$.

We now evaluate $\bsfp_{L,U,n}(u)$. We use \cref{lemma:ulp-float-test} to show that $\bs_{L,U,n}(u)=m\cdot U + x$ is not exactly representable as a $(\mantlen,\explen)$-bit float, and we show how to calculate the result $\bsfp_{L,U,n}(u)$.

As shown above, the sum of the first $m$ terms is $$\bsfp_{L,U}[U_1,\ldots, U_m] = m\cdot U.$$ Thus, $\bsfp_{L,U}[U_1,\ldots, U_m,x] = m\cdot U \oplus x$. We note that
$$m\cdot U + x \in [m\cdot U, m\cdot U + \ulp(m\cdot U)].$$ Because $m\cdot U + x$ is closer to $m\cdot U + \ulp(m\cdot U)$, $m\cdot U \oplus x = m\cdot U + \ulp(m\cdot U)$. We now consider the addition of $L$, $m\cdot U + \ulp(m\cdot U) \oplus L$. We see that $$m\cdot U + \ulp(m\cdot U) + L \in [m\cdot U, m\cdot U + \ulp(m\cdot U)],$$ and $m\cdot U + \ulp(m\cdot U) + L$ is closer to $m\cdot U + \ulp(m\cdot U)$, so $m\cdot U + \ulp(m\cdot U) \oplus L = m\cdot U + \ulp(m\cdot U) \oplus L$. Similar logic applies for all $x,L$, in which adding $L$ has the effect of adding $0$ and adding $x$ has the effect of adding $\ulp(m\cdot U)$ for a total sum of $\bsfp_{L,U}(u) = m\cdot U + \frac{m^2 \cdot U}{2^{\mantlen+1}}$.

We now evaluate $\bsfp_{L,U,n}(v)$. As shown above, the sum of the first $m-1$ terms corresponds to $\bsfp_{L,U}[U_1,\ldots, U_{m-1}] = (m-1)\cdot U$. Thus, $\bsfp_{L,U}[U_1,\ldots, U_{m-1},x] = (m-1)\cdot U \oplus x$. We note that
$$(m-1)\cdot U + x \in [(m-1)\cdot U, (m-1)\cdot U + \ulp((m-1)\cdot U)].$$ Because $(m-1)\cdot U + x$ is closer to $(m-1)\cdot U + \ulp((m-1)\cdot U)$, $(m-1)\cdot U \oplus x = (m-1)\cdot U + \ulp((m-1)\cdot U)$. We now consider the addition of $L$, $(m-1)\cdot U + \ulp((m-1)\cdot U) \oplus L$.
We see that
$$\begin{aligned}
\MoveEqLeft(m-1)\cdot U + \ulp((m-1)\cdot U) + L\\
&\in [(m-1)\cdot U, (m-1)\cdot U + \ulp((m-1)\cdot U)],
\end{aligned}$$
and $(m-1)\cdot U + \ulp((m-1)\cdot U) + L$ is closer to $(m-1)\cdot U$, so $(m-1)\cdot U + \ulp((m-1)\cdot U) \oplus L = (m-1)\cdot U$. Similar logic applies for all $x,L$, in which adding $x$ and $L$ in succession has the effect of adding $0$ for a total sum of $\bsfp_{L,U}(v) = (m-1)\cdot U$.

Therefore, $$|\bsfp_{L,U}(u) - \bsfp_{L,U}(v)| = \frac{m^2\cdot U}{2^{\mantlen + 1}} + U = \frac{n^2\cdot U}{2^{\mantlen + 3}} + U,$$ so, because $\did(u,v) = 1$, $\did\bsfp_{L,U} = \frac{m^2\cdot U}{2^{\mantlen + 1}} + U$. We observe that $\sid\bs_{L,U} = \max\{|L|,U\} = U$, so $$\frac{\sid \bsfp_{L,U}}{\sid \bs_{L,U}} = \frac{n^2}{2^{\mantlen+3}} + 1.$$

\end{proof}

\fi

\end{document}